\newcommand{\hfparskip}{\vspace{-0.5\parskip}}
\newcommand{\unparskip}{\vspace{-\parskip}}
\newcommand{\umparskip}{\vspace{-1.5\parskip}}
\newcommand{\deparskip}{\vspace{-2\parskip}}
\newcommand{\abs}[1]{\lvert #1 \rvert}
\newcommand{\norm}[1]{\| #1 \|}
\newcommand{\floor}[1]{\lfloor #1 \rfloor}
\newcommand{\ceil}[1]{\lceil #1 \rceil}
\newcommand{\rbr}[1]{\left(#1\right)}
\newcommand{\cbr}[1]{\left\{#1\right\}}
\newcommand{\inv}[1]{#1^{-1}}
\newcommand{\N}{\mathbb{N}}
\newcommand{\Q}{\mathbb{Q}}
\newcommand{\R}{\mathbb{R}}
\newcommand{\E}{\mathbb{E}}
\renewcommand{\Pr}{\mathbb{P}}
\newcommand{\Ind}{\mathbbm{1}}
\newcommand{\ipr}[2]{\langle #1,#2 \rangle}
\newcommand{\restr}[2]{\left.#1\right\vert_{#2}}
\DeclareMathOperator\sgn{sgn}
\DeclareMathOperator\Var{Var}
\DeclareMathOperator\Img{Im}
\DeclareMathOperator\Span{span}
\DeclareMathOperator*\argmin{argmin}
\DeclareMathOperator*\sargmin{sargmin}
\DeclareMathOperator\Int{Int}
\DeclareMathOperator\relint{relint}
\DeclareMathOperator\Cl{Cl}
\DeclareMathOperator\conv{conv}
\DeclareMathOperator\aff{aff}
\DeclareMathOperator\supp{supp}
\DeclareMathOperator\csupp{csupp}
\newcommand{\iid}{\overset{\mathrm{iid}}{\sim}}
\newcommand{\cvp}{\overset{p}{\to}}
\newcommand{\cvd}{\overset{d}{\to}}
\newcommand{\eqd}{\overset{d}{=}}
\DeclareMathOperator\Bin{Bin}
\DeclareMathOperator\KL{KL}
\newcommand{\dtv}{d_{\mathrm{TV}}}
\theoremstyle{plain} % italicised text: for theorems
\newtheorem{theorem}{Theorem}
\newtheorem{proposition}[theorem]{Proposition}
\newtheorem*{proposition*}{Proposition}
\newtheorem{lemma}[theorem]{Lemma}
\newtheorem*{lemma*}{Lemma}
\newtheorem{corollary}[theorem]{Corollary}
\newtheorem{claim}{Claim}
\newtheorem*{claim*}{Claim}
\newtheorem*{definition*}{Definition}
\newtheorem{assumption}{Assumption}
\newtheorem*{question*}{Question}
\newtheorem*{conjecture*}{Conjecture}
\theoremstyle{definition} % normal text: not italicised
\newtheorem{remark}[theorem]{Remark}
\newtheorem*{remark*}{Remark}
\newtheorem*{aside*}{Aside}
\newtheorem{example}[theorem]{Example}
\newtheorem*{example*}{Example}
\newtheorem*{aim*}{Aim}
\newtheorem*{observation*}{Observation}
\newcounter{algorithm}
\newenvironment{algorithm}[1][]{\refstepcounter{algorithm}\begin{framed}\textbf{Algorithm~\thealgorithm.} #1}{\end{framed}}
\begin{document}

\title{Nonparametric, tuning-free estimation of S-shaped functions}

\author{Oliver Y. Feng$^\ast$, Yining Chen$^\dagger$, Qiyang Han$^\ddagger$, Raymond J. Carroll$^{\sharp\P}$ \\
and Richard J. Samworth$^\ast$ \\ \\
$^*$Statistical Laboratory, University of Cambridge \\
$^\dagger$Department of Statistics, London School of Economics and Political Science \\
$^\ddagger$Department of Statistics, Rutgers University \\
$^\sharp$Department of Statistics, Texas A\&M University \\
$^\P$School of Mathematical and Physical Sciences, University of Technology Sydney}

\maketitle

\begin{abstract}
We consider the nonparametric estimation of an S-shaped regression function.  The least squares estimator provides a very natural, tuning-free approach, but results in a non-convex optimisation problem, since the inflection point is unknown.  We show that the estimator may nevertheless be regarded as a projection onto a finite union of convex cones, which allows us to propose a mixed primal-dual bases algorithm for its efficient, sequential computation.  After developing a projection framework that demonstrates the consistency and robustness to misspecification of the estimator, our main theoretical results provide sharp oracle inequalities that yield worst-case and adaptive risk bounds for the estimation of the regression function, as well as a rate of convergence for the estimation of the inflection point.  These results reveal not only that the estimator achieves the minimax optimal rate of convergence for both the estimation of the regression function and its inflection point (up to a logarithmic factor in the latter case), but also that it is able to achieve an almost-parametric rate when the true regression function is piecewise affine with not too many affine pieces.  Simulations and a real data application to air pollution modelling also confirm the desirable finite-sample properties of the estimator, and our algorithm is implemented in the \texttt{R} package \texttt{Sshaped}. 
\end{abstract}

\section{Introduction}

We define a function $f\colon [0,1] \rightarrow \mathbb{R}$ to be \emph{S-shaped} if it is increasing, and if there exists $m_0 \in [0,1]$ such that $f$ is convex on $[0,m_0]$ and concave on $[m_0,1]$.  The point $m_0$ is called an \emph{inflection point}, and we do not insist that $f$ is continuous at $m_0$; the cases $m_0 = 0$ and $m_0=1$ correspond to increasing concave and increasing convex functions respectively.  Various examples of S-shaped functions are shown in Figure~\ref{Fig:Examples}.  In many areas of applied science, there are domain-specific reasons to model the regression of a response variable on a covariate as an S-shaped function. For instance, development curves for individuals or populations often exhibit S-shaped behaviour in the context of biological growth \citep{Zeidi1993,ArchontoulisMiguez2015,CSLC2019} or skill proficiency \citep{Gibbs2000}.  Further examples where time is the covariate can be found in audio signal processing \citep{Smith10} and sociology \citep{Tarde1903}.  In agronomy, the van Genuchten--Gupta model \citep{vanGenuchtenGupta1993} postulates an inverted S-shaped relationship between crop yield and soil salinity, and S-shaped trends are also observed for the production levels of commercial goods as labour or other resources are scaled up \citep{Ginsberg1974}. For the latter, economic principles such as the Regular Ultra Passum law \citep{Frisch1964} have been formulated to describe scenarios where marginal gains (i.e.~returns to scale) increase up to a point of maximal productivity and then taper off.

\begin{figure}[htbp!]
\begin{center}
\includegraphics[width=0.5\textwidth]{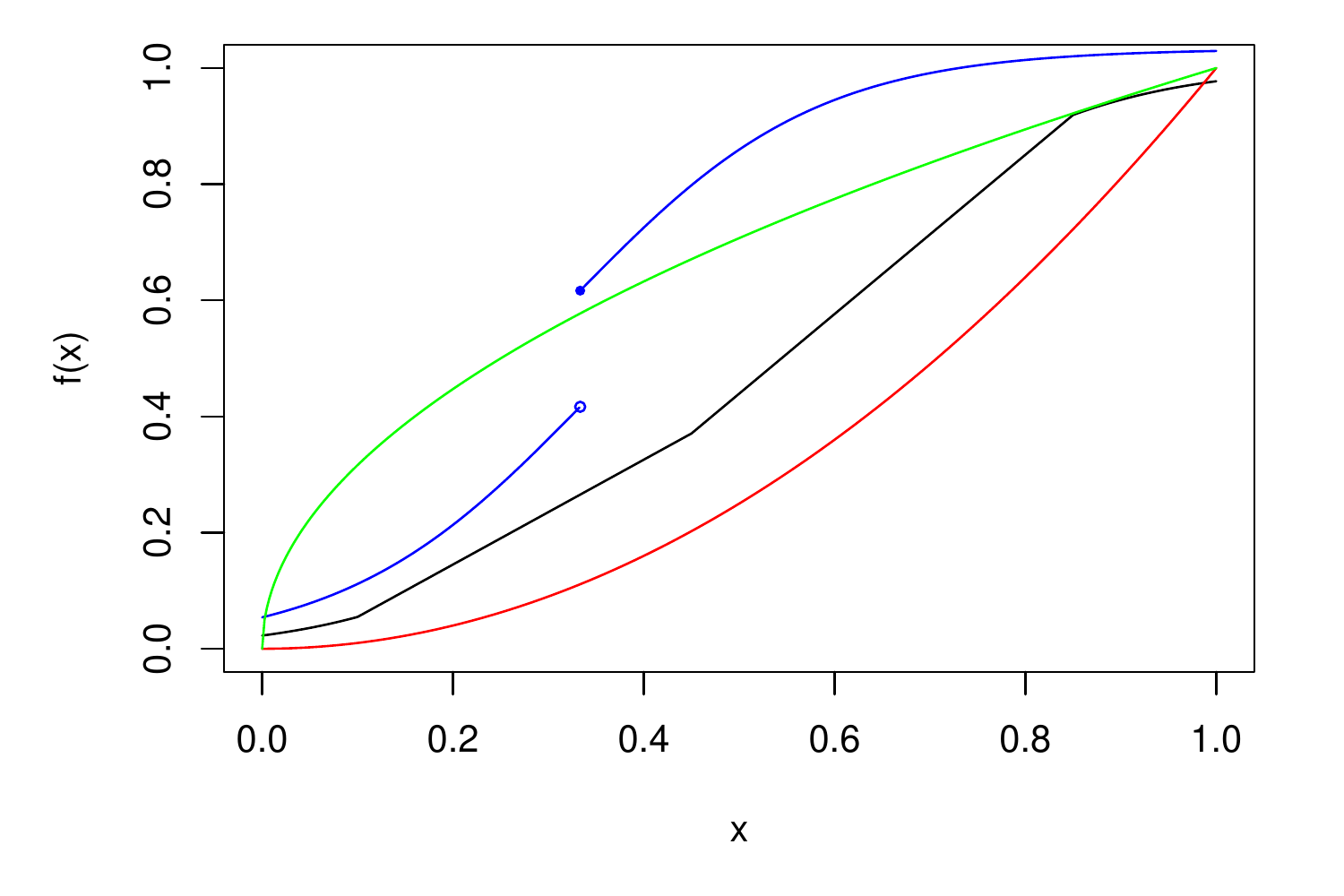}
\end{center}
\caption{\label{Fig:Examples}Some examples of S-shaped functions on $[0,1]$.}
\end{figure}

In some of the examples above, for instance when population or disease dynamics can be modelled by some governing differential equation, it may be natural to confine attention to certain parametric subclasses of S-shaped functions, such as those consisting of sigmoidal (i.e.~logistic) functions of the form
\begin{equation}
\label{eq:logistic}
f(x;A,a,b) = \frac{A}{1+e^{-ax+b}},
\end{equation}
with $A,a > 0$ and $b \in \mathbb{R}$; see also~\citet{JSF2007}. However, in many other settings, such domain-specific knowledge is often lacking, and parametric assumptions may be excessively restrictive.  To illustrate this effect, see Figure~\ref{Fig:Parametric}, where we compare two popular parametric fits of an S-shaped regression function with the estimator we propose in this paper.  The first parametric method fits a logistic curve of the form~\eqref{eq:logistic} using nonlinear least squares.  The second uses segmented linear regression with two kinks, fitted using least squares and a search over the locations of the kinks.  Although these parametric fits appear to the naked eye to be satisfactory, it turns out that their estimation performance, as measured by the squared error loss on the training data, is roughly six times worse than that of our proposal (on average 0.38 and 0.43 compared with 0.067, over 100 repetitions).  If the noise standard deviation is halved, then these parametric methods become 17 times and 19 times worse than our proposal respectively. Notice also that our S-shaped estimator is sufficiently flexible to be able to capture the discontinuity of the regression function, whereas the parametric methods struggle in this respect. The benefits of our nonparametric approach are also apparent in the analysis of real data: see Section~\ref{subsec:realdata}, where we study the way that a quantity related to atmospheric mercury concentration varies with distance from an experimental device close to a geothermal power station.

\begin{figure}[htb!]
\begin{center}
\includegraphics[width=0.475\textwidth]{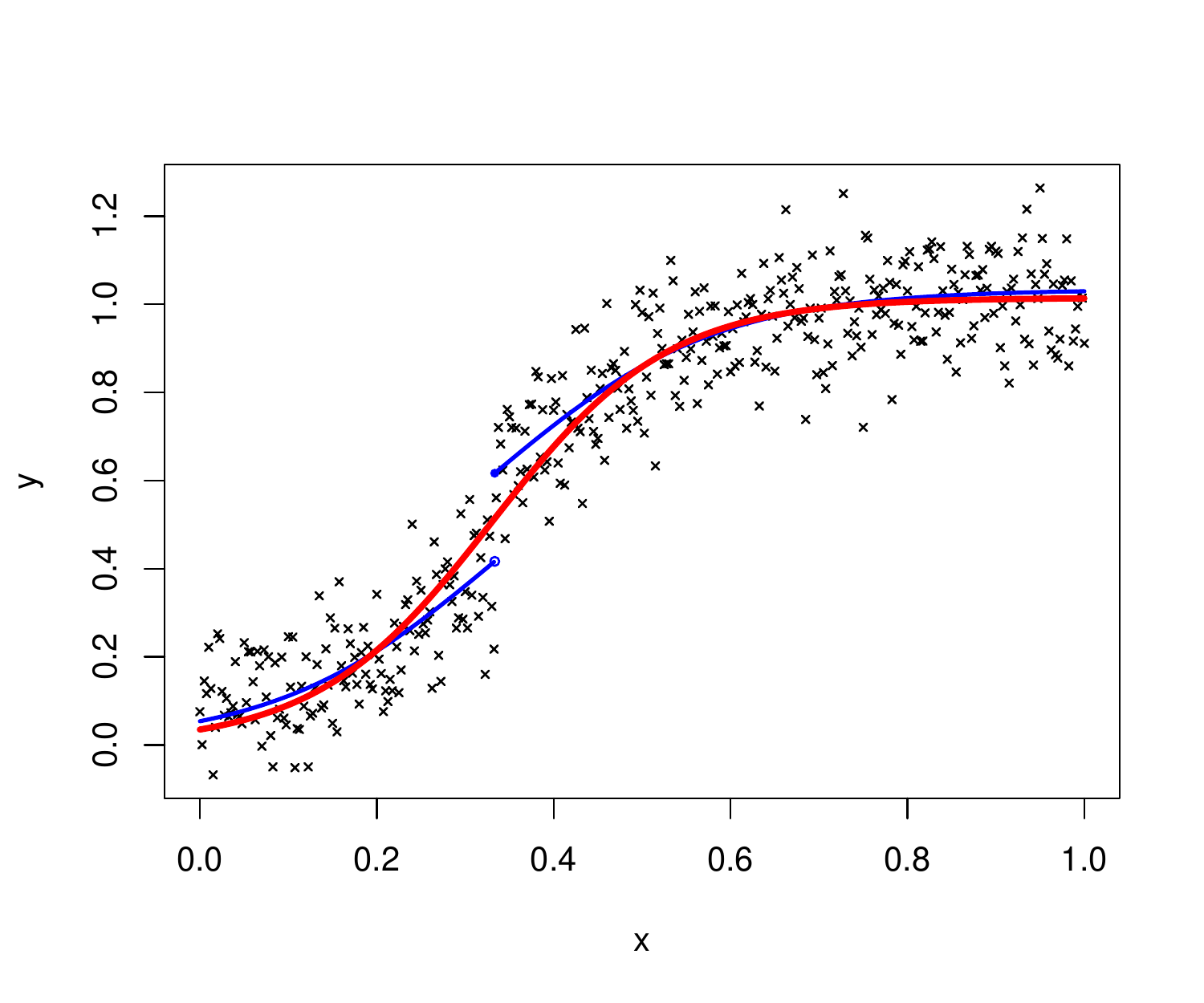}
\includegraphics[width=0.475\textwidth]{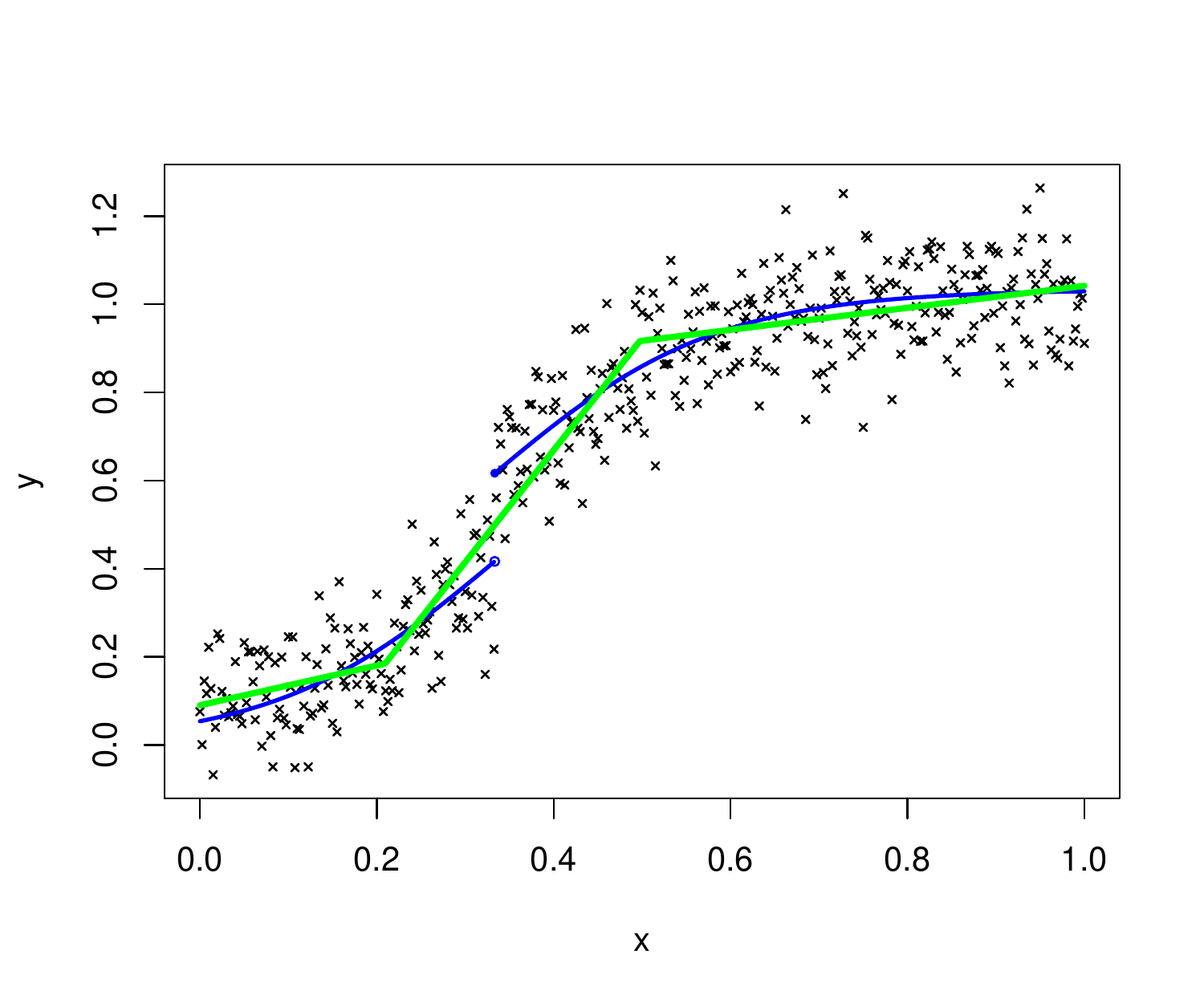}

\vspace{-1cm}
\includegraphics[width=0.475\textwidth]{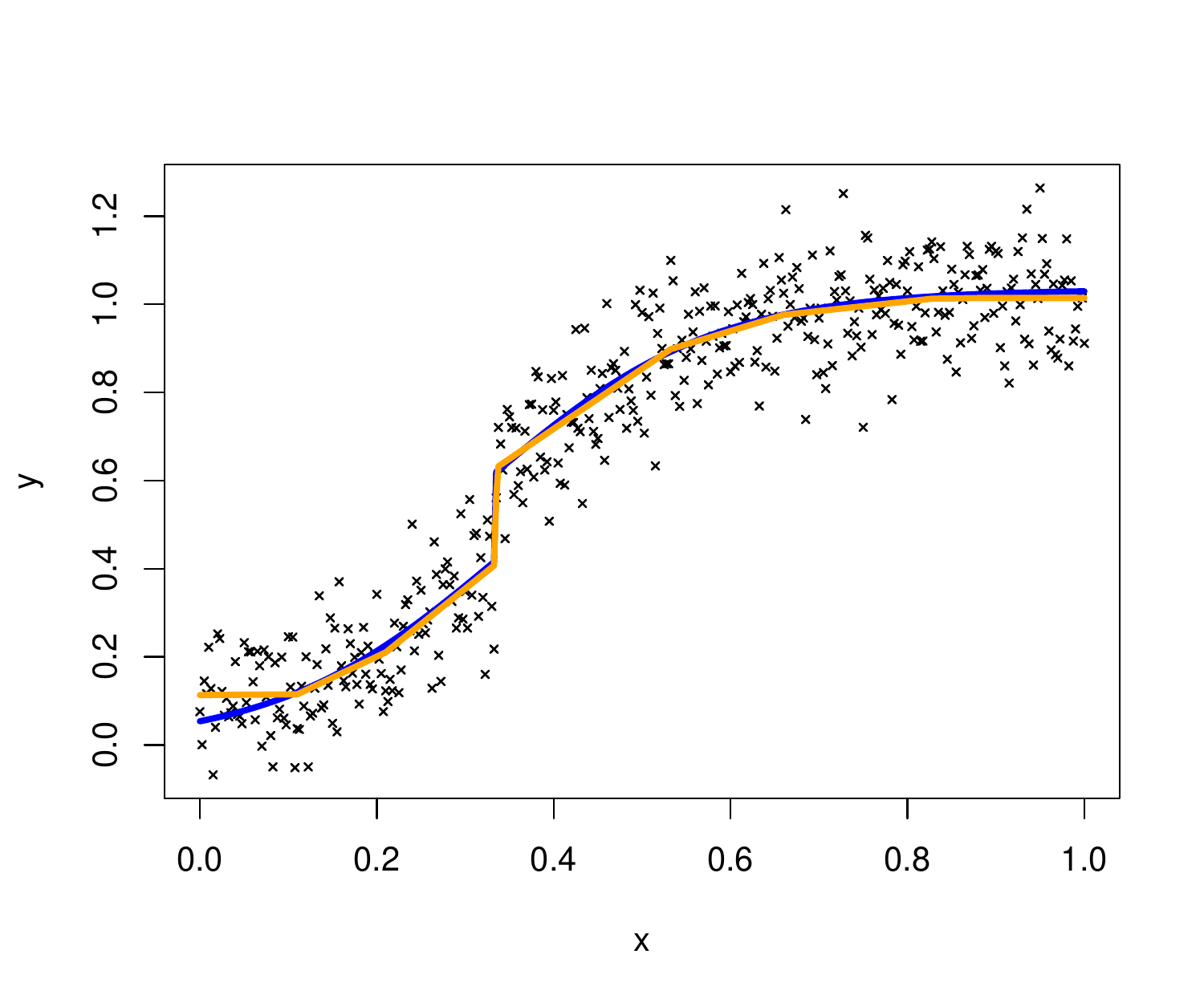}
\includegraphics[width=0.475\textwidth]{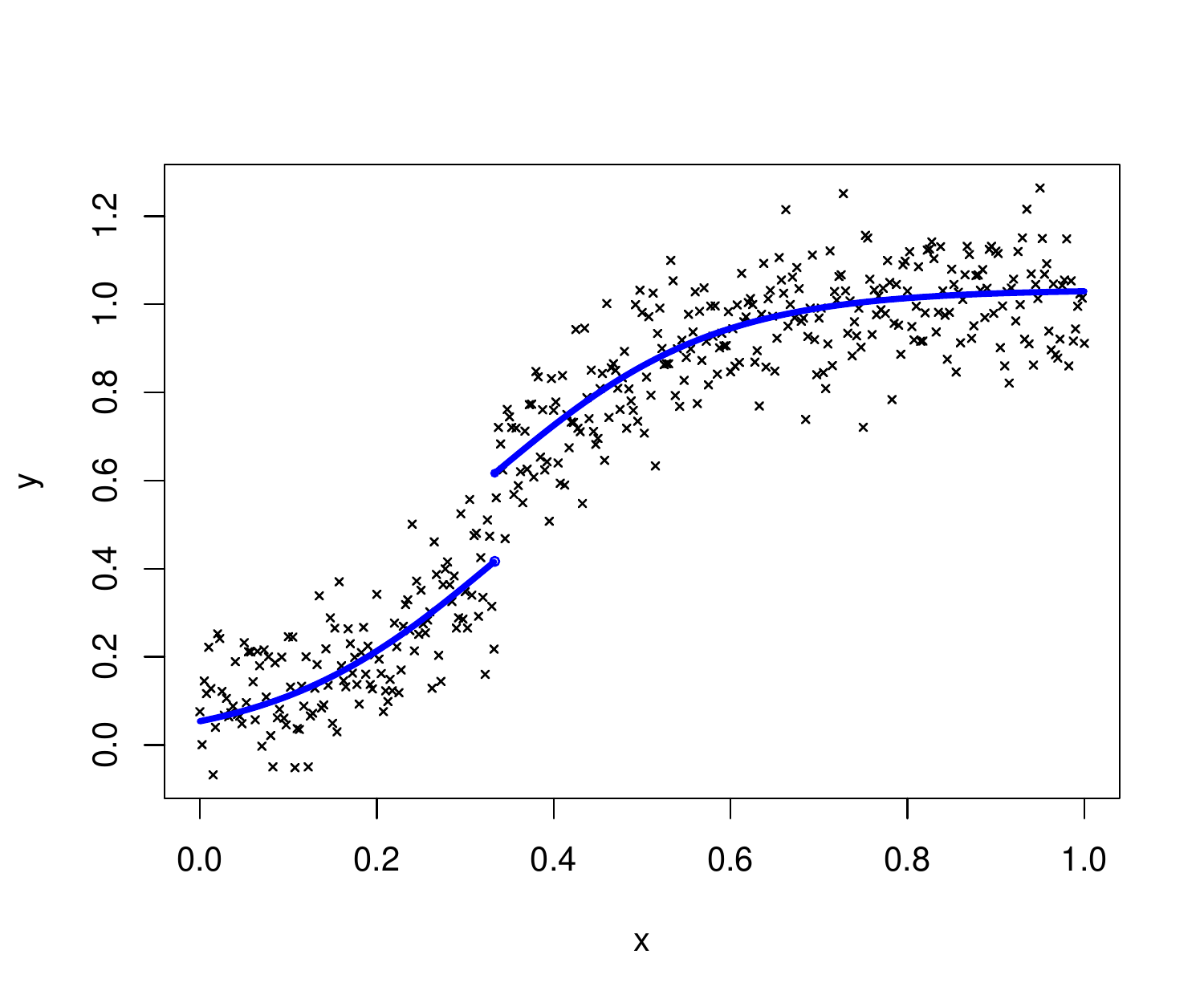}
\end{center}
\caption{\label{Fig:Parametric}Logistic (red, top left), segmented linear regression (green, top right) and our S-shaped estimator (orange, bottom left) of the true regression function (blue, all plots).}
\end{figure}

Motivated by the limitations described in the previous paragraph, the goal of this paper is to introduce a flexible framework for nonparametric estimation of S-shaped functions.  The main challenges in removing the parametric restrictions are two-fold: first, the class $\mathcal{F}$ of S-shaped functions on $[0,1]$ is infinite-dimensional; and second, since the inflection point is unknown, the family $\mathcal{F}$ is non-convex.  Despite this non-convexity, we are able to develop methodology based on suitably defined $L^2$-`projections' of general distributions onto $\mathcal{F}$.  The significant advantage of working in this additional generality is that, having established continuity properties of the projection, results on the consistency and robustness under misspecification of the estimator follow as simple corollaries of basic facts about convergence of empirical distributions.  Nevertheless, since the fully general statements are fairly involved, we defer this formal presentation to Section~\ref{sec:proj2} of the supplementary material%~\citep{FHCCS21}
, and focus in Section~\ref{sec:proj} on the special case of projections of the empirical distribution of data of the form $(x_1,Y_1),\ldots,(x_n,Y_n) \in [0,1] \times \mathbb{R}$ with $x_1 < \cdots < x_n$.  This allows us to prove that an S-shaped least squares estimator always exists, and to study its uniqueness properties.  Moreover, when the design is fixed and the errors are independent and identically distributed with mean zero and finite variance, we present a basic consistency result that follows from the general theory in Section~\ref{sec:proj2}.

In Section~\ref{sec:computation}, we take up the challenge of computing the S-shaped least squares estimator.  Since its inflection point occurs at one of the design points, a naive strategy would be to fit, for each choice of $m \in \{x_1,\ldots,x_n\}$, the least squares estimate over the class of S-shaped functions with inflection point $m$, before selecting a solution that minimises the residual sum of squares.  The individual constrained estimates are straightforward to compute using, e.g., active set methods \citep[][Chapters~12 and~16.5]{DHR07,NW06}, but it can be time-consuming to run the active set method $n$ times.  We show how a simple refinement of the search strategy can improve the running time by a factor of around 4, but our major contribution here begins with the observation that the global S-shaped least squares estimate can be obtained as a concatenation of a convex increasing least squares estimate to the left of an estimated inflection point, with a concave increasing least squares estimate to the right.  This enables us to pursue a sequential approach, where we reveal new observations one by one, and update the least squares fits using a mixed primal-dual bases algorithm \citep{FM89,Mey99}.  Our algorithm, which is available in the \texttt{R} package \texttt{Sshaped} \citep{SShaped}, is shown to be around 40 times faster than the naive strategy in examples; see Figure~\ref{Fig:Timing}.  

Our main theoretical contributions are presented in Section~\ref{sec:lsreg}, under an independent and sub-Gaussian error assumption. Here, we derive worst-case and adaptive sharp oracle inequalities for the S-shaped least squares estimator. When combined with our corresponding minimax lower bounds, this theory reveals in particular that the S-shaped least squares estimator attains the optimal worst-case risk of order $n^{-2/5}$ with respect to $L^2$-loss, in the case where the design points are not too irregularly spaced. These results apply both when the S-shaped regression function hypothesis is correctly specified, and where it is misspecified, provided in the latter case that we interpret the loss as the distance to the projection of the signal onto $\mathcal{F}$. For adversarially-chosen design configurations, we show that the risk bound can deteriorate to $n^{-1/3}$ in the worst case. Moreover, the S-shaped least squares estimator adaptively attains the parametric rate of order $n^{-1/2}$ (up to a logarithmic factor), when the projection of the signal is piecewise affine with a relatively small number of affine pieces. Finally, we study the delicate problem of estimating the true inflection point $m_0$, which represents the boundary between the convex and concave parts of the signal. Under an appropriate local smoothness assumption indexed by a parameter $\alpha > 0$, we show that the inflection point $\hat{m}_n$ of the least squares estimator converges to $m_0$ at rate $O_p\bigl((n^{-1}\log n)^{1/(2\alpha+1)}\bigr)$, which matches our local asymptotic minimax lower bound, up to the logarithmic factor. Interestingly, the combination of the monotonicity with the convexity/concavity means that our S-shaped estimator is sufficiently regularised to avoid boundary problems at the endpoints $\{0,1\}$ of the covariate domain; other common shape-constrained methods are known to lead to boundary estimation inconsistency \citep{KL06,CSS2010,BJPSW11,BGS15,Samworth2018,HK21}.
% \blue{QH: Add one more reference [HK21] on boundary behavior.}

In Section~\ref{sec:simulations}, we study the empirical properties of our S-shaped least squares estimator, comparing both its running time and statistical performance with those of alternative approaches on simulated data. We also present a real data application of these techniques in air pollution modelling, which highlights the convenience and efficacy of our proposal. We conclude by discussing some possible directions for future research in Section~\ref{sec:discussion}. The appendix (Section~\ref{sec:appendix})
% we derive some `subinterval localisation' results for shape-constrained estimators, which may be of independent interest, and
provides further details of the mixed primal-dual bases algorithm that we use to compute our estimator. The proofs of our main results are deferred to the supplementary material%~\citep{FHCCS21}
, in which the results and sections appear with an `S' before the relevant label number.

Previous work on nonparametric estimation of S-shaped functions includes \citet{YCJM19,YCJK20}, who, in the context of production theory in economics, apply a method known as shape-constrained kernel least squares to estimate multivariate production functions that are S-shaped along one-dimensional rays. \citet{KS13} use local polynomial regression techniques to identify an inflection point of a smooth signal from corrupted observations. In both of these works, kernel bandwidths must be chosen carefully to control the bias-variance tradeoff and (for the approach of \citet{KS13} in particular) to ensure that the fitted curve does not have multiple inflection points. \citet{LM17} instead estimate univariate convex-concave functions using cubic splines defined with respect to a number of user-specified knots, and establish rates of convergence for the inflection points of the resulting estimators. Their method is implemented in the \texttt{R} package \texttt{ShapeChange}~\citep{LM16}, which~\citet{LCMWG20} subsequently used in combination with the \texttt{scam} (Shape Constrained Additive Models) package of~\citet{PW15} to estimate S-shaped disease trajectories of patients with Huntington's disease. We also mention the extremum distance estimator and extremum surface estimator proposed by~\citet{Christopoulos2016}, with the aim of locating the inflection point of a smooth function based on its geometric properties. We provide a numerical comparison of our 
procedure with those of \citet{LM17}, \citet{YCJM19,YCJK20} and \citet{Christopoulos2016} in Section~\ref{sec:StatPerf}.

\hfparskip
\subsection{Notation}
\label{subsec:notation}
For $n\in\N$, we write $[n]:=\{1,\dotsc,n\}$, and given $0\leq x_1<\cdots<x_n\leq 1$, define $\mathcal{G}\equiv\mathcal{G}[x_1,\dotsc,x_n]$ to be the set of continuous, piecewise affine $f\colon [0,1]\to\R$ with kinks in $\{x_2,\dotsc,x_{n-1}\}$. If $\tilde{f}_n\colon [0,1]\to\R$ minimises\footnote{Since there may be multiple minimisers, we will also assume throughout and without further comment that $\tilde{f}_n$ is chosen to depend measurably on $(x_1,Y_1),\dotsc,(x_n,Y_n)$.  Likewise, we will assume the same property for estimated inflection points.}  $f\mapsto \sum_{i=1}^n\bigl(Y_i-f(x_i)\bigr)^2 =: S_n(f)$ over some class $\tilde{\mathcal{F}}$ of functions on $[0,1]$, we say that $\tilde{f}_n$ is a \emph{least squares estimator (LSE) over $\tilde{\mathcal{F}}$} based on $\{(x_i,Y_i):1\leq i\leq n\}$. We write $a_n \lesssim b_n$ to mean that there exists a universal constant $C > 0$ such that $a_n \leq Cb_n$ for all~$n$.

\section{Existence and consistency of S-shaped least squares estimators}
\label{sec:proj}

The purpose of this section is to study the existence, uniqueness and consistency of S-shaped least squares estimators. We will see later that in a suitable sense, these estimators can be regarded as $L^2$-projections onto $\mathcal{F}$ of the empirical distribution of the data.  As such, the results in this section turn out to be special cases of a much more general theory, presented in Section~\ref{sec:proj2}, concerning the existence and continuity of $L^2$-projections of arbitrary distributions on $[0,1] \times \mathbb{R}$ having finite variance.  The generality of this projection framework remains of importance to statisticians, particularly in terms of providing results on the robustness of S-shaped least squares estimators to model misspecification; however, the results are of a more technical nature, so to facilitate understanding of the main ideas, we focus on the well-specified case here. 

Suppose we have observations $(x_1,Y_1),\dotsc,(x_n,Y_n) \in [0,1] \times \mathbb{R}$ with $x_1 < \cdots < x_n$. For each $m\in [0,1]$, we denote by $\mathcal{F}^m$ the class of S-shaped functions with an inflection point at $m$, i.e.\ the set of all $f\colon [0,1]\to\R$ that are convex on $[0,m]$, concave on $[m,1]$ and increasing (i.e.\ non-decreasing) on $[0,1]$.  Thus $\mathcal{F}:=\bigcup_{m\in [0,1]}\mathcal{F}^m$ is the set of all S-shaped functions on $[0,1]$, but this union of convex sets is not itself convex. 
\begin{proposition}
\label{prop:existence}
For each $m \in [0,1]$, there exists an LSE $\tilde{f}_n^m$ over $\mathcal{F}^m$ that is uniquely determined at $x_1,\ldots,x_n$.  Moreover, there exists an LSE $\tilde{f}_n$ over $\mathcal{F}$ with an inflection point in $\{x_1,\ldots,x_n\}$.
\end{proposition}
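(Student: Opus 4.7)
My plan is to reduce both statements to the study of the set of achievable fitted-value vectors
\[
\mathcal{V}^m := \bigl\{(f(x_1), \ldots, f(x_n)) : f \in \mathcal{F}^m\bigr\} \subseteq \mathbb{R}^n.
\]
For Part 1, since $S_n(f)$ depends on $f$ only through $(f(x_1), \ldots, f(x_n))$, the problem is equivalent to finding the closest point in $\mathcal{V}^m$ to $Y := (Y_1, \ldots, Y_n)$. The first step is to show that $\mathcal{V}^m$ is a closed convex set, characterised by finitely many linear inequalities: monotonicity $y_i \leq y_{i+1}$; convexity at every triple of data indices in $[0,m]$; concavity at every triple in $[m,1]$; and, when $m \in (x_j, x_{j+1})$ lies strictly between two data points, a single additional compatibility inequality at $m$. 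Writing $s_{i,i+1} := (y_{i+1} - y_i)/(x_{i+1} - x_i)$, this compatibility inequality is derived by noting that any $f \in \mathcal{F}^m$ satisfies $\lim_{x \uparrow m} f(x) \geq y_j + s_{j-1,j}(m - x_j)$ (from convexity on $[0,m]$ together with $f(x_j)=y_j$) and $f(m) \leq y_{j+1} - s_{j+1,j+2}(x_{j+1} - m)$ (from concavity on $[m,1]$); monotonicity then forces $\lim_{x \uparrow m} f(x) \leq f(m)$, yielding the stated inequality. Conversely, any $y$ satisfying all these inequalities is realised by a suitable piecewise-linear $f \in \mathcal{F}^m$ with slopes $s_{j-1,j}$ on $[x_j, m]$ and $s_{j+1,j+2}$ on $[m, x_{j+1}]$. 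Since $\mathcal{V}^m$ is closed and convex, the Hilbert projection theorem delivers a unique minimiser $\tilde{y}^m \in \mathcal{V}^m$, and any $f \in \mathcal{F}^m$ with $f(x_i) = \tilde{y}^m_i$ is the required LSE $\tilde{f}_n^m$.

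For Part 2, I would establish the union-of-cones identity
\[
\bigcup_{m \in [0,1]} \mathcal{V}^m \;=\; \bigcup_{j=1}^n \mathcal{V}^{x_j},
\]
which is enough: the loss of an LSE over $\mathcal{F}$ equals $\min_{m \in [0,1]} \min_{y \in \mathcal{V}^m} \|Y - y\|^2 = \min_{j \in [n]} \min_{y \in \mathcal{V}^{x_j}} \|Y - y\|^2$, attained at some $j^*$, so $\tilde{f}_n^{x_{j^*}}$ from Part 1 is the desired LSE with inflection point at the data point $x_{j^*}$. The nontrivial inclusion is $\mathcal{V}^m \subseteq \mathcal{V}^{x_j} \cup \mathcal{V}^{x_{j+1}}$ when $m \in (x_j, x_{j+1})$. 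Here the compatibility inequality from Part 1 rewrites as
\[
s_{j,j+1} \;\geq\; \lambda s_{j-1,j} + (1-\lambda) s_{j+1,j+2}, \qquad \lambda := \frac{m - x_j}{x_{j+1} - x_j} \in (0,1),
\]
i.e.\ as a convex combination of the two neighbouring secant slopes; hence $s_{j,j+1} \geq \min(s_{j-1,j}, s_{j+1,j+2})$. If $s_{j-1,j} \leq s_{j+1,j+2}$, then $s_{j-1,j} \leq s_{j,j+1}$, which together with the convexity constraints on $\{x_1, \ldots, x_j\}$ already encoded in $\mathcal{V}^m$ extends the chain of convex secant slopes to $\{x_1, \ldots, x_{j+1}\}$, placing $y$ in $\mathcal{V}^{x_{j+1}}$; otherwise $s_{j,j+1} \geq s_{j+1,j+2}$, and an analogous argument through concavity places $y$ in $\mathcal{V}^{x_j}$. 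Boundary situations ($m \in [0, x_1]$ or $m \in [x_n, 1]$, where $\mathcal{V}^m$ coincides with $\mathcal{V}^{x_1}$ or $\mathcal{V}^{x_n}$ respectively, and $j \in \{1, n-1\}$ where one of the endpoint triples is degenerate) are handled with trivial modifications of the same dichotomy.

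The main technical obstacle will be the careful derivation of the compatibility inequality and verification that the prescribed piecewise-linear interpolants genuinely lie in $\mathcal{F}^m$, paying due attention to the fact that the function need not be continuous at $m$ and may jump upward there. Once the compatibility inequality is available in its convex-combination form, Part 1 reduces to a standard Hilbert projection, and the key identity underlying Part 2 follows from the elementary observation that a convex combination is at least the minimum of its arguments.
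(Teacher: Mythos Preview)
Your proposal is correct and follows essentially the same framework as the paper: reduce to the fitted-value cone $\mathcal{V}^m=\Gamma^m\subseteq\mathbb{R}^n$, invoke Hilbert projection onto a closed convex cone for Part~1, and establish the union identity $\bigcup_m\Gamma^m=\bigcup_{j=1}^n\Gamma^{x_j}$ for Part~2. The paper's own proof is considerably terser, simply asserting that $\Gamma^m$ is a closed convex cone and that every $f\in\mathcal{F}$ agrees on $\{x_1,\dots,x_n\}$ with some $h\in\mathcal{H}=\mathcal{F}\cap\mathcal{G}$; your explicit compatibility inequality and the convex-combination dichotomy supply exactly the verification the paper leaves to the reader.
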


\unparskip
A straightforward and direct proof of this result is given in Section~\ref{sec:bdadj}. As part of the projection framework in Section~\ref{sec:proj2}, we obtain generalisations of Proposition~\ref{prop:existence} in Corollaries~\ref{cor:proj}(d) and~\ref{cor:cont}(a). Since our objective criterion only measures the error incurred at the design points, it is no surprise that any LSE $\tilde{f}_n^m$ over $\mathcal{F}^m$ can only be unique at $x_1,\ldots,x_n$.  There is a canonical way to define $\tilde{f}_n^m$ on the whole of $[0,1]$, namely by linear interpolation between its kinks.  Thus, the slope remains constant on $[0,x_2],[x_2,x_3],\ldots,[x_{n-2},x_{n-1}], [x_{n-1},1]$, and we denote this interpolating function by~$\hat{f}_n^m\in\mathcal{G}\equiv\mathcal{G}[x_1,\dotsc,x_n]$.  A subtle issue, however, is that when $m$ is not a design point, $\hat{f}_n^m$ need not belong to $\mathcal{F}^m$; see the left panel of Figure~\ref{Fig:NonUniqueness}.  To finesse this point, for $m \in [0,1]$, denote by $\mathcal{H}^m\equiv\mathcal{H}^m[x_1,\dotsc,x_n]$ the class of all $f \in \mathcal{G}$ for which there exists $g\in\mathcal{F}^m$ with $f=g$ on $\{x_1,\dotsc,x_n\}$. Then $\mathcal{H}^m$ is a closed, convex cone, and the LSE over $\mathcal{H}^m$ based on $\{(x_i,Y_i):1\leq i\leq n\}$ is precisely the function $\hat{f}_n^m$. We refer to $\hat{f}_n^0$ and $\hat{f}_n^1$ as the \emph{increasing concave} LSE and \emph{increasing convex} LSE (based on $\{(x_i,Y_i):1\leq i\leq n\}$) respectively.
\begin{figure}[htbp!]
\begin{center}
\includegraphics[width=0.45\textwidth]{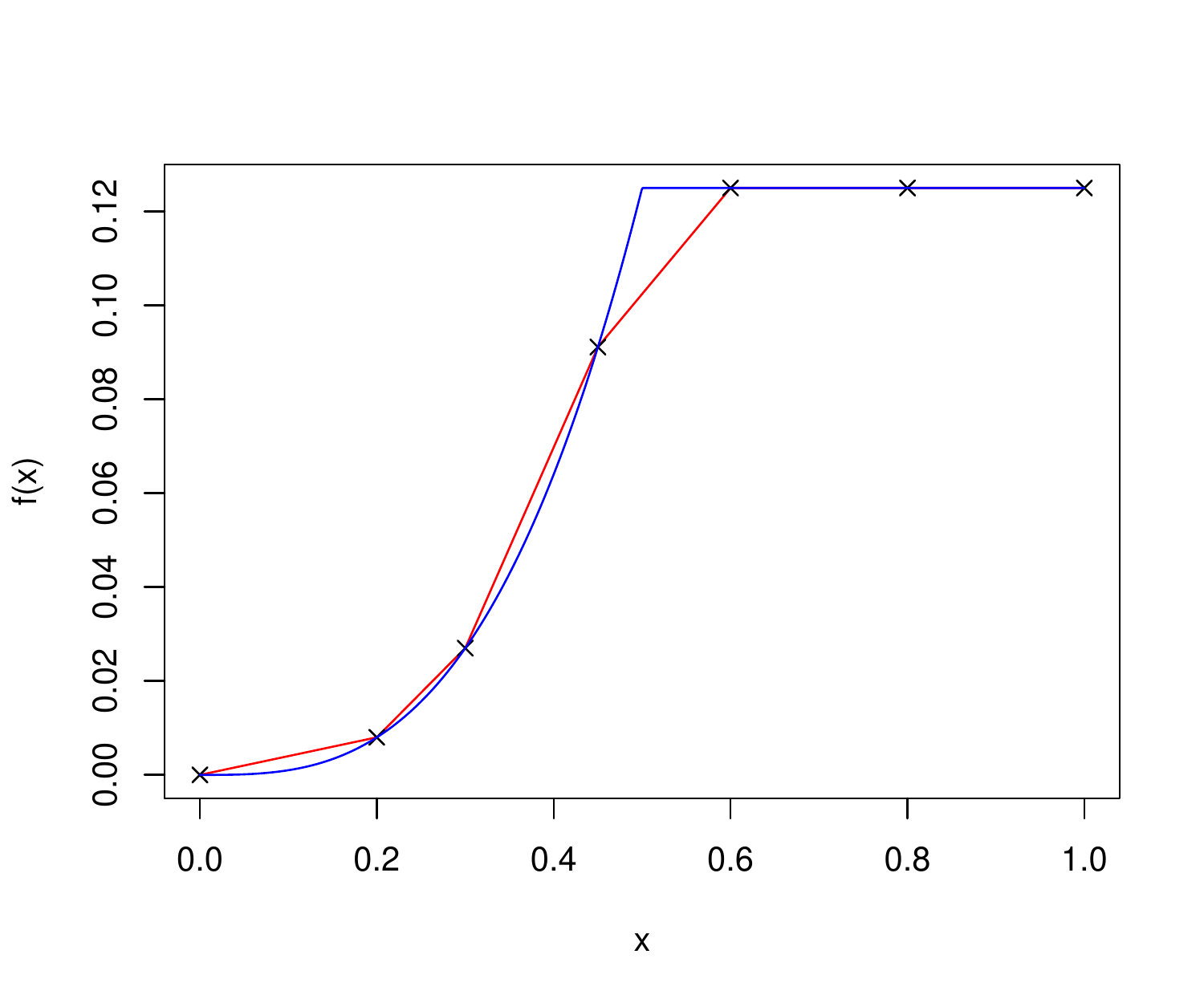}
\includegraphics[width=0.45\textwidth]{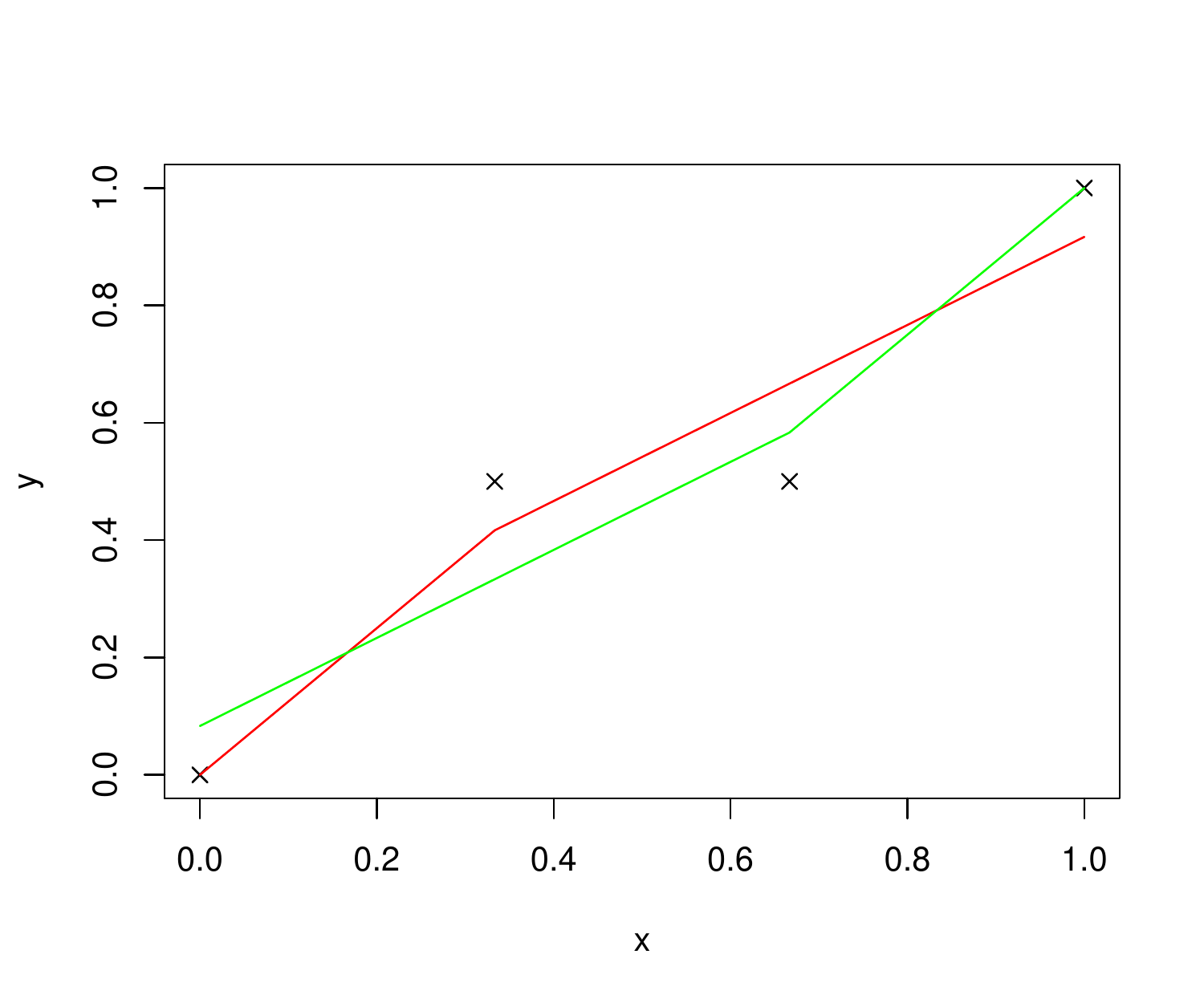}
\caption{\label{Fig:NonUniqueness}Left: For noiseless observations of the blue regression function at the black crosses, the red curve illustrates the linear interpolation $\hat{f}_n^m$ of the LSE, with $m=0.5$; here, the segment of steepest slope does not contain $x=0.5$, so $\hat{f}_n^m$ does not belong to $\mathcal{F}^m$ with $m=0.5$.  Right: For the data given by the black crosses, both the red curve and the green curve are LSEs over $\mathcal{F}$.}
\end{center}
\end{figure}

\unparskip
It turns out, however, that in general an LSE $\tilde{f}_n$ over $\mathcal{F}$ is not even uniquely defined at the design points.  For instance, if our data are $(0,0),(1/3,1/2),(2/3,1/2),(1,1)$, then the linear interpolations of both $(0,0),(1/3,5/12),(2/3,2/3),(1,11/12)$ and $(0,1/12),(1/3,1/3),(2/3,7/12),(1,1)$ are LSEs over~$\mathcal{F}$; see the right panel of Figure~\ref{Fig:NonUniqueness}.  We remark that this non-uniqueness is not related to the small number of data points, but rather to the symmetry of the data configuration.

In order to present a basic consistency result, we introduce a model where we regard our data $\{(x_1,Y_1),\ldots,(x_n,Y_n)\} \equiv \{(x_{n1},Y_{n1}),\ldots,(x_{nn},Y_{nn})\}$ as being realised from a triangular array sampling scheme
\begin{equation}
\label{eq:Model}
Y_{ni}=f_0(x_{ni})+\xi_{ni},\quad i=1,\dotsc,n,
\end{equation}
where $f_0 \colon [0,1]\to\R$ is a Borel measurable regression function, where $\xi_{n1},\dotsc,\xi_{nn}$ are independent noise variables with mean zero and finite variance for each $n$, and where $0\leq x_{n1}<\cdots<x_{nn}\leq 1$ are fixed design points.  We write $\Pr_n:=n^{-1}\sum_{i=1}^n\delta_{(x_{ni},Y_{ni})}$ and $\Pr_n^X:=n^{-1}\sum_{i=1}^n\delta_{x_{ni}}$ for the joint and $X$-marginal empirical distributions respectively.

For a finite Borel measure $\nu$ on $[0,1]$, we denote by $\supp\nu$ the \emph{support} of $\nu$, which is defined as the smallest closed set $A$ such that $\nu(A^c)=0$, or equivalently the set of all $x\in [0,1]$ with the property that $\nu(U)>0$ for any open neighbourhood $U$ of $x$ in $[0,1]$.

\begin{proposition}
\label{cor:consistency}
In model~\eqref{eq:Model}, assume that $f_0 \in \mathcal{F}$ has unique inflection point $m_0 \in [0,1]$ and that $\xi_{n1},\dotsc,\xi_{nn}$ are independent and identically distributed for each $n$.  For each $n \in \mathbb{N}$, let $\hat{f}_n^{m_0}$ and $\tilde{f}_n$ denote LSEs over $\mathcal{F}^{m_0}$ and $\mathcal{F}$ respectively.  Suppose further that $(\Pr_n^X)$ converges weakly to a distribution $P_0^X$ on $[0,1]$ satisfying $\supp P_0^X=[0,1]$ and $P_0^X(\{m\})=0$ for all $m\in [0,1]$.  Then, for $\tilde{g}_n \in \{\hat{f}_n^{m_0},\tilde{f}_n\}$ and with $\tilde{m}_n$ denoting any inflection point of $\tilde{g}_n$, we have
%then the following hold as $n\to\infty$:
\unparskip
\begin{enumerate}[label=(\alph*)]
\item $\tilde{m}_n\cvp m_0$;
\item $\sup_{x\in A}\,\abs{(\tilde{g}_n-f_0)(x)}\cvp 0$ for any closed set $A\subseteq [0,1]\setminus\{m_0\}$;
% in particular, increasing convex LSEs converge uniformly to the truth on compact subsets of the covariate domain
\item If $m_0\in (0,1)$, then $\int_0^1 |\tilde{g}_n-f_0|^q \, dP_0^X \cvp 0$ for all $q\in [1,\infty)$;
\item If $m_0\in (0,1)$ and in addition $f_0$ is continuous at $m_0$, then $\sup_{x\in [0,1]}\,\abs{(\tilde{g}_n-f_0)(x)}\cvp 0$.
\end{enumerate}

\unparskip
\end{proposition}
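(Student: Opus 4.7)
The plan is to reduce Proposition~\ref{cor:consistency} to the general projection framework announced in Section~\ref{sec:proj2}. The key reformulation is that each LSE $\tilde g_n$ can be interpreted as (the piecewise-affine interpolation of) an $L^2$-projection of the empirical distribution $\Pr_n = n^{-1}\sum_{i=1}^n \delta_{(x_{ni},Y_{ni})}$ onto the non-convex class $\mathcal{F}$, while $f_0$ itself is the $L^2$-projection onto $\mathcal{F}$ of the joint population law $P_0$ induced by $X\sim P_0^X$ and $Y = f_0(X)+\xi$ with $\xi \indep X$. So the whole proof reduces to: (i) verify that $\Pr_n \to P_0$ in the mode demanded by the continuity results of Section~\ref{sec:proj2} (Corollaries~\ref{cor:proj} and~\ref{cor:cont}); and (ii) convert the resulting abstract convergence $\tilde g_n\cvp f_0$ into the concrete statements (a)--(d).

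For (i), I would first show that $\Pr_n \cvp P_0$ weakly in probability, together with convergence of second moments $\int y^2\, d\Pr_n \cvp \int y^2\, dP_0$. The marginal $\Pr_n^X \Rightarrow P_0^X$ is given, $f_0$ is monotone and hence $P_0^X$-a.e.\ continuous (since $P_0^X$ is atomless), so $\Pr_n^X\circ f_0^{-1}\cvp P_0^X\circ f_0^{-1}$; combined with the iid, mean-zero, finite-variance error assumption and a standard WLLN plus a continuous-mapping argument, this yields the required weak and second-moment convergence of $\Pr_n$. Applying Corollary~\ref{cor:cont} then gives convergence of any sequence of projections $\tilde g_n$ to the (unique, in this well-specified case) projection $f_0$ in the mode supplied by that result.

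For (ii), I would use the shape constraints to upgrade the abstract convergence. For (a), the uniqueness of $m_0$ as inflection point of $f_0$ forces $\tilde m_n\cvp m_0$: along any subsequence $\tilde m_{n_k}\to m^\ast\neq m_0$, the limit of $\tilde g_{n_k}$ would be convex on $[0,m^\ast]$ and concave on $[m^\ast,1]$, contradicting uniqueness of $m_0$. For (b), once $\tilde m_n$ concentrates near $m_0$, any closed $A\subseteq [0,1]\setminus\{m_0\}$ lies, with probability tending to one, in a compact subinterval of either $(0,\tilde m_n)$ or $(\tilde m_n,1)$; then the classical fact that pointwise convergence of monotone (and convex or concave) functions upgrades to uniform convergence on compact subintervals of continuity of the limit gives the uniform statement. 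For (c), monotonicity plus the behaviour at the endpoints bounds $\tilde g_n$ uniformly, and (b) provides $P_0^X$-a.e.\ pointwise convergence (since $P_0^X(\{m_0\})=0$), so bounded convergence yields the $L^q$ claim. For (d), continuity of $f_0$ at $m_0$ combined with monotonicity of both $\tilde g_n$ and $f_0$ allows a sandwich argument: evaluating $\tilde g_n$ at points just to the left and right of $m_0$ and applying~(b) there, the oscillation of $\tilde g_n$ across $m_0$ is controlled by the oscillation of $f_0$ across $m_0$, which is arbitrarily small, and uniform convergence extends across $m_0$.

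The main obstacle, which is really the work of Section~\ref{sec:proj2}, is establishing the requisite continuity of the projection onto the non-convex class $\mathcal{F}=\bigcup_{m\in[0,1]}\mathcal{F}^m$. Unlike projections onto a single closed convex cone (which are automatically $1$-Lipschitz in $L^2$), here the projection can be set-valued, as the right panel of Figure~\ref{Fig:NonUniqueness} shows, and continuity must be formulated in a way that accommodates this. Once that abstract continuity is in hand, the conversions above are essentially mechanical uses of monotonicity and convexity/concavity; the most delicate of them is the behaviour at the inflection point, and the structure of the hypotheses in (b)--(d) (excluding a neighbourhood of $m_0$; requiring $m_0\in(0,1)$; requiring continuity of $f_0$ at $m_0$) is precisely what is needed to handle it cleanly.
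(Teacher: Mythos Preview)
Your proposal is correct and follows essentially the same route as the paper: establish $W_2(\Pr_n,P_0)\cvp 0$ via weak convergence plus second-moment convergence, then invoke the continuity results of Section~\ref{sec:proj2}. Note that your ``upgrade'' arguments in step~(ii) for parts (a)--(d) are not additional work after applying Corollary~\ref{cor:cont}: they are precisely the content of Proposition~\ref{prop:cont}(e,\,g,\,h,\,i) and Corollary~\ref{cor:cont}(b,\,c,\,d), so the paper simply cites those results directly (via Proposition~\ref{prop:consistency} for $\tilde f_n$ and Proposition~\ref{prop:cont} for $\hat f_n^{m_0}$) rather than re-deriving them.
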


\unparskip
Proposition~\ref{cor:consistency} follows from Proposition~\ref{prop:consistency} in Section~\ref{sec:proj2}, which handles the more general case where $f_0$ need not belong to $\mathcal{F}$, and where it may have multiple inflection points. A proof of the latter result is given in Section~\ref{sec:projproofs}.

\section{Computation of S-shaped least squares estimators}
\label{sec:computation}
Returning to the setting of data $(x_1,Y_1),\dotsc,(x_n,Y_n) \in [0,1] \times \mathbb{R}$ with $x_1 < \cdots < x_n$, we now consider the problem of computing an S-shaped LSE over $\mathcal{F}$.  In light of the non-uniqueness discussion in Section~\ref{sec:proj}, we will take as our target the LSE $\hat{f}_n:=\hat{f}_n^{\hat{m}_n}$, where $\hat{m}_n:=x_{\hat{\jmath}_n}$ and $\hat{\jmath}_n:=\sargmin_{1\leq j\leq n}S_n(\hat{f}_n^{x_j})$; here and below, $\sargmin$ denotes the smallest element of the $\argmin$.  One of the main challenges here is that in general the function $j\mapsto S_n(\hat{f}_n^{x_j})$ has multiple local minima; see Figure~\ref{Fig:MultipleMinima}.  A `brute-force' method that we call \texttt{ScanAll}, then, is to compute each of the LSEs $\hat{f}_n^{x_1},\dotsc,\hat{f}_n^{x_n}$ directly by solving~$n$ separate constrained least squares problems. 
% of comparable complexity / provided that all $n$ observations $(x_1,Y_1),\dotsc,(x_n,Y_n)$ are supplied as inputs / incurs significant costs / In search of a more efficient alternative with superior performance / to streamline this procedure / this can be done by running
In each instance, we can run the support reduction algorithm~\citep{GJW08} or a generic active set algorithm~\citep[][Chapters~12 and~16.5]{DHR07,NW06} on the whole dataset $\{(x_i,Y_i):1\leq i\leq n\}$, but it is computationally expensive to repeat this $n$ times, even when $n$ is only moderately large; see Section~\ref{sec:comptime}.
\begin{figure}
\centering
\includegraphics[width=0.45\textwidth]{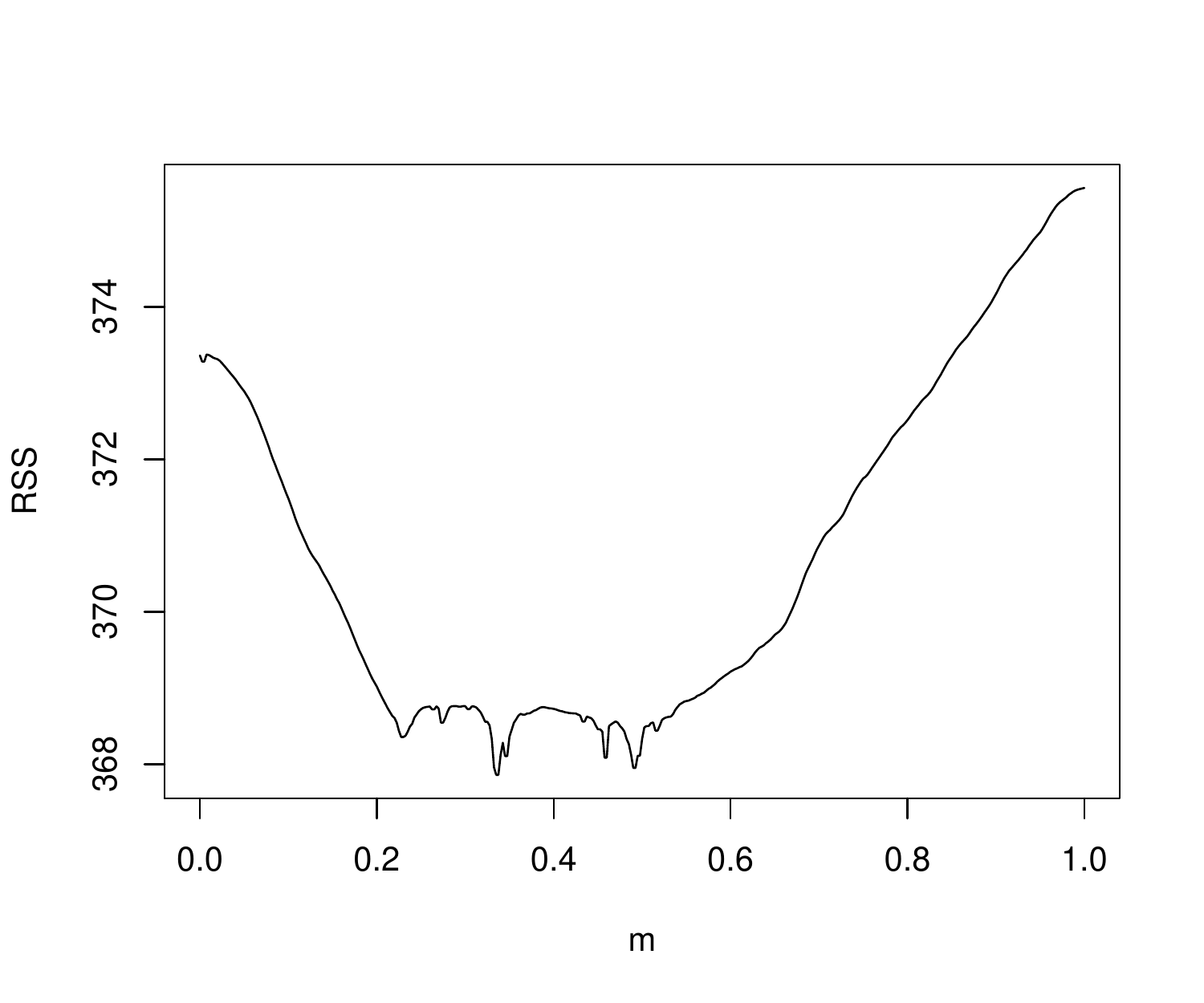}
\includegraphics[width=0.45\textwidth]{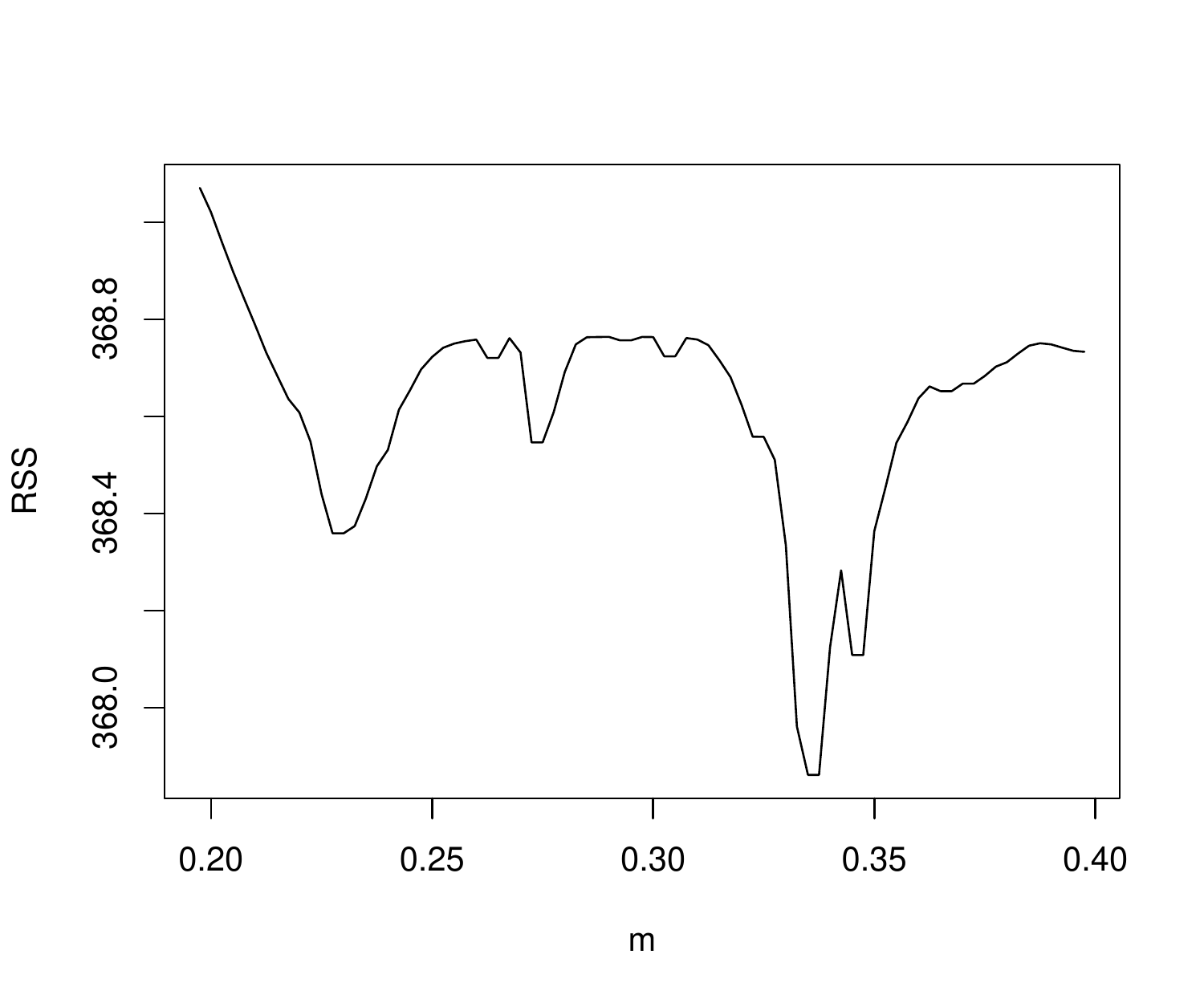}
\caption{\label{Fig:MultipleMinima}Plots of the residual sum of squares $S_n(\hat{f}_n^m)$ of the least squares estimator with inflection point at~$m$ over $m \in [0,1]$ (left) and $m \in [0.2,0.4]$ (right), illustrating the multiple local minima of this function.  Here, with $n=400$, the data were generated according to $Y_i = f(x_i) + \xi_i$ for $i=1,\ldots,n$, with $f$ taken to be the blue regression function from Figure~\ref{Fig:Examples}, $x_i = i/n$ for $i=1,\ldots,n$ and $\xi_1,\ldots,\xi_n$ independent $N(0,1)$ random errors.}
\end{figure}

\unparskip
To improve the overall efficiency of this procedure, it would therefore be desirable to both refine the initial search strategy as well as exploit any common structure underlying the individual minimisation problems. For instance, we might hope to be able to obtain $\hat{f}_n^{x_j}$ via a faster update step that takes as input the previous LSE $\hat{f}_n^{x_{j-1}}$, but it is not immediately clear how this can be done. 

We now describe and justify an alternative approach that achieves both of the above objectives. For $j\in [n]$, we write $\hat{f}_{1,j}\in\mathcal{G}[x_1,\dotsc,x_j]$ for the increasing convex LSE based on $\{(x_i,Y_i):1\leq i\leq j\}$ and $\hat{f}_{n,j}\in\mathcal{G}[x_j,\dotsc,x_n]$ for the increasing concave LSE based on $\{(x_i,Y_i):j\leq i\leq n\}$, recalling from, e.g.,~\citet[Lemma~2.2]{GS17} that
\begin{equation}
\label{eq:GS17}
\hat{f}_{1,j}(x_j)\geq Y_j\geq\hat{f}_{n,j}(x_j) \quad \text{for all }j \in [n].
\end{equation}
We then define $\hat{h}_n^j\in\mathcal{G}[x_1,\dotsc,x_n]$ for $j\in [n]$ by
\begin{equation}
\label{eq:hnj}
\hat{h}_n^j(x_i):=
\begin{cases}
\hat{f}_{1,j}(x_i)\quad&\text{for }i\in\{1,\dotsc,j\}\\
\hat{f}_{n,j+1}(x_i)\quad&\text{for }i\in\{j+1,\dotsc,n\}.
\end{cases}
\end{equation}
In other words, $\hat{h}_n^j$ is obtained by partitioning the data into two disjoint subsets, namely $\{(x_1,Y_1),\dotsc,(x_j,Y_j)\}$ and $\{(x_{j+1},Y_{j+1}),\dotsc,(x_n,Y_n)\}$, and then fitting separate increasing convex and increasing concave LSEs on the left and right pieces respectively. In general, $\hat{h}_n^j$ is not guaranteed to be S-shaped or even increasing on $[0,1]$, in which case $\hat{h}_n^j$ does not coincide with the LSE $\hat{f}_n^{x_j}$ over $\mathcal{H}^{x_j}\equiv\mathcal{H}^{x_j}[x_1,\dotsc,x_n]=\mathcal{F}^{x_j}\cap\mathcal{G}[x_1,\dotsc,x_n]$. Nevertheless, observe that $\hat{h}_n^j$ is the LSE over a larger subclass of $\mathcal{G}[x_1,\dotsc,x_n]$ that contains $\mathcal{H}^{x_j}$. Together with~\eqref{eq:GS17}, this immediately implies Proposition~\ref{prop:factA} below, a key fact that we will exploit in our algorithm.
% The first key fact we will exploit in our algorithm below is an immediate consequence of~\eqref{eq:GS17} and the fact that $\hat{h}_n^j$ is the LSE over a larger subclass of $\mathcal{G}\equiv\mathcal{G}[x_1,\dotsc,x_n]$ that contains $\mathcal{H}^{x_j}$.
\begin{proposition}
\label{prop:factA}
For $j\in [n]$, we have $\hat{h}_n^j=\hat{f}_n^{x_j}$ if and only if $\hat{h}_n^j\in\mathcal{H}^{x_j}$, i.e.\ if and only if either $j=n$ or 
\begin{equation}
\label{eq:hnjslope}
\frac{\hat{f}_{n,j+1}(x_{j+2})-\hat{f}_{n,j+1}(x_{j+1})}{x_{j+2}-x_{j+1}}\leq\frac{\hat{f}_{n,j+1}(x_{j+1})-\hat{f}_{1,j}(x_j)}{x_{j+1}-x_j}.
\end{equation}
If~\eqref{eq:hnjslope} holds, then $Y_j\leq\hat{h}_n^j(x_j)\leq\hat{h}_n^j(x_{j+1})\leq Y_{j+1}$.
\end{proposition}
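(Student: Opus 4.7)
The plan is to exhibit $\hat{h}_n^j$ as a minimiser of $S_n$ over a closed convex cone $\mathcal{K}^j \supseteq \mathcal{H}^{x_j}$ obtained by relaxing precisely the single slope inequality that couples the convex and concave halves at $x_j$. Once this is in place, the equivalence $\hat{h}_n^j = \hat{f}_n^{x_j} \Leftrightarrow \hat{h}_n^j \in \mathcal{H}^{x_j}$ is immediate: the forward direction is \emph{a fortiori} minimisation over a subset that happens to contain the unconstrained minimiser, while the reverse direction is trivial since $\hat{f}_n^{x_j} \in \mathcal{H}^{x_j}$ by construction.

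To set this up, I would first translate membership in $\mathcal{H}^{x_j} = \mathcal{F}^{x_j} \cap \mathcal{G}[x_1,\dotsc,x_n]$ into slope conditions. Writing $s_i(f) := \bigl(f(x_{i+1}) - f(x_i)\bigr)/(x_{i+1} - x_i)$ for $i \in [n-1]$, the fact that $f \in \mathcal{G}[x_1,\dotsc,x_n]$ has kinks only in $\{x_2,\dotsc,x_{n-1}\}$ means that convexity of $f$ on $[0, x_j]$ amounts to $s_1(f) \leq \cdots \leq s_{j-1}(f)$, concavity on $[x_j, 1]$ to $s_j(f) \geq s_{j+1}(f) \geq \cdots \geq s_{n-1}(f)$, and monotonicity to $s_i(f) \geq 0$ for every $i$. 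Crucially, no constraint links $s_{j-1}(f)$ with $s_j(f)$; the only interaction between the two halves is the single inequality $s_j(f) \geq s_{j+1}(f)$.

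Let $\mathcal{K}^j \subseteq \mathcal{G}[x_1,\dotsc,x_n]$ be the cone obtained from $\mathcal{H}^{x_j}$ by dropping exactly this constraint. Then the objective $S_n(f) = \sum_{i=1}^j (Y_i - f(x_i))^2 + \sum_{i=j+1}^n (Y_i - f(x_i))^2$ separates cleanly over $\mathcal{K}^j$: the left sum and the remaining left-hand shape constraints involve only $(f(x_1),\dotsc,f(x_j))$, the right sum and right-hand constraints only $(f(x_{j+1}),\dotsc,f(x_n))$, and no cross-coupling remains. The two sub-problems are exactly the increasing convex and increasing concave least-squares problems defining $\hat{f}_{1,j}$ and $\hat{f}_{n,j+1}$, so $\hat{h}_n^j$ is a minimiser of $S_n$ over $\mathcal{K}^j$. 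The condition $\hat{h}_n^j \in \mathcal{H}^{x_j}$ then reduces precisely to $s_j(\hat{h}_n^j) \geq s_{j+1}(\hat{h}_n^j)$, which upon substituting~\eqref{eq:hnj} is exactly~\eqref{eq:hnjslope}. The boundary cases dispose of themselves: for $j = n$, $\hat{h}_n^n = \hat{f}_{1,n}$ is already increasing convex and so lies in $\mathcal{H}^{x_n}$, while for $j = n-1$ the right-hand slope constraints of $\mathcal{H}^{x_{n-1}}$ are vacuous and~\eqref{eq:hnjslope} is interpreted with its left-hand side set to $0$, reducing to $s_{n-1}(\hat{h}_n^{n-1}) \geq 0$.

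For the final inequality chain, the outer bounds $Y_j \leq \hat{h}_n^j(x_j) = \hat{f}_{1,j}(x_j)$ and $\hat{h}_n^j(x_{j+1}) = \hat{f}_{n,j+1}(x_{j+1}) \leq Y_{j+1}$ are immediate from~\eqref{eq:GS17} applied to the convex fit at its rightmost point and to the concave fit at its leftmost point. The middle inequality $\hat{h}_n^j(x_j) \leq \hat{h}_n^j(x_{j+1})$ is the assertion $s_j(\hat{h}_n^j) \geq 0$, which follows by chaining the hypothesis~\eqref{eq:hnjslope} with $s_{j+1}(\hat{h}_n^j) = s_{j+1}(\hat{f}_{n,j+1}) \geq 0$, the latter holding since $\hat{f}_{n,j+1}$ is increasing. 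The only genuine obstacle is identifying the correct relaxed cone $\mathcal{K}^j$ at the outset; after that, the remaining work is careful slope bookkeeping, most delicate at the boundary indices $j \in \{n-1, n\}$.
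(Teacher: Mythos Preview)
Your approach is essentially the paper's own: the text immediately preceding the proposition simply observes that $\hat{h}_n^j$ is the LSE over ``a larger subclass of $\mathcal{G}[x_1,\dotsc,x_n]$ that contains $\mathcal{H}^{x_j}$'', and that this together with~\eqref{eq:GS17} immediately yields the result. You have made that larger subclass explicit and spelled out the slope bookkeeping.

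There is, however, a small but genuine slip in your construction of $\mathcal{K}^j$. You assert that ``the only interaction between the two halves is the single inequality $s_j(f) \geq s_{j+1}(f)$'', but the monotonicity constraint $s_j(f) \geq 0$ also couples $f(x_j)$ with $f(x_{j+1})$. If you drop only $s_j \geq s_{j+1}$, the resulting cone still carries this cross-coupling, the problem does \emph{not} separate, and $\hat{h}_n^j$ need not minimise $S_n$ over it; indeed $\hat{h}_n^j$ may fail to lie in your $\mathcal{K}^j$ at all when $\hat{f}_{1,j}(x_j) > \hat{f}_{n,j+1}(x_{j+1})$. The fix is to drop \emph{both} coupling constraints, so that $\mathcal{K}^j$ consists of all $f \in \mathcal{G}$ that are increasing convex on $[x_1,x_j]$ and increasing concave on $[x_{j+1},x_n]$, with no link whatsoever between the halves. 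Then the separation argument goes through, and $\hat{h}_n^j \in \mathcal{H}^{x_j}$ if and only if both dropped constraints are restored. Your own final-paragraph observation---that $s_j(\hat h_n^j) \geq s_{j+1}(\hat h_n^j) \geq 0$ since $\hat{f}_{n,j+1}$ is increasing---then shows that the single inequality~\eqref{eq:hnjslope} already implies $s_j(\hat h_n^j) \geq 0$, so the characterisation in the proposition is recovered unchanged.
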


\unparskip
In addition, we have the following crucial result for all global S-shaped LSEs over the class $\mathcal{H}\equiv\mathcal{H}[x_1,\dotsc,x_n]:=\mathcal{F}\cap\mathcal{G}$, namely those $\hat{f}_n^{x_{j'}}$ for which $j'\in\argmin_{1\leq j\leq n}S_n(\hat{f}_n^{x_j})$.
% denote the set of S-shaped functions in $\mathcal{G}$, so that every $f\in\mathcal{F}$ agrees with some $h\in\mathcal{H}$ on $\{x_1,\dotsc,x_n\}$.
% can be `decoupled' into left and right pieces, on which $\tilde{f}_n$ agrees with the respective increasing convex and increasing concave LSEs respectively. 
\begin{proposition}
\label{prop:srestrexact}
% Let $\tilde{f}_n$ be any LSE over $\mathcal{H}$, and let $\tilde{m}_-=x_{\tilde{\jmath}_-}$ and $\tilde{m}_+=x_{\tilde{\jmath}_+}$ be the smallest and largest inflection points of $\tilde{f}_n$ respectively. Define the intervals $L_{-}:=[x_1,\tilde{m}_-]$, $R_{-}:=(\tilde{m}_-,x_n]$, $L_+:=[x_1,\tilde{m}_+)$, $R_+:=[\tilde{m}_+,x_n]$ and $M:=(\tilde{m}_-,\tilde{m}_+)$. Then for $A\in \{-,+\}$, the increasing convex LSE based on $\{(x_i,Y_i):x_i\in L_A\}$ agrees with $\tilde{f}_n$ on $L_A$, and the increasing concave LSE based on $\{(x_i,Y_i):x_i\in R_A\}$ agrees with $\tilde{f}_n$ on $R_A$. Moreover, the ordinary (linear) LSE, the increasing concave LSE and the increasing convex LSE based on $\{(x_i,Y_i):x_i\in M\}$ all agree with $\tilde{f}_n$ on $M$.
Given any S-shaped LSE $\tilde{f}_n$ over $\mathcal{H}$, if $j\in [n-1]$ is such that either $x_j$ is the smallest inflection point of $\tilde{f}_n$ or $x_{j+1}$ is the largest inflection point of $\tilde{f}_n$, then $\hat{h}_n^j=\tilde{f}_n$ and hence $Y_j\leq\tilde{f}_n(x_j)\leq\tilde{f}_n(x_{j+1})\leq Y_{j+1}$.
\end{proposition}

\unparskip
We explain in the final example of Section~\ref{sec:bdadj} that Proposition~\ref{prop:srestrexact} is a consequence of Proposition~\ref{prop:srestr}(c,\,d,\,e), whose proof also reveals why $\hat{h}_n^j=\hat{f}_n^{x_j}$ is not guaranteed to hold for a pre-specified $j\in [n]$. A further remark is that the localisation property for $\tilde{f}_n$ in Proposition~\ref{prop:srestrexact} is only valid for particular choices of partition of our data into subintervals, namely where the split occurs at the smallest or largest inflection points of $\tilde{f}_n$. In other words, if for example $x_j$ is chosen to be a kink of $\tilde{f}_n$ that is strictly to the left of the smallest inflection point, then $\tilde{f}_n$ is not guaranteed to agree with the increasing convex LSE $\hat{f}_{1,j}$ on $[x_1,x_j]$. This presents a substantial additional difficulty for both computation and theory in comparison with the problem of unimodal regression \citep{SZ01,Sto08}, where, for every jump $x_j$ of the unimodal LSE $\tilde{g}_n$ to the left of its mode, it is the case that $\tilde{g}_n$ agrees on $[x_1,x_j]$ with the increasing LSE based on $\{(x_i,Y_i):1 \leq i \leq j\}$. These issues are discussed in greater depth in Section~\ref{sec:bdadj}.

Propositions~\ref{prop:factA} and~\ref{prop:srestrexact} motivate the following generic procedure as an improvement on \texttt{ScanAll}:
\begin{algorithm}
\label{alg:sshape}
Generic algorithm for computing $(\hat{m}_n,\hat{f}_n)$.

\unparskip
\begin{enumerate}[label=(\Roman*)]
\item Discard all $j\in [n-1]$ for which $Y_j>Y_{j+1}$.
\item For each of the remaining indices $j\in [n]$, compute $\hat{f}_{1,j}$ based on $\{(x_i,Y_i):1\leq i\leq j\}$ and $\hat{f}_{n,j+1}$ based on $\{(x_i,Y_i):j+1\leq i\leq n\}$, and concatenate these to obtain $\hat{h}_n^j$ via~\eqref{eq:hnj}. Discard $j$ if $\hat{h}_n^j\notin\mathcal{H}^{x_j}$, i.e.\ if $j\leq n-1$ and
\[\frac{\hat{f}_{n,j+1}(x_{j+2})-\hat{f}_{n,j+1}(x_{j+1})}{x_{j+2}-x_{j+1}}>\frac{\hat{f}_{n,j+1}(x_{j+1})-\hat{f}_{1,j}(x_j)}{x_{j+1}-x_j}.\]
% $\hat{h}_n^j$ is not concave on $[x_j,x_n]$
\item Let $\mathcal{J}$ be the set of indices $j\in [n]$ that are retained after Step II. Find $\tilde{\jmath}:=\sargmin_{j\in\mathcal{J}}S_n(\hat{h}_n^j)$ by computing $S_n(\hat{h}_n^j)=\inv{n}\sum_{i=1}^n\bigl(Y_i-\hat{h}_n^j(x_i)\bigr)^2$ for each $j\in\mathcal{J}$, and return $(x_{\tilde{\jmath}},\hat{h}_n^{\tilde{\jmath}})$.
\end{enumerate}

\unparskip
\end{algorithm}

\unparskip
To see that the output $(x_{\tilde{\jmath}},\hat{h}_n^{\tilde{\jmath}})$ of Algorithm~\ref{alg:sshape} is indeed $(\hat{m}_n,\hat{f}_n)$, note first that by Proposition~\ref{prop:factA}, the set $\mathcal{J}$ in Step III consists precisely of those $j\in [n]$ for which $\hat{h}_n^j=\hat{f}_n^{x_j}$. In addition, by Proposition~\ref{prop:srestrexact}, $\hat{\jmath}_n=\sargmin_{1\leq j\leq n}S_n(\hat{f}_n^{x_j})\in\mathcal{J}$ since $\hat{m}_n=x_{\hat{\jmath}_n}$ is the smallest inflection point of $\hat{f}_n=\hat{f}_n^{\hat{m}_n}$. Thus, $\tilde{\jmath}=\sargmin_{j\in\mathcal{J}}S_n(\hat{f}_n^{x_j})=\hat{\jmath}_n$, and hence $x_{\tilde{\jmath}}=\hat{m}_n$ and $\hat{h}_n^{\tilde{\jmath}}=\hat{f}_n$, as desired.

\unparskip
\begin{enumerate}[label=(\roman*)]
\item In advance of carrying out any least squares minimisation, we can restrict the set of candidates for $\hat{\jmath}_n$ based on just $n-1$ pairwise comparisons. If $(x_1,Y_1),\dotsc,(x_n,Y_n)$ are drawn according to a regression model~\eqref{eq:Model} featuring a continuous $f_0$ and independent and identically distributed errors with zero mean, then Step~I typically screens out about half of the indices in $[n]$ when $n$ is reasonably large.
\item For the remaining indices $j$ in Step~II, we do not attempt to compute the S-shaped function $\hat{f}_n^{x_j}$ based on all $n$ data points, but instead fit the increasing convex LSE $\hat{f}_{1,j}$ and the increasing concave LSE $\hat{f}_{n,j+1}$ using $j$ and $n-j$ observations respectively. 
% Since we do not enforce the global monotonicity condition on $[0,1]$, the resulting function $\hat{h}_n^j$ need not belong to $\mathcal{H}^{x_j}$; note that $\hat{h}_n^j=\hat{f}_n^{x_j}$ precisely when $\hat{h}_n^j\in\mathcal{H}^{x_j}$.
% based on heuristic arguments and empirical evidence / complexity heuristics: if $g$ is a convex function such that $g(0)=0$, then $g(j)+g(n-j)\leq g(0)+g(n)=g(n)$
% start afresh  
\end{enumerate}

\unparskip
The main drawback of the \texttt{ScanSelected} algorithm, however, is that it fails to exploit the commonalities in the computation of $\hat{f}_{1,j}$ for different $j$ (and similarly of $\hat{f}_{n,j+1}$ for different $j$).  Our main computational contribution, then, is to show that for $k\in [j-1]$, it is possible to obtain $\hat{f}_{1,j}$ by modifying $\hat{f}_{1,k}$ appropriately when the observations $\{(x_i,Y_i):k<i\leq j\}$ are introduced.  We can therefore proceed in a sequential manner and hence make significant computational gains.

Recall that for $j \in [n]$ and a closed, convex cone $\Lambda \subseteq \R^j$, there exists a unique $L^2$-projection $\Pi_\Lambda\colon\mathbb{R}^j \rightarrow \Lambda$, given by
\[
\Pi_\Lambda(y) := \argmin_{u \in \Lambda} \|u-y\|.
\]
The key to our approach is to develop a mixed primal-dual bases algorithm \citep{FM89,Mey99} that allows us to compute $\Pi_{\Lambda}(L)$ when $L \subseteq \R^j$ is a line segment and $\Lambda$ is a polyhedral convex cone.  An important observation is that, given $v(0),v(1)\in\R^j$, the map $t \mapsto \Pi_\Lambda\bigl((1-t)v(0)+tv(1)\bigr)$ is continuous and piecewise linear on $[0,1]$, where the individual linear pieces correspond to projections onto different faces of $\Lambda$; see Remark~\ref{rem:conefaces}.  This enables us to compute $\Pi_\Lambda\bigl(v(1)\bigr)$ when $\Pi_\Lambda\bigl(v(0)\bigr)$ is known.  Indeed, we give a detailed description of a general procedure for this task in Algorithm~\ref{alg:coneproj} in Section~\ref{sec:appendix}, and we focus here on its application to increasing convex regression (increasing concave regression for the right-hand end can be handled very similarly). 
% As mentioned in~(ii) above, Algorithm~\ref{alg:coneproj} can be used to compute the increasing convex LSE $\hat{f}_{1,j}$ using the observations $\{(x_i,Y_i):k<i\leq j\}$ if for some $1\leq k<j$, we have already fitted $\hat{f}_{1,k}$ on $\{(x_i,Y_i):1\leq i\leq k\}$.
In this case, the cones of particular interest to us are those of increasing convex sequences based on $x_1,\ldots,x_j$ for some $j \in [n]$, which we denote by
\begin{equation}
\label{eq:Thetak}
\Lambda^j:=\bigl\{\bigl(g(x_1),\dotsc,g(x_j)\bigr):g\in\mathcal{F}^1\bigr\}=\biggl\{(z_1,\dotsc,z_j) \in \R^j:0\leq\frac{z_2-z_1}{x_2-x_1}\leq\cdots\leq\frac{z_j-z_{j-1}}{x_j-x_{j-1}}\biggr\}.
\end{equation}
% Throughout the rest of this section, we work with a fixed $1\leq j\leq n$ and the increasing convex LSE $\hat{f}_{1,j}$, so that $\bigl(\hat{f}_{1,j}(x_1),\dotsc,\hat{f}_{1,j}(x_j)\bigr)$ is the projection of $(Y_1,\dotsc,Y_j)$ onto the cone $\Lambda^j$ with respect to the standard Euclidean inner product $\ipr{\cdot\,}{\cdot}$ on $\R^j$.
Given $k \in [j-1]$ and supposing that we have already fitted the increasing convex LSE $\hat{f}_{1,k}$ (which is linear on $[x_{k-1},1]$), an appropriate choice of $v(0),v(1)$ is
\begin{equation}
\label{eq:seqconv}
v(0)=\bigl(Y_1,\dotsc,Y_{k},\hat{f}_{1,k}(x_{k+1}),\dotsc,\hat{f}_{1,k}(x_j)\bigr)\quad\text{and}\quad v(1)=(Y_1,\dotsc,Y_j);
\end{equation}
indeed, $\Pi_{\Lambda^j}\bigl(v(1)\bigr)=\bigl(\hat{f}_{1,j}(x_1),\dotsc,\hat{f}_{1,j}(x_j)\bigr)$ is what we seek to compute, and moreover we claim that $\Pi_{\Lambda^j}\bigl(v(0)\bigr)=\bigl(\hat{f}_{1,k}(x_1),\dotsc,\hat{f}_{1,k}(x_j)\bigr)$ (which is known).  To establish this claim, observe that for any $u \equiv (u_1,\ldots,u_j) \in \Lambda^j$, we have
\begin{equation}
\label{Eq:AddPoints}
\|v(0) - u\|^2 \geq \sum_{i=1}^k (Y_i - u_i)^2 \geq \sum_{i=1}^k \bigl(Y_i - \hat{f}_{1,k}(x_i)\bigr)^2 = \bigl\|v(0) - \bigl(\hat{f}_{1,k}(x_1),\dotsc,\hat{f}_{1,k}(x_j)\bigr)\bigr\|^2,
\end{equation}
and $\bigl(\hat{f}_{1,k}(x_1),\dotsc,\hat{f}_{1,k}(x_j)\bigr) \in \Lambda^j$.  In fact, we will apply this version of the mixed primal-dual bases algorithm with $k=j-1$, so that the observations $Y_1,\dotsc,Y_n$ are introduced sequentially.  Note that when $Y_j\geq\hat{f}_{1,j-1}(x_j)$, we have by the same argument as in~\eqref{Eq:AddPoints} that $\bigl(\hat{f}_{1,j}(x_1),\dotsc,\hat{f}_{1,j}(x_j)\bigr)=\bigl(\hat{f}_{1,j-1}(x_1),\dotsc,\hat{f}_{1,j-1}(x_{j-1}),Y_j\bigr)$, so no calculations are required.  We refer to this sequential implementation of Algorithm~\ref{alg:sshape} as \texttt{SeqConReg}.

\hfparskip
\section{Theoretical properties of S-shaped least squares estimators}
\label{sec:lsreg}

\subsection{Worst-case and adaptive sharp oracle inequalities}
\label{sec:lsregoracle}

Our first main results of this section consist of worst-case and adaptive sharp oracle inequalities for S-shaped least squares estimators.  These reveal not only risk bounds when our S-shaped regression function hypothesis is correctly specified, but also control the way in which the performance of the estimators deteriorate as the model becomes increasingly misspecified.  

We will work in the setting of model~\eqref{eq:Model}, and now make the following assumption on the errors:
\begin{assumption}
\label{ass:sG}
$\{\xi_i\equiv\xi_{ni}:1\leq i\leq n\}$ is a collection of independent sub-Gaussian random variables with parameter 1, so that $\E(e^{t\xi_{ni}})\leq e^{t^2/2}$ for all $t\in\R$ and $i\in [n]$. 
\end{assumption}

\unparskip
For fixed $n\in\N$ and $f\colon [0,1]\to\R$, we write $x_i\equiv x_{ni}$ for $i\in [n]$ and let $\norm{f}_{n}:=\norm{f}_{L^2(\Pr_n^X)}=\bigl(\sum_{i=1}^n f^2(x_i)/n\bigr)^{1/2}$. Also, for $f\in\mathcal{H}\equiv\mathcal{H}[x_1,\dotsc,x_n]$, define $V(f):=f(x_n)-f(x_1)=\max_{1\leq i\leq n}f(x_i)-\min_{1\leq i\leq n}f(x_i)$ and denote by $k(f)$ the number of affine pieces of $f$, so that $k(f)$ is the smallest $k\in [n]$ with the property that $f$ is affine on each of $k$ subintervals $I_1,\dotsc,I_k$ that partition $[0,1]$.
\begin{theorem}
\label{thm:worstcase}
For fixed $n\geq 2$, suppose that Assumption~\ref{ass:sG} holds and let $\tilde{f}_n$ be any LSE over $\mathcal{F}$. Let $R:=n^{-1}(x_n-x_1)/\min_{2\leq i\leq n}(x_i-x_{i-1})$. Then there exists a universal constant $C>0$ such that for every $f_0\colon [0,1]\to\R$ and $t>0$, we have
\begin{equation}
\label{eq:worstpr}
\norm{\tilde{f}_n-f_0}_n\leq\inf_{f\in\mathcal{H}}\:\biggl\{\norm{f-f_0}_n+\frac{C\bigl(1+V(f)\bigr)^{1/3}}{n^{1/3}}\wedge\frac{CR^{1/10}\bigl(1+V(f)\bigr)^{1/5}}{n^{2/5}}\biggr\}+\sqrt{\frac{8t}{n}}
\end{equation}
with probability at least $1-e^{-t}$. 
\end{theorem}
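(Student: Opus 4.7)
The strategy is to combine the basic least-squares inequality with two complementary empirical-process bounds on the multiplier process $Z(h) := n^{-1}\sum_i \xi_i h(x_i)$: the first exploits the monotonicity shared by $\tilde{f}_n$ and any competitor $f \in \mathcal{H}$ (yielding the $n^{-1/3}$ rate), and the second exploits the convex-concave structure (yielding the $n^{-2/5}$ rate with design-irregularity factor $R$). The minimum of the two rates then appears because we are free to use whichever bound on $Z(h)$ is tighter.

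First, I would fix $f \in \mathcal{H} \subseteq \mathcal{F}$. Since $\tilde{f}_n$ minimises $S_n$ over $\mathcal{F}$, expanding $S_n(\tilde{f}_n)\leq S_n(f)$ with $Y_i=f_0(x_i)+\xi_i$ gives
\[
\norm{\tilde{f}_n - f_0}_n^2 \leq \norm{f - f_0}_n^2 + 2\, Z(\tilde{f}_n - f).
\]
Setting $h := \tilde{f}_n - f$, $a := \norm{\tilde{f}_n - f_0}_n$ and $b := \norm{f-f_0}_n$, the triangle inequality gives $\norm{h}_n \leq a+b$, so everything reduces to bounding $Z(h)$ uniformly over a class containing $h$, whose radius one then chooses to match $\norm{h}_n$.

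I would carry this out via two routes. Along the \emph{monotone route}, $h$ is a difference of increasing functions and hence has total variation at most $V(\tilde{f}_n)+V(f)$; the entropy estimate $\log N(\epsilon,\cdot,\norm{\cdot}_n)\lesssim W/\epsilon$ for monotone functions with range $W$, together with Dudley's integral and Assumption~\ref{ass:sG}, gives $\sup_{\norm{g}_n\leq \delta}\abs{Z(g)} \lesssim W^{1/2}\delta^{1/2}n^{-1/2}$. Along the \emph{convex-concave route}, splitting $[0,1]$ at the inflection points of $\tilde{f}_n$ and of $f$ decomposes $h$ into differences of convex (respectively concave) pieces on the relevant sub-intervals; Guntuboyina--Sen-type entropy estimates $\log N(\epsilon,\cdot,\norm{\cdot}_n)\lesssim R^{1/2}(W/\epsilon)^{1/2}$ then yield $\sup_{\norm{g}_n\leq \delta}\abs{Z(g)} \lesssim R^{1/4}W^{1/4}\delta^{3/4}n^{-1/2}$, the factor $R$ arising from the irregular design when passing between $\ell^\infty$- and $\norm{\cdot}_n$-balls on the sub-intervals. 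In both routes the range parameter $W$ must dominate $V(h)$; the needed data-driven control $V(\tilde{f}_n) \lesssim 1 + V(f) + a + b$ on a high-probability event can be extracted from the sandwich inequality of Proposition~\ref{prop:srestrexact} applied at the extremal inflection points of $\tilde{f}_n$, combined with standard sub-Gaussian tail bounds for $\max_i |\xi_i|$.

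Substitution into the basic inequality via $\norm{h}_n \leq a+b$ yields a sub-root inequality for $a$ of one of the two forms $a^2 \leq b^2 + CW^{1/2}(a+b)^{1/2}n^{-1/2}$ or $a^2 \leq b^2 + CR^{1/4}W^{1/4}(a+b)^{3/4}n^{-1/2}$; a dyadic peeling argument across shells of $\norm{h}_n$ then delivers the two competing rates $n^{-1/3}(1+V(f))^{1/3}$ and $R^{1/10}n^{-2/5}(1+V(f))^{1/5}$ whose minimum appears in~\eqref{eq:worstpr}. The $\sqrt{8t/n}$ high-probability tail comes from Talagrand-type concentration of the localised multiplier process. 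The main obstacle is that $\mathcal{F}$ is non-convex, so the classical Chatterjee/van-de-Geer machinery for convex LSE does not apply verbatim and $h$ need not lie in any fixed cone $\mathcal{H}^m$; the key is that $h$ nonetheless inherits sufficient structure --- globally bounded variation, plus locally convex or concave pieces --- for the two-route entropy analysis to go through, with a union bound over candidate splits in the convex-concave route costing only a logarithmic factor absorbed into $C$.
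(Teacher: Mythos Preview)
Your two-route plan (monotone structure for the $n^{-1/3}$ rate, convex--concave structure for the $n^{-2/5}$ rate with the design factor $R$) matches the paper's, and the entropy ingredients you cite are the right ones. The execution, however, differs from the paper in a way that leaves a genuine gap.

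The paper does not run the basic inequality plus peeling on the difference $h=\tilde f_n-f$. It instead invokes Bellec's sharp-oracle machinery for LSEs over unions of closed convex cones: for fixed $\theta\in\Gamma$ (the vector of values of the competitor $f$) one finds a critical radius $r_\ast(\theta)$ with
\[
\E\Bigl(\sup_{v\in\Gamma(\theta,r_\ast(\theta))}\abs{Z^\top(v-\theta)}\Bigr)\leq \tfrac{1}{2}\,r_\ast(\theta)^2,\qquad Z\sim N_n(0,I_n),
\]
and then $\norm{\tilde\theta_n-\theta_0}\leq\norm{\theta-\theta_0}+r_\ast(\theta)+\sqrt{8t}$ with probability $\geq 1-e^{-t}$ follows directly. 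The two routes are two bounds on this localised Gaussian width: the monotone one uses $\Gamma(\theta,r)\subseteq\Theta^\uparrow(\theta,r)$ together with Chatterjee's bound for the monotone cone, and the convex--concave one is Lemma~\ref{lem:lgwTheta} (Dudley plus the covering bound of Lemma~\ref{lem:coverTheta}).

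The crucial difference is how the range is controlled. In the paper's framework the relevant range parameter is that of the \emph{centre} $\theta$, i.e.\ $V(f)$, not of the random $\tilde f_n$. For the convex--concave route, Lemma~\ref{lem:lgwTheta} truncates every $v\in\Gamma(\theta,r)$ to the window $[\theta_1-2^k,\theta_n+2^k]$, bounds the Gaussian width of the truncated class via entropy, and controls the residual using only that $\norm{v-\theta}\leq r$ forces at most $r^2/2^{2\ell}$ coordinates of $v$ to fall outside any dyadic shell. No information about $\tilde f_n$ beyond membership of $\Gamma(\theta,r)$ is used.

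Your route instead needs $W\gtrsim V(\tilde f_n)$ inside the entropy bounds, and you propose to secure $V(\tilde f_n)\lesssim 1+V(f)+a+b$ via Proposition~\ref{prop:srestrexact}. That proposition only sandwiches $\tilde f_n$ between data values \emph{at the inflection point}; it says nothing about $\tilde f_n(x_1)$ or $\tilde f_n(x_n)$, and the endpoint behaviour of the increasing-convex and increasing-concave pieces is not the naive convex/concave one, so the suggested bound does not follow. More fundamentally, an $L^2$ quantity like $a+b=\norm{\tilde f_n-f_0}_n+\norm{f-f_0}_n$ cannot control the pointwise gap $\tilde f_n(x_n)-f(x_n)$ needed for $V(\tilde f_n)-V(f)$. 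This is the step that would fail as written; the clean fix is precisely the truncation device in Lemma~\ref{lem:lgwTheta}, which removes any need to bound $V(\tilde f_n)$ a priori.
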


\unparskip
By integrating this tail bound, we obtain the worst-case risk bound
\begin{align}
\label{eq:worstex1}
\E_{f_0}(\norm{\tilde{f}_n-f_0}_n) &\leq\inf_{f\in\mathcal{H}}\:\biggl\{\norm{f-f_0}_n+\frac{C\bigl(1+V(f)\bigr)^{1/3}}{n^{1/3}}\wedge\frac{CR^{1/10}\bigl(1+V(f)\bigr)^{1/5}}{n^{2/5}}\biggr\}+ \sqrt{\frac{2\pi}{n}}.
%&\leq\inf_{f\in\mathcal{H}}\:\biggl\{\norm{f-f_0}_n+8\,\sqrt{\frac{k(f)+1}{n}\log\rbr{\frac{en}{k(f)+1}}}\biggr\}.
% using the fact that $x\mapsto x\log(en/x)$ is increasing on $[1,n]$
\end{align}
In the special case where $f_0 \in \mathcal{F}$, we may take $f=f_0$ in Theorem~\ref{thm:worstcase} to conclude that
\[
\E_{f_0}(\norm{\tilde{f}_n-f_0}_n) \lesssim \frac{\bigl(1+V(f_0)\bigr)^{1/3}}{n^{1/3}}\wedge\frac{R^{1/10}\bigl(1+V(f_0)\bigr)^{1/5}}{n^{2/5}};
\]
thus, when $R$ and $V(f_0)$ are of constant order, we obtain a worst-case risk bound of order $n^{-2/5}$.  More generally,~\eqref{eq:worstpr} and~\eqref{eq:worstex1} reveal the impact of both non-equispaced design and the range of the signal.  In fact, an alternative, more complicated definition of $R$ is possible, and this further refines our bounds for certain designs; see the discussion following the proof of Theorem~\ref{thm:worstcase} in Section~\ref{subsec:oracle}.  To see that the rate of order $n^{-2/5}$ cannot in general be attained for arbitrary configurations of design points, we appeal to~\citet[Theorem~4.5]{Bel18} for a suitable minimax lower bound: for any $V\geq n^{-1/2}$, there exist design points $x_1<\cdots<x_n$ that depend on $V$ such that if $\xi_1,\dotsc,\xi_n\iid N(0,1)$ in~\eqref{eq:Model}, then
\[\inf_{\breve{g}_n}\sup_{f_0\in\mathcal{F}^1:V(f_0)\leq 2V}\Pr_{f_0}\bigl(\norm{\breve{g}_n-f_0}_n\geq C(V/n)^{1/3}\bigr)\geq c,\]
where the infimum is taken over all estimators $\breve{g}_n\equiv\breve{g}_n(x_1,Y_1,\dotsc,x_n,Y_n)$, and $c,C>0$ are universal constants.

Another very attractive aspect of Theorem~\ref{thm:worstcase} is that, in cases where $f_0 \notin \mathcal{F}$, we can control the performance of an LSE $\tilde{f}_n$ over $\mathcal{F}$ via approximation error and estimation error terms.  The fact that the approximation error term $\norm{f-f_0}_n$ has leading constant 1 (which is the best possible) is the reason that~\eqref{eq:worstpr} and~\eqref{eq:worstex1} are referred to as sharp oracle inequalities. 
% For the squared error loss, as opposed to the $L^2(\mathbb{P}_n^X)$ loss in~\eqref{eq:worstex1}, we can apply the inequality $(a+b+c)^2\leq (1+\eta)\,a^2+2(1+\eta^{-1})(b^2+c^2)$ for all $a,b,c\geq 0$ and $\eta > 0$ to obtain
%\[
%\E_{f_0}(\norm{\tilde{f}_n-f_0}_n^2) \leq \inf_{\eta > 0} \inf_{f\in\mathcal{H}} \:\biggl\{(1+\eta)\norm{f-f_0}_n^2+(1+\eta^{-1})\frac{C\bigl(1+V(f)\bigr)^{2/3}}{n^{2/3}}\wedge\frac{CR^{1/5}\bigl(1+V(f)\bigr)^{2/5}}{n^{4/5}}\biggr\} + \sqrt{\frac{2\pi}{n}}.
%\]

To complement the worst-case sharp oracle inequality~\eqref{eq:worstex1} above, we now consider the more favourable situation where $f_0$ is well approximated by a piecewise affine function with not too many affine pieces.  The fact that an LSE $\tilde{f}_n$ over $\mathcal{F}$ can approximate such a signal with a relatively small number of kinks suggests that we may be able to obtain improved sharp oracle inequalities in such cases.
\begin{theorem}
\label{thm:adaptive}
For fixed $n \geq 2$, suppose that Assumption~\ref{ass:sG} holds, and let $\tilde{f}_n$ be any LSE over $\mathcal{F}$. Then for every $f_0\colon [0,1]\to\R$ and $t>0$, we have
\begin{equation}
\label{eq:adaptpr}
\norm{\tilde{f}_n-f_0}_n\leq\inf_{f\in\mathcal{H}}\:\biggl\{\norm{f-f_0}_n+\sqrt{\frac{32\bigl(k(f)+1\bigr)}{n}\log\rbr{\frac{en}{k(f)+1}}}\biggr\}+\sqrt{\frac{2(t+\log n)}{n}}
\end{equation}
with probability at least $1-e^{-t}$.
\end{theorem}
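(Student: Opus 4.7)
The plan is to exploit the fact that $\mathcal{H}$ decomposes as a finite union of closed convex cones, $\mathcal{H} = \bigcup_{j\in[n]}\mathcal{H}^{x_j}$, and combine a per-cone sharp adaptive oracle inequality with a model-selection union bound over $j\in[n]$; the deviation term $\sqrt{2(t+\log n)/n}$ in~\eqref{eq:adaptpr} is precisely what arises from this combination, with the extra $\log n$ being the union-bound cost. By Proposition~\ref{prop:existence}, I may assume $\tilde{f}_n = \tilde{f}_n^{x_{\hat{\jmath}}}$, where $\hat{\jmath}\in\sargmin_{j\in[n]}S_n(\tilde{f}_n^{x_j})$ and $\tilde{f}_n^{x_j}$ is the LSE over the closed convex cone $\mathcal{H}^{x_j}$.

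A short preliminary argument establishes the decomposition $\mathcal{H}=\bigcup_{j\in[n]}\mathcal{H}^{x_j}$. For any $f\in\mathcal{H}^m$ with $m\in(x_j,x_{j+1})$, writing $\sigma_i:=(f(x_{i+1})-f(x_i))/(x_{i+1}-x_i)$, the existence of a witness $g\in\mathcal{F}^m$ agreeing with $f$ at the design points reduces, after tracking the one-sided derivatives of $g$ at $m$, to the single inequality $\sigma_j(x_{j+1}-x_j)\geq\sigma_{j-1}(m-x_j)+\sigma_{j+1}(x_{j+1}-m)$. Since $\sigma_j$ is thereby at least a convex combination of $\sigma_{j-1}$ and $\sigma_{j+1}$, it must exceed at least one of the two, placing $f$ in $\mathcal{H}^{x_{j+1}}$ or $\mathcal{H}^{x_j}$ respectively; the boundary cases $m\in[0,x_1]$ and $m\in[x_{n-1},1]$ are handled analogously.

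The core technical step is the following per-cone adaptive sharp oracle inequality: for each fixed $j\in[n]$, every $f\in\mathcal{H}^{x_j}$ with $k:=k(f)$ affine pieces, and every $s>0$,
\[\norm{\tilde{f}_n^{x_j}-f_0}_n\leq\norm{f-f_0}_n+\sqrt{\frac{32(k+1)}{n}\log\rbr{\frac{en}{k+1}}}+\sqrt{\frac{2s}{n}}\]
with probability at least $1-e^{-s}$. This follows from the standard template of sharp oracle inequalities for $L^2$-projections onto closed convex cones (in the spirit of \citet{Bel18}), combined with a local statistical dimension bound for $\mathcal{H}^{x_j}$ at a $k$-piecewise-affine point of order $O\bigl((k+1)\log(en/(k+1))\bigr)$. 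The latter is obtained by decomposing the cone into its increasing convex part on $\{x_1,\ldots,x_j\}$ and its increasing concave part on $\{x_j,\ldots,x_n\}$, invoking standard bracketing entropy estimates for piecewise affine monotone sequences on each half, and summing the two contributions, with the $+1$ accounting for the affine piece that may straddle $x_j$.

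Applying this per-cone inequality with $s=t+\log n$ and union-bounding over $j\in[n]$ yields that, with probability at least $1-ne^{-(t+\log n)}=1-e^{-t}$, the per-cone bound holds simultaneously for every $j$. A standard model-selection argument then transfers this uniform control into the required bound on $\tilde{f}_n=\tilde{f}_n^{x_{\hat{\jmath}}}$: for any $f\in\mathcal{H}$, choosing $j_f$ with $f\in\mathcal{H}^{x_{j_f}}$ (possible by the decomposition above) and combining the basic inequality $S_n(\tilde{f}_n)\leq S_n(f)$ with the simultaneous per-cone control of the empirical process terms $\langle\xi,g-f\rangle_n$ over $g\in\bigcup_j\mathcal{H}^{x_j}$ delivers~\eqref{eq:adaptpr} after taking the infimum over $f\in\mathcal{H}$. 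The main obstacle is the derivation of the per-cone oracle inequality with sharp leading constant $1$ on $\norm{f-f_0}_n$ and the precise factor $32(k+1)\log(en/(k+1))$ in the adaptive rate; the union-bound combination and the decomposition of $\mathcal{H}$ are comparatively routine once this per-cone estimate is in hand.
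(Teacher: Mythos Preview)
Your overall strategy matches the paper's --- write $\Gamma=\bigcup_{j=1}^n\Gamma^{x_j}$, bound a per-cone local complexity, and pay $\log n$ for the union --- but the logical packaging contains a real gap. Your stated ``core technical step'' is a per-cone sharp oracle inequality for $\tilde{f}_n^{x_j}$, the LSE over the single cone $\mathcal{H}^{x_j}$, at comparison points $f\in\mathcal{H}^{x_j}$. Even if these hold simultaneously for all $j$, they do not transfer to $\tilde{f}_n=\tilde{f}_n^{x_{\hat{\jmath}}}$ via ``model selection'': knowing that $\|\tilde{f}_n^{x_{j_f}}-f_0\|_n$ is small says nothing about $\|\tilde{f}_n^{x_{\hat\jmath}}-f_0\|_n$, because $\hat\jmath$ minimises $S_n$ rather than the distance to $f_0$, and $S_n(\tilde{f}_n)\leq S_n(\tilde{f}_n^{x_{j_f}})$ does not imply the corresponding inequality in $\|{\cdot}-f_0\|_n$. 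What is actually required --- and what you only gesture at in your final sentence --- is uniform-in-$j$ control of the empirical process $\langle\xi,g-f\rangle_n$ over $g\in\Gamma^{x_j}$ for a \emph{single fixed} $f\in\mathcal{H}$ and \emph{every} $j$, including those with $f\notin\mathcal{H}^{x_j}$; this is neither the same as, nor a consequence of, the per-cone LSE bounds you establish first.

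The paper sidesteps this by invoking Bellec's oracle inequality for unions of cones directly (his (2.7) and Propositions~6.1,~6.4), which reduces the problem to bounding $\max_{j}\delta\bigl(T_{\Gamma^{x_j}}(\theta)\bigr)$ for fixed $\theta\in\Gamma$, uniformly over all $j$. This is done by partitioning $[n]$ into $k_\theta+1$ intervals $J_0,\dotsc,J_{k_\theta}$ on which $\theta$ is affine, \emph{forcing $j$ to be one of the breakpoints} (whence the $+1$); since $\theta$ is affine on each $J_r$, any $v-\theta$ with $v\in\Gamma^{x_j}$ is convex on those $J_r$ to the left of $j$ and concave on those to the right, regardless of whether $\theta\in\Gamma^{x_j}$. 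Hence $T_{\Gamma^{x_j}}(\theta)\subseteq\prod_r(\pm K^{J_r})$, and $\delta\bigl(T_{\Gamma^{x_j}}(\theta)\bigr)\leq\sum_r 8\log(e|J_r|)\leq 8(k_\theta+1)\log\bigl(en/(k_\theta+1)\bigr)$ by Bellec's bound $\delta(\pm K^J)\leq 8\log(e|J|)$ and Jensen. Your two-halves decomposition and appeal to ``bracketing entropy'' would need to be upgraded to this uniform-in-$j$ tangent-cone statement (valid even when $f\notin\mathcal{H}^{x_j}$) to close the argument.
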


\unparskip
As with Theorem~\ref{thm:worstcase}, we can integrate the tail bound from~\eqref{eq:adaptpr} to obtain
\begin{align}
\label{eq:adaptex1}
\E_{f_0}(\norm{\tilde{f}_n-f_0}_n)&\leq\inf_{f\in\mathcal{H}}\:\biggl\{\norm{f-f_0}_n+\sqrt{\frac{32\bigl(k(f)+1\bigr)}{n}\log\rbr{\frac{en}{k(f)+1}}}\biggr\}+\sqrt{\frac{2\log n}{n}}+\sqrt{\frac{\pi}{2n}}\notag\\
&\leq\inf_{f\in\mathcal{H}}\:\biggl\{\norm{f-f_0}_n+8\,\sqrt{\frac{k(f)+1}{n}\log\rbr{\frac{en}{k(f)+1}}}\biggr\}.
% using the fact that $x\mapsto x\log(en/x)$ is increasing on $[1,n]$
\end{align}
%The estimation error term is at most of order $n^{-1/3}$ and scales as $n^{-2/5}$ in situations where $R$ can be taken to be a universal constant (in particular when the design points are equispaced, so that we may take $R=1$). 
In particular, we see from~\eqref{eq:adaptex1} that if $f_0 \in \mathcal{F}$ has $k$ affine pieces, then any LSE $\tilde{f}_n$ over $\mathcal{F}$ attains the parametric rate $k^{1/2}/n^{1/2}$, up to a logarithmic factor. 

Adaptation to signals of low complexity is one of the particularly intriguing aspects of shape-constrained estimators~\citep{GS18,Samworth2018}.  For instance,~\citet{GS13},~\citet{CGS15} and~\cite{CL19} investigated the adaptive behaviour of univariate convex, isotonic and unimodal LSEs respectively when the truth is well approximated by a function with a small number of affine or constant pieces. For multivariate extensions of these results, see for example~\citet{HW16},~\citet{KGGS20} and~\citet{Han21} among others. Sharp oracle inequalities of a similar flavour to Theorem~\ref{thm:adaptive} have been obtained for a variety of LSEs~\citep{Bel18}, including multivariate isotonic LSEs~\citep{HWCS19,PS21}. In log-concave density estimation, adaptation results of this type were established for the log-concave maximum likelihood estimator by~\cite{KGS18} and~\cite{FGKS21} in univariate and multivariate settings respectively.  Finally, \citet{BB16} introduced a $\rho$-estimation framework for univariate shape-constrained estimation and studied its adaptation properties.

\hfparskip
\subsection{Inflection point estimation}
\label{subsec:inflectpt}
A particular feature of S-shaped function estimation that differentiates it from other shape-constrained estimation problems is the existence of an inflection point $m_0$.  In some respects, this is like a boundary point, because it represents the point of transition from convex to concave parts of the function, and the behaviour of the function is therefore less regulated there (in particular, the derivative of an S-shaped function may diverge to infinity as we approach the inflection point).  On the other hand, when $m_0 \in (0,1)$, we may well have design points on either side of $m_0$, and in that sense the inflection point may be regarded as an interior point. The distinguished nature of the inflection point means that its location is often of interest in applications such as the modelling of economic growth \citep[e.g.][]{JSF2007} and disease progression in longitudinal studies~\citep[e.g.][]{LCMWG20}. For instance, in the latter work, S-shaped functions were used to model the deterioration in motor function associated with Huntington's disease, and the estimated inflection points from a nonparametric procedure were seen to be clinically useful indicators of the onset of severe motor dysfunction, in the sense of having the potential to facilitate timely diagnosis and intervention.

In studying the inflection point estimation problem, we will assume that $f_0 \in \mathcal{F}$ and the following additional conditions hold:
% consider $m_0\in [0,1]$ more generally?
\begin{assumption}
\label{ass:inflection}
Suppose that $f_0\in\mathcal{F}$ has a unique inflection point $m_0\in (0,1)$, and that there exist $B>0$ and $\alpha\in (0,1)\cup (1,\infty)$ such that as $x\to m_0$, we have
\begin{equation}
f_0(x)=
\label{eq:smoothness}
\begin{cases}
f_0(m_0)-B\bigl(1+o(1)\bigr)\sgn(x-m_0)\abs{x-m_0}^\alpha&\text{when }\alpha\in (0,1)\\
f_0(m_0)+f_0'(m_0)(x-m_0)+B\bigl(1+o(1)\bigr)\sgn(x-m_0)\abs{x-m_0}^\alpha&\text{when }\alpha>1.
\end{cases}
\end{equation}
In the regression model~\eqref{eq:Model}, suppose also that $x_{ni}=i/n$ and $\xi_{ni}\eqd\xi$ for all $n \in \mathbb{N}$ and $i\in [n]$, where $\xi$ is a sub-Gaussian random variable with parameter 1.
% TODO: general noise variance? / get rid of this i.i.d.\ assumption?
\end{assumption}

\unparskip
When $\alpha\geq 3$ is an integer,~\eqref{eq:smoothness} holds if (a) $f_0$ is $\alpha$-times continuously differentiable in a neighbourhood of $m_0$, and (b) $f_0^{(k)}(m_0)=0\neq f_0^{(\alpha)}(m_0)$ for $2\leq k\leq \alpha-1$. Under this stronger assumption, $\alpha$ must in fact be odd, and $f_0^{(\alpha)}(m_0)<0$. Indeed, for all $x\in [0,1]$ sufficiently close to the inflection point $m_0$, we have $f_0''(x)\geq 0$ if $x\leq m_0$ and $f_0''(x)\leq 0$ if $x\geq m_0$, and since $f_0^{(\alpha)}$ is continuous at $m_0$, a Taylor expansion reveals that $f_0''(x)=f_0^{(\alpha)}(m_0)\bigl(1+o(1)\bigr)(x-m_0)^{\alpha-2}/(\alpha-2)!$ as $x\to m_0$. % In the setting of convex regression, pointwise rates and asymptotic distributions (at interior points of the covariate domain) were established under conditions (a) and (b) in~\citet{Mam91},~\citet{GJW01} and~\citet{GS17}.
\begin{theorem}
\label{thm:inflection}
% Or: state this for the maximum and minimum inflection points
Let $(\tilde{f}_n)$ be any sequence of LSEs over $\mathcal{F}$, and for each $n$, let $\tilde{m}_n$ be an inflection point of $\tilde{f}_n$. Under Assumption~\ref{ass:inflection}, we have $\tilde{m}_n-m_0=O_p\bigl((n/\log n)^{-1/(2\alpha+1)}\bigr)$.
\end{theorem}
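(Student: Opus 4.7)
The plan is to bound $\Pr(|\tilde{m}_n - m_0| \geq K r_n)$ for $r_n := (n^{-1}\log n)^{1/(2\alpha+1)}$ and a large constant $K$. By symmetry, focus on the event $A_\delta := \{\tilde{m}_n \geq m_0 + \delta\}$ for $\delta \geq K r_n$ (the case $\tilde{m}_n \leq m_0 - \delta$ is analogous, with the roles of convex and concave reversed). On $A_\delta$, $\tilde{f}_n$ is convex on $[0, m_0+\delta]$, in particular on $I_\delta := [m_0, m_0+\delta]$, whereas $f_0$ is concave on $I_\delta$ with quantitative curvature given by~\eqref{eq:smoothness}.

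The first key step is a \emph{structural lower bound}: on $A_\delta$,
\begin{equation*}
\frac{1}{n}\sum_{i:\,x_i \in I_\delta}\bigl(\tilde{f}_n(x_i) - f_0(x_i)\bigr)^2 \geq c B^2 \delta^{2\alpha+1}
\end{equation*}
for some $c = c(\alpha) > 0$, when $\delta$ is small. The argument rests on a three-point test at $m_0$, $m_0+\delta/2$ and $m_0+\delta$: convexity of $\tilde{f}_n$ on $I_\delta$ gives $\tilde{f}_n(m_0+\delta/2) \leq \tfrac{1}{2}\bigl(\tilde{f}_n(m_0) + \tilde{f}_n(m_0+\delta)\bigr)$, whereas~\eqref{eq:smoothness} yields the quantitative concavity gap $f_0(m_0+\delta/2) - \tfrac{1}{2}\bigl(f_0(m_0) + f_0(m_0+\delta)\bigr) \gtrsim B\delta^\alpha$. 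Adding these, $|\tilde{f}_n - f_0|$ must exceed a multiple of $B\delta^\alpha$ at one of the three points; convexity and monotonicity then propagate this pointwise bound to a positive fraction of the $\sim n\delta$ design points in $I_\delta$ (using $x_{ni} = i/n$), integrating to the claimed lower bound.

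The second step is a basic inequality with the fixed-inflection competitor $g := \hat{f}_n^{m_0}$, which lies in the closed convex cone $\mathcal{F}^{m_0}$ and hence satisfies $\|g - f_0\|_n^2 = O_p(n^{-4/5})$ by restricting the theory of Section~\ref{sec:lsregoracle} to this cone. LSE optimality gives
\begin{equation*}
\|\tilde{f}_n - f_0\|_n^2 - \|g - f_0\|_n^2 \leq \frac{2}{n}\sum_{i=1}^n \xi_i\bigl(\tilde{f}_n - g\bigr)(x_i),
\end{equation*}
so that combining with the structural lower bound yields, on $A_\delta$,
\begin{equation*}
c B^2 \delta^{2\alpha+1} \lesssim_p n^{-4/5} + \frac{2}{n}\sum_{i=1}^n \xi_i\bigl(\tilde{f}_n - g\bigr)(x_i).
\end{equation*}
The third step controls the empirical process term by a multiplier/concentration inequality combined with Dudley's entropy bound for (differences of) S-shaped functions at the $L^2$-scale $\sigma \sim \|\tilde{f}_n - g\|_n \lesssim B\delta^{\alpha+1/2}$ (the latter being essentially forced by the structural lower bound), together with a peeling argument over $O(\log n)$ dyadic scales. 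This yields a bound of order $C B \delta^{\alpha+1/2}\sqrt{\log n/n}$ with high probability. Substituting and using that $\delta \geq K r_n$ makes the $n^{-4/5}$ term negligible for $K$ large, we obtain $B\delta^{\alpha+1/2} \lesssim \sqrt{\log n/n}$, i.e.\ $\delta \lesssim r_n$, contradicting $\delta \geq Kr_n$ and proving the claim.

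The main obstacle is the empirical-process bound in the third step: because $\mathcal{F}$ is not convex, the local entropy of the shifted class $\mathcal{F} - \{g\}$ near $I_\delta$ must be carefully handled to extract the sharp logarithmic factor (which can be viewed as arising from a union bound over the $n$ possible locations of $\hat{m}_n$). A further subtlety is the cusp case $\alpha\in(0,1)$, in which $f_0$ is not differentiable at $m_0$; although the three-point test in the first step still applies, some care is needed in aligning the local expansion~\eqref{eq:smoothness} with the shape constraints.
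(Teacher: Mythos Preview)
Your overall architecture (structural lower bound on $I_\delta$, basic inequality against $g=\hat f_n^{m_0}$, empirical-process bound) is natural, but the third step contains a genuine gap that prevents you from reaching the rate $(n/\log n)^{-1/(2\alpha+1)}$.

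The crucial unsupported claim is that the empirical-process term can be controlled at the scale $\sigma\sim\|\tilde f_n-g\|_n\lesssim B\delta^{\alpha+1/2}$. The structural lower bound is a \emph{lower} bound on (a localised part of) $\|\tilde f_n-f_0\|_n$; it does not give an upper bound on $\|\tilde f_n-g\|_n$. Globally one only has $\|\tilde f_n-f_0\|_n,\|g-f_0\|_n=O_p(n^{-2/5})$, hence $\|\tilde f_n-g\|_n=O_p(n^{-2/5})$ irrespective of $\delta$. Plugging this (or equivalently, the $O_p(n^{-4/5})$ bound on the right-hand side of your basic inequality) into your displayed chain yields only $B^2\delta^{2\alpha+1}=O_p(n^{-4/5})$, i.e.\ $\delta=O_p\bigl(n^{-4/(5(2\alpha+1))}\bigr)$, which is strictly weaker than the theorem (for $\alpha=3$ this is $n^{-4/35}$ versus the target $n^{-1/7}$). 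The paper in fact flags exactly this pitfall: relying on global considerations ``would lead to a suboptimal bound''.

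What is missing is a way to \emph{localise} the comparison of $S_n(\tilde f_n)$ and $S_n(\hat f_n^{m_0})$ to a neighbourhood of $m_0$. The paper achieves this via a boundary-adjustment device (Proposition~\ref{prop:srestr}): one partitions $[0,1]$ into five pieces using knots of $\hat f_n^{m_0}$, and assigns boundary weights so that (i) on the outer pieces $\mathcal{I}_{\pm 2}$ the weighted RSS difference is automatically $\ge 0$ (because $\hat f_n^{m_0}$ is a weighted LSE there), and (ii) the remaining analysis is confined to the short pieces around $m_0$. Only then does the interval length $\sim\delta$ enter the bounds in the right way, via Lemma~\ref{lem:kinkdist} (the convex LSE on $\mathcal{I}_0$ is linear on most of its first half), Lemma~\ref{lem:step2lbd} (the resulting linear-versus-$f_0$ gap on that piece is $\gtrsim B^2 nu_n^{2\alpha+1}$), and Lemma~\ref{lem:step3} (controlling the adjacent pieces $\mathcal{I}_{\pm1}$). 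Your global basic inequality has no mechanism for discarding the outer pieces, which is why the sharp exponent does not emerge.

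A smaller point: the ``three-point test $+$ propagation'' sketch for the structural lower bound is not a proof as stated. The three-point inequality gives $|h(a)|+|h(b)|+|h(c)|\gtrsim B\delta^\alpha$ for $h=\tilde f_n-f_0$, but a convex $h$ can concentrate this at a single endpoint, so ``propagation to a positive fraction of design points'' needs an argument. What actually holds (and what the paper uses, cf.\ Lemma~\ref{lem:step2lbd}) is that the squared $L^2$-distance from $f_0$ to the \emph{affine} functions on a subinterval of length $\asymp\delta$ is $\gtrsim B^2\delta^{2\alpha+1}$; to exploit this one first needs to know that the relevant estimator is linear there, which is precisely the content of Lemma~\ref{lem:kinkdist}.
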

% or: for each $n$, writing $\tilde{m}_{n-},\tilde{m}_{n+}$ for the smallest and largest inflection points of $\tilde{f}_n$, we have $\red{\tilde{m}_{n\pm}}-m_0=O_p\bigl((n/\log n)^{-1/(2\alpha+1)}\bigr)$

\unparskip
We mention that~\citet{LM17} study a least squares estimator over a subclass of $\mathcal{F}$ consisting of cubic splines (where the number of knots is of order $n^{1/9}$); they show that its inflection point converges to the true $m_0$ at rate $O_p(n^{-8/63})$ in a random design setting 
% with independent mean-zero errors with unit variance: no i.i.d.\ or sub-Gaussian assumption?
% see their Theorem~2
where $f_0$ satisfies (a stronger version of)~\eqref{eq:smoothness} with $\alpha=3$. The proof of their Theorem~2 relies on a quantitative result on the quality of local approximations to $f_0$ near $m_0$ by convex or concave functions~\citep[Lemma~2]{LM17}, as well as a global rate of convergence for their spline-based estimator. 

In our setting, Theorem~\ref{thm:inflection} shows that the inflection point estimator $\tilde{m}_n$ (based on an LSE $\tilde{f}_n$ over the entire class $\mathcal{F}$) converges to $m_0$ at rate $O_p\bigl((n/\log n)^{-1/7}\bigr)$ when $\alpha=3$.  The proof of Theorem~\ref{thm:inflection}, which is given in Section~\ref{sec:lsregmainproofs}, is lengthy and broken up into several steps, each of which requires some delicate technical arguments; see Figure~\ref{fig:inflection} for an illustration. The crucial Step 2a exploits the observation that if $\tilde{m}_n$ is a long way from $m_0$, then there is a long interval between the two on which one of $f_0,\tilde{f}_n$ is convex and the other is concave. On such an interval, we show that $\tilde{f}_n$ has a long affine piece, as would be intuitively expected, and thereby quantify the approximation error due to misspecification; see Lemma~\ref{lem:step2lbd}.  Another important aspect of our proof strategy is that we find a suitable way to localise the analysis of $\tilde{f}_n$ to a neighbourhood of~$m_0$, rather than rely on global considerations that would lead to a suboptimal bound. As we explain in Section~\ref{sec:bdadj}, our localisation technique for convex or S-shaped LSEs relies on non-trivial `boundary adjustments' that are not needed for isotonic or unimodal LSEs. 
%One reason why the overall argument is intricate is due to the absence of the localisation property that exists in unimodal regression, as discussed after Proposition~\ref{prop:srestrexact}.
Nevertheless, a simpler version of the proof of Theorem~\ref{thm:inflection} allows us to recover the result of~\citet{SZ01} on the rate of convergence of the mode of the LSE of a unimodal regression function, at least under our sub-Gaussian assumption on the errors $\xi_{ni}$ and their local smoothness condition (1.3).

The rate of convergence of $\tilde{m}_n$ to $m_0$ in Theorem~\ref{thm:inflection} matches that in the following complementary local asymptotic minimax lower bound, up to a logarithmic factor. For $r>0$, let $\mathcal{F}(f_0,r):=\{f\in\mathcal{F}:\int_0^1\,(f-f_0)^2<r^2\}$. Although $f_0$ has a unique inflection point $m_0$ under Assumption~\ref{ass:inflection}, not every function in $\mathcal{F}(f_0,r)$ has a unique inflection point, so for $f\in\mathcal{F}$, we denote by $\mathcal{I}_f$ the subinterval of inflection points of $f$ and define $d(x,\mathcal{I}_f):=\inf_{z\in\mathcal{I}_f}\,\abs{x-z}$ for $x\in [0,1]$.
\begin{proposition}
\label{prop:lamlb}
Under Assumption~\ref{ass:inflection}, and with $\xi_{n1},\dotsc,\xi_{nn}\iid N(0,1)$ for all $n$, we have
\begin{equation}
\label{eq:lamlb}
\sup_{\tau>0}\,\liminf_{n\to\infty}\,\inf_{\breve{m}_n}\,\sup_{f\in\mathcal{F}(f_0,\tau/\sqrt{n})}n^{1/(2\alpha+1)}\,\E_f\bigl(d(\tilde{m}_n,\mathcal{I}_f)\bigr)>0,
\end{equation}
where the infimum is taken over all estimators $\breve{m}_n\equiv\breve{m}_n(x_1,Y_1,\dotsc,x_n,Y_n)$ taking values in $[0,1]$, and $\E_f$ is the expectation operator under the model~\eqref{eq:Model} with $f$ in place of $f_0$.
\end{proposition}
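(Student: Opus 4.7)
My plan is to apply Le Cam's two-point method. For a small constant $c_0>0$ to be tuned, set $h_n:=c_0 n^{-1/(2\alpha+1)}$ and aim to exhibit $f_n^+\in\mathcal{F}$ such that (i) $\mathcal{I}_{f_n^+}=\{m_0+h_n\}$, (ii) $f_n^+ = f_0$ outside a window of width $O(h_n)$ about $m_0$, and (iii) $\|f_n^+-f_0\|_{L^2}^2\lesssim h_n^{2\alpha+1}\asymp c_0^{2\alpha+1}/n$. Combined with $\mathcal{I}_{f_0}=\{m_0\}$ from Assumption~\ref{ass:inflection}, this produces a two-point subproblem whose hypotheses have inflection sets separated by $h_n$, and both lie in $\mathcal{F}(f_0,\tau/\sqrt{n})$ as soon as $c_0$ is small enough (given $\tau$).

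For the construction of $f_n^+$ I would focus first on $\alpha>1$, where $f_0$ is differentiable at $m_0$. Fix $\eta_n=Ch_n$ with $C=C(\alpha)>1$ large, and set $f_n^+=f_0$ on $[0,m_0]\cup[m_0+\eta_n,1]$. On $[m_0,m_0+h_n]$, replace $f_0$ by the reflected convex extension $x\mapsto f_0(m_0)+f_0'(m_0)(x-m_0)+B(x-m_0)^\alpha$, which is strictly convex, matches $f_0$ and $f_0'$ at $m_0$, and by~(\ref{eq:smoothness}) differs from $f_0$ by $O(h_n^\alpha)$ on this interval. On $[m_0+h_n,m_0+\eta_n]$, use an affine piece from $(m_0+h_n,\,f_0(m_0+h_n)+2Bh_n^\alpha)$ to $(m_0+\eta_n,\,f_0(m_0+\eta_n))$; a direct computation with~(\ref{eq:smoothness}) shows that, for $C$ sufficiently large (depending on $\alpha$), the slope of this piece is at least $f_0'((m_0+\eta_n)+)$, which ensures that $f_n^+$ is concave on all of $[m_0+h_n,1]$. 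Strict convexity on $[m_0,m_0+h_n]$ together with the downward slope jump at $m_0+h_n$ then guarantees $\mathcal{I}_{f_n^+}=\{m_0+h_n\}$, and monotonicity holds automatically for small $h_n$ since $f_0'(m_0)>0$. The case $\alpha\in(0,1)$ is handled analogously, with the (one-sided) vertical-tangent behavior of $f_0$ at $m_0$ playing the role previously played by $f_0'(m_0)$.

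With the construction in hand the rest is standard. Since $f_n^+-f_0$ has sup-norm $O(h_n^\alpha)$ on a support of length $O(h_n)$, the regular design $x_{ni}=i/n$ and Gaussian noise give
\[\KL(P_{f_0,n}\,\|\,P_{f_n^+,n})\;=\;\tfrac{1}{2}\sum_{i=1}^n\bigl(f_0(x_i)-f_n^+(x_i)\bigr)^2\;\lesssim\;n\cdot h_n^{2\alpha+1}\;\lesssim\;c_0^{2\alpha+1}.\]
Picking $c_0$ small enough, Pinsker's inequality yields $\dtv(P_{f_0,n},P_{f_n^+,n})\leq 1-c_1$ for some universal $c_1>0$. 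Since $d(m^\ast,\{m_0\})+d(m^\ast,\{m_0+h_n\})\geq h_n$ for every $m^\ast\in[0,1]$, the standard two-point testing bound gives
\[\inf_{\breve{m}_n}\max_{f\in\{f_0,f_n^+\}}\E_f\bigl[d(\breve{m}_n,\mathcal{I}_f)\bigr]\;\geq\;\frac{h_n}{2}\bigl(1-\dtv(P_{f_0,n},P_{f_n^+,n})\bigr)\;\geq\;\frac{c_0 c_1}{2}\,n^{-1/(2\alpha+1)},\]
and multiplying by $n^{1/(2\alpha+1)}$ and taking $\liminf$ yields the claim.

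I expect the main obstacle to be the construction of the perturbation $f_n^+$: forcing $\mathcal{I}_{f_n^+}$ to be a singleton (rather than a larger interval containing $m_0$) while keeping $\|f_n^+-f_0\|_{L^\infty}$ at the critical scale $h_n^\alpha$ requires a precise choice of $C$ and careful slope bookkeeping at both $m_0+h_n$ and $m_0+\eta_n$; the argument moreover splits at $\alpha=1$ because the regularity of $f_0$ at $m_0$ changes qualitatively. All remaining steps are reasonably mechanical once this local construction is verified.
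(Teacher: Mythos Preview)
Your proposal is correct and follows essentially the same route as the paper: both use Le Cam's two-point method, constructing a competitor $f_n^+\in\mathcal{F}$ that agrees with $f_0$ outside a window of length $\asymp n^{-1/(2\alpha+1)}$ near $m_0$, is strictly convex on a sub-interval so that its inflection point is unique and displaced by $\asymp n^{-1/(2\alpha+1)}$, and satisfies $\|f_n^+-f_0\|_{L^2}^2\asymp n^{-1}$; Pinsker then feeds into the standard two-point bound.

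The only substantive difference is in how the perturbation is glued back to $f_0$ on the right. The paper takes $f_\delta = f_{1,\delta}\wedge f_{2,\delta}$ on $(m_0,m_0+\delta)$, where $f_{2,\delta}$ is a strictly convex function matching $f_0$ and $f_0'$ at $m_0$, and $f_{1,\delta}$ is the \emph{tangent line} to the concave part of $f_0$ at $m_0+\delta$; this choice makes concavity at the right junction automatic, and a short computation (their Lemma~\ref{lem:lamlb}) pins down the asymptotic constant $C_\alpha$ and the limit $c_\delta\to 1-\alpha^{-1}$ for the relative position of the inflection point. Your construction instead uses an affine bridge on $[m_0+h_n,m_0+Ch_n]$ to a point on the graph of $f_0$, and then must choose $C=C(\alpha)$ large to enforce the slope condition $(\alpha-1)C^\alpha-\alpha C^{\alpha-1}-1\geq 0$ at the right junction; this is a little less clean but equally valid, and has the minor advantage that the inflection point sits exactly at $m_0+h_n$ rather than at $m_0+\delta c_\delta$. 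One cosmetic point: as written, your convex extension and your affine bridge disagree at $m_0+h_n$ by a lower-order $o(h_n^\alpha)$ term coming from the $o(1)$ in~\eqref{eq:smoothness}; simply start the affine piece at the exact value of the convex extension there rather than at $f_0(m_0+h_n)+2Bh_n^\alpha$.
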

% TODO: indicate that all (?) the asymptotic results above can be extended to the random design setting where the design points follow a $U[0,1]$ distribution and are independent of the noise variables

\section{Simulations and real data example}
\label{sec:simulations}

In this section, we first investigate the computation time and empirical performance of our S-shaped estimator in some numerical experiments. We then demonstrate the use of our estimator in a real data application to air pollution modelling.

\hfparskip
\subsection{Computation time}
\label{sec:comptime}

We compare the running time of our sequential cone projection Algorithm~\ref{alg:coneproj}, which we henceforth refer to as \texttt{SeqConReg}, with two other possible approaches.  The first, which we call \texttt{ScanAll}, relies on a brute-force search that scans through all possible inflection points $m \in \{x_1,\ldots,x_n\}$ as described in the introduction, performing least squares over~$\mathcal{F}^m$, and determining the candidate that minimises the residual sum of squares.  Here the active set least squares procedure used for each $m$ is based on a simple modification of the \texttt{R} package \texttt{scar} \citep{scar14,ChenSamworth2016}.  The second approach, which we call \texttt{ScanSelected}, is based on the observation in Step I of Algorithm~\ref{alg:sshape} that there is no need to scan through all design points. Instead, we restrict attention to those indices $j$ for which $Y_j\leq Y_{j+1}$, fitting a convex increasing function to $\{(x_i,Y_i):1\leq i\leq j\}$, a concave and increasing function to $\{(x_i,Y_i):j+1\leq i\leq n\}$ (both using \texttt{scar}), before finding the smallest $j$ that minimises the residual sum of squares.

For $n \in \{100,200,500,1000,2000\}$, we set $x_i = i/(n+1)$ and $Y_i = \sin\big(\pi(x_i-0.5)\big)+\sigma \epsilon_i$ for $i=1,\ldots,n$, where $\epsilon_1,\ldots,\epsilon_n$ are independent normal random variables with zero mean and unit variance. Here, to examine the impact of the signal-to-noise ratio on the running time, we also vary the value of~$\sigma \in \{1,0.1,0.01\}$, and plot the average running time of the different approaches in Figure~\ref{Fig:Timing}. We see that \texttt{SeqConReg} is the fastest among all three approaches, being approximately 10 times more efficient than \texttt{ScanSelected} and 40 times faster than \texttt{ScanAll}.  The ratio of the timings becomes larger as the signal-to-noise ratio increases, because the resulting fitted function has more knots, which makes it more appealing to use algorithms of a sequential nature, such as \texttt{SeqConReg}.  %We remark that in this example, the running time of \texttt{SeqConReg} seems to increase roughly at the rate of $O(n^2)$.
% {\scalebox{0.8}{$\blacktriangle$}}

\begin{figure}[!ht]
\centering
\includegraphics[width=\textwidth]{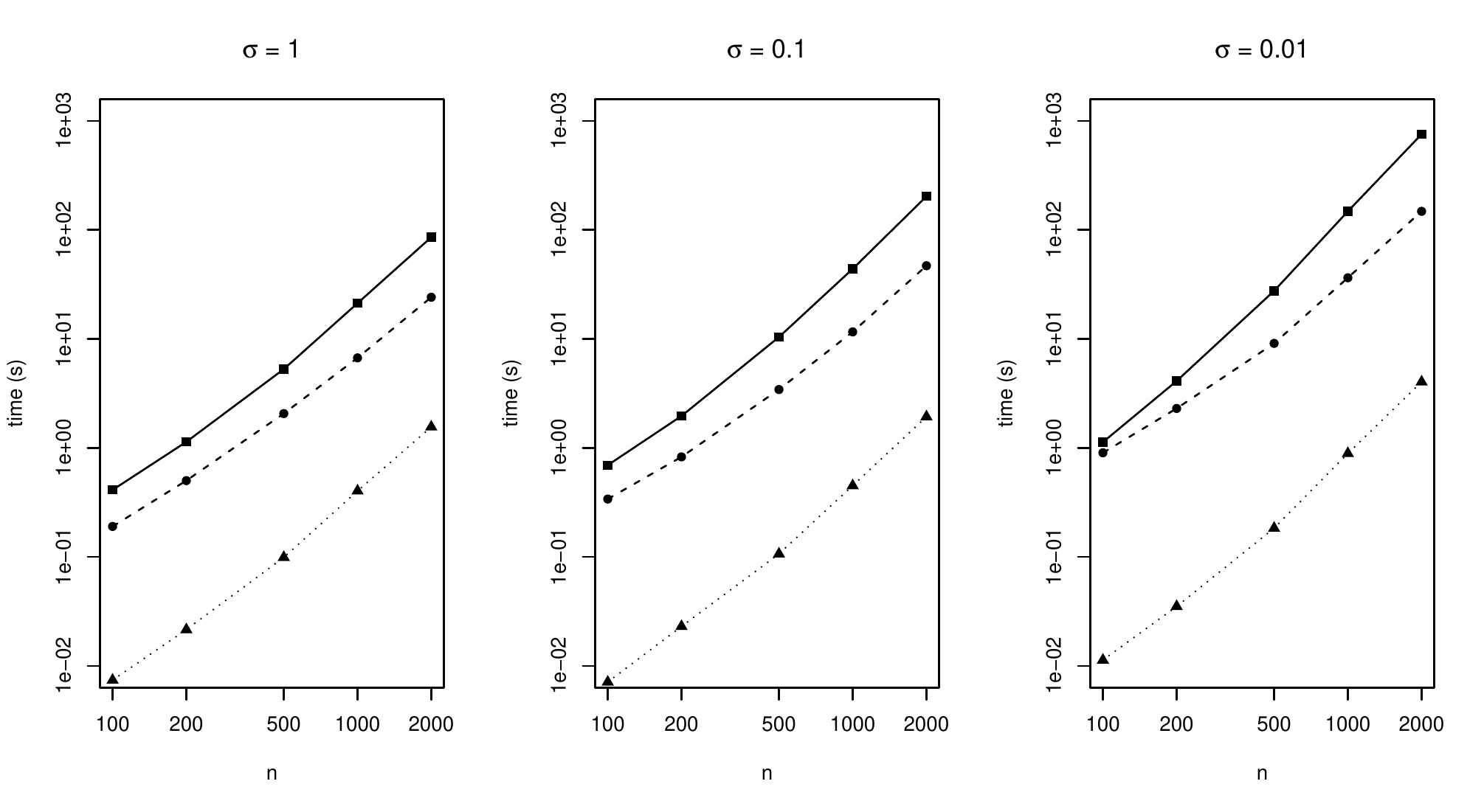}
\caption{\label{Fig:Timing}Log-log plots of the running time (in seconds) of the \texttt{SeqConReg} (\raisebox{1pt}{\scalebox{0.8}{$\blacktriangle$}}), \texttt{ScanSelected} ($\bullet$) and \texttt{ScanAll}~(\raisebox{0.2pt}{\scalebox{0.6}{$\blacksquare$}}) algorithms for least squares estimation of an S-shaped function, for sample sizes $n \in \{100,200,500,1000, 2000\}$ and noise levels $\sigma \in \{1, 0.1, 0.01\}$.}
\end{figure}

\hfparskip
\subsection{Statistical performance}
\label{sec:StatPerf}

We compare our estimator (denoted by LSE below) with the following alternatives: 

\unparskip
\begin{itemize}
\item Spline: The method of \citet{LM17}, based on cubic B-splines with shape constraints, which is implemented in the \texttt{R} package \texttt{ShapeChange} \citep{LM16};
\item SCKLS: The shape-constrained kernel least squares method of \citet{YCJM19, YCJK20} based on local linear kernels;\footnote{To give more implementation details, we run SCKLS with $M=50$ evaluation points and select the kernel bandwidth according to the method of \citet{RSW1995}.} 
\item BEDE and BESE: The bisection extremum distance estimator and bisection extremum surface estimators of \citet{Christopoulos2016}, both developed based on the geometric properties of the inflection point for a smooth function and implemented in the \texttt{R} package \texttt{inflection} \citep{Christopoulos2019}.  
\end{itemize}

\unparskip
For LSE, Spline and SCKLS, we assess their performance based on both the average $L^2(\mathbb{P}_n)$ loss and the mean absolute error of the estimated inflection point location, while for BEDE and BESE we compute only the mean absolute error of the estimated inflection point location. All results are based on numerical experiments over 1000 repetitions.

For $n \in \{100,200,500,1000\}$, and design points $x_1,\ldots,x_n$, we set $Y_i = f_j(x_i)+ 0.1 \epsilon_i$ for $i=1,\ldots,n$, where $\epsilon_1,\ldots,\epsilon_n \stackrel{\mathrm{iid}}{\sim} N(0,1)$, for four different choices of signal function $f_j$:
\begin{alignat}{2}
&f_1(x)=
\begin{cases}
2(0.3 - \sqrt{0.09-x^2}) & \mbox{ for } x \in [0,0.3)\\
2\{0.3+\sqrt{0.49-(1-x)^2}\} & \mbox{ for } x \in [0.3,1] 
\end{cases};
\qquad\qquad&&f_3(x) = x + \mathbbm{1}_{\{x \geq 0.3\}};\notag\\
\label{eq:signals}
&f_2(x)= \sin\bigl((x-0.3)\pi/1.4\bigr)\mathbbm{1}_{\{x \geq 0.3\}};
\qquad\qquad&&f_4(x) = 4/\bigl(1+e^{-2(x-0.3)}\bigr).
\end{alignat}
These signals are plotted in Figure~\ref{Fig:Signals}. The signals are designed in such a way that their ranges over $[0,1]$ are roughly the same. Furthermore, they all belong to $\mathcal{F}$ and have a unique inflection point at $m_0=0.3$.  Note that $f_1$ satisfies Assumption~\ref{ass:inflection} with $\alpha=1/2$, and $f_2$ and $f_3$ do not satisfy Assumption~\ref{ass:inflection} for any $\alpha>0$, while $f_4$ satisfies the assumption with $\alpha=3$.
% have inhomogeneous smoothness around $x^*$, where there is a kink in $f_3$ and a jump in $f_4$.

\begin{figure}[htb!]
\centering
\includegraphics[scale=0.85]{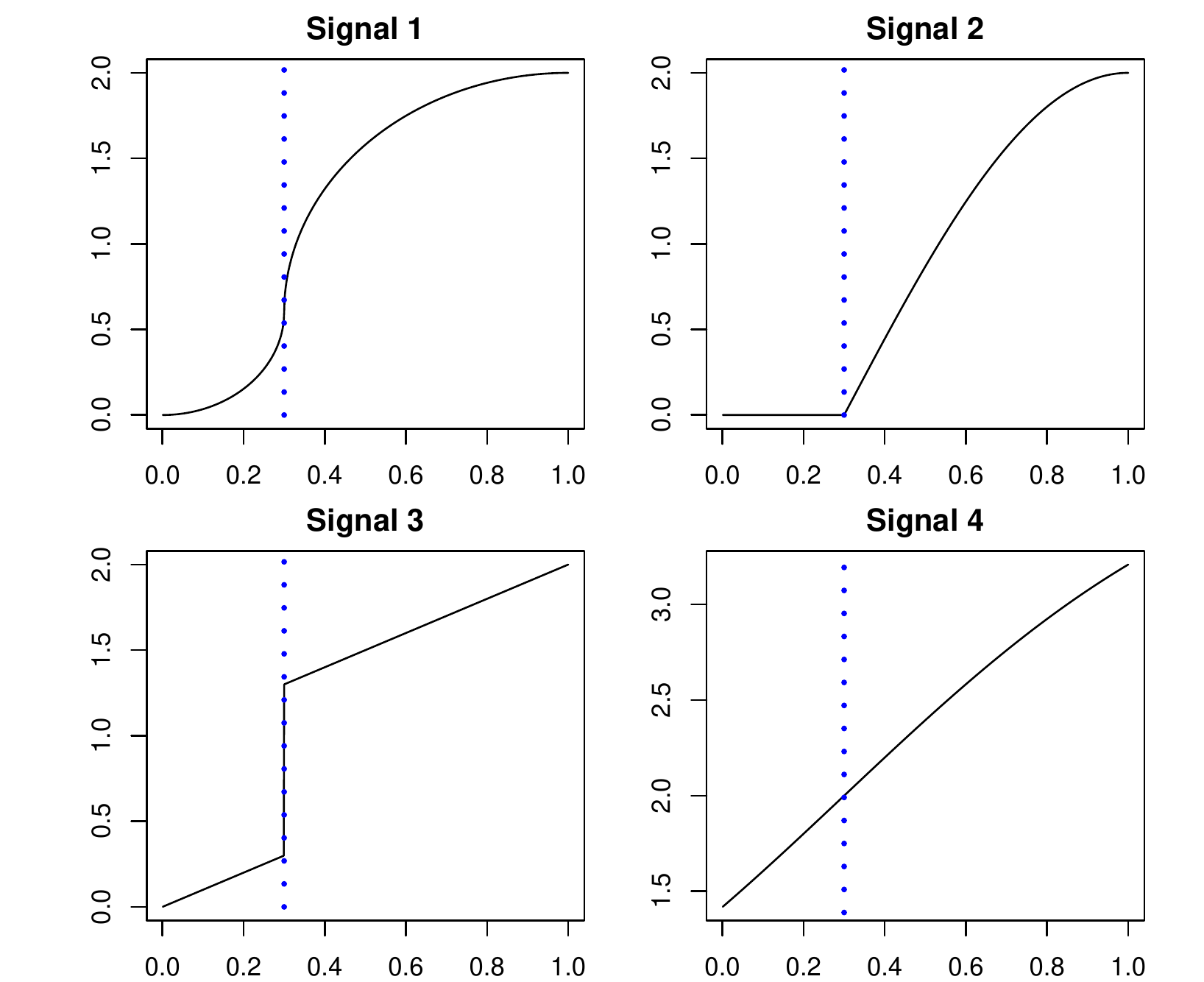}
\medskip
\caption{\label{Fig:Signals}Plots of the signals $f_1,f_2,f_3,f_4$ defined in~\eqref{eq:signals}, with the inflection points highlighted by dashed blue lines.}
\end{figure}

We consider two different designs by setting $x_i = F^{-1}\bigl(i/(n+1)\bigr)$ for $i = 1,\ldots,n$, where $F$ is the distribution function of either the $U[0,1]$ or $\mathrm{Beta}(4,8)$ distributions. In the second setting, the design points are not equally spaced, and $m_0=0.3$ is the mode of the $\mathrm{Beta}(4,8)$ distribution. The results are shown in Figures~\ref{Fig:sim_res_1} and~\ref{Fig:sim_res_2}.

\begin{figure}[htbp]
\centering
\includegraphics[width=\textwidth,height=0.88\textheight]{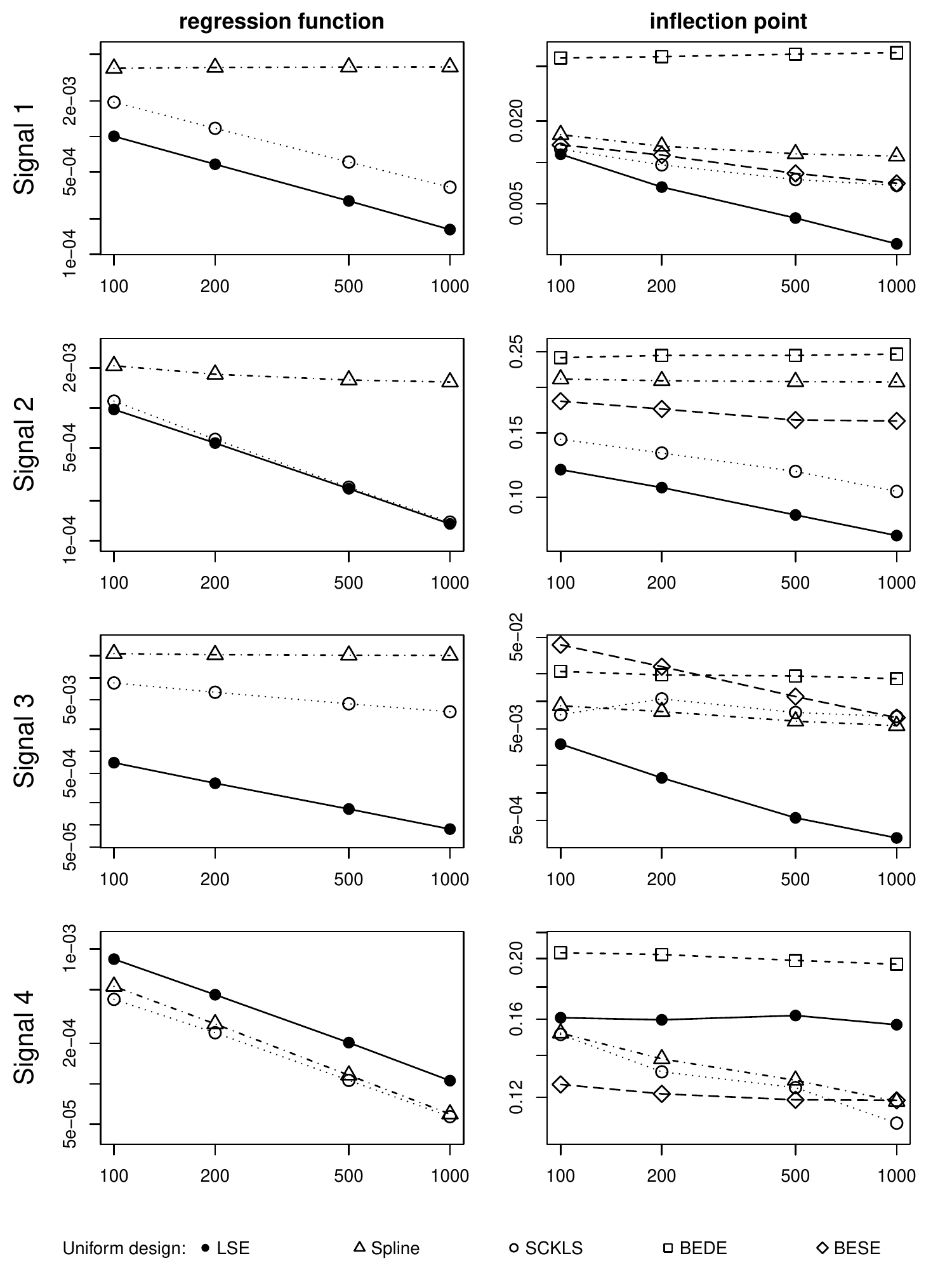}
\medskip
\caption{\label{Fig:sim_res_1} Log-log plots of the mean squared error of the fitted function on the design points, as well as the mean absolute distance between the estimated and true inflection points, based on $n=100,200,500, 1000$ observations when the design points are equispaced and the signals are as in Figure~\ref{Fig:Signals}.}
\end{figure}

\begin{figure}[htbp]
\centering
\includegraphics[width=\textwidth,height=0.88\textheight]{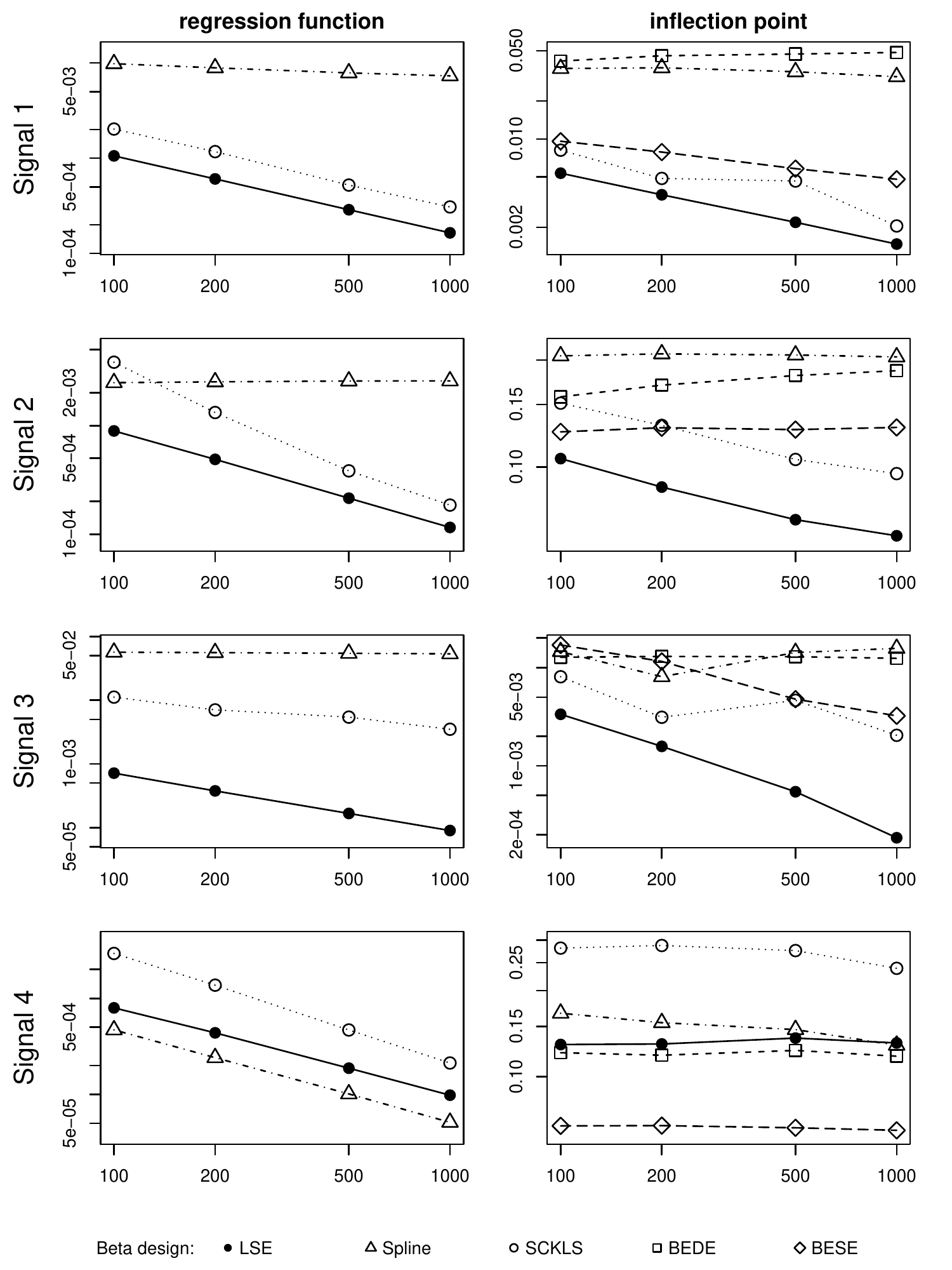}
\medskip
\caption{\label{Fig:sim_res_2}  Log-log plots of the mean squared error of the fitted function on the design points, as well as the mean absolute distance between the estimated and true inflection points, based on $n=100,200,500, 1000$ observations when the design points are quantiles of a $\mathrm{Beta}(4,8)$ distribution and the signals are as in Figure~\ref{Fig:Signals}.}
\end{figure}

For the estimation of the regression function, the LSE performs well in all cases; in particular, it is able to adapt to inhomogeneous smoothness levels and asymmetric designs.  On the other hand, the spline- and kernel-based approaches struggle in this regard, and perform much worse for signals $f_1$ and $f_3$ especially.  In fact, the spline-based method appears to be inconsistent for signals $f_1$ and $f_3$, and the kernel-based approach seems to suffer the same problem for signal $f_3$ too. 
%the spline-based approach and kernel-based approach are better able to exploit the smoothness of $f_1$ than the least squares estimator.  However, in the other settings, they seem not to be very good at adapting to signals with inhomogeneous smooth levels, even when this homogeneity is only slightly violated, as in $f_3$.  For $f_2$ and $f_4$, the spline-based approach appears even to be inconsistent, while the least squares estimator performs well across all four settings.
For the estimation of the inflection point, the story has some similarities, but also some differences: for signals $f_1$, $f_2$ and $f_3$, the least squares approach provides more reliable estimates, for two main reasons.  First, it is able to adapt to a much wider range of local smoothnesses around $m_0$.  Second, by carefully comparing Figure~\ref{Fig:sim_res_2} to Figure~\ref{Fig:sim_res_1}, we see that the least squares approach is also able to take advantage of the additional design points near $m_0$ under the beta design to obtain improved estimation performance (relative to the uniform design).  For signal $f_4$, the other methods are able to exploit the homogeneity of the signal across the entire domain (and the symmetry of the signal around the inflection point) and tend to have smaller mean absolute error than the least squares approach.  We recall Figure~\ref{Fig:Parametric}, which further illustrates the dangers of assuming smoothness of an S-shaped signal when it is not present.

\hfparskip
\subsection{Real data example}
\label{subsec:realdata}
In this subsection, we apply our nonparametric S-shaped procedure to $n=221$ LIDAR (light detection and ranging) 
% as opposed to radar: radio detection and ranging
measurements for determining atmospheric concentrations of mercury emissions from the Bella Vista geothermal power station in Italy. This dataset, which is of interest from an air pollution modelling perspective, is discussed at length by~\citet{RWC03} and included in the \texttt{R} package \texttt{SemiPar}~\citep{Wan18}. 

To explain the rationale behind the use of the S-shaped regression model~\eqref{eq:Model} in this context, we begin by briefly outlining the physical background and experimental setup; see~\citet{EFSS89,ERSWDFM92} and~\citet[Section~2]{HHBRE96} for further details.\footnote{For additional physical explanations and graphical illustrations, see for example \url{http://www.nist.gov/programs-projects/differential-absorption-lidar-detection-and-quantification-greenhouse-gases} as well as \url{http://dialtechnology.info/history.html}.} 
% https://www.mdpi.com/remotesensing/remotesensing-09-00953/article_deploy/html/images/remotesensing-09-00953-g003.png
In this instance, the LIDAR equipment was set up at a fixed location downwind of the power station, at a distance of 390--720m from the bulk of the mercury plume. The DIAL (differential absorption LIDAR) technique involves firing two laser beams in quick succession in the same direction
% at the same angle
towards the plume, where the first beam contains light at the resonant wavelength $\lambda_{\mathrm{on}}=253.6\mathrm{nm}$ of mercury while the second `reference' beam is set to a slightly different `off-resonant' wavelength $\lambda_{\mathrm{off}}$. The light in both beams is scattered (or reflected back) to roughly the same extent by particles and aerosols in the atmosphere, but the light at wavelength $\lambda_{\mathrm{on}}$ is absorbed much more strongly by atoms of mercury, the pollutant of interest. The LIDAR apparatus records the intensity (i.e.\ power) of the reflected signals from both incident beams as a function of time elapsed, which is proportional to the distance travelled by the light before it is reflected back towards the source. The latter is the independent variable \texttt{range} in the dataset. 
% Since the individual intensity curves are noisy, an average was taken over 100 pairs of laser shots
The intensity curves from 100 pairs of laser shots in the same direction were then averaged to produce power estimates $P(r_i;\lambda_{\mathrm{on}})$ and $P(r_i;\lambda_{\mathrm{off}})$ for $n=221$ equispaced values $r_i$ of \texttt{range} between 390m and 720m (at intervals of 1.5m). In view of the physical reasons outlined above, the relative sizes of these two quantities for different $r_i$ can be used to estimate how the atmospheric concentration $g_0(r)$ of mercury (in $\mathrm{ng}/\mathrm{m}^3$) varies with distance $r$ (in metres) along the path of the laser beams.

\unparskip
\begin{figure}[htb!]
\centering
\includegraphics[width=0.475\textwidth]{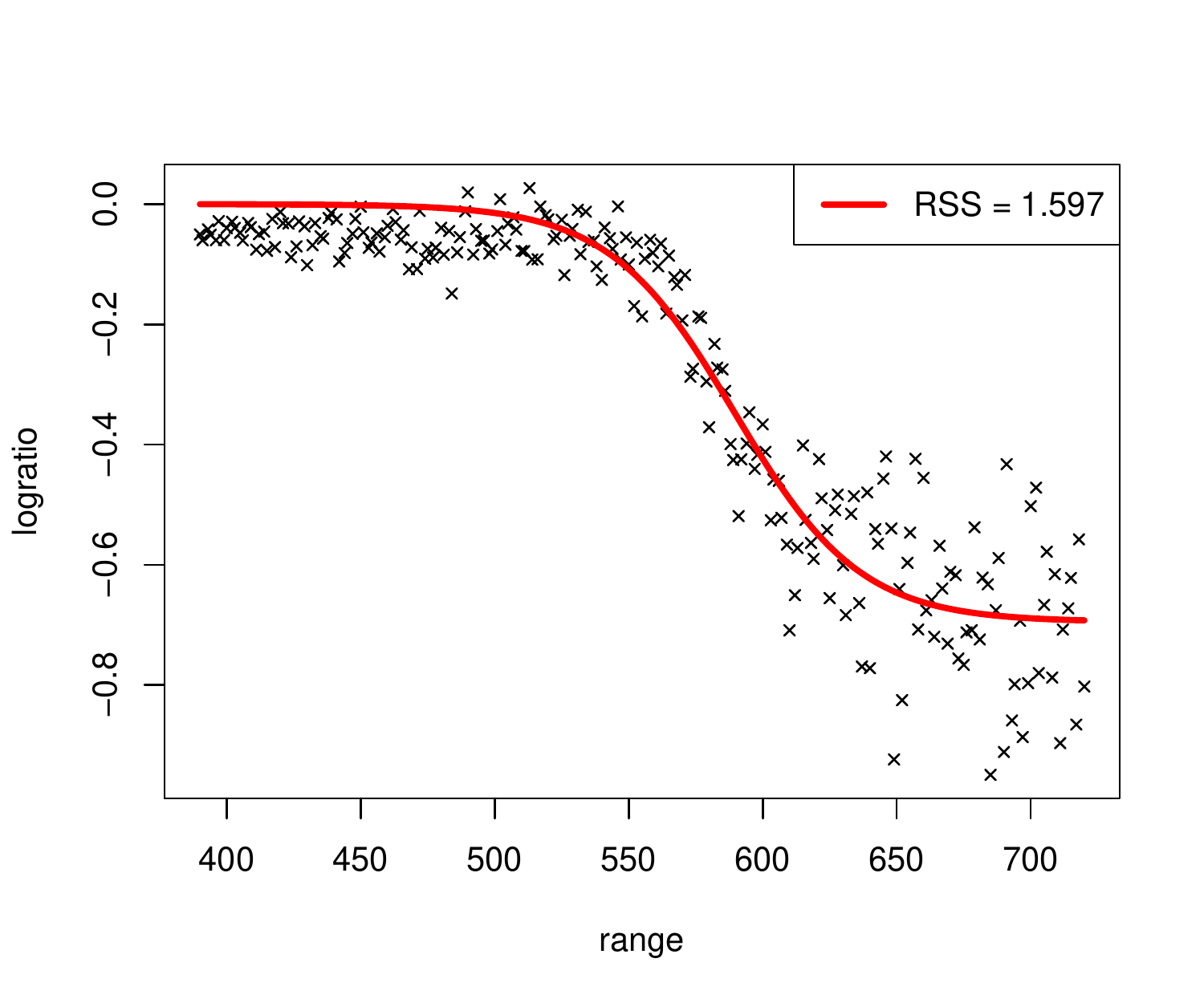}
\includegraphics[width=0.475\textwidth]{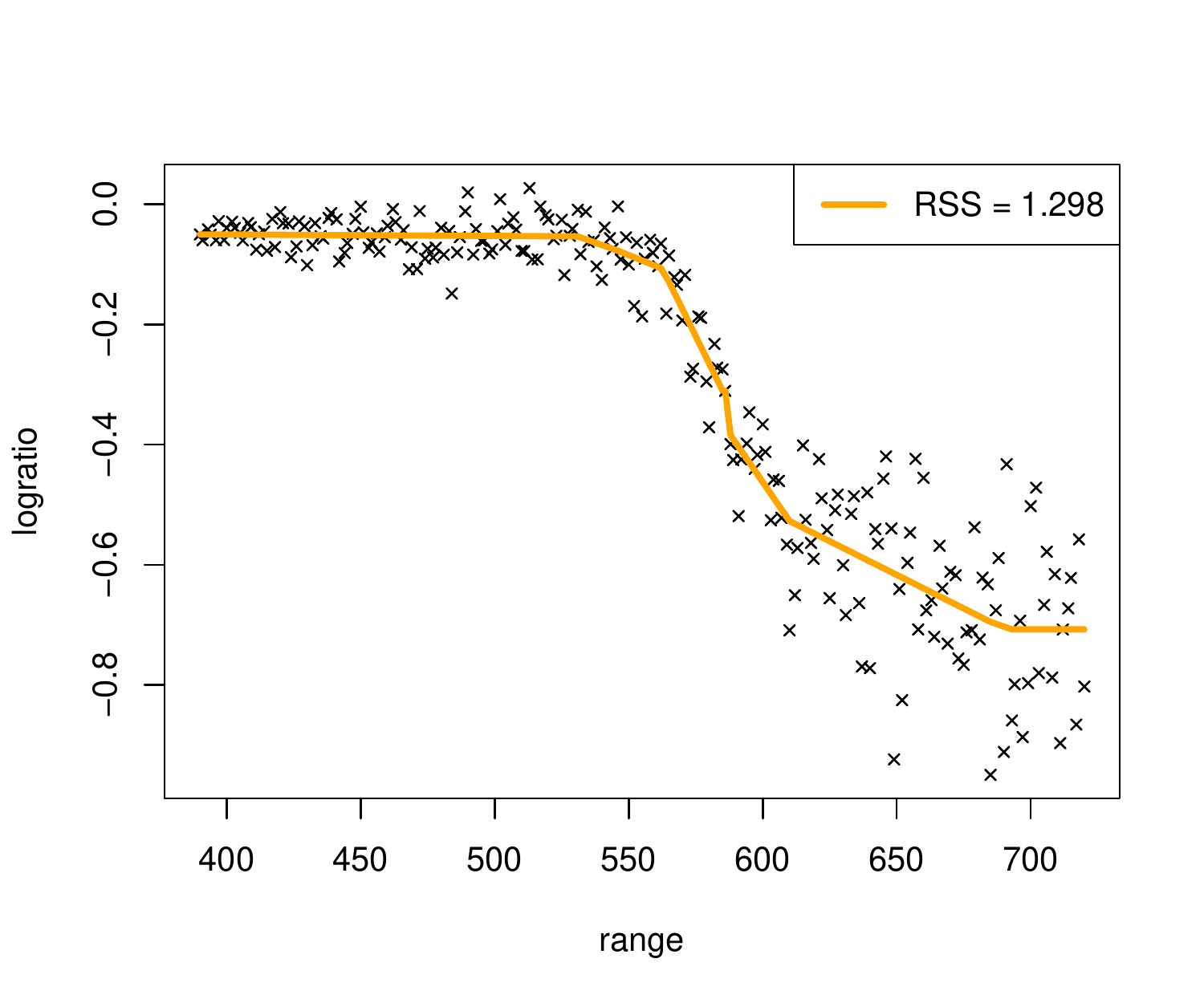}

\vspace{-1cm}
\includegraphics[width=0.475\textwidth]{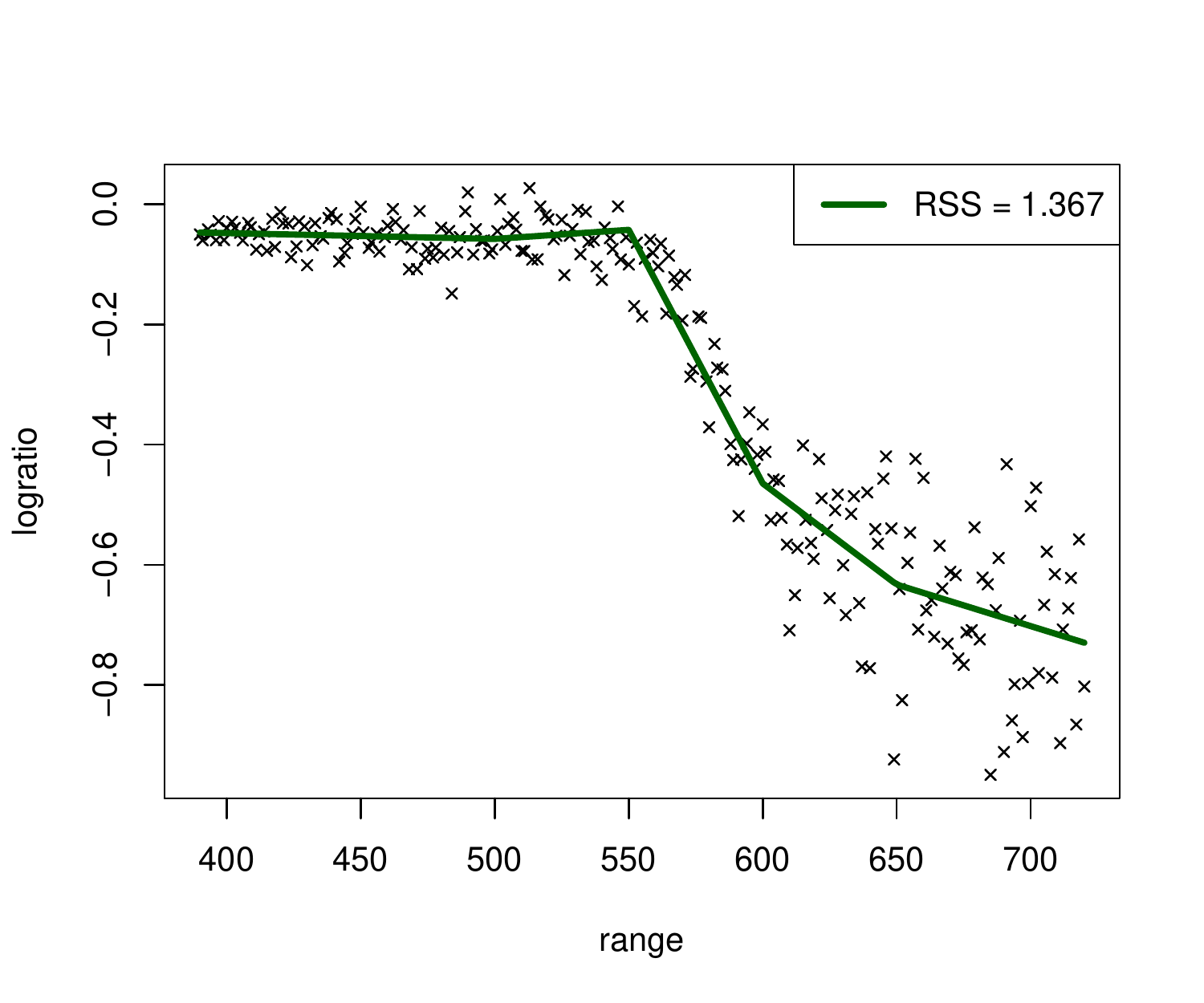}
\includegraphics[width=0.475\textwidth]{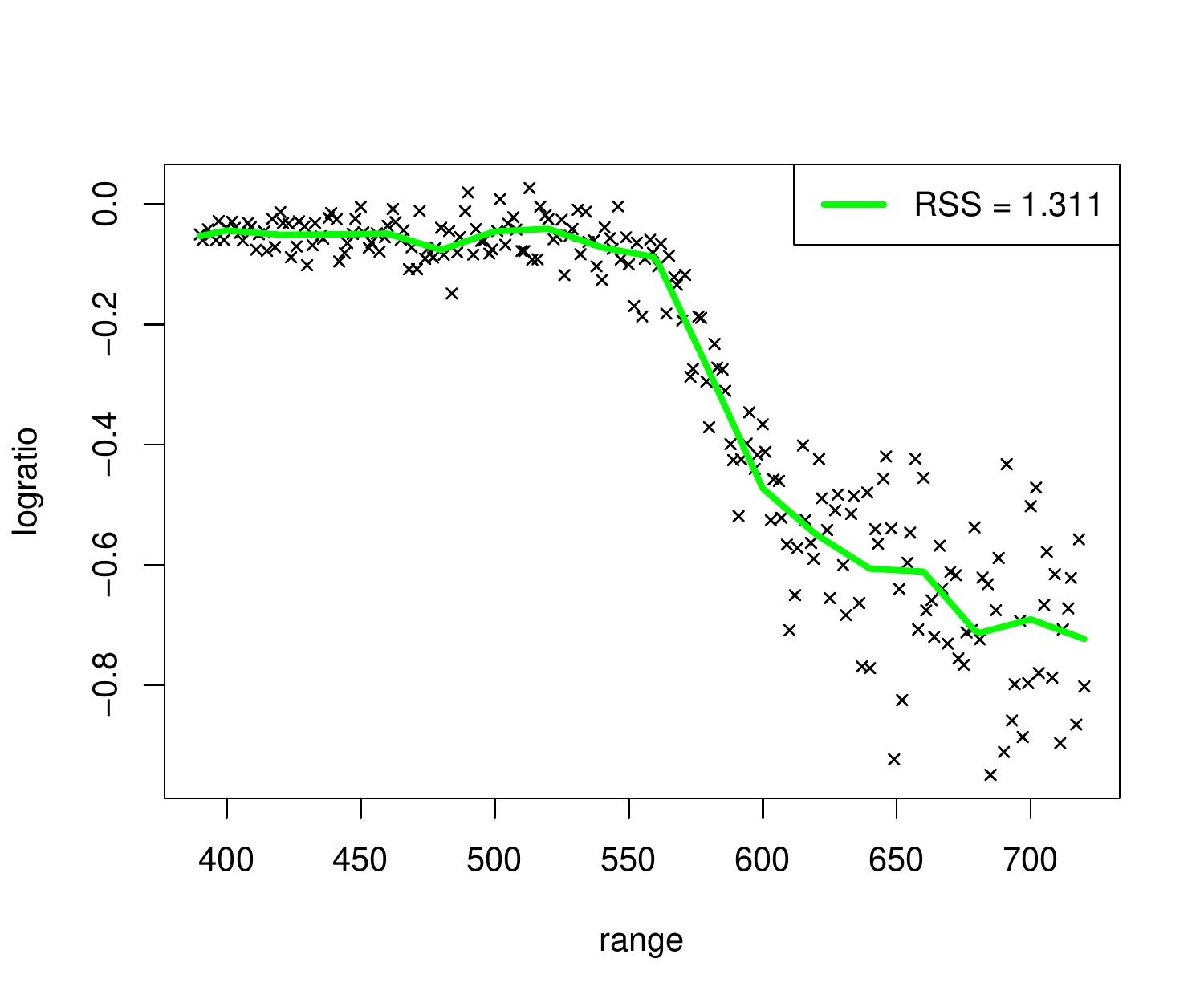}

\medskip
\caption{\label{Fig:LIDAR} Least squares fits to the LIDAR dataset ($n=221$) from~\citet{HHBRE96}: logistic (\textit{top left}), segmented linear with knots at $\mathtt{range}=500,550,600,650$ (\textit{bottom left}), segmented linear with knots at $\mathtt{range}=400,420,\dotsc,680,700$ (\textit{bottom right}) and S-shaped (\textit{top right}), along with their respective residual sums of squares (RSS).} 
\end{figure}
% The independent variable \texttt{range} is the distance (in metres) travelled by a laser beam from source to a mercury plume target, and the dependent variable \texttt{logratio} is the logarithm of the ratio of the received signal power from two beams with different wavelengths; see~\citet{EFSS89,ERSWDFM92} and~\citet[Section~2]{HHBRE96} for further physical background.

More precisely, based on an approximation of the governing equation for LIDAR scattering,~\citet[Section~3]{HHBRE96} consider a regression model for the \texttt{logratio} values
\[
\log\frac{P(r_i;\lambda_{\mathrm{on}})}{P(r_i;\lambda_{\mathrm{off}})}=f_0(r_i)+\xi_i,\quad i=1,\dotsc,n,
\]
where on physical grounds, $f_0(r)=-C\int_0^r g_0(s)\,ds$ is defined for $r\geq 0$ as the integral of the concentration function $g_0$ over $[0,r]$ multiplied by $-C\equiv -C(\lambda_{\mathrm{on}},\lambda_{\mathrm{off}})=-1.6\times 10^{-5}\,\mathrm{ng}^{-1}\,\mathrm{m}^2$. Since mercury concentration is always non-negative and would generally be expected to decrease with distance from the interior of the plume, $g_0$ can reasonably be modelled as a non-negative unimodal function,
% dispersion of the mercury
in which case its antiderivative satisfies our definition of an S-shaped function. The data, shown in Figure~\ref{Fig:LIDAR}, do indeed appear to support $f_0$ as an inverted S-shaped regression function. Moreover,~\citet[Figure~4]{HHBRE96} present plots of suitably normalised residuals against \texttt{range} as well as the sample autocorrelations at different lags, which provide some empirical justification for the assumption that the errors~$\xi_1,\ldots,\xi_n$ are independent.
% indeed, on physical grounds, it should be decreasing, because it is the negative anti-derivative of a non-negative concentration function \citep[Section~3]{HHBRE96}.  The concavity followed by convexity of the regression function amounts to an assumption that this concentration function is unimodal, and appears to be reasonable.

The different panels of Figure~\ref{Fig:LIDAR} illustrate least squares fits over different classes of regression functions.  In the top-left panel, we plot a fit of a logistic function
\[
x\mapsto -\frac{A}{1+e^{-ax+b}};
\]
here we see the limitations of the parametric model in terms of its inability to capture the behaviour of the regression function in the range 390--550m.
% here we see the drawbacks of the parametric model in terms of its limited ability to capture the abrupt transition in the behaviour of the regression function in the range 500--600m.
% where there is some evidence of oversmoothing
The segmented linear regression fits shown in the two bottom panels require the choice of a set of knots, and the left and right panels use 4 and 16 knots respectively.  We see that the selection of the set of knots can have quite a significant influence, and moreover, the fits are not guaranteed to be S-shaped or even monotone. Interestingly, despite the overfitting that is apparent in the bottom-right plot of the figure, the residual sum of squares remains higher than that of the S-shaped LSE\footnote{Note that all the algorithms in Section~\ref{sec:computation} can be used without further modifications to compute S-shaped LSEs on any other interval $[a,b]$ besides $[0,1]$.} illustrated in the top-right panel.  Moreover, the S-shaped LSE selects the number and location of its knots adaptively, with no input required from the practitioner. Another attractive feature of the S-shaped LSE is that its theoretical guarantees presented in Theorems~\ref{thm:worstcase} and~\ref{thm:adaptive} allow for heteroscedasticity, which is clearly present in this dataset. Finally, we note that the inflection point of this LSE at $\mathtt{range}=586\mathrm{m}$ yields an estimate of the distance from the LIDAR equipment to the central part of the plume, where the mercury concentration is highest.

\section{Discussion}
\label{sec:discussion}
% Extensions and outlook
In this paper, we have developed a framework for the estimation of S-shaped regression functions and their inflection points via nonparametric least squares. In spite of the challenges of working with a non-convex shape-constrained function class, we have proposed and implemented an efficient sequential algorithm for the computation of S-shaped least squares estimators, 
% via a sequential implementation of a mixed primal-dual bases algorithm
and also established theoretical guarantees on the consistency, robustness and rates of convergence of our estimators.
We will conclude by discussing some variations and possible extensions of our S-shaped regression problem that may prove to be interesting avenues for future research.

% Convex-concave function estimation
First, while our monotonicity requirement for S-shaped functions is natural in many practical applications, and useful for regulating the boundary behaviour of the least squares estimator at the endpoints of the covariate domain, much of our methodology and theory can be adapted straightforwardly to handle functions that are convex on $[0,m_0]$ and concave on $[m_0,1]$, but not necessarily increasing on $[0,1]$. On the computational side, our sequential strategy \texttt{SeqConReg} would still be applicable after the obvious small modifications to Step II of Algorithm~\ref{alg:sshape}.  This modified algorithm would be justified by analogues of Propositions~\ref{prop:factA} and~\ref{prop:srestrexact}, and we could still use the mixed primal-dual bases algorithm (Algorithm~\ref{alg:coneproj}) to sequentially compute convex LSEs on $\{(x_i,Y_i):1\leq i\leq j\}$ and concave LSEs on $\{(x_i,Y_i):j\leq i\leq n\}$ for $j\in [n]$.  The theoretical results in Section~\ref{sec:lsreg} would also go through with some minor alterations (e.g.~to the smoothness condition~\eqref{eq:smoothness} in Assumption~\ref{ass:inflection}). The proofs of the oracle inequalities would be broadly the same, and the current localisation argument for the inflection point result does not rely in any essential way on monotonicity near $m_0$.  Some properties of our projection framework may need more significant adjustment, however, in order to handle potential boundary issues.

In another direction, one could consider the estimation of `symmetric' S-shaped regression functions, by which we mean S-shaped functions $f_0$ with inflection point $m_0 \in (0,1)$ such that $f_0(x) = 2f_0(m_0) - f_0(2m_0-x)$ for $x \in [0\vee(2m_0-1),(2m_0)\wedge 1]$. We believe that this additional symmetry constraint is likely to bring about considerable challenges when it comes to developing theory and algorithms for the LSE that minimises the residual sum of squares over all symmetric S-shaped functions.  In particular, unlike in our Proposition~\ref{prop:existence}, it is not clear if the global minimiser in the least squares procedure can be attained at some symmetric S-shaped function with inflection point in $\{x_1,\dotsc,x_n\}$.  Moreover, the sequential strategy that underpins our current algorithm may no longer be valid, because in contrast to the conclusion of Proposition~\ref{prop:srestrexact}, the symmetric S-shaped LSE may not coincide with increasing convex or increasing concave LSEs on any subinterval.  Theoretically, although the global risk bounds in Section~\ref{sec:lsregoracle} are likely to carry over even with the additional symmetry constraint, the rate of convergence of the inflection point estimator $\tilde{m}_n$ may be very different to that in Theorem~\ref{thm:inflection}, and may even be (nearly) parametric. 

A further topic for future research could be to seek quantitative versions of the continuity result (Proposition~\ref{prop:cont}) for our $L^2$-projection onto the class of S-shaped functions, in the spirit of the recent work of~\citet{BS21} on the log-concave projection.  Such a result could, for instance, provide insight into the rate at which the estimated inflection point converges to the inflection point of the projected regression function under model misspecification. 

Finally, under local curvature conditions on an S-shaped function $f_0$ similar to those in Assumption~\ref{ass:inflection}, it would be of methodological and theoretical interest to be able to carry out (uniformly) asymptotically valid inference for $f_0(x)$ at fixed $x\in [0,1]$, as well as for the inflection point $m_0$. For $x\neq m_0$, defining $[\tilde{u}_n(x),\tilde{v}_n(x)]$ to be the largest interval containing $x$ on which the LSE $\tilde{f}_n$ is linear, we anticipate that the techniques of~\citet{DHS20} can be applied to obtain a limiting distribution for \[\sqrt{n\bigl(\tilde{v}_n(x)-\tilde{u}_n(x)\bigr)}\,\bigl(\tilde{f}_n(x)-f_0(x)\bigr)\]
that does not depend on $f_0$, and hence construct asymptotically valid confidence intervals for $f_0(x)$. On the other hand, since $m_0$ marks the boundary between the convex and concave parts of $f_0$, we expect the problem of uncertainty quantification for $m_0$ and $f_0(m_0)$ to be more challenging and of a qualitatively different character. With this end in view, it is natural to seek tractable asymptotic distributions for $\tilde{m}_n$ and $\tilde{f}_n(\tilde{m}_n)$. As an initial step, one would need to refine the results in Section~\ref{subsec:inflectpt} by closing the logarithmic gap between the upper and lower bounds therein on the rate of convergence of $\tilde{m}_n$ to $m_0$. A satisfactory solution to this problem would ideally also settle the analogous problem for the plug-in mode estimator based on the unimodal LSE~\citep{SZ01}, and is likely to require significant further technical developments.

\section{Appendix: A mixed primal-dual bases algorithm}
\label{sec:appendix}
In this section, we describe a mixed primal-dual bases algorithm the $L^2$-projection of a line segment onto the polyhedral convex cone of increasing convex sequences.  This underpins our \texttt{SeqConReg} algorithm in Section~\ref{sec:computation}.  Our starting point is the following standard characterisation of projections onto general closed, convex cones~\citep[e.g.][Corollary~2.1]{Mor62,Gro96}. Here and below, we write $\norm{{\cdot}}$ and $\ipr{\cdot\,}{\cdot}$ for the standard Euclidean norm and inner product on $\R^n$ for some $n\in\N$.
% suppressing the dependence on $n$ for convenience.
% to compute a single increasing convex LSE, it may be more efficient to use an active set algorithm, but if the intermediate fits are also required, then our procedure has superior performance / we focus on increasing convex regression here
\begin{lemma}
\label{lem:cone}
Let $\Lambda\subseteq\R^n$ be a closed, convex cone. For each $y\in\R^n$, there exists a unique projection of $y$ onto $\Lambda$, given by $\Pi_\Lambda(y)=\argmin_{u\in\Lambda}\norm{u-y}$, and we have the following:

\unparskip
\begin{enumerate}[label=(\alph*)]
\item $\Pi_\Lambda(y)$ is the unique $\hat{y}\in\Lambda$ for which $\ipr{v}{y-\hat{y}}\leq 0$ for all $v\in\Lambda$ and $\ipr{\hat{y}}{y-\hat{y}}=0$.
\item Suppose in addition that $\Lambda$ is \emph{finitely generated}, i.e.\ that $\Lambda=\bigl\{\sum_{\ell=1}^r\lambda_\ell v^\ell:\lambda_1,\dotsc,\lambda_r\geq 0\bigr\}$ for some generators $v^1,\dotsc,v^r\in\Lambda$. Then $\hat{y}=\Pi_\Lambda(y)$ if and only if $\hat{y}=\sum_{\ell=1}^r\hat{\lambda}_\ell v^\ell$ for some $\hat{\lambda}_1,\dotsc,\hat{\lambda}_r\geq 0$, and $\ipr{v^\ell}{y-\hat{y}}\leq 0$ for all $\ell$, with $\ipr{v^\ell}{y-\hat{y}}=0$ for any $\ell$ such that $\hat{\lambda}_\ell>0$.
\end{enumerate}

\unparskip
\end{lemma}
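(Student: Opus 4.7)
The plan is to prove this standard characterization of projection onto closed convex cones by first establishing existence and uniqueness of $\Pi_\Lambda(y)$ via strict convexity of $u\mapsto\|u-y\|^2$ on the closed convex set $\Lambda$, and then verifying the variational conditions in (a) by exploiting the conic structure of $\Lambda$. Part (b) will then follow by specializing the test vectors in (a) to the generators and invoking linearity.

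For existence and uniqueness, I would use the standard Hilbert space projection theorem: the map $u\mapsto\|u-y\|^2$ is strictly convex, continuous and coercive on the closed convex set $\Lambda\subseteq\R^n$, so it attains a unique minimum, which we denote $\hat{y}=\Pi_\Lambda(y)$.

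For part (a), I would first prove necessity. Since $\Lambda$ is a cone closed under addition, for any $v\in\Lambda$ and $t\geq 0$ we have $\hat{y}+tv\in\Lambda$, so optimality yields
\[
\|\hat{y}-y\|^2\leq\|\hat{y}+tv-y\|^2=\|\hat{y}-y\|^2+2t\ipr{v}{\hat{y}-y}+t^2\|v\|^2.
\]
Dividing by $t>0$ and letting $t\downarrow 0$ gives $\ipr{v}{y-\hat{y}}\leq 0$. Next, since $\hat{y}\in\Lambda$ implies $(1+t)\hat{y}\in\Lambda$ for every $t\geq-1$, the same type of inequality
\[
\|\hat{y}-y\|^2\leq\|(1+t)\hat{y}-y\|^2=\|\hat{y}-y\|^2+2t\ipr{\hat{y}}{\hat{y}-y}+t^2\|\hat{y}\|^2
\]
must hold for both signs of $t$ near $0$; taking one-sided derivatives at $t=0$ forces $\ipr{\hat{y}}{y-\hat{y}}=0$. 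For sufficiency, if $\hat{y}\in\Lambda$ satisfies both conditions, then for arbitrary $u\in\Lambda$ the Pythagoras-style expansion
\[
\|u-y\|^2=\|u-\hat{y}\|^2+2\ipr{u-\hat{y}}{\hat{y}-y}+\|\hat{y}-y\|^2=\|u-\hat{y}\|^2-2\ipr{u}{y-\hat{y}}+\|\hat{y}-y\|^2\geq\|\hat{y}-y\|^2,
\]
using $\ipr{\hat{y}}{y-\hat{y}}=0$ and $\ipr{u}{y-\hat{y}}\leq 0$; this forces $\hat{y}=\Pi_\Lambda(y)$ by uniqueness, and uniqueness of the characterizing $\hat{y}$ follows from uniqueness of the projection itself.

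For part (b), the forward direction writes $\hat{y}=\sum_\ell\hat{\lambda}_\ell v^\ell$ with $\hat{\lambda}_\ell\geq 0$ (since $\hat{y}\in\Lambda$), applies (a) with $v=v^\ell$ to deduce $\ipr{v^\ell}{y-\hat{y}}\leq 0$, and then uses
\[
0=\ipr{\hat{y}}{y-\hat{y}}=\sum_{\ell=1}^r\hat{\lambda}_\ell\ipr{v^\ell}{y-\hat{y}}
\]
together with the sign information to conclude that each summand vanishes, giving complementary slackness. The reverse direction writes any $v\in\Lambda$ as a non-negative combination of the $v^\ell$, so $\ipr{v}{y-\hat{y}}\leq 0$ by linearity, and the displayed identity above shows $\ipr{\hat{y}}{y-\hat{y}}=0$; (a) then identifies $\hat{y}$ as $\Pi_\Lambda(y)$. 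No step here is a serious obstacle; the only subtlety is being careful with the two-sided perturbation $(1+t)\hat{y}$ (which requires $\Lambda$ to contain positive multiples of $\hat{y}$, valid because $\Lambda$ is a cone) to extract the equality $\ipr{\hat{y}}{y-\hat{y}}=0$ rather than just an inequality.
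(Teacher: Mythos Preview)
Your proof is correct and complete. The paper does not actually supply its own proof of this lemma; it treats the result as standard and simply cites \citet{Mor62} and \citet{Gro96}, so your self-contained argument via the two-sided perturbation $(1+t)\hat{y}$ for the equality condition and the Pythagoras expansion for sufficiency fills in exactly what the paper omits.
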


\unparskip
In Lemma~\ref{lem:cone}(b), the vector $(\hat{\lambda}_1,\dotsc,\hat{\lambda}_r)$ is the minimiser of the quadratic function $(\lambda_1,\dotsc,\lambda_r)\mapsto\norm{y-\sum_{\ell=1}^r\lambda_\ell v^\ell}^2$ over the convex set $[0,\infty)^r$. When this constrained minimisation problem is written in Lagrangian form, the associated KKT optimality conditions~\citep[e.g.][Theorem~28.3]{Rock97} correspond precisely to the three conditions in (a) that uniquely define $\Pi_\Lambda(y)$, namely (i) $\hat{y}\in\Lambda$ (\emph{primal feasibility}); (ii) $y-\hat{y}\in\{u\in\R^n:\ipr{u}{v}\leq 0\text{ for all }v\in\Lambda\}$, the \emph{polar cone} of $\Lambda$ (\emph{dual feasibility}); and (iii) $\ipr{\hat{y}}{y-\hat{y}}=0$ (\emph{complementary slackness}).
% Since $\Lambda^{\circ\circ}=\Lambda$, we have $y-\hat{y}=\Pi_{\Lambda^0}(y)$ / special case of the Moreau decomposition theorem / or give a reference to a result on conic programming (and conic duality), without mentioning (b)

Given $(x_1,Y_1),\dotsc,(x_n,Y_n) \in [0,1]\times\R$ with $x_1<\cdots<x_n$, we now fix $j\in [n]$ and work with the cone $\Lambda^j$ of increasing convex sequences based on $x_1,\dotsc,x_j$, as defined in~\eqref{eq:Thetak}. The projection of $(Y_1,\dotsc,Y_j)$ onto~$\Lambda^j$ is $\bigl(\hat{f}_{1,j}(x_1),\dotsc,\hat{f}_{1,j}(x_j)\bigr)$, where $\hat{f}_{1,j}$ is the increasing convex LSE based on $\{(x_i,Y_i):i \in [j]\}$. The generators of $\Lambda^j$ are $\pm u^0,u^1,\dotsc,u^{j-1}\in\R^j$, where $u^0=\mathbf{1}$ and $u_i^\ell=(x_i-x_\ell)^+$ for all $i \in [j]$ and $\ell \in [j-1]$. Since $u^0,u^1,\dotsc,u^{j-1}$ are linearly independent, every $v\equiv (v_1,\dotsc,v_j)\in\R^j$ can be represented uniquely in the form $v=\sum_{\ell=0}^{j-1}\lambda_\ell u^\ell$, where
\begin{equation}
\label{eq:primal}
\lambda_0\equiv\lambda_0(v)=v_1;\quad\lambda_1\equiv\lambda_1(v)=\frac{v_2-v_1}{x_2-x_1};\quad\lambda_\ell\equiv\lambda_\ell(v)=\frac{v_{\ell+1}-v_\ell}{x_{\ell+1}-x_\ell}-\frac{v_\ell-v_{\ell-1}}{x_\ell-x_{\ell-1}},\;\;2\leq\ell\leq j-1,
\end{equation}
so that $v\in\Lambda^j$ if and only if $\lambda_\ell(v)\geq 0$ for all $\ell\in [j-1]$; this is the \emph{primal feasibility} condition from~Lemma~\ref{lem:cone}. For each $v=\sum_{\ell=0}^{j-1}\lambda_\ell u^\ell\in\R^j$, the unique $g_v\in\mathcal{G}[x_1,\dotsc,x_j]$ satisfying $v=\bigl(g_v(x_1),\dotsc,g_v(x_j)\bigr)$ has a knot at $x_\ell$ if and only if $\lambda_\ell\neq 0$, so we refer to $A(v):=\{1\leq\ell\leq j-1:\lambda_\ell\neq 0\}$ as the set of \emph{knots} of $v$ (or `active indices'). 

% For $v\in\R^j$, we also define the corresponding \emph{dual variables} $\gamma_\ell(v):=\ipr{u^\ell}{v-\Pi_{\Lambda^j}(v)}$ for $0\leq\ell\leq j-1$ in accordance with Lemma~\ref{lem:cone}(b), where $\Pi_{\Lambda^j}\colon\R^j\to\Lambda^j$ denotes the projection onto $\Lambda^j$ with respect to $\ipr{\cdot\,}{\cdot}$. 
The following useful property of the projection map $\Pi_{\Lambda^j}\colon\R^j\to\Lambda^j$ can be derived easily from Lemma~\ref{lem:cone}. A general version of this result for arbitrary closed, convex sets is stated as Lemma~\ref{lem:projconv}.
\begin{lemma}
\label{lem:projactive}
Let $A\subseteq [j-1]$ and $v',v''\in\R^j$ be such that $A\bigl(\Pi_{\Lambda^j}(v)\bigr)=A$ for each $v\in\{v',v''\}$. Then for all $v\in [v',v'']:=\{(1-t)v'+tv'':t\in [0,1]\}$, we have $A\bigl(\Pi_{\Lambda^j}(v)\bigr)=A$ and, defining the linear subspace $\mathcal{L}_A:=\Span\{u^\ell:\ell\in A\cup\{0\}\}=\{v\in\R^j:A(v)\subseteq A\}$, we have $\Pi_{\Lambda^j}(v)=\Pi_{\mathcal{L}_A}(v)$.
\end{lemma}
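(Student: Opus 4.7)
The plan is to exploit the KKT characterization of Lemma~\ref{lem:cone}(b): rather than computing $\Pi_{\Lambda^j}(v(t))$ directly for $v(t) := (1-t)v' + tv''$, I propose the natural candidate $\hat{y}(t) := (1-t)\hat{y}' + t\hat{y}''$, where $\hat{y}' := \Pi_{\Lambda^j}(v')$ and $\hat{y}'' := \Pi_{\Lambda^j}(v'')$, and verify that it satisfies primal feasibility, dual feasibility and complementary slackness for $v(t)$. First I would expand $\hat{y}'$ and $\hat{y}''$ in the generator basis $\pm u^0, u^1, \dotsc, u^{j-1}$ as in~\eqref{eq:primal}: since $A(\hat{y}') = A(\hat{y}'') = A$ by hypothesis, we have $\hat{y}' = \sum_{\ell \in A\cup\{0\}}\hat{\lambda}_\ell' u^\ell$ and $\hat{y}'' = \sum_{\ell \in A\cup\{0\}}\hat{\lambda}_\ell'' u^\ell$ with $\hat{\lambda}_\ell', \hat{\lambda}_\ell'' > 0$ for every $\ell \in A$. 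Lemma~\ref{lem:cone}(b) applied at $v'$ and $v''$ then supplies the identities $\langle u^\ell, v' - \hat{y}'\rangle = \langle u^\ell, v'' - \hat{y}''\rangle = 0$ for $\ell \in A \cup \{0\}$ (with $\ell = 0$ being equality because $\pm u^0$ both lie among the generators of $\Lambda^j$) and the inequalities $\langle u^\ell, v' - \hat{y}'\rangle \leq 0$ and $\langle u^\ell, v'' - \hat{y}''\rangle \leq 0$ for $\ell \in [j-1]\setminus A$.

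The next step is purely book-keeping: for $t \in [0,1]$, the coefficients of $\hat{y}(t)$ in the generator basis are $(1-t)\hat{\lambda}_\ell' + t\hat{\lambda}_\ell''$, which are strictly positive for $\ell \in A$ and zero for $\ell \in [j-1]\setminus A$. Hence $\hat{y}(t) \in \Lambda^j \cap \mathcal{L}_A$ and $A(\hat{y}(t)) = A$. Linearity of the inner product then transfers both the equalities and inequalities above to $v(t) - \hat{y}(t)$, giving $\langle u^\ell, v(t) - \hat{y}(t)\rangle = 0$ for $\ell \in A \cup \{0\}$ and $\langle u^\ell, v(t) - \hat{y}(t)\rangle \leq 0$ for $\ell \in [j-1]\setminus A$. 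Lemma~\ref{lem:cone}(b) therefore identifies $\hat{y}(t) = \Pi_{\Lambda^j}(v(t))$ with $A(\Pi_{\Lambda^j}(v(t))) = A$, which is the first conclusion.

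For the identification with $\Pi_{\mathcal{L}_A}(v(t))$, the orthogonality relations $\langle u^\ell, v(t) - \hat{y}(t)\rangle = 0$ for every $\ell \in A \cup \{0\}$ together with $\hat{y}(t) \in \mathcal{L}_A$ are exactly the normal-equations characterization of the orthogonal projection onto the linear span $\mathcal{L}_A$, so $\hat{y}(t) = \Pi_{\mathcal{L}_A}(v(t))$ follows at once. I do not expect any genuine obstacle here; the argument is essentially a one-line verification of KKT after the right candidate has been written down. The only point that deserves care is the two-sided role of $u^0$: because $\pm u^0$ both generate $\Lambda^j$, the corresponding dual-feasibility inequalities must collapse to equality, which is why the coefficient $\hat{\lambda}_0$ behaves like an unconstrained real number and is omitted from the active set $A$ throughout.
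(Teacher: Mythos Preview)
Your proposal is correct and follows essentially the same route as the paper: propose the convex combination $(1-t)\Pi_{\Lambda^j}(v')+t\Pi_{\Lambda^j}(v'')$ as a candidate, then verify the KKT conditions of Lemma~\ref{lem:cone}(b) by linearity, noting strict positivity of the primal coefficients on $A$ and orthogonality of the residual to $\{u^\ell:\ell\in A\cup\{0\}\}$. The paper phrases the candidate as $P_A v$ (the linear-subspace projection) and first checks $P_A v'=\Pi_{\Lambda^j}(v')$, $P_A v''=\Pi_{\Lambda^j}(v'')$ before invoking linearity of $P_A$, but this is the same object and the same verification in a slightly different order.
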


\unparskip
%Lemma~\ref{lem:projactive} can be interpreted geometrically as follows. For each $A\subseteq [j-1]$, the set $F_A:=\{v\in\Lambda^j:A(v)\subseteq A\}$ is in fact a face of $\Lambda^j$, whose relative interior (\red{as defined in Section~\ref{sec:compproofs}}) is $\relint F_A=\{v\in\Lambda^j:A(v)=A\}$. Clearly, $\Lambda^j$ is the disjoint union of the $2^{j-1}$ sets of the form $\relint F_A$, so the corresponding regions $\Pi_{\Lambda^j}^{-1}(\relint F_A)$ form a partition of $\R^j$. For each $A$, the conclusion of Lemma~\ref{lem:projactive} is that $\Pi_{\Lambda^j}^{-1}(\relint F_A)$ is a convex set on which $\Pi_{\Lambda^j}$ agrees with $\Pi_{\mathcal{L}_A}$. In fact, by Lemma~\ref{lem:projconv}(e), $\Pi_{\Lambda^j}^{-1}(\relint F_A)$ is itself a convex cone with non-empty interior; see also~\citet[Section~3.1]{FM89} and~\citet[Section~3]{Mey99}.
\begin{remark}
\label{rem:conefaces}
For $A\subseteq [j-1]$, the orthogonal projection onto the linear subspace $\mathcal{L}_A$ is represented by $P_A:=U_A(U_A^\top U_A)^{-1}U_A^\top\in\R^{j\times j}$, where $U_A\in\R^{j\times (\abs{A}+1)}$ is the matrix obtained by extracting the columns of $U:=(u^0\:u^1\cdots\:u^{j-1})\in\R^{j\times j}$ indexed by $A\cup\{0\}$. By taking $v'=v''$ in Lemma~\ref{lem:projactive}, we recover a version of~\citet[Proposition~2.1]{GS17}: suppose that we are given $v\in\R^j$ and have oracle knowledge of $A\equiv A\bigl(\Pi_{\Lambda^j}(v)\bigr)$, i.e.\ the locations of the knots of $\Pi_{\Lambda^j}(v)$. Then to compute $\Pi_{\Lambda^j}(v)$, we can note that $\Pi_{\Lambda^j}(v) = P_Av=\sum_{\ell=0}^{j-1}\hat{\lambda}_\ell u^\ell$, where $\hat{\lambda}_\ell\equiv\hat{\lambda}_\ell^A(v):=\lambda_\ell(P_Av)$ for $0\leq\ell\leq j-1$, so that $\hat{\lambda}_\ell=0$ for all $\ell\notin A$ and
\begin{equation}
\label{eq:lambdahat}
\bigl(\hat{\lambda}_\ell:\ell\in A\cup\{0\}\bigr)=(U_A^\top U_A)^{-1}U_A^\top v=\argmin_{(\lambda_\ell\,:\,\ell\in A\cup\{0\})}\,\sum_{i=1}^n\,\biggl(v_i-\lambda_0-\sum_{\ell\in A}\lambda_\ell(x_i-x_\ell)^+\biggr)^2
\end{equation}
solves an ordinary (\emph{unconstrained}) least squares problem.
\end{remark}

\unparskip
Observe now that if $v(0),v(1)\in\R^j$ are arbitrary and $v(t):=(1-t)v(0)+tv(1)$ for all $t\in (0,1)$, then $t\mapsto\Pi_{\Lambda^j}\bigl(v(t)\bigr)$ is a continuous, piecewise affine function from $[0,1]$ to $\Lambda^j$. Indeed, by Lemma~\ref{lem:projactive} (and the continuity of projections onto closed, convex cones),
% there are only finitely many subsets of $\{1,\dotsc,j-1\}$ to choose from and there is no need to consider those $A\subseteq\{1,\dotsc,j-1\}$ for which the corresponding subintervals are trivial (empty or singleton)
there exist $0=t_0'<t_1'<\cdots<t_{s+1}'=1$ and distinct subsets $A_0',A_1',\dotsc,A_s'\subseteq [j-1]$ such that for each $0\leq r\leq s$, we have $\Pi_{\Lambda^j}\bigl(v(t)\bigr)=\Pi_{\mathcal{L}_{A_r'}}\bigl(v(t)\bigr)=P_{A_r'}v(t)$ for all $t\in [t_r',t_{r+1}']$.
% $A_0,\dotsc,A_r$ are not necessarily unique, although we can be a bit more concrete about this for certain directions $u=Y(0)-Y(1)$

Suppose that we are given $v(0),v(1)\in\R^j$ and the projection $\Pi_{\Lambda^j}\bigl(v(0)\bigr)\in\Lambda^j$, and now seek to compute  $\Pi_{\Lambda^j}\bigl(v(1)\bigr)$. The reasoning in the previous paragraph suggests that we can 
% move along the line segment $[v(0),v(1)]$ and 
proceed as in Algorithm~\ref{alg:coneproj} below. 

\begin{algorithm}
\label{alg:coneproj}
Mixed primal-dual bases algorithm to compute projections onto the cone $\Lambda^j$.

\unparskip
\begin{enumerate}[label=(\Roman*)]
\item Starting at $t=t_0:=0$, define $\hat{v}_0(t_0):=\Pi_{\Lambda^j}\bigl(v(0)\bigr)$ and let the initial active set be $A_0:=A\bigl(\hat{v}_0(t_0)\bigr)$, so that $\hat{v}_0(t_0)=\Pi_{\Lambda^j}\bigl(v(0)\bigr)=P_{A_0}v(0)$. 
\item For $r\in\N_0$, suppose inductively that at $t=t_r$, we are given that $\hat{v}_r(t_r):=\Pi_{\Lambda^j}\bigl(v(t_r)\bigr)=P_{A_r}v(t_r)$ for some $A_r\subseteq [j-1]$. %Now increase $t$ by as much as possible (i.e.\ move as far as possible from $v(t_r)$ towards $v(1)$ along the line segment $[v(0),v(1)]$) whilst ensuring that the optimality conditions in Lemma~\ref{lem:cone} hold for $\hat{v}_r(t):=P_{A_r}v(t)=\hat{v}_r(t_r)-(t-t_r)P_{A_r}u$, where $u:=v(0)-v(1)$.
% primal and dual variables
%More precisely, 
% define \emph{primal variables} $\beta_\ell(t):=\lambda_\ell\bigl(\hat{v}_r(t)\bigr)$ as in~\eqref{eq:primal} and \emph{dual variables} $\gamma_\ell(t):=\ipr{u^\ell}{v(t)-\hat{v}_r(t)}$ for $0\leq\ell\leq j-1$ and $t\in [t_r,t_{r+1}]$, where
Let $\hat{v}_r(t):=P_{A_r}v(t)=\hat{v}_r(t_r)-(t-t_r)P_{A_r}u$ for $t \in [t_r,1]$, where $u:=v(0)-v(1)$, and
\begin{align}
t_{r+1}:=\sup\,\bigl\{t\geq t_r:{}&\lambda_\ell\bigl(\hat{v}_r(s)\bigr)\geq 0,\notag\\
\label{eq:tr+1}
&\ipr{u^\ell}{v(s)-\hat{v}_r(s)}\leq 0\text{ for all }s\in [t_r,t]\text{ and } \ell \in [j-1]\bigr\}.
\end{align}
By Lemma~\ref{lem:cone} and the fact that $\hat{v}_r(t_r)=\Pi_{\Lambda^j}\bigl(v(t_r)\bigr)$, the set on the right-hand side always contains $t_r$. In order to compute $t_{r+1}$ explicitly, observe that for all $t\in [t_r,t_{r+1}]$, we have

\unparskip	
\begin{enumerate}[label=(\roman*)]
\item \emph{Primal feasibility}: $\beta_\ell(t):=\lambda_\ell\bigl(\hat{v}_r(t)\bigr)=\lambda_\ell\bigl(\hat{v}_r(t_r)-(t-t_r)P_{A_r}u\bigr)=\beta_\ell(t_r)-(t-t_r)\,\lambda_\ell(P_{A_r}u)\geq 0$ for every $\ell \in [j-1]$, where equality holds if $\ell\in A_r^c$;%; moreover $\hat{\lambda}_\ell^{A_r}(u)=\lambda_\ell(P_{A_r}u)$ is as in~\eqref{eq:lambdahat};
\item \emph{Dual feasibility}: $\gamma_\ell(t):=\ipr{u^\ell}{v(t)-\hat{v}_r(t)}=\ipr{u^\ell}{(I-P_{A_r})v(t)}=\gamma_\ell(t_r)-(t-t_r)\,\hat{\zeta}_\ell^{A_r}(u)\leq 0$ for every $0\leq\ell\leq j-1$, where equality holds if $\ell\in A_r\cup\{0\}$, and $\hat{\zeta}_\ell^{A_r}(u):=\ipr{u^\ell}{(I-P_{A_r})u}$.
\end{enumerate}

\unparskip
In particular, $\beta_\ell(t),\gamma_\ell(t)$ depend linearly on $t\in [t_r,t_{r+1}]$, so 
\begin{align}
t_{r+1}=t_r+\biggl(&\min\,\biggl\{\frac{\beta_\ell(t_r)}{\hat{\lambda}_\ell^{A_r}(u)}:\ell\in [j-1],\,\hat{\lambda}_\ell^{A_r}(u)>0\biggr\}\notag\\
\label{eq:algconeproj}
&\hspace{2.5cm}\wedge\min\,\biggl\{\frac{\gamma_\ell(t_r)}{\hat{\zeta}_\ell^{A_r}(u)}:\ell\in A_r^c,\,\hat{\zeta}_\ell^{A_r}(u)<0\biggr\}\biggr),
\end{align}
% as defined in~\eqref{eq:primal}, 
% For some $\ell\in A_r$ about to hit 0 and turn negative; or $\gamma_\ell(t):=\ipr{u^\ell}{v(t)-\hat{v}_r(t)}=\ipr{u^\ell}{(I-P_{A_r})v(t)}=\gamma_\ell(t_r)-(t-t_r)\ipr{u^\ell}{(I-P_{A_r})u}$
% For some $\ell\notin A_r$ / about to hit 0 and turn positive.
% Thus, for some $t_{r+1}\geq t_r$, we have $\hat{v}_r(t)=\Pi_{\Lambda^j}\bigl(v(t)\bigr)$ for all $t\in [t_r,t_{r+1}]$, and when $t$ approaches $t_{r+1}$ from below, either
% the corresponding $t$ / on the verge of violating one of the optimality conditions in Lemma~\ref{lem:cone}, i.e.\ when either
observing that $\hat{\zeta}_\ell^{A_r}(u)=\ipr{(I-P_{A_r})u^\ell}{u}=0$ for $\ell\in A_r\cup\{0\}$. 

\unparskip
\begin{enumerate}[label=(\roman*),resume]
\item \emph{Complementary slackness} is maintained throughout this step: $\ipr{\hat{v}_r(t)}{v(t)-\hat{v}_r(t)}=\ipr{P_{A_r}v(t)}{(I-P_{A_r})v(t)}=0$, so $\Pi_{\Lambda^j}\bigl(v(t)\bigr)=\hat{v}_r(t)=P_{A_r}v(t)$ for all $t\in [t_r,t_{r+1}]$ by Lemma~\ref{lem:cone}.
\end{enumerate}

\unparskip
\item If $t_{r+1}\geq 1$, then return $\hat{v}_r(1)=\hat{v}_r(t_r)-(1-t_r)P_{A_r}u$ and terminate the algorithm. Otherwise, go to (IV), noting that when $t$ approaches $t_{r+1}$ from below, either 

\unparskip		
\begin{itemize}[leftmargin=0.4cm]
% at least one
\item A \emph{primal variable} $\beta_\ell(t)$ with $\ell\in A_r$ is about to hit 0 and turn negative, or
\item A \emph{dual variable} $\gamma_\ell(t)$ with $\ell\in A_r^c$ is about to hit 0 and turn positive.
\end{itemize}

\unparskip		
\item \emph{Changing the `active set'}: Define $A_r^-:=\{\ell\in A_r:\beta_\ell(t_{r+1})=0\}$ and $A_r^+:=\{\ell\in A_r^c:\gamma_\ell(t_{r+1})=0\}$. 

\unparskip
% In general (for a polyhedral convex cone), it can happen that $\beta_\ell(t_{r+1})=0$ while $\beta_\ell(t_{r+1}\pm\varepsilon)>0$ for all sufficiently small $\varepsilon>0$, so that $\ell\in A_r^-\cap A_{r+1}$ (e.g.\ when $\Lambda^j$ is a pointed, finitely generated cone with a large solid angle at 0, and $v(t_r)=0$)
\begin{enumerate}
\item If $\abs{A_r^-\cup A_r^+}=1$, then repeat (II) and (III) with $r+1$ in place of $r$ and $A_{r+1}:=(A_r\setminus A_r^-)\cup A_r^+$, observing that $\Pi_{\Lambda^j}\bigl(v(t_{r+1})\bigr)=P_{A_r}v(t_{r+1})=P_{A_{r+1}}v(t_{r+1})$. 
\item If $\abs{A_r^-\cup A_r^+}>1$, i.e.\ there is a \emph{degeneracy} at $t_{r+1}$, then 
% one way to proceed is to 
choose $A^\pm\subseteq A_r^\pm$ and carry out (II) with $r+1$ in place of $r$ and $A_{r+1}=(A_r\setminus A^-)\cup A^+$. In doing so, if~\eqref{eq:algconeproj} yields a strict increase in $t$, then let the algorithm continue from there and pass to (III). Otherwise, retry this for different pairs of subsets $A^\pm\subseteq A_r^\pm$ until we can move a strictly positive distance in the next iteration of (II).
\end{enumerate}

\unparskip
\end{enumerate}

\unparskip
\end{algorithm}

\unparskip
When defining the primal variables $\beta_\ell(t)$ in (II), it is convenient here that every $v\in\Lambda^j$ has a unique primal representation, which in this case is given by~\eqref{eq:primal}. The same is true of any cone in~$\R^j$ generated by $\pm\tilde{u}^0,\dots,\pm\tilde{u}^{q-1},\tilde{u}^q,\dotsc,\tilde{u}^{j-1}$, for some linearly independent $\tilde{u}^0,\tilde{u}^1,\dotsc,\tilde{u}^{j-1}$. Thus, Algorithm~\ref{alg:coneproj} is applicable to all such cones, provided that the `active sets' are taken to be subsets of $\{q,q+1,\dotsc,j-1\}$~\citep{FM89}, so in particular, it can also be used to compute isotonic and convex LSEs (in a sequential manner, as described in Section~\ref{sec:computation}).  Indeed, the sequential application of this mixed primal-dual bases algorithm to the monotone cone $\Theta^\uparrow$ from the proof of Corollary~\ref{cor:bdiso} yields the widely-used, linear time `pool adjacent violators' algorithm (PAVA) \citep{BBBB1972}.  Moreover, with appropriate modifications, Algorithm~\ref{alg:coneproj} can be extended to general polyhedral cones~\citep{Mey99} and polyhedral convex sets. 
% A helpful point of reference is a simpler but related method that computes a unimodal LSE by separately fitting increasing and decreasing isotonic LSEs for each possible location of the mode~\citep{SZ01,Sto08}.
%A key observation here is that a single pass of the linear-time `pool adjacent violators' algorithm (PAVA) yields not only the (increasing) isotonic LSE based on all $n$ observations but also all the isotonic LSEs $\bar{f}_{1,j}$ based on $\{(x_i,Y_i):i \in [j]\}$ for $j \in [n]$. Indeed, PAVA is in essence a sequential procedure that computes each $\bar{f}_{1,j}$ in turn using (only) the previous LSE $\bar{f}_{1,j-1}$ and the next observation $(x_j,Y_j)$. Therefore, to determine the unimodal LSE, we need only run PAVA twice, sweeping first from left to right and then again from right to left.

% Let $\hat{v}(t):=\Pi_{\Lambda^j}\bigl(v(t)\bigr)$: (II) ensures that for every $1\leq\ell\leq j-1$, we have $\beta_\ell(t)=\lambda_\ell\bigl(\hat{v}(t)\bigr)$ and $\gamma_\ell(t)=\ipr{u^\ell}{v(t)-\hat{v}(t)}$ for all $t$, and that these are continuous, piecewise affine functions of $t$
\begin{lemma}
\label{lem:terminate}
Algorithm~\ref{alg:coneproj} always terminates after finitely many steps with the correct solution $\Pi_{\Lambda^j}\bigl(v(1)\bigr)$.
\end{lemma}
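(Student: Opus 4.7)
The proof has two components: \emph{correctness}, namely that the returned vector $\hat{v}_r(1)$ equals $\Pi_{\Lambda^j}\bigl(v(1)\bigr)$, and \emph{finite termination} of the loop in Steps II--IV.

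For correctness I would induct on $r$ to show the KKT invariant $\hat{v}_r(t) = \Pi_{\Lambda^j}\bigl(v(t)\bigr)$ for all $t \in [t_r, t_{r+1}]$. The base case $r=0$ is immediate from Step I. For the inductive step, Lemma~\ref{lem:cone}(b), applied with the generators $\pm u^0, u^1, \dotsc, u^{j-1}$ of $\Lambda^j$, reduces the task to verifying primal feasibility, dual feasibility and complementary slackness on $[t_r, t_{r+1}]$. Since $\hat{v}_r(t) = \hat{v}_r(t_r) - (t-t_r) P_{A_r} u$ is affine in $t$, the primal variables $\beta_\ell(t)$ and dual variables $\gamma_\ell(t)$ are likewise affine in $t$ by the formulae in (i) and (ii) of Step II. The definition of $t_{r+1}$ in~\eqref{eq:tr+1} is precisely the largest value of $t$ on which $\beta_\ell(t) \geq 0$ and $\gamma_\ell(t) \leq 0$ for all $\ell$, yielding primal and dual feasibility throughout the interval. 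Complementary slackness follows from the orthogonality $\ipr{P_{A_r} v(t)}{(I - P_{A_r}) v(t)} = 0$, as observed in (iii). At the breakpoint $t_{r+1}$, the update from $A_r$ to $A_{r+1}$ in Step IV propagates the invariant because only indices $\ell$ with $\beta_\ell(t_{r+1}) = 0$ or $\gamma_\ell(t_{r+1}) = 0$ are toggled, so $P_{A_r} v(t_{r+1}) = P_{A_{r+1}} v(t_{r+1})$.

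For finite termination I would exploit the fact, implicit in Lemma~\ref{lem:projactive} and the discussion immediately preceding Algorithm~\ref{alg:coneproj}, that $t \mapsto \Pi_{\Lambda^j}\bigl(v(t)\bigr)$ is a continuous, piecewise affine map on $[0,1]$ with finitely many breakpoints, one per face of $\Lambda^j$ traversed. In the non-degenerate branch IV(a), each iteration either strictly increases $t$ or swaps exactly one active index, and the strict increase of $t$ together with the finite number of breakpoints rules out infinite looping. The main obstacle is the degenerate branch IV(b), where several primal or dual variables vanish simultaneously at $t_{r+1}$ and the algorithm must search over the finite family of pairs $(A^-, A^+)$ with $A^\pm \subseteq A_r^\pm$ for the correct update $A_{r+1}$. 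A valid choice always exists, because the piecewise affine path continues through $t_{r+1}$ into a well-defined subsequent linear piece whose active set must be of the prescribed form; the finiteness of the search space then guarantees that each degeneracy is resolved in finitely many sub-iterations, after which strictly positive progress in $t$ resumes. Combining these observations yields overall finite termination.
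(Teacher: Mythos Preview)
Your proposal is correct and follows essentially the same route as the paper: correctness via the KKT invariant maintained on each interval $[t_r,t_{r+1}]$ (items (i)--(iii) of Step~II), and termination via the two facts that the algorithm never stalls at a threshold (the paper's observation~(iv)) and that the projection path has only finitely many linear pieces (equivalent to the paper's observation~(v) that distinct thresholds carry distinct active sets). The one step you leave implicit---that the active set $A'$ of the next linear piece necessarily satisfies $A_r\setminus A_r^{-}\subseteq A'\subseteq A_r\cup A_r^{+}$, which is what guarantees both that the single swap in IV(a) yields strict progress and that the finite search in IV(b) succeeds---is exactly what the paper's proof of~(iv) spells out by checking which primal coefficients are positive and which dual inner products vanish at $t_{r+1}$.
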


\unparskip
This follows from (i)--(iii) in Stage (II) and the following two observations:

\unparskip
\begin{enumerate}[label=(\roman*)]
\setcounter{enumi}{3}
\item The algorithm does not get stuck at any of the thresholds $t_r$; i.e.\ when $t=t_r$ for some $r$, there is always a subsequent iteration of (II) that strictly increases $t$;
% solution path / trajectory of the algorithm
% keep this vague, since it holds whether or not we increment $r$ every time we execute (II), or only when we move a strictly positive distance
\item At distinct thresholds $t_r$, the corresponding `active sets' $A_r$ are distinct subsets of $[j-1]$.
\end{enumerate}

\unparskip
We will justify (iv) and (v) in Section~\ref{sec:compproofs} in the supplementary material, where we also exploit the specific structure of $\Lambda^j$ to handle the degeneracies mentioned in Stage~(IV)(b); see in particular modification (IV') and Proposition~\ref{prop:ejactive}.

\section*{Acknowledgements}
We thank the editor and three anonymous reviewers for their constructive comments and suggestions. QH was supported by the NSF grant DMS-1916221. RJC was supported by the National Cancer Institute grant R01-CA057030 and the TRIPODS grant NSF CCF-1934904 from the U.S. National Science Foundation. RJS was supported by EPSRC grants EP/P031447/1 and EP/N031938.

\clearpage
\setcounter{section}{0}
\setcounter{equation}{0}
% \numberwithin{equation}{section}
\setcounter{theorem}{0}
\def\theequation{S\arabic{equation}}
\def\thesection{S\arabic{section}}
\def\thetheorem{S\arabic{theorem}}
% See hyperref options in the preamble; in particular, hypertexnames and the StackExchange responses

\begin{center}
\LARGE{Supplementary material for `Nonparametric, tuning-free estimation of S-shaped functions'} \\ \vspace{0.3in}
%\large{Oliver Y. Feng, Yining Chen, Qiyang Han, Raymond J. Carroll \\
%and Richard J. Samworth} 
\end{center}

This is the supplementary material for the main paper `Nonparametric, tuning-free estimation of S-shaped functions', which is hereafter referred to as the main text. We present the proofs of our main theorems and various auxiliary results.

\section{Subinterval localisation and boundary adjustment results}
\label{sec:bdadj}
As in Section~\ref{sec:computation}, we consider pairs $(x_1,Y_1),\dotsc,(x_n,Y_n)$ taking values in $[0,1]\times\R$, where $0\leq x_1<\cdots<x_n\leq 1$ are fixed. The purpose of this section is to generalise the following, known `subinterval localisation' property of the univariate isotonic LSE $\bar{f}_n$ based on $\{(x_i,Y_i):1\leq i\leq n\}$: if $\bar{f}_n$ has a jump after $x_k$, so that $\bar{f}_n(x_k)<\bar{f}_n(x_{k+1})$, then the isotonic LSE based on $\{(x_i,Y_i):1\leq i\leq k\}$ agrees with $\bar{f}_n$ on $[x_1,x_k]$, and the isotonic LSE based on $\{(x_i,Y_i):k+1\leq i\leq n\}$ agrees with $\bar{f}_n$ on $[x_{k+1},x_n]$. One way to see this is to invoke the explicit representation of $\bar{f}_n$ as the left derivative of the greatest convex minorant of the cumulative sum diagram associated with $(x_1,Y_1),\dotsc,(x_n,Y_n)$~\citep[e.g.][Lemma~2.1]{GJ14}. 

% that 
By comparison with the isotonic LSE, the lack of an explicit representation makes the situation much more complicated for convex and concave LSEs, as well as the S-shaped LSEs (with known or unknown inflection point $m\in [0,1]$) defined in Section~\ref{sec:proj}.  
% cannot in general be localised exactly to subintervals in a completely analogous way: if $x_k$ is a kink of $\tilde{f}_n$, it is not in general true that $\tilde{f}_n$ coincides (on the relevant subinterval) with the (increasing) convex LSE based on either $\{(x_i,Y_i):1\leq i\leq k\}$ or $\{(x_i,Y_i):k\leq i\leq n\}$.
% data points either to the left or to the right of $x_k$ (including or excluding $x_k$ itself).
Writing $\tilde{g}_n$ for any one of these LSEs and $x_k$ for one of its kinks, we will see below that $\tilde{g}_n$ cannot in general be localised exactly to either $[x_1,x_k]$ or $[x_k,x_n]$. In fact, one of our significant technical contributions is to show that, on each of these subintervals, the restriction of $\tilde{g}_n$ minimises a \emph{weighted} sum of squares, in which the observation $(x_k,Y_k)$ is assigned a fraction of the weight placed on all the other points in the subinterval; see~\eqref{eq:bdweights}. Although the adjusted `boundary weight' usually depends on the LSE $\tilde{g}_n$ and is not an accessible quantity in its own right, the merit of this boundary reweighting idea is seen in the proof of Theorem~\ref{thm:inflection} on inflection point estimation. A special case where no boundary adjustment is needed (Proposition~\ref{prop:srestrexact}) is the basis of Algorithm~\ref{alg:sshape} for computing S-shaped LSEs.

The subinterval localisation properties of all the LSEs mentioned above will be derived as consequences of the general Lemma~\ref{lem:bdadj} below, for which we require the following additional notation. Let $\mathbf{1}:=(1,1,\dotsc,1)=\sum_{i=1}^n e_i\in\R^n$, where $e_1,\dotsc,e_n$ are the standard basis vectors in~$\R^n$. For $1\leq a\leq b\leq n$, let $\mathbf{1}^{[a:b]}:=\sum_{i=a}^b e_i$, and for $\theta\in\R^n$, define $\theta^{(a:b)}\in\R^n$ by $\theta_i^{(a:b)}:=\theta_{a\vee i\wedge b}$ for $i\in [n]$. In addition, for $w\equiv (w_1,\dotsc,w_n)\in [0,\infty)^n$ and $u,v\in\R^n$, define $\ipr{u}{v}_w:=\sum_{i=1}^n w_iu_iv_i$ and $\norm{u}_w:=\ipr{u}{u}_w^{1/2}$, so that $\ipr{\cdot\,}{\cdot}_w$ is a non-negative definite symmetric bilinear form. It is convenient to study weighted LSEs defined with respect to arbitrary weight vectors $w\in [0,\infty)^n$, even though we are primarily interested in the case $w=\mathbf{1}$ in subsequent applications of the result below. 
\begin{lemma}
\label{lem:bdadj}
Let $\Theta\subseteq\R^n$ be a closed, convex set, let $Y:=(Y_1,\dotsc,Y_n) \in \R^n$ and, for some weight vector $w\equiv (w_1,\dotsc,w_n)\in [0,\infty)^n$, let $\hat{\theta}\equiv\hat{\theta}_n(w)\in\argmin_{\theta\in\Theta}\norm{Y-\theta}_w$. Suppose that $\hat{\theta}\pm\eta\mathbf{1}\in\Theta$ for some $\eta>0$. 

\unparskip
\begin{enumerate}[label=(\alph*)]
\item Assume that at least one of the following conditions is satisfied for some $k \in [n]$: 

\unparskip
\begin{enumerate}[label=(\roman*)]
\item $\hat{\theta}+\varepsilon\eta\mathbf{1}^{[1:k]}\in\Theta$ and $\hat{\theta}+\varepsilon\eta\mathbf{1}^{[k:n]}\in\Theta$ for some $\varepsilon\in\{-1,1\}$ and $\eta>0$;
\item $\hat{\theta}\pm\eta u\in\Theta$ for some $u\in\bigl\{\mathbf{1}^{[1:k]},\mathbf{1}^{[k:n]}\bigr\}$ and $\eta>0$. 
\end{enumerate}
% In this case, we see that $\hat{\theta}\pm\eta\mathbf{1}^{[1:k]}$ for all sufficiently small $\eta>0$, so (ii) holds.

\unparskip		
Then defining
\begin{equation}
\label{eq:bdweights}
\underline{w}_k:=\frac{\sum_{i=1}^{k-1}w_i(Y_i-\hat{\theta}_i)}{\hat{\theta}_k-Y_k}\,\Ind_{\{\hat{\theta}_k\neq Y_k\}}\quad\text{and}\quad\overline{w}_k:=\frac{\sum_{i=k+1}^{n}w_i(Y_i-\hat{\theta}_i)}{\hat{\theta}_k-Y_k}\,\Ind_{\{\hat{\theta}_k\neq Y_k\}},
\end{equation}
we have $\underline{w}_k,\overline{w}_k\in [0,w_k]$ and $\underline{w}_k+\overline{w}_k\leq w_k$, with equality when $Y_k\neq\hat{\theta}_k$.

If (ii) holds with $u=\mathbf{1}^{[1:k]}$, then $\overline{w}_k=0$, and if (ii) holds with $u=\mathbf{1}^{[k:n]}$, then $\underline{w}_k=0$.
\item Let $1\leq a\leq b\leq n$ be such that for each $k\in\{a,b\}$, either (i) or (ii) holds, and suppose that for each $\theta\in\Theta\cup\{-\hat{\theta}\}$, there exists $\eta>0$ such that $\hat{\theta}+\eta\theta^{(a:b)}\in\Theta$. Then defining
\begin{equation}
\label{Eq:wtilde}
\tilde{w}^{a;b}:=(0,\dotsc,0,\overline{w}_a,w_{a+1}\dotsc,w_{b-1},\underline{w}_b,0,\dotsc,0)\in\R^n,
\end{equation}
so that $\tilde{w}_i^{a;b}=0$ for $1\leq i<a$ and $b<i\leq n$, we have $\hat{\theta}\in\argmin_{\theta\in\Theta}\norm{Y-\theta}_{\tilde{w}^{a;b}}$.
\end{enumerate}

\unparskip
\end{lemma}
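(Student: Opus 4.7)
The whole lemma is driven by the first-order variational characterisation of the weighted LSE: $\hat{\theta}\in\argmin_{\theta\in\Theta}\norm{Y-\theta}_w$ if and only if $\ipr{Y-\hat{\theta}}{\theta-\hat{\theta}}_w\leq 0$ for all $\theta\in\Theta$, and equivalently, whenever $\hat{\theta}+\eta v\in\Theta$ for some $\eta>0$, we have $\ipr{Y-\hat{\theta}}{v}_w\leq 0$. The standing hypothesis $\hat{\theta}\pm\eta\mathbf{1}\in\Theta$ applied to both signs gives the centering identity $\sum_{i=1}^n w_i(Y_i-\hat{\theta}_i)=0$, which will be used throughout. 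So my first step is simply to record the variational inequality and the centering identity.

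For part (a) under condition (i), I would test the variational inequality against the two perturbations $\varepsilon\mathbf{1}^{[1:k]}$ and $\varepsilon\mathbf{1}^{[k:n]}$ to obtain
\[
\varepsilon\sum_{i=1}^k w_i(Y_i-\hat{\theta}_i)\leq 0\quad\text{and}\quad\varepsilon\sum_{i=k}^n w_i(Y_i-\hat{\theta}_i)\leq 0.
\]
Adding them and invoking the centering identity yields $\varepsilon w_k(Y_k-\hat{\theta}_k)\leq 0$, which pins down the sign of $\hat{\theta}_k-Y_k$ (non-negative when $\varepsilon=1$, non-positive when $\varepsilon=-1$). Using centering to rewrite the second inequality as $-\varepsilon\sum_{i=1}^{k-1}w_i(Y_i-\hat{\theta}_i)\leq 0$, I obtain the two-sided sandwich $0\leq\varepsilon\sum_{i=1}^{k-1}w_i(Y_i-\hat{\theta}_i)\leq\varepsilon w_k(\hat{\theta}_k-Y_k)$, and dividing by $\hat{\theta}_k-Y_k$ (whose sign matches $\varepsilon$) gives $\underline{w}_k\in[0,w_k]$ and the identity $\sum_{i=1}^{k-1}w_i(Y_i-\hat{\theta}_i)=\underline{w}_k(\hat{\theta}_k-Y_k)$, including the degenerate case $Y_k=\hat{\theta}_k$ where the sandwich forces both sides to be zero. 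The argument for $\overline{w}_k$ is symmetric. The additive identity $\underline{w}_k+\overline{w}_k=w_k\mathbbm{1}_{\{\hat\theta_k\neq Y_k\}}$ drops out directly from centering, since $\sum_{i\neq k}w_i(Y_i-\hat{\theta}_i)=w_k(\hat{\theta}_k-Y_k)$. Under condition (ii), the variational inequality applied to $\pm u$ gives an equality $\ipr{Y-\hat\theta}{u}_w=0$, and combined with centering this forces the relevant sum $\sum_{i>k}w_i(Y_i-\hat\theta_i)$ or $\sum_{i<k}w_i(Y_i-\hat\theta_i)$ to vanish, yielding the stated identifications of $\overline{w}_k$ and $\underline{w}_k$.

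For part (b), the plan is to reduce the claimed variational inequality under the new weight $\tilde{w}^{a;b}$ to the old one under $w$ via truncation. Given an arbitrary $\theta\in\Theta\cup\{-\hat{\theta}\}$, the hypothesis provides $\eta>0$ with $\hat{\theta}+\eta\theta^{(a:b)}\in\Theta$, so $\ipr{Y-\hat{\theta}}{\theta^{(a:b)}}_w\leq 0$. Splitting this sum according to the definition of $\theta^{(a:b)}$, the tails pull out constants $\theta_a$ and $\theta_b$:
\[
\theta_a\sum_{i=1}^{a-1}w_i(Y_i-\hat{\theta}_i)+\sum_{i=a}^b w_i(Y_i-\hat{\theta}_i)\theta_i+\theta_b\sum_{i=b+1}^n w_i(Y_i-\hat{\theta}_i)\leq 0.
\]
Substituting the two identities from part (a), namely $\sum_{i=1}^{a-1}w_i(Y_i-\hat{\theta}_i)=\underline{w}_a(\hat{\theta}_a-Y_a)$ and $\sum_{i=b+1}^n w_i(Y_i-\hat{\theta}_i)=\overline{w}_b(\hat{\theta}_b-Y_b)$, and using $w_a-\underline{w}_a=\overline{w}_a$ and $w_b-\overline{w}_b=\underline{w}_b$ (which hold as identities once one checks the $Y=\hat\theta$ case separately), the left-hand side collapses to $\ipr{Y-\hat{\theta}}{\theta}_{\tilde{w}^{a;b}}$. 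Applying this first with $\theta\in\Theta$, then with $\theta=\hat{\theta}$ (allowed by hypothesis) and $\theta=-\hat{\theta}$, gives $\ipr{Y-\hat{\theta}}{\hat\theta}_{\tilde{w}^{a;b}}\leq 0$ and $\geq 0$, hence $=0$, so that $\ipr{Y-\hat{\theta}}{\theta-\hat\theta}_{\tilde{w}^{a;b}}\leq 0$ for every $\theta\in\Theta$; the variational characterisation then yields $\hat{\theta}\in\argmin_{\theta\in\Theta}\norm{Y-\theta}_{\tilde{w}^{a;b}}$.

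The main obstacle is the bookkeeping in part (a)'s case analysis over the sign of $\varepsilon$ and the degenerate case $Y_k=\hat{\theta}_k$ (where the indicator zeroes out $\underline{w}_k$ and $\overline{w}_k$ but the underlying one-sided sums need not be zero a priori); the sandwich must be used to close this gap. Once part (a) is in hand, part (b) is essentially algebraic rearrangement, with the elegant observation that the boundary reweighting $(w_a,w_b)\mapsto(\overline{w}_a,\underline{w}_b)$ is exactly what is needed to absorb the truncation constants $\theta_a,\theta_b$ produced by $\theta^{(a:b)}$.
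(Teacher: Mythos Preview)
Your proposal is correct and follows essentially the same approach as the paper: both proofs use the first-order variational characterisation, the centering identity from $\hat\theta\pm\eta\mathbf{1}\in\Theta$, the sandwich inequalities under (i) (with the degenerate case $Y_k=\hat\theta_k$ handled exactly as you do), and the truncation $\theta\mapsto\theta^{(a:b)}$ in part~(b) to reduce the $\tilde{w}^{a;b}$-variational inequality to the $w$-variational inequality. The only cosmetic difference is that the paper packages the boundary bookkeeping in part~(b) by introducing auxiliary weight vectors $\tilde{w}^{1;a}$ and $\tilde{w}^{b;n}$ so that $w_i(Y_i-\hat\theta_i)=(\tilde{w}_i^{1;a}+\tilde{w}_i^{a;b}+\tilde{w}_i^{b;n})(Y_i-\hat\theta_i)$ and $\sum_i\tilde{w}_i^{1;a}(Y_i-\hat\theta_i)=0=\sum_i\tilde{w}_i^{b;n}(Y_i-\hat\theta_i)$, which is exactly your substitution $\sum_{i<a}w_i(Y_i-\hat\theta_i)=\underline{w}_a(\hat\theta_a-Y_a)$ and its analogue at~$b$ written in a slightly cleaner form.
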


\unparskip
The main conclusion of Lemma~\ref{lem:bdadj} comes at the end of part~(b): under certain conditions on $a,b$, there exists a non-negative weight vector $\tilde{w}^{a;b}$, whose only non-zero weights occur for indices $i$ with $a \leq i \leq b$, for which the sub-vector $(\hat{\theta}_a,\hat{\theta}_{a+1},\ldots,\hat{\theta}_b)$ of the overall LSE $\hat{\theta}$ can be computed as the $\tilde{w}^{a;b}$-weighted LSE of our data vector $Y$.  Note that $\tilde{w}^{a;b}$ differs from $w^{a;b}$ only at the endpoints $a$ and~$b$; we therefore refer to $\tilde{w}^{a;b}$ as the \emph{boundary-adjusted weight vector}.  Condition~(ii) in~(a) yields sufficient conditions for exact subinterval localisation (without non-trivial boundary adjustments $\underline{w}_k,\overline{w}_k$). Since it is assumed that $\hat{\theta}\pm\eta\mathbf{1}\in\Theta$ for some $\eta>0$, condition (ii) in (a) holds for $k \in \{1,n\}$; although this is vacuous as far as the conclusion of~(a) is concerned, it means that we can take $a=1$ or $b=n$ in~(b).

\unparskip
\begin{proof}
For a closed, convex set $\Theta$ and a weight vector $w\in [0,\infty)^n$, the existence of $\hat{\theta}\in\argmin_{\theta\in\Theta}\norm{Y-\theta}_w$ is guaranteed (and uniqueness holds if $w_i>0$ for all $i\in [n]$). In all cases, we have $\hat{\theta}\in\argmin_{\theta\in\Theta}\norm{Y-\theta}_w$ if and only if
\begin{equation}
\label{eq:projconv}
\ipr{Y-\hat{\theta}}{\theta-\hat{\theta}}_w=\sum_{i=1}^n w_i(Y_i-\hat{\theta}_i)(\theta_i-\hat{\theta}_i)\leq 0
\end{equation}
for all $\theta\in\Theta$; see Lemma~\ref{lem:projconv}(a). 

(a) By assumption, we can take $\theta=\hat{\theta}\pm\eta\mathbf{1}\in\Theta$ in~\eqref{eq:projconv} for a suitable $\eta>0$, so $\sum_{i=1}^n w_i(Y_i-\hat{\theta}_i)=0$ and therefore $\underline{w}_k+\overline{w}_k=w_k$ when $Y_k\neq\hat{\theta}_k$.

\unparskip
\begin{itemize}[leftmargin=0.4cm]
\item If (i) holds, then we can take $\theta=\hat{\theta}+\varepsilon\eta\mathbf{1}^{[1:k]}$ and $\theta=\hat{\theta}+\varepsilon\eta\mathbf{1}^{[k:n]}$ in~\eqref{eq:projconv} for some $\eta>0$ and $\varepsilon\in\{-1,1\}$, whence 
\begin{equation}
\label{eq:bdadj1}
-\varepsilon\sum_{i=k}^n w_i(Y_i-\hat{\theta}_i)=\varepsilon\sum_{i=1}^{k-1}w_i(Y_i-\hat{\theta}_i)\geq 0\geq\varepsilon\sum_{i=1}^k w_i(Y_i-\hat{\theta}_i)=-\varepsilon\sum_{i=k+1}^n w_i(Y_i-\hat{\theta}_i).
\end{equation}
Thus, $\varepsilon w_k(\hat{\theta}_k-Y_k)\geq\varepsilon\sum_{i=1}^{k-1}w_i(Y_i-\hat{\theta}_i)\geq 0$, so $\underline{w}_k\in [0,w_k]$, and similarly $\overline{w}_k\in [0,w_k]$. 
\item Note that if $Y_k=\hat{\theta}_k$, then it follows from~\eqref{eq:bdadj1} that $\sum_{i=1}^{k-1} w_i(Y_i-\hat{\theta}_i)=0=\sum_{i=k+1}^n w_i(Y_i-\hat{\theta}_i)$.
\item Under (ii), if $\theta=\hat{\theta}\pm\eta\mathbf{1}^{[1:k]}\in\Theta$ for some $\eta>0$, then~\eqref{eq:projconv} implies that $\sum_{i=1}^k w_i(Y_i-\hat{\theta}_i)=0=\sum_{i=k+1}^n w_i(Y_i-\hat{\theta}_i)$, in which case $\overline{w}_k=0$. The other case where $u=\mathbf{1}^{[k:n]}$ is similar.
\end{itemize}

\unparskip
(b) For each $k\in\{a,b\}$, either (i) or (ii) holds by hypothesis, so it follows from part (a) that $\underline{w}_k,\overline{w}_k\in [0,w_k]$ and $(\underline{w}_k+\overline{w}_k)(Y_k-\hat{\theta}_k)=w_k(Y_k-\hat{\theta}_k)$. Thus, defining the weight vectors $\tilde{w}^{1;a}:=(w_1,\dotsc,w_{a-1},\underline{w}_a,0,\dotsc,0)\in\R^n$ and $\tilde{w}^{b;n}:=(0,\dotsc,0,\overline{w}_b,w_{b+1},\dotsc,w_n)\in\R^n$, we have $\tilde{w}^{1;a},\tilde{w}^{a;b},\tilde{w}^{b;n}\in [0,\infty)^n$ and  
\begin{equation}
\label{eq:bdadj2}
w_i(Y_i-\hat{\theta}_i)=(\tilde{w}_i^{1;a}+\tilde{w}_i^{a;b}+\tilde{w}_i^{b;n})(Y_i-\hat{\theta}_i)
\end{equation}
for $i\in [n]$. Moreover,
\begin{equation}
\label{eq:bdadj3}
% \ipr{Y-\hat{\theta}}{\mathbf{1}}_{\tilde{w}^{1;a}}=
\sum_{i=1}^n\tilde{w}_i^{1;a}(Y_i-\hat{\theta}_i)=0=\sum_{i=1}^n\tilde{w}_i^{b;n}(Y_i-\hat{\theta}_i)
% =\ipr{Y-\hat{\theta}}{\mathbf{1}}_{\tilde{w}^{b;n}}
\end{equation}
by the definitions of $\underline{w}_a,\overline{w}_b$ and the second bullet point above. Now for each $\theta\in\Theta\cup\{-\hat{\theta}\}$, we have $\hat{\theta}+\eta\theta^{(a:b)}\in\Theta$ for some $\eta>0$ by assumption, so it follows from~\eqref{eq:projconv},~\eqref{eq:bdadj2} and~\eqref{eq:bdadj3} that
\begin{align*}
0\geq\sum_{i=1}^n w_i(Y_i-\hat{\theta}_i)\,\theta_i^{(a:b)}&=\sum_{i=1}^n\tilde{w}_i^{1;a}(Y_i-\hat{\theta}_i)\,\theta_a+\sum_{i=1}^n\tilde{w}_i^{a;b}(Y_i-\hat{\theta}_i)\,\theta_i+\sum_{i=1}^n\tilde{w}_i^{b;n}(Y_i-\hat{\theta}_i)\,\theta_b\\
&=\sum_{i=1}^n\tilde{w}_i^{a;b}(Y_i-\hat{\theta}_i)\,\theta_i
\end{align*}
for all $\theta\in\Theta\cup\{-\hat{\theta}\}$. In particular, this holds for $\theta=\pm\hat{\theta}$, so $\sum_{i=1}^n\tilde{w}_i^{a;b}(Y_i-\hat{\theta}_i)\,\hat{\theta}_i=0$. We conclude that $\sum_{i=1}^n\tilde{w}_i^{a;b}(Y_i-\hat{\theta}_i)(\theta_i-\hat{\theta}_i)\leq 0$ for all $\theta\in\Theta$, and hence that $\hat{\theta}\in\argmin_{\theta\in\Theta}\norm{Y-\theta}_{\tilde{w}^{a;b}}$ by~\eqref{eq:projconv}, as required.
\end{proof}

\unparskip
For LSEs $\tilde{f}_n$ over classes $\tilde{\mathcal{F}}$ of shape-constrained functions on $[0,1]$, we will now apply Lemma~\ref{lem:bdadj} to $\Theta\equiv\Theta(\tilde{\mathcal{F}}):=\bigl\{\bigl(f(x_1),\dotsc,f(x_n)\bigr):1\leq i\leq n\bigr\}$ and the LSE $\hat{\theta}=\argmin_{\theta\in\Theta}\norm{Y-\theta}=\bigl(\tilde{f}_n(x_1),\dotsc,\tilde{f}_n(x_n)\bigr)$, where $\norm{{\cdot}}$ denotes the standard Euclidean norm on $\R^n$ corresponding to $w=\mathbf{1}$. The key observation is that the conditions in parts (a) and (b) of the lemma are satisfied when $k$ (or $a,b$) is the index of a \emph{jump} or \emph{knot} of $\tilde{f}_n$.  As mentioned previously, in our first setting of isotonic regression, Corollary~\ref{cor:bdiso} provides an alternative proof of a known result \citep[e.g.][Lemma~2.1]{GJ14}; however, for the convex and S-shaped LSEs, treated in Corollary~\ref{cor:bdconv} and Proposition~\ref{prop:srestr} respectively, the results are new to the best of our knowledge.  Henceforth, for $f\colon [0,1]\to\R$ and a weight vector $w\equiv (w_1,\dotsc,w_n)\in [0,\infty)^n$, we write $S_n(f,w):=\sum_{i=1}^n w_i\bigl(Y_i-f(x_i)\bigr)^2$.  
\begin{corollary}
\label{cor:bdiso}
Let $\mathcal{F}^\uparrow$ denote the class of all non-decreasing functions $f\colon [0,1]\to\R$. Denote by $\bar{f}_n$ the (isotonic) LSE over $\mathcal{F}^\uparrow$ based on $\{(x_i,Y_i):1\leq i\leq n\}$, which for definiteness is taken to be a left continuous, piecewise constant function with jumps only at the design points $x_i$. Let $1\leq a\leq b\leq n$ be such that either $a=1$ or $\bar{f}_n(x_{a-1})<\bar{f}_n(x_a)$, and either $b=n$ or $\bar{f}_n(x_b)<\bar{f}_n(x_{b+1})$. Then $\bar{f}_n$ minimises $f\mapsto S_n(f,\mathbf{1}^{[a:b]})=\sum_{i=a}^b\bigl(Y_i-f(x_i)\bigr)^2$ over $\mathcal{F}^\uparrow$, so that its restriction to $[x_a,x_b]$ coincides with the isotonic LSE based on $\{(x_i,Y_i):a\leq i\leq b\}$.
\end{corollary}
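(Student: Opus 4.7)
The plan is to apply Lemma~\ref{lem:bdadj}(b) directly, with $w=\mathbf{1}$, the closed, convex cone $\Theta=\Theta^\uparrow:=\{\theta\in\R^n:\theta_1\leq\cdots\leq\theta_n\}$, and $\hat{\theta}=\bigl(\bar{f}_n(x_1),\dotsc,\bar{f}_n(x_n)\bigr)\in\argmin_{\theta\in\Theta}\norm{Y-\theta}$. The global perturbation hypothesis $\hat{\theta}\pm\eta\mathbf{1}\in\Theta$ is immediate, since adding a constant preserves monotonicity. The whole argument is then driven by the observation that an inter-index \emph{jump} of $\bar{f}_n$ leaves a strictly positive gap that can absorb a small perturbation supported on a contiguous block of indices.

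To verify the remaining hypotheses, I would check that condition (ii) of Lemma~\ref{lem:bdadj}(a) holds at $k=a$ with $u=\mathbf{1}^{[a:n]}$, and at $k=b$ with $u=\mathbf{1}^{[1:b]}$. For $k=a>1$, the hypothesis $\hat{\theta}_{a-1}<\hat{\theta}_a$ lets us take $\eta\in\bigl(0,\hat{\theta}_a-\hat{\theta}_{a-1}\bigr)$ so that $\hat{\theta}\pm\eta\mathbf{1}^{[a:n]}\in\Theta^\uparrow$; the case $a=1$ reduces to $u=\mathbf{1}$, which is already handled. The case $k=b$ is symmetric. Lemma~\ref{lem:bdadj}(a) then yields $\underline{w}_a=0$ and $\overline{w}_b=0$, and combined with $\underline{w}_k+\overline{w}_k=w_k=1$ (valid whenever $Y_k\neq\hat{\theta}_k$) one obtains $\overline{w}_a=1$ and $\underline{w}_b=1$ in the non-degenerate case. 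To check the hypothesis of Lemma~\ref{lem:bdadj}(b), I would split $[n]$ into the three intervals $[1,a-1]$, $[a,b]$, $[b+1,n]$, on which $\theta^{(a:b)}$ is respectively the constant $\theta_a$, the sequence $\theta_i$, and the constant $\theta_b$. Inside each interval, $\hat{\theta}+\eta\theta^{(a:b)}$ is the sum of two non-decreasing sequences (or a non-decreasing sequence plus a constant), hence monotone; the only non-trivial transitions at $i=a-1\to a$ and $i=b\to b+1$ reduce to $\hat{\theta}_{a-1}\leq\hat{\theta}_a$ and $\hat{\theta}_b\leq\hat{\theta}_{b+1}$, which both hold. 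For $\theta=-\hat{\theta}$, the middle sequence becomes $(1-\eta)\hat{\theta}_i$, still monotone for $\eta\in[0,1]$.

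Lemma~\ref{lem:bdadj}(b) then provides $\hat{\theta}\in\argmin_{\theta\in\Theta}\norm{Y-\theta}_{\tilde{w}^{a;b}}^2$, where $\tilde{w}^{a;b}$ agrees with $\mathbf{1}^{[a:b]}$ except possibly at the endpoints $a,b$ in the degenerate cases $Y_a=\hat{\theta}_a$ or $Y_b=\hat{\theta}_b$. The one mildly fiddly point, which I expect to be the main obstacle, is transferring optimality from the boundary-adjusted weight $\tilde{w}^{a;b}$ to $\mathbf{1}^{[a:b]}$. For this, observe that
\[\norm{Y-\theta}_{\mathbf{1}^{[a:b]}}^2-\norm{Y-\theta}_{\tilde{w}^{a;b}}^2=\sum_{k\in\{a,b\}}(1-\tilde{w}^{a;b}_k)(Y_k-\theta_k)^2,\]
and each coefficient $1-\tilde{w}^{a;b}_k$ is non-zero only when $Y_k=\hat{\theta}_k$, in which case $(Y_k-\theta_k)^2\geq 0=(Y_k-\hat{\theta}_k)^2$. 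Hence the discrepancy can only increase the gap above $\hat{\theta}$, so $\hat{\theta}$ also minimises $\norm{Y-\theta}_{\mathbf{1}^{[a:b]}}^2$ over $\Theta^\uparrow$. Finally, translating back to functions and noting that the values $f(x_i)$ for $i\notin[a,b]$ can be extended monotonically without affecting the objective $S_n(f,\mathbf{1}^{[a:b]})$ yields the stated subinterval localisation of $\bar{f}_n$ on $[x_a,x_b]$.
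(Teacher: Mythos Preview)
Your proposal is correct and follows essentially the same route as the paper: apply Lemma~\ref{lem:bdadj} with $\Theta=\Theta^\uparrow$, verify condition~(ii) at $k=a$ via $u=\mathbf{1}^{[a:n]}$ and at $k=b$ via $u=\mathbf{1}^{[1:b]}$, and check that $\hat{\theta}+\eta\theta^{(a:b)}\in\Theta^\uparrow$ for $\theta\in\Theta\cup\{-\hat{\theta}\}$. The paper simply asserts $\tilde{w}^{a;b}=\mathbf{1}^{[a:b]}$ at this point, implicitly using $\underline{w}_a+\overline{w}_a=1$; your extra paragraph handling the degenerate case $Y_k=\hat{\theta}_k$ (where $\overline{w}_a$ or $\underline{w}_b$ is~$0$ by the indicator in~\eqref{eq:bdweights}) makes this step more explicit, but the overall argument is the same.
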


\deparskip
\begin{proof}
Here, $\Theta^\uparrow\equiv\Theta(\mathcal{F}^\uparrow)=\{\theta = (\theta_1,\ldots,\theta_n) \in\R^n:\theta_1\leq\cdots\leq\theta_n\}$ is the monotone cone, $w=\mathbf{1}$ is the weight vector and $\hat{\theta}=\argmin_{\theta\in\Theta^\uparrow}\norm{Y-\theta}=\bigl(\bar{f}_n(x_1),\dotsc,\bar{f}_n(x_n)\bigr)$.  Since $\hat{\theta}\pm\eta\mathbf{1}^{[a:n]}\in\Theta^\uparrow$ and $\hat{\theta}\pm\eta\mathbf{1}^{[1:b]}\in\Theta^\uparrow$ for all sufficiently small $\eta>0$, condition (ii) of Lemma~\ref{lem:bdadj}(a) holds for $a,b$.  By Lemma~\ref{lem:bdadj}(a), $\underline{w}_a=0=\overline{w}_b$, so $\tilde{w}^{a;b} = \mathbf{1}^{[a:b]}$.  Since $\theta^{(a:b)} \in \Theta^\uparrow$ whenever $\theta \in \Theta$, we have $\hat{\theta} + \eta \theta^{(a:b)} \in \Theta^\uparrow$ for every $\eta > 0$, and moreover
\[
\hat{\theta} + \eta (-\hat{\theta}^{(a:b)}) = \bigl(\hat{\theta}_1-\eta \hat{\theta}_a,\ldots,\hat{\theta}_{a-1}-\eta \hat{\theta}_a,(1-\eta)\hat{\theta}_a,\ldots,(1-\eta)\hat{\theta}_b,\hat{\theta}_{b+1}-\eta \hat{\theta}_b,\ldots,\hat{\theta}_n-\eta \hat{\theta}_b\bigr) \in \Theta^\uparrow
\]
for every $\eta \in (0,1]$.  We may therefore apply Lemma~\ref{lem:bdadj}(b) to deduce the result.
% the result follows from Lemma~\ref{lem:bdadj}(b).% and $\tilde{w}_i^{a,b}(Y_i-\hat{\theta}_i)^2=(Y_i-\hat{\theta}_i)^2$ for all $a\leq i\leq b$. This means that $S_n(\bar{f}_n,\tilde{w}^{a,b})=S_n(\bar{f}_n,\mathbf{1}^{[a:b]})$, while $S_n(f,\tilde{w}^{a,b})\leq S_n(f,\mathbf{1}^{[a:b]})$ for general $f\colon [0,1]\to\R$. 
\end{proof}

\unparskip

\begin{corollary}
\label{cor:bdconv}
Let $\mathcal{C}$ denote the class of all convex functions $f\colon [0,1]\to\R$. Denote by $\breve{f}_n$ the (convex) LSE over $\mathcal{C}$ based on $\{(x_i,Y_i):1\leq i\leq n\}$, which for definiteness is taken to an element of $\mathcal{G}\equiv\mathcal{G}[x_1,\ldots,x_n]$.  Let $1\leq a\leq b\leq n$ be such that for each $k\in\{a,b\}$, either $k\in\{1,n\}$ or $x_k$ is a kink of $\breve{f}_n$. Let $\hat{\theta}_i:=\breve{f}_n(x_i)$ for $i\in [n]$ and define $\tilde{w}^{a;b}$ in accordance with~\eqref{eq:bdweights} and~\eqref{Eq:wtilde}. Then $\breve{f}_n$ minimises $f\mapsto S_n(f,\tilde{w}^{a;b})=\sum_{i=a}^b\tilde{w}_i^{a;b}\bigl(Y_i-f(x_i)\bigr)^2$ over $\mathcal{C}$.
\end{corollary}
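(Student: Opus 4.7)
The approach is a direct application of Lemma~\ref{lem:bdadj}(b). Let $\Theta:=\{\theta\in\R^n : s_1(\theta)\leq\cdots\leq s_{n-1}(\theta)\}$ be the cone of convex sequences, where $s_i(\theta):=(\theta_{i+1}-\theta_i)/(x_{i+1}-x_i)$ for $i\in [n-1]$, take the weight vector $w=\mathbf{1}$, and set $\hat{\theta}:=\bigl(\breve{f}_n(x_1),\dotsc,\breve{f}_n(x_n)\bigr)$. Since $S_n(f,w)$ depends on $f\in\mathcal{C}$ only through $\bigl(f(x_1),\dotsc,f(x_n)\bigr)$, and this tuple ranges over $\Theta$ as $f$ varies over $\mathcal{C}$, we have $\hat{\theta}\in\argmin_{\theta\in\Theta}\norm{Y-\theta}$. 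Adding constants preserves convexity, so $\hat{\theta}\pm\eta\mathbf{1}\in\Theta$ for every $\eta>0$, giving the standing hypothesis of Lemma~\ref{lem:bdadj}. It remains to verify condition~(i) or~(ii) of part~(a) at both $k=a$ and $k=b$, and the perturbation hypothesis of part~(b) for every $\theta\in\Theta\cup\{-\hat{\theta}\}$.

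For the boundary verification, fix $k\in\{a,b\}$. If $k\in\{1,n\}$, then either $\mathbf{1}^{[k:n]}=\mathbf{1}$ (when $k=1$) or $\mathbf{1}^{[1:k]}=\mathbf{1}$ (when $k=n$), so condition~(ii) holds with $u=\mathbf{1}$. Otherwise, $1<k<n$ and $x_k$ is a kink of $\breve{f}_n$, which means $s_{k-1}(\hat{\theta})<s_k(\hat{\theta})$. Adding $\eta\mathbf{1}^{[1:k]}$ to $\hat{\theta}$ leaves every slope $s_i(\hat{\theta})$ with $i\neq k$ unchanged and reduces $s_k$ by $\eta/(x_{k+1}-x_k)$; the strict kink inequality ensures $\hat{\theta}+\eta\mathbf{1}^{[1:k]}\in\Theta$ for all sufficiently small $\eta>0$. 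A symmetric calculation for $\mathbf{1}^{[k:n]}$, which only raises $s_{k-1}$, yields condition~(i) with $\varepsilon=1$.

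For the perturbation hypothesis, fix $\theta\in\Theta\cup\{-\hat{\theta}\}$ and set $\phi:=\hat{\theta}+\eta\theta^{(a:b)}$. A direct slope computation shows $s_i(\phi)=s_i(\hat{\theta})$ for $i\leq a-1$ and for $i\geq b$, while $s_i(\phi)=s_i(\hat{\theta})+\eta s_i(\theta)$ for $a\leq i\leq b-1$ when $\theta\in\Theta$ and $s_i(\phi)=(1-\eta)\,s_i(\hat{\theta})$ on the same range when $\theta=-\hat{\theta}$. Convexity of $\phi$ amounts to checking $s_i(\phi)\leq s_{i+1}(\phi)$ at each $i\in [n-2]$. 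In the interior range $a\leq i\leq b-2$, both summands are non-decreasing in $i$ (for $\theta\in\Theta$, use the convexity of both $\hat{\theta}$ and $\theta$; for $\theta=-\hat{\theta}$, it suffices that $1-\eta\geq 0$). The only comparisons of concern are at $i=a-1$ and $i=b-1$, where one slope has been perturbed by a bounded $\eta$-multiple and the other has not; the strict kink inequalities $s_{a-1}(\hat{\theta})<s_a(\hat{\theta})$ (if $a\neq 1$) and $s_{b-1}(\hat{\theta})<s_b(\hat{\theta})$ (if $b\neq n$) provide uniform slack so that $\phi\in\Theta$ for all sufficiently small $\eta>0$. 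Lemma~\ref{lem:bdadj}(b) then yields $\hat{\theta}\in\argmin_{\theta\in\Theta}\norm{Y-\theta}_{\tilde{w}^{a;b}}$, which translates back to the stated conclusion. The main delicacy is the case $\theta=-\hat{\theta}$, whose slopes go the `wrong way' and cannot be absorbed via the convexity of $\theta$; however, the explicit form $(1-\eta)\,s_i(\hat{\theta})$ on $[a,b-1]$ makes the interior monotonicity transparent, and the same kink-slack argument at the two endpoints closes the proof.
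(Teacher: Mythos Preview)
Your proof is correct and follows essentially the same route as the paper's: apply Lemma~\ref{lem:bdadj}(b) to the cone of convex sequences, verify condition~(i) with $\varepsilon=1$ at each of $a,b$ via the strict kink inequality, and check the perturbation hypothesis by tracking slopes. The paper is terser (it just says ``similar considerations to those in the proof of Corollary~\ref{cor:bdiso}, but now with reference to the slopes''), whereas you spell out the slope computations and handle $\theta=-\hat{\theta}$ explicitly; your separate treatment of $k\in\{1,n\}$ via condition~(ii) is unnecessary but harmless, since the paper's condition~(i) argument already covers those endpoints.
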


\deparskip
\begin{proof}
Here, we take $\Theta\subseteq\R^n$ to be the closed, convex cone of convex sequences based on $x_1,\dotsc,x_n$, i.e.~
\begin{equation}
\label{eq:Slopes}
\Theta(\mathcal{C})=\biggl\{(\theta_1,\dotsc,\theta_n) \in \R^n:\frac{\theta_2-\theta_1}{x_2-x_1}\leq\cdots\leq\frac{\theta_n-\theta_{n-1}}{x_n-x_{n-1}}\biggr\},
\end{equation}
and $\hat{\theta}\equiv (\hat{\theta}_1,\dotsc,\hat{\theta}_n)=\argmin_{\theta\in\Theta}\norm{Y-\theta}$.  Observe that $\hat{\theta}+\eta\mathbf{1}^{[1:k]}\in\Theta$ and $\hat{\theta}+\eta\mathbf{1}^{[k:n]}\in\Theta$ for each $k\in\{a,b\}$ and sufficiently small $\eta>0$, so condition (i) of Lemma~\ref{lem:bdadj}(a) holds with $\varepsilon=1$ for both $a$ and~$b$.  Similar considerations to those in the proof of Corollary~\ref{cor:bdiso}, but now with reference to the slopes $(\theta_i - \theta_{i-1})/(x_i - x_{i-1})$ for $i \in \{2,\ldots,n\}$, reveal that $\hat{\theta} + \eta \theta^{(a:b)} \in \Theta$ for sufficiently small $\eta > 0$ and for every $\theta \in \Theta \cup \{-\hat{\theta}\}$.  The result therefore follows again from Lemma~\ref{lem:bdadj}(b).
\end{proof}

\unparskip
When localising convex LSEs $\breve{f}_n$ to subintervals $[x_a,x_b]$ where $a,b$ are kinks of $\breve{f}_n$, we usually require non-trivial boundary weights $\overline{w}_a,\underline{w}_b\in (0,1)$, as defined in~\eqref{eq:bdweights}. 
% In general, it is not true that $\hat{\theta}-\eta\mathbf{1}^{[1:k]}\in\Theta$ and $\hat{\theta}-\eta\mathbf{1}^{[k:n]}\in\Theta$ for any $k\in\{a,b\}$ and $\eta>0$, so condition (ii) in Lemma~\ref{lem:bdadj}(a) does not always hold.
We also mention that the conclusion of Corollary~\ref{cor:bdconv} remains valid if $\mathcal{C}$ is replaced throughout with $-\mathcal{C}$, the set of all concave functions $f\colon [0,1]\to\R$. Indeed, this result for concave LSEs follows from essentially the same proof (taking $\varepsilon=-1$ instead in condition (i) of Lemma~\ref{lem:bdadj}), or alternatively by a symmetry argument: if $\breve{f}_n$ is the LSE over $\mathcal{C}$ based on $\{(x_i,Y_i):1\leq i\leq n\}$, then $-\breve{f}_n$ is the LSE over $-\mathcal{C}$ based on $\{(x_i,-Y_i):1\leq i\leq n\}$.

Finally, we turn to the S-shaped LSEs that we study in this paper. For completeness, we first give the proof of the existence result in Section~\ref{sec:proj}. Throughout, we suppress the dependence on $x_1,\dotsc,x_n$ of sets such as $\mathcal{G}$ and $\mathcal{H}$, which are defined in Sections~\ref{subsec:notation}--\ref{sec:computation}.

\unparskip
\begin{proof}[Proof of Proposition~\ref{prop:existence}]
For $m\in [0,1]$, note that $\Gamma^m:=\Theta(\mathcal{F}^m)\subseteq\R^n$ is a closed convex cone. Thus, $f\colon [0,1]\to\R$ is an LSE over $\mathcal{F}^m$ if and only if $\bigl(f(x_1),\dotsc,f(x_n)\bigr)=\argmin_{\theta\in\Gamma^m}\norm{Y-\theta}=:\hat{\theta}^m$, the unique projection of $Y$ onto $\Theta^m$, so an LSE over $\mathcal{F}^m$ exists and is unique on $\{x_1,\dotsc,x_n\}$. 

Moreover, every $f\in\mathcal{F}$ agrees on $\{x_1,\dotsc,x_n\}$ with some $h\in\mathcal{H}=\mathcal{F}\cap\mathcal{G}$, which has an inflection point in $\{x_1,\dotsc,x_n\}$. Thus, $\mathcal{H}=\bigcup_{j=1}^n\mathcal{H}^{x_j}$ and hence $\Gamma:=\Theta(\mathcal{F})=\Theta(\mathcal{H})=\bigcup_{j=1}^n\Gamma^{x_j}$ is a finite union of convex cones. It follows that $f\colon [0,1]\to\R$ is an LSE over $\mathcal{F}$ if and only if $\bigl(f(x_1),\dotsc,f(x_n)\bigr)\in\argmin_{\theta\in\Gamma}\norm{Y-\theta}=\argmin_{\theta\in\{\hat{\theta}^{x_1},\dotsc,\hat{\theta}^{x_n}\}}\norm{Y-\theta}$, which is non-empty. Thus, an LSE over $\mathcal{F}$ exists and belongs to $\mathcal{H}=\bigcup_{j=1}^n\mathcal{H}^{x_j}$.
\end{proof}

\unparskip
The main result of this section of direct relevance for the rest of our work in the main text is Proposition~\ref{prop:srestr} below, which reveals that the situation for localisation of S-shaped LSEs is more similar to that for convex LSEs than for isotonic LSEs, in that non-trivial boundary weights are generally required for localisation. Nevertheless, the examples following the proof show that exact localisation holds in some special cases, most notably in the setting of Proposition~\ref{prop:srestrexact}. Recall the definition of $\mathcal{H}^m$ from Section~\ref{sec:proj}.
\begin{proposition}
\label{prop:srestr}
For $m\in [0,1]$, let $\hat{f}_n^m$ be the LSE over $\mathcal{H}^m$ based on $\{(x_i,Y_i):i \in [n]\}$. For $j \in [n]$, let
\begin{equation}
\label{eq:wkkll}
\underline{w}_j:=\frac{\sum_{i=1}^{j-1}\bigl(Y_i-\hat{f}_n^m(x_i)\bigr)}{\hat{f}_n^m(x_j)-Y_j}\,\Ind_{\{\hat{f}_n^m(x_j)\neq Y_j\}}\quad\text{and}\quad\overline{w}_j:=\frac{\sum_{i=j+1}^{n}\bigl(Y_i-\hat{f}_n^m(x_i)\bigr)}{\hat{f}_n^m(x_j)-Y_j}\,\Ind_{\{\hat{f}_n^m(x_j)\neq Y_j\}},
\end{equation}
similarly to~\eqref{eq:bdweights}. For $1\leq a\leq b\leq n$, define $\tilde{w}^{a;b}:=(0,\dotsc,0,\overline{w}_a,1,\dotsc,1,\underline{w}_b,0,\dotsc,0)\in\R^n$ similarly to~\eqref{Eq:wtilde}, so that $\tilde{w}_i^{a;b}=0$ for $1\leq i<a$ and $b<i\leq n$ and $\tilde{w}_i^{a;b}=1$ for $a<i<b$. If $x_k,x_\ell$ are knots of~$\hat{f}_n^m$ with $x_{k+1}\leq m\leq x_{\ell-1}$, then $\tilde{w}^{1;k},\tilde{w}^{k;\ell},\tilde{w}^{\ell;n}\in [0,1]^n$ and the following hold:

\unparskip
\begin{enumerate}[label=(\alph*)] 
\item $\hat{f}_n^m$ minimises $f\mapsto S_n(f,\tilde{w}^{1;k})$ over all $f\colon [0,1]\to\R$ that are increasing and convex on $[x_1,x_k]$;
\item $\hat{f}_n^m$ minimises $f\mapsto S_n(f,\tilde{w}^{\ell;n})$ over all $f\colon [0,1]\to\R$ that are increasing and concave on $[x_\ell,x_n]$;
\item $\hat{f}_n^m$ minimises $f\mapsto S_n(f,\tilde{w}^{k;\ell})$ over $\mathcal{H}^m$, and hence $S_n(\hat{f}_n^m,\tilde{w}^{k;\ell})\leq S_n(f,\tilde{w}^{k;\ell})$ for all $f\in\mathcal{F}^m$.
\end{enumerate}

\unparskip
In addition, let $x_K,x_L$ be the smallest and largest inflection points of $\hat{f}_n^m$ respectively.

\unparskip	
\begin{enumerate}[label=(\alph*),resume] 
\item If $m\in (x_K,x_{K+1}]$, then $\underline{w}_K=1$ and $\overline{w}_K=0$. In this case, the increasing convex LSE $\hat{f}_{1,K}$ based on $\{(x_i,Y_i):1\leq i\leq K\}$ agrees with $\hat{f}_n^m$ on $[x_1,x_K]$, and the increasing concave LSE $\hat{f}_{n,K+1}$ based on $\{(x_i,Y_i):K+1\leq i\leq n\}$ agrees with $\hat{f}_n^m$ on $[x_{K+1},x_n]$.
\item If $m\in [x_{L-1},x_L)$, then $\underline{w}_L=0$ and $\overline{w}_L=1$. In this case, the increasing convex LSE $\hat{f}_{1,L-1}$ based on $\{(x_i,Y_i):1\leq i\leq L-1\}$ agrees with $\hat{f}_n^m$ on $[x_1,x_{L-1}]$, and the increasing concave LSE $\hat{f}_{n,L}$ based on $\{(x_i,Y_i):L\leq i\leq n\}$ agrees with $\hat{f}_n^m$ on $[x_L,x_n]$.
\end{enumerate}
% Is there a general version of this result for the general S-shaped projection?
\unparskip
\end{proposition}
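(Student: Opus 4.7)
My approach is to apply the general boundary-adjustment Lemma~\ref{lem:bdadj} to the closed convex cone $\Gamma^m:=\Theta(\mathcal{H}^m)\subseteq\R^n$ with unit weight $w=\mathbf{1}$, identifying $\hat{\theta}:=\bigl(\hat{f}_n^m(x_1),\dotsc,\hat{f}_n^m(x_n)\bigr)$ as the unique Euclidean projection of $Y$ onto $\Gamma^m$. The base hypothesis $\hat{\theta}\pm\eta\mathbf{1}\in\Gamma^m$ holds trivially, since constant shifts preserve increasingness, convexity, and concavity. The proof unfolds in three stages: verify the perturbation conditions of Lemma~\ref{lem:bdadj}(a) at the knots $x_k,x_\ell$ to obtain $\underline{w}_j,\overline{w}_j\in[0,1]$; apply Lemma~\ref{lem:bdadj}(b) with $(a,b)=(1,k),(k,\ell),(\ell,n)$ respectively to derive parts~(a)--(c); and finally upgrade the weight bounds at the extreme inflection indices $K,L$ to the exact values asserted in (d) and (e).

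For stage one, the essential input is that at a convex-side knot $x_j$ of $\hat{f}_n^m$ (i.e.\ $x_{j+1}\leq m$), the discrete slopes $s_i:=(\hat{\theta}_{i+1}-\hat{\theta}_i)/(x_{i+1}-x_i)$ satisfy $s_{j-1}<s_j$ strictly. This slack lets me check that both $\hat{\theta}+\eta\mathbf{1}^{[1:j]}\in\Gamma^m$ and $\hat{\theta}+\eta\mathbf{1}^{[j:n]}\in\Gamma^m$ for small $\eta>0$: the first perturbation only decreases the slope $s_j$, and the second only increases $s_{j-1}$, both of which are absorbed by the strict gap. This is condition~(i) of Lemma~\ref{lem:bdadj}(a) with $\varepsilon=+1$. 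A symmetric analysis at a concave-side knot $x_\ell$ (where $s_{\ell-1}>s_\ell$) verifies condition~(i) with $\varepsilon=-1$. Lemma~\ref{lem:bdadj}(a) then yields $\underline{w}_j,\overline{w}_j\in[0,1]$ together with the balance $\underline{w}_j+\overline{w}_j=1$ when $Y_j\neq\hat{\theta}_j$, giving $\tilde{w}^{1;k},\tilde{w}^{k;\ell},\tilde{w}^{\ell;n}\in[0,1]^n$.

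For stage two, applying Lemma~\ref{lem:bdadj}(b) further demands that for each $\theta\in\Gamma^m\cup\{-\hat{\theta}\}$, there is some $\eta>0$ with $\hat{\theta}+\eta\theta^{(a:b)}\in\Gamma^m$. Since $\theta^{(a:b)}$ is constant outside $[a,b]$, the perturbation modifies only the slopes at indices $a-1,a,\dotsc,b$; on the interior of $[a,b]$ the added piece $\eta\theta^{(a:b)}$ inherits the S-shape of $\theta$, and the strict slope gaps at the knots $x_a,x_b$ absorb the boundary effects for small $\eta$. This directly gives (c): $\hat{\theta}\in\argmin_{\theta\in\Gamma^m}\norm{Y-\theta}_{\tilde{w}^{k;\ell}}$, and the inequality for general $f\in\mathcal{F}^m$ follows because the weighted SSE depends only on values at the design points and both $\mathcal{F}^m$ and $\mathcal{H}^m$ produce the same set of such value-vectors. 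For (a) and (b), the additional ingredient is that the projection of $\Gamma^m$ onto the first $k$ coordinates is exactly $\Lambda^k$: any increasing convex sequence on $\{x_1,\dotsc,x_k\}$ extends to an element of $\mathcal{H}^m$ by continuing affinely to $m$ and then with a suitable increasing concave tail. Since $\tilde{w}^{1;k}$ is supported in $\{1,\dotsc,k\}$, the weighted minimisation of $S_n(\cdot\,,\tilde{w}^{1;k})$ over $\mathcal{H}^m$ agrees with that over the larger class of increasing convex functions on $[x_1,x_k]$, yielding (a); part~(b) is symmetric on the concave side.

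For stage three, the plan is to upgrade $\overline{w}_K\in[0,1]$ to $\overline{w}_K=0$ when $m\in(x_K,x_{K+1}]$ by establishing condition~(ii) of Lemma~\ref{lem:bdadj}(a) with $u=\mathbf{1}^{[1:K]}$, namely $\hat{\theta}\pm\eta\mathbf{1}^{[1:K]}\in\Gamma^m$ for small $\eta>0$. Because $x_K$ is the smallest inflection point of $\hat{f}_n^m$, we have $s_{K-1}<s_K$ strictly; and because $m>x_K$, the relevant perturbation affects only the slope $s_K$, whose feasibility is governed by $s_{K-1}\leq s_K$ on the convex side and a case-dependent concavity constraint on the right. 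A short case split (distinguishing $m=x_{K+1}$, $m\in(x_K,x_{K+1})$ with inflection interval $[x_K,x_L]$ of length exactly $x_{K+1}-x_K$, and $m\in(x_K,x_{K+1})$ with $L>K+1$, where $\hat{f}_n^m$ is linear on the enlarged plateau $[x_K,x_L]$) confirms feasibility in every scenario, exploiting the strict kink at $x_K$ and the linearity structure. Lemma~\ref{lem:bdadj}(a) then forces $\overline{w}_K=0$ and so $\underline{w}_K=1$ (when $Y_K\neq\hat{\theta}_K$); combining with (a), this gives $\tilde{w}^{1;K}=\mathbf{1}^{[1:K]}$ on its support, whence $\hat{f}_n^m|_{[x_1,x_K]}$ coincides with the unweighted increasing convex LSE $\hat{f}_{1,K}$. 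A parallel application of Lemma~\ref{lem:bdadj}(b) to the cone of increasing concave sequences over $\{x_{K+1},\dotsc,x_n\}$ identifies $\hat{f}_n^m|_{[x_{K+1},x_n]}$ with $\hat{f}_{n,K+1}$, and part~(e) follows by the mirror argument using $u=\mathbf{1}^{[L:n]}$. The main obstacle will be exactly this perturbation bookkeeping: the constraints defining $\Gamma^m$ depend delicately on whether $m$ coincides with a design point and on the structure of the inflection interval $[x_K,x_L]$, so verifying $\hat{\theta}+\eta(\cdot)\in\Gamma^m$ in each case requires carefully tracking how every affected slope moves and ruling out violations of monotonicity, convexity on $[0,m]$, and concavity on $[m,1]$.
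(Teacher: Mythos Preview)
Your proposal is correct and follows essentially the same approach as the paper: apply Lemma~\ref{lem:bdadj} to the cone $\Gamma^m$ with $w=\mathbf{1}$, verify condition~(i) at the convex-side knot $x_k$ with $\varepsilon=+1$ and at the concave-side knot $x_\ell$ with $\varepsilon=-1$, verify condition~(ii) with $u=\mathbf{1}^{[1:K]}$ and $u=\mathbf{1}^{[L:n]}$ for parts (d) and (e), and then invoke Lemma~\ref{lem:bdadj}(b) for the localisation conclusions. Your write-up supplies considerably more of the perturbation bookkeeping than the paper's terse proof, but the structure and key verifications are identical.
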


\deparskip
\begin{proof}
Here, $\hat{\theta}\equiv\hat{\theta}^m=\argmin_{\theta\in\Gamma^m}\norm{Y-\theta}=\bigl(\hat{f}_n^m(x_1),\dotsc,\hat{f}_n^m(x_n)\bigr)$ corresponds to the weight vector $w=\mathbf{1}$ and closed, convex cone $\Gamma^m=\Theta(\mathcal{F}^m)\subseteq\R^n$. For $k,\ell$ as in (a,\,b,\,c), the facts $\tilde{w}^{1;k},\tilde{w}^{k;\ell},\tilde{w}^{\ell;n}\in [0,1]^n$ follow from Lemma~\ref{lem:bdadj}(a), where it can be verified that condition (i) holds for $k$ with $\varepsilon=1$ and for $\ell$ with $\varepsilon=-1$. For (d,\,e), it can be seen that $\hat{\theta}\pm\eta\mathbf{1}^{[1:K]}\in\Gamma^m$ and $\hat{\theta}\pm\eta\mathbf{1}^{[L:n]}\in\Gamma^m$ for all sufficiently small $\eta>0$, so condition (ii) in Lemma~\ref{lem:bdadj}(a) holds and therefore $\underline{w}_K=1$ and $\overline{w}_L=1$. The remaining assertions in (a)--(e) then follow by checking the hypotheses of Lemma~\ref{lem:bdadj}(b).
\end{proof}

\unparskip
% A variant of this proof (which yields the same conclusions) is to consider $\{x_1,\dotsc,x_n,m\}$ as an augmented set of design points, and introduce a weight vector $w'\in\R^{n+1}$ that assigns weight 1 to $x_i$ for $1\leq i\leq n$ and weight 0 to $m$. This is one of the reasons we stated Lemma~\ref{lem:bdadj} for arbitrary weight vectors.

\textbf{Exact subinterval localisation}: We now give some examples of situations where (d) and (e) hold, in which case the LSE~$\hat{f}_n^m$ over $\mathcal{H}^m$ can be localised exactly to subintervals (without a non-trivial boundary adjustment) in the same way as for the isotonic LSE in Corollary~\ref{cor:bdiso}. Let $s_n(j):=S_n(\hat{f}_n^{x_j})$ for each $j\in [n]$.

\unparskip
\begin{enumerate}[label=(\roman*)]
\item If $K\in [n]$ is a \emph{local} minimum of $j\mapsto s_n(j)$ satisfying $s_n(K-1)>s_n(K)=s_n(K+1)$, then~(d) holds for $m=x_{K+1}$. 
% This means that the increasing convex LSE $\hat{f}_{1,K}$ based on $\{(x_i,Y_i):1\leq i\leq K\}$ agrees with $\hat{h}_n^K=\hat{h}_n^{K+1}$ on $[x_1,x_K]$, and the increasing concave LSE $\hat{f}_{n,K+1}$ based on $\{(x_i,Y_i):K+1\leq i\leq n\}$ agrees with $\hat{h}_n^K$ on $[x_{K+1},x_n]$. 
Indeed, since $s_n(K-1)>s_n(K)$, we have $\hat{f}_n^{x_K}\notin\mathcal{H}^{x_{K-1}}$, so $x_K$ must be the smallest inflection point of $\hat{f}_n^{x_K}$. Since $s_n(K)=s_n(K+1)$, it follows that $\hat{f}_n^{x_K}=\hat{f}_n^{x_{K+1}}$ minimises $f\mapsto S_n(f)$ over $\mathcal{H}^{x_K}\cup\mathcal{H}^{x_{K+1}}$, so the hypotheses of (d) are satisfied. 
\item Similarly, if $L\in [n]$ is such that $s_n(L-1)=s_n(L)<s_n(L+1)$, then (e) holds for $m=x_{L-1}$.
% by (d), the increasing convex LSE $\hat{f}_{1,L-1}$ based on $\{(x_i,Y_i):1\leq i\leq L-1\}$ agrees with $\hat{h}_n^{L-1}=\hat{h}_n^L$ on $[x_1,x_{L-1}]$, and the increasing concave LSE $\hat{f}_{n,L}$ based on $\{(x_i,Y_i):L\leq i\leq n\}$ agrees with $\hat{h}_n^L$ on $[x_L,x_n]$.
\item If $\tilde{f}_n$ is an S-shaped LSE over $\mathcal{H}=\bigcup_{j=1}^{\,n}\mathcal{H}^{x_j}$ and $x_K,x_L$ are its smallest and largest inflection points respectively, then when $m = x_{K+1}$, we have that~(d) holds, and when $m = x_{L-1}$, we have that (e) holds. This yields the key Proposition~\ref{prop:srestrexact} in Section~\ref{sec:computation}.
\end{enumerate}

\section{Proofs for Section~\ref{sec:lsreg}}
\label{sec:lsregmainproofs}
For $\theta\in\R^n$ and $J=\{a,a+1,\dotsc,b\}$ with $1\leq a\leq b\leq n$, we write $\theta_J:=(\theta_i:i\in J)$ for the subvector indexed by $J$. We say that $u\equiv (u_a,u_{a+1},\dotsc,u_b)$ is a convex sequence (based on $x_a,x_{a+1},\dotsc,x_b$) if $u=\bigl(f(x_a),f(x_{a+1}),\dotsc,f(x_b)\bigr)$ for some convex $f\colon [0,1]\to\R$, and define concave and affine sequences analogously. Denote by $K^J\equiv K^{a,b}$ the set of all convex sequences based on $x_a,\dotsc,x_b$, which is a closed, convex cone; see~\eqref{eq:Slopes}. Recall from Section~\ref{sec:bdadj} the definitions of the monotone cone $\Theta^\uparrow=\{(\theta_1,\dotsc,\theta_n)\in\R^n:\theta_1\leq\cdots\leq \theta_n\}$ and cone $\Gamma^m=\Theta(\mathcal{F}^m)=\bigl\{\bigl(f(x_1),\dotsc,f(x_n)\bigr):f\in\mathcal{F}^m\bigr\}$ for $m\in [0,1]$. Let $\Gamma:=\Theta(\mathcal{F})=\bigcup_{j=1}^{\,n}\Gamma^{x_j}$, so that if $\tilde{f}_n$ is an LSE over $\mathcal{F}$, then $\tilde{\theta}_n:=\bigl(\tilde{f}_n(x_1),\dotsc,\tilde{f}_n(x_n)\bigr)\in\argmin_{\theta\in\Gamma}\norm{Y-\theta}$. Sometimes, we will write, e.g., $\Gamma\equiv\Gamma[\mathcal{D}]$ to emphasise the dependence on the set $\mathcal{D}$ of design points $x_1<\cdots<x_n$. For a general closed, convex cone $\Lambda\subseteq\R^n$ and $\theta\in\R^n$, we write $T_{\Lambda}(\theta):=\{\lambda(v-\theta):v\in \Lambda,\,\lambda\geq 0\}$ for the corresponding \emph{tangent cone} at $\theta$. %For $\theta\in\R^n$, let $\norm{\theta}:=(\sum_{i=1}^n\theta_i^2)^{1/2}$.

For fixed $n\in\N$, let $Y:=(Y_1,\dotsc,Y_n)$, $\theta_0:=\bigl(f_0(x_1),\dotsc,f_0(x_n)\bigr)$ and $\xi:=(\xi_1,\dotsc,\xi_n)$, so that $Y=\theta_0+\xi$ under the model~\eqref{eq:Model}.

\hfparskip
\subsection{Sharp oracle inequalities}
\label{subsec:oracle}
\begin{proof}[Proof of Theorem~\ref{thm:worstcase}]
For a fixed $\theta\in\Gamma$, define $V(\theta):=\theta_n-\theta_1$, and for $r>0$, let $\Gamma(\theta,r)\equiv\Gamma(\theta,r)[\mathcal{D}]:=\{v\in\Gamma[\mathcal{D}]:\norm{v-\theta}\leq r\}$. To prove~\eqref{eq:worstpr}, we claim that it suffices to find $r_\ast(\theta)>0$ such that 
\begin{equation}
\label{eq:lgwineq}
\E\,\biggl(\sup_{v\in\Gamma(\theta,r_\ast(\theta))}\:\abs{Z^\top (v-\theta)}\biggr)\leq\frac{r_\ast(\theta)^2}{2},
\end{equation}
where $Z\sim N_n(0,I_n)$.  Indeed, by the sub-Gaussianity of the errors in Assumption~\ref{ass:sG}, it then follows from~\citet[Propositions~2.4,~6.3 and~6.4 and their proofs]{Bel18} that for every $t>0$, we have
\[\norm{\tilde{\theta}_n-\theta_0}\leq\norm{\theta-\theta_0}+r_\ast(\theta)+\sqrt{8t}\]
with probability at least $1-e^{-t}$. 

First, we note that $\Gamma(\theta,r)\subseteq\Theta^\uparrow(\theta,r):=\{v\in\Theta^\uparrow:\norm{v-\theta}\leq r\}$ for each $r>0$ and deduce from the proof of~\citet[Theorem~2.2]{Cha14} that if we set $r_{1,\ast}(\theta):=Cn^{1/6}\,\bigl(1+V(\theta)\bigr)^{1/3}$ for a sufficiently large universal constant $C>0$, then
\[\E\,\biggl(\sup_{v\in\Gamma(\theta,r_{1,\ast}(\theta))}\:\abs{Z^\top (v-\theta)}\biggr)\leq\E\,\biggl(\sup_{v\in\Theta^\uparrow(\theta,r_{1,\ast}(\theta))}\:\abs{Z^\top (v-\theta)}\biggr)\leq\frac{r_{1,\ast}(\theta)^2}{2};\]
see also (3.4) in~\citet{Bel18}. Moreover, by taking $\tilde{C}\geq 1$ to be sufficiently large in Lemma~\ref{lem:lgwTheta}, we see from~\eqref{eq:lgwTheta} that~\eqref{eq:lgwineq} is satisfied if we take $r_\ast(\theta)=r_{2,\ast}(\theta):=C'(Rn)^{1/10}\,\bigl(1+V(\theta)\bigr)^{1/5}$ for some suitably large universal constant $C'>0$. The desired conclusion follows upon setting $r_\ast(\theta):=r_{1,\ast}(\theta)\wedge r_{2,\ast}(\theta)$.
\end{proof}

\unparskip
As mentioned in Section~\ref{sec:lsreg}, it is possible to modify the definition of $R$ in Theorem~\ref{thm:worstcase} to yield further refinements for certain designs.  In particular, for a set $\mathcal{D}$ of design points $x_1<\cdots<x_n$, define $\tilde{R}(\mathcal{D}):=1$ if $n=1$,
and otherwise inductively set
\begin{equation}
\label{eq:RD}
\tilde{R}(\mathcal{D}):= \frac{x_n-x_1}{\min_{2\leq i\leq
n}(x_i-x_{i-1})} \wedge \min_{\mathcal{D}_1,\dotsc,\mathcal{D}_k}\rbr{\sum_{\ell=1}^k
\tilde{R}(\mathcal{D}_\ell)^{1/5}}^5,
\end{equation}
where the minimum is taken over all partitions of $\mathcal{D}$ into $k\geq 2$ non-empty sets $\mathcal{D}_1,\dotsc,\mathcal{D}_k$. The proofs of Lemmas~\ref{lem:coverTheta} and~\ref{lem:convcover} reveal that we can replace $R$ in Theorem~\ref{thm:worstcase} with the quantity $n^{-1}\tilde{R}(\{x_1,\ldots,x_n\})$, which, due to the minimum in the definition, is certainly no larger than~$R$. 
% This claim, which is justified formally in Lemma~\ref{lem:refined}, 
This claim follows by partitioning the set $\mathcal{D}$ of design points, then finding, for each subset $\mathcal{D}_\ell$ in the partition, a good approximation to a given S-shaped function at the design points in~$\mathcal{D}_\ell$, and finally constructing an overall approximation by linear interpolation.
To see the advantages of this modified (albeit more complicated) definition of $\tilde{R}(\mathcal{D})$, consider first a perturbation of the equispaced design $x_i=i/n$ for $i \in [n]$, where we set $x_0 := (1-\delta)/n$ for some $\delta \in (0,1)$.  Then our original quantity $R$ is at least $1/(2\delta)$ when $n \geq 2$, whereas
\begin{align*}
\frac{1}{n+1}\tilde{R}(\{x_0,x_1,\ldots,x_n\})&\leq \frac{1}{n+1} \bigl\{\tilde{R}(\{x_1,\ldots,x_n\})^{1/5} + \tilde{R}(\{x_0\})^{1/5}\bigr\}^5\\
&\leq \frac{1}{n+1}\bigl((n-1)^{1/5}+1\bigr)^5 \lesssim 1.
\end{align*}
As another example, fix $k \in \mathbb{N}$, suppose for simplicity that $n/k$ is an integer, and suppose further that
\[
x_{\ell k + j} = \frac{(\ell +\delta_j)k }{n},
\]
for $\ell = 0,1,\ldots,(n/k) -1$ and $j = 1,\ldots,k$, where $0 < \delta_1 < \cdots < \delta_k < 1/2$.  Here, the design points can be partitioned into $k$ groups, within each of which the points are equispaced, so
\[
R \geq \frac{1}{2k\min_j (\delta_{j+1} - \delta_j)},
\]
when $n \geq 2$, while
\[
\frac{1}{n} \tilde{R}(\{x_1,\ldots,x_n\}) \leq \frac{1}{n}\biggl(k \cdot \frac{n^{1/5}}{k^{1/5}}\biggr)^5 \wedge R = k^4 \wedge R.
\]
Thus, in both examples, the modified definition may provide a significant improvement, in the first case when $\delta \ll 1$, and in the second, when $k^5 \ll 1/\min_j (\delta_{j+1} - \delta_j)$.  This enables us to recover a rate of convergence of $n^{-2/5}$ in Theorem~\ref{thm:worstcase} in both cases, provided that $k$ is treated as a constant in the second case.  Overall, this new definition yields additional insight into the effect of the design on the rate of convergence, and provides reassurance about the robustness of the performance of the LSE $\hat{f}_n$ for much wider classes of designs.

\unparskip
\begin{proof}[Proof of Theorem~\ref{thm:adaptive}]
For a closed, convex cone $\Lambda\subseteq\R^n$, recall that the \emph{statistical dimension} of $\Lambda$ is defined as $\delta(\Lambda):=\E\bigl(\norm{\Pi_\Lambda(Z)}^2\bigr)$, where $Z\sim N_n(0,I_n)$ and $\Pi_\Lambda\colon\R^n\to \Lambda$ denotes the projection map onto $\Lambda$~\citep{ALMT14}. Since $\Gamma$ is the union of the closed, convex cones $\Gamma^{x_1},\dotsc,\Gamma^{x_n}$ and by Assumption~\ref{ass:sG}, it follows from (2.7) and Propositions~6.1 and 6.4 of~\citet{Bel18} that for any $\theta\in\Gamma$ and $t>0$, we have
\begin{equation}
\label{eq:2.7}
\norm{\tilde{\theta}_n-\theta_0}\leq\norm{\theta-\theta_0}+2\rbr{\max_{1\leq j\leq n}\delta^{1/2}\bigl(T_{\Gamma^{x_j}}(\theta)\bigr)+\sqrt{2(t+\log n)}}
\end{equation}
with probability at least $1-e^{-t}$. Denoting by $k_\theta$ the smallest $k\in\N$ for which $\theta\equiv(\theta_1,\dotsc,\theta_n)\in\R^n$ is affine on $k$ pieces,
% TODO: define this properly
we claim that
\[
\delta\bigl(T_{\Gamma^{x_j}}(\theta)\bigr)\leq 8(k_\theta+1)\log\biggl(\frac{en}{k_\theta+1}\biggr)
\]
for all $j \in [n]$ and $\theta\in\Gamma$. Indeed, for fixed $j \in [n]$ and $\theta\in\Gamma$, we write $k\equiv k_\theta$ and let $0=j_0\leq j_1<\cdots<j_{k'}=j<j_{k'+1}<\cdots<j_k<j_{k+1}=n$ be such that the subvector $\theta_{J_r}=(\theta_i:j_r+1\leq i\leq j_{r+1})$ indexed by $J_r:=\{j_r+1,j_r+2,\dotsc,j_{r+1}\}$ is an affine sequence for every $0\leq r\leq k$. Then for any $v\in\Gamma^{x_j}$, note that $(v-\theta)_{J_r}$ is a convex sequence if $0\leq r\leq k'-1$ and a concave sequence if $k'\leq r\leq k$. Thus, $T_{\Gamma^{x_j}}(\theta)=\{\lambda(v-\theta):v\in\Gamma^{x_j},\,\lambda\geq 0\}\subseteq\prod_{r=0}^{k'-1}K^{J_r}\times\prod_{r=k'}^k\,(-K^{J_r})$. Since $\delta(\pm K^{J_r})\leq 8\log(e\abs{J_r})$ for each $0\leq r\leq k$ by~\citet[Proposition~4.2]{Bel18}, it follows from~\citet[Proposition~3.1(9,\,10)]{ALMT14} that
\begin{equation}
\label{eq:statdimTheta}
\delta\bigl(T_{\Gamma^{x_j}}(\theta)\bigr)\leq\sum_{r=0}^{k'-1}\delta(K^{J_r})+\sum_{r=k'}^k\delta(-K^{J_r})\leq\sum_{r=0}^k 8\log(e\abs{J_r})\leq 8(k+1)\log\rbr{\frac{en}{k+1}},
\end{equation}
as required, where the final inequality follows from Jensen's inequality together with the fact that $\sum_{r=0}^k\,\abs{J_r}=n$. Finally, if $f\in\mathcal{H}$ and $\theta=\bigl(f(x_1),\dotsc,f(x_n)\bigr)$, then $\norm{f-f_0}_n^2=\norm{\theta-\theta_0}^2/n$, so the sharp oracle inequality~\eqref{eq:adaptpr} is a direct consequence of~\eqref{eq:2.7} and~\eqref{eq:statdimTheta}.
\end{proof}

\umparskip
\subsection{Inflection point estimation}
\label{subsec:inflection}
The proofs of some technical lemmas in this subsection are deferred to Section~\ref{subsec:inflectproofs}.

\unparskip
\begin{proof}[Proof of Theorem~\ref{thm:inflection}]
For each $n\in\N$, let $\tilde{m}_-\equiv\tilde{m}_{n-}$ and $\tilde{m}_+\equiv\tilde{m}_{n+}$ be the smallest and largest inflection points of $\tilde{f}_n$ respectively. Letting $(C_n)$ be any deterministic positive sequence with $C_n\to\infty$, and defining the events $E_n^\pm:=\bigl\{\pm(\tilde{m}_\pm-m_0)>C_n(n/\log n)^{-1/(2\alpha+1)}\bigr\}$, we aim to establish that $\Pr(E_n^\pm)\to 0$ as $n\to\infty$. We will consider only the events $E_n^+$; the arguments for $E_n^-$ are analogous.
% $(X_n)=O_p(1)$ if and only if $\Pr(\abs{X_n}\geq C_n)\to 0$ for any sequence $C_n\to\infty$.

Our strategy is to show that there exist events $(\Omega_n)$ with $\Pr(\Omega_n^c)\to 0$ such that $\Delta_n:=S_n(\tilde{f}_n)-S_n(\hat{f}_n^{m_0})>0$ on $E_n^+\cap\Omega_n$ for all sufficiently large $n$. Since $\tilde{f}_n$ and $\hat{f}_n^{m_0}$ are LSEs over $\mathcal{F}$ and $\mathcal{H}^{m_0}$ respectively, we have $S_n(\tilde{f}_n)=\min_{f\in\mathcal{F}}S_n(f)\leq S_n(\hat{f}_n^{m_0})$, so $E_n^+\cap\Omega_n=\emptyset$ for all sufficiently large $n$, whence, by the reverse Fatou lemma, $\Pr(E_n^+)\leq\Pr(E_n^+\cap\Omega_n)+\Pr(\Omega_n^c)\to 0$, as desired. 
\begin{figure}[htb]
\centering
\includegraphics[width=\textwidth]{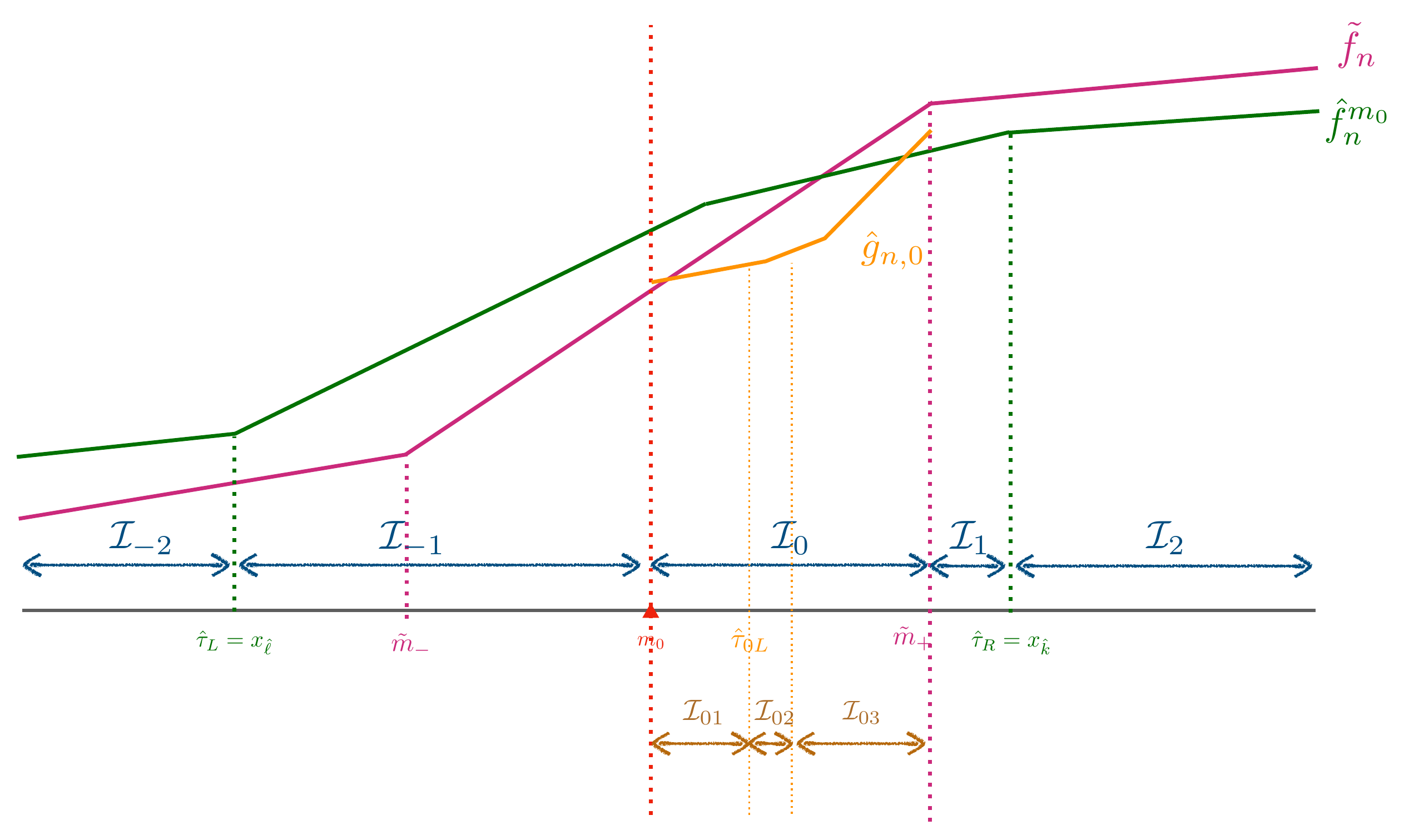}
\caption{Illustration of the proof of Theorem~\ref{thm:inflection}.}
\label{fig:inflection}
\end{figure}

\textbf{Step 1 -- subdividing $[0,1]$ and making `boundary adjustments'}: For each $n$, we make the following definitions, suppressing the dependence on $n$ to ease notation where appropriate. Writing $\hat{\tau}_-=x_\ell$ for the smallest inflection point of $\hat{f}_n^{m_0}$, we set $\hat{\tau}_L:=\hat{\tau}_-$ if $\hat{\tau}_-<m_0$, and otherwise define $\hat{\tau}_L$ to be the largest knot of $\hat{f}_n^{m_0}$ in $[0,x_{\ell-1}]$. Also, let $\hat{\tau}_R$ be the smallest knot of $\hat{f}_n^{m_0}$ in $[\tilde{m}_+,1]$. On the event $E_n^+$, we may decompose $[0,1]$ into the subintervals $\mathcal{I}_{-2}:=[0,\hat{\tau}_L]$, $\mathcal{I}_{-1}:=[\hat{\tau}_L,m_0)$, $\mathcal{I}_0:=[m_0,\tilde{m}_+]$, $\mathcal{I}_1:=(\tilde{m}_+,\hat{\tau}_R]$ and $\mathcal{I}_2:=[\hat{\tau}_R,1]$. For $-2\leq A\leq 2$, we associate $\mathcal{I}_A$ with a weight vector $w^A\in\R^n$ that is defined below in~\eqref{eq:weights}. Let $\hat{k},\hat{\ell}$ be such that $\hat{\tau}_L=x_{\hat{\ell}}$ and $\hat{\tau}_R=x_{\hat{k}}$, and for $s\in\{\hat{k},\hat{\ell}\}$, let
\[\underline{w}_s:=\frac{\sum_{i=1}^{s-1}\bigl(Y_i-\hat{f}_n^{m_0}(x_i)\bigr)}{\hat{f}_n^{m_0}(x_s)-Y_s}\,\Ind_{\{\hat{f}_n^{m_0}(x_s)\neq Y_s\}}\quad\text{and}\quad\overline{w}_s:=\frac{\sum_{i=s+1}^{n}\bigl(Y_i-\hat{f}_n^{m_0}(x_i)\bigr)}{\hat{f}_n^{m_0}(x_s)-Y_s}\,\Ind_{\{\hat{f}_n^{m_0}(x_s)\neq Y_s\}}\]
as in~\eqref{eq:wkkll}, where $x_i\equiv x_{ni}$ and $Y_i\equiv Y_{ni}$ for all $i$. Then by Proposition~\ref{prop:srestr}, $\underline{w}_s,\overline{w}_s\in [0,1]$ and $\underline{w}_s+\overline{w}_s\leq 1$, with equality when $Y_s\neq\hat{f}_n^{m_0}(x_s)$. Now for $i\in [n]$ and $-2\leq A\leq 2$, define
\begin{equation}
\label{eq:weights}
w_i^A:=
\begin{cases}
\,1\quad&\text{if }x_i\in\mathcal{I}_A\setminus\{\hat{k},\hat{\ell}\}\\
\,0\quad&\text{if }x_i\in\mathcal{I}_A^c\\
\,\underline{w}_i&\text{if either }A=-2\text{ and }i=\hat{\ell}\text{, or }A=1\text{ and }i=\hat{k}\\
\,\overline{w}_i&\text{if either }A=-1\text{ and }i=\hat{\ell}\text{, or }A=2\text{ and }i=\hat{k}.
\end{cases}
\end{equation}
% $\sum_{A=-2}^{\,2}w_i^A=1$ for all $1\leq i\leq n$: not true if $Y_s\neq\hat{f}_n^{m_0}(x_s)$ for some $s\in\{\hat{k},\hat{\ell}\}$
Then $w^A\in [0,1]^n$ for all $-2\leq A\leq 2$, and setting $S_n(f,w^A):=\sum_{i=1}^n w_i^A\bigl(Y_i-f(x_i)\bigr)^2$ for each $-2\leq A\leq 2$ and $f\colon [0,1]\to\R$, we have
\[S_n(f)=\sum_{i=1}^n\bigl(Y_i-f(x_i)\bigr)^2\geq\sum_{A=-2}^2\;\sum_{i=1}^n w_i^A\bigl(Y_i-f(x_i)\bigr)^2=\sum_{A=-2}^2 S_n(f,w^A),\] 
with equality when $f=\hat{f}_n^{m_0}$. Thus, defining $\Delta_{n,A}:=S_n(\tilde{f}_n,w^A)-S_n(\hat{f}_n^{m_0},w^A)$ for $-2\leq A\leq 2$, we see that $\Delta_n=S_n(\tilde{f}_n)-S_n(\hat{f}_n^{m_0})\geq\sum_{A=-2}^2\Delta_{n,A}$. 

Since $\tilde{f}_n$ is increasing and convex on $\mathcal{I}_{-2}$ and increasing and concave on $\mathcal{I}_2$, it follows from Proposition~\ref{prop:srestr} that $\Delta_{n,A}=S_n(\tilde{f}_n,w^A)-S_n(\hat{f}_n^{m_0},w^A)\geq 0$ for each $A\in\{-2,2\}$. Moreover, letting $w':=\sum_{A=-1}^{\,1} w^A$, we have $\sum_{A=-1}^{\,1} S_n(\hat{f}_n^{m_0},w^A)=S_n(\hat{f}_n^{m_0},w')\leq S_n(f_0,w')=\sum_{A=-1}^{\,1} S_n(f_0,w^A)$ by Proposition~\ref{prop:srestr}(c) and the fact that $f_0\in\mathcal{F}^{m_0}$. It is for these reasons that we made the `boundary adjustments' at $\hat{k}$ and $\hat{\ell}$ in~\eqref{eq:weights}. We can now write
% For $A=-1,0$, let $\hat{g}_{n,A}\in\mathcal{G}$ be a minimiser of $f\mapsto S_n(f,w^A)$ over all $f\colon [0,1]\to\R$ that are convex on $\mathcal{I}_A$, and also let $\hat{g}_n^1\in\mathcal{G}$ be a minimiser of $f\mapsto S_n(f,w^1)$ over all $f\colon [0,1]\to\R$ that are concave on $\mathcal{I}_1$. Since $\tilde{f}_n$ is convex on $[\hat{\tau}_L,\tilde{m}_+]=\mathcal{I}_{-1}\cup\mathcal{I}_0$ and concave on $[\tilde{m}_+,\hat{\tau}_R]=\mathcal{I}_1$, we have $S_n(\tilde{f}_n,w^A)\geq S_n(\hat{g}_{n,A},w^A)$ for $A\in\{-1,0,1\}$. Consequently, 
\begin{equation}
\label{eq:step1}
\Delta_n\geq\sum_{A=-2}^2\Delta_{n,A}\geq\sum_{A=-1}^1\Delta_{n,A}=\sum_{A=-1}^1\bigl\{S_n(\tilde{f}_n,w^A)-S_n(\hat{f}_n^{m_0},w^A)\bigr\}\geq\sum_{A=-1}^1\tilde{\Delta}_{n,A},
\end{equation}
where 
% &=\sum_{i: x_i\in\mathcal{I}_A}w_i^A\bigl\{\bigl(Y_i-\hat{g}_{n,A}(x_i)\bigr)^2-\bigl(Y_i-f_0(x_i)\bigr)^2\bigr\}\notag\\
\[\tilde{\Delta}_{n,A}:=S_n(\tilde{f}_n,w^A)-S_n(f_0,w^A)=\sum_{i: x_i\in\mathcal{I}_A}w_i^A\bigl\{\bigl(Y_i-\tilde{f}_n(x_i)\bigr)^2-\xi_i^2\bigr\}\]
for $A\in\{-1,0,1\}$, and seek to bound each of these three terms from below.

\textbf{Step 2 -- bounding $\tilde{\Delta}_{n,0}$}: On the event $E_n^+$, note that $\tilde{f}_n$ is convex on $\mathcal{I}_0=[m_0,\tilde{m}_+]$ and $f_0$ is concave on $\mathcal{I}_0$. We will exploit this mismatch of shape constraints on $\mathcal{I}_0$ to obtain a suitable lower bound on $\tilde{\Delta}_{n,0}$. For each $n$, define $\hat{g}_{n,0}\colon\mathcal{I}_0\to\R$ to be the convex LSE based on $\{(x_i,Y_i):x_i\in\mathcal{I}_0\}$, which for definiteness is taken to be a continuous, piecewise linear function with knots in $\{x_1,\dotsc,x_n\}\cap\mathcal{I}_0$. Then $S_n(\tilde{f}_n,w^0)\geq S_n(\hat{g}_{n,0},w^0)$ and
\begin{equation}
\label{eq:deltatilde0}
% &=\textstyle\sum_{i: x_i\in\mathcal{I}_A}w_i^A\bigl\{\bigl(Y_i-\hat{g}_{n,A}(x_i)\bigr)^2-\bigl(Y_i-f_0(x_i)\bigr)^2\bigr\}\notag\\
\tilde{\Delta}_{n,0}\geq S_n(\hat{g}_{n,0},w^0)-S_n(f_0,w^0)=\sum_{i: x_i\in\mathcal{I}_0}\bigl\{\bigl(\xi_i+f_0(x_i)-\hat{g}_{n,0}(x_i)\bigr)^2-\xi_i^2\bigr\}
\end{equation}
% on which $f_0$ is concave and $\hat{g}_{n,0}\in\mathcal{G}$ is convex, and
in view of the definition of $w^0$ in~\eqref{eq:weights}. On $E_n^+$, let $\hat{\tau}_{0L}$ be the largest knot of $\hat{g}_{n,0}$ in $[m_0,(m_0+\tilde{m}_+)/2]$, and on $(E_n^+)^c$, set $\hat{\tau}_{0L}=m_0$ for concreteness.
% and let $\hat{\tau}_{0R}$ be the smallest knot of $\hat{g}_{n,0}$ in $[(m_0+\tilde{m}_+)/2,\tilde{m}_+]$. 
% well-defined because $m_0,\tilde{m}_+$ are knots of $\hat{g}_{n,0}\colon\mathcal{I}_0\to\R$
% (For concreteness, set $\hat{\tau}_{0L}=m_0$ and $\hat{\tau}_{0R}=\tilde{m}_+$ on $(E_n^+)^c$.) 
Suppressing the dependence on $n$ for convenience, we define $\mathcal{I}_{01}:=(m_0,\hat{\tau}_{0L}]$, $\mathcal{I}_{02}:=\bigl(\hat{\tau}_{0L},(m_0+\tilde{m}_+)/2\bigr)$ and $\mathcal{I}_{03}:=\bigl[(m_0+\tilde{m}_+)/2,\tilde{m}_+\bigr)$, and decompose the right-hand side of~\eqref{eq:deltatilde0} as $\Lambda_{n1}+\Lambda_{n2}+\Lambda_{n3}$, where
\begin{equation}
\label{eq:Lambdanj}
\Lambda_{nj}:=\sum_{i: x_i\in\mathcal{I}_{0j}}\bigl\{\bigl(\xi_i+f_0(x_i)-\hat{g}_{n,0}(x_i)\bigr)^2-\xi_i^2\bigr\}
\end{equation}
for $j\in\{1,2,3\}$. In the arguments below, a key ingredient is the following fact, whose proof (which we remind the reader is given in Section~\ref{subsec:inflectproofs})
makes use of Assumption~\ref{ass:inflection}.
% (and \red{covers both of the cases $\alpha\in (0,1)$ and $\alpha>1$})
\begin{lemma}
\label{lem:kinkdist}
$\hat{\tau}_{0L}-m_0=O_p\bigl((n/\log n)^{-1/(2\alpha+1)}\bigr)$.
% and $\tilde{m}_+-\hat{\tau}_{0R}=O_p\bigl((n/\log n)^{-1/(2\alpha+1)}\bigr)$.
\end{lemma}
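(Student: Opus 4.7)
The plan is to combine the first-order (KKT) optimality of the convex LSE $\hat g_{n,0}$ at its knots with the quantitative local concavity of $f_0$ near $m_0$ from Assumption~\ref{ass:inflection}, and balance the resulting ``signal'' against ``noise'' via sub-Gaussian concentration. Writing $\delta_n:=\hat\tau_{0L}-m_0$, the goal is to show that $\Pr(\delta_n>C_n\rho_n)\to 0$ for every $C_n\to\infty$, where $\rho_n:=(n/\log n)^{-1/(2\alpha+1)}$.

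First, since $\hat\tau_{0L}$ is by construction the largest knot of $\hat g_{n,0}$ in $[m_0,(m_0+\tilde m_+)/2]$, on $E_n^+$ there exists $\tilde v\geq(m_0+\tilde m_+)/2$ for which $\hat g_{n,0}$ is affine on $[\hat\tau_{0L},\tilde v]$. By Corollary~\ref{cor:bdconv}, the restriction of $\hat g_{n,0}$ to this affine piece coincides with a boundary-adjusted weighted least squares line $\tilde a$ of the data, with explicit boundary weights in $[0,1]$ defined as in~\eqref{eq:bdweights}. The KKT equality $\sum_{i:\,x_i\geq\hat\tau_{0L}}(x_i-\hat\tau_{0L})(Y_i-\hat g_{n,0}(x_i))=0$ at the knot $\hat\tau_{0L}$, combined with analogous first-order identities at adjacent knots (or at $\tilde m_+$), and with the substitution $Y_i=f_0(x_i)+\xi_i$, produces a localised identity of the form ``deterministic approximation error $=$ weighted sub-Gaussian noise'' on $[\hat\tau_{0L},\tilde v]$.

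The next step is to bound both sides of this identity. On the deterministic side, since $\tilde a$ is a weighted OLS line, the lower-order moments of $(x_i-\hat\tau_{0L})$ against $f_0(x_i)-\tilde a(x_i)$ nearly vanish, and the leading non-trivial contribution comes from a higher-order moment controlled through the concavity of $f_0$ on $\mathcal I_0$ and Assumption~\ref{ass:inflection}. With $h:=\tilde v-\hat\tau_{0L}\gtrsim\delta_n$ on $E_n^+$, this yields a polynomial lower bound on the signal in terms of $\delta_n$ whose exponent depends on $\alpha$ and the constant $B$. On the stochastic side, a sub-Gaussian tail bound combined with a union bound over the $O(n)$ candidate positions for $\hat\tau_{0L}$ controls the weighted noise by a quantity of order $(nh^3\log n)^{1/2}$. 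Balancing these two bounds and applying a standard dyadic peeling argument over the scale of $\delta_n$ yields the claimed rate.

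The main obstacle I foresee is in pinning down the sharp deterministic lower bound: because $\tilde a$ depends on random boundary weights supplied by Corollary~\ref{cor:bdconv} and couples through the KKT identities to the behaviour of $\hat g_{n,0}$ past $\tilde v$, identifying the leading-order signal contribution requires careful bookkeeping of how the OLS projection annihilates the first two moments of the weighted residuals. A separate but parallel calculation handles the regime $\alpha\in(0,1)$, where the leading local behaviour of $f_0$ near $m_0$ is captured via $f_0(m_0)$ rather than $f_0'(m_0)$, but the overall architecture of the argument is the same.
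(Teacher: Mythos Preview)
Your high-level architecture (KKT optimality, signal versus sub-Gaussian noise, union bound over knot locations) is sound, but the specific localisation you propose does not yield a bound on $\delta_n=\hat\tau_{0L}-m_0$. You work on the affine piece $[\hat\tau_{0L},\tilde v]$ to the \emph{right} of the kink, whereas the quantity you wish to control, $\delta_n$, is the distance from $m_0$ to the \emph{left} endpoint of that piece. The approximation error of $f_0$ by a line on $[\hat\tau_{0L},\tilde v]$ is governed by the curvature of $f_0$ there, which under Assumption~\ref{ass:inflection} scales like $(x-m_0)^{\alpha-2}$ and hence depends on $\delta_n$ and $h$ jointly; you never isolate $\delta_n$. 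Your claim $h\gtrsim\delta_n$ is also unjustified: one only knows $\hat\tau_{0L}\le(m_0+\tilde m_+)/2\le\tilde v$, so $h\ge 0$, but $\hat\tau_{0L}$ can sit arbitrarily close to the midpoint, making $h$ small relative to $\delta_n$. Finally, the random boundary weights from Corollary~\ref{cor:bdconv} that you flag as an obstacle do not go away under your scheme, and the paper does not use that corollary here at all.

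The paper instead proves a clean deterministic fact (Lemma~\ref{lem:convslopes}): if $x_k$ is a kink of a convex LSE, then the ordinary (unweighted) OLS slope on $\{(x_i,Y_i):i\le k\}$ is at most the OLS slope on the full data. Applied with $x_k=\hat\tau_{0L}$ inside $\mathcal{I}_0$, this compares the OLS slope of the data on $\mathcal{I}_{01}=(m_0,\hat\tau_{0L}]$, an interval of length exactly $\delta_n$, with that on all of $\mathcal{I}_0=[m_0,\tilde m_+]$, of length at least $2\delta_n$. Splitting $Y_i=f_0(x_i)+\xi_i$, the noise difference is $O_p\bigl(\sqrt{\log n}\big/\sqrt{n\delta_n^{3}}\bigr)$ by sub-Gaussianity and a union bound over endpoints, while the signal difference is computed explicitly from Assumption~\ref{ass:inflection} and shown to be $\gtrsim\delta_n^{\alpha-1}$ (using crucially that $\tilde m_+-m_0\ge 2\delta_n$, so the two OLS slopes differ by a fixed multiplicative factor). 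Rearranging gives $\sqrt{n}\,\delta_n^{\alpha+1/2}=O_p(\sqrt{\log n})$, which is the claimed rate. The point you are missing is that $\delta_n$ must enter as the \emph{length} of an interval on which you do least squares, not merely as the offset of an endpoint; Lemma~\ref{lem:convslopes} is precisely the device that allows this comparison without any boundary adjustment.
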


\unparskip
Defining $t_n:=\sqrt{C_n}\,(n/\log n)^{-1/(2\alpha+1)}$ and $u_n:=2^{-1}C_n(n/\log n)^{-1/(2\alpha+1)}$, we deduce that there are events $(E_{n1})$ with $\Pr(E_{n1}^c)\to 0$ such that $m_0\leq\hat{\tau}_{0L}\leq m_0+t_n$ and $(m_0+\tilde{m}_+)/2\geq m_0+u_n$ 
% and $\hat{\tau}_{0R}\geq m_0+u_n$ 
on $E_n^+\cap E_{n1}$, for each $n$. 

\textbf{Step 2a -- bounding $\Lambda_{n2}$}: For each $n$, note that by the definition of $\hat{\tau}_{0L}$, the function $\hat{g}_{n,0}$ is linear on $\mathcal{I}_{02}=\bigl(\hat{\tau}_{0L},(m_0+\tilde{m}_+)/2\bigr)$, whereas $f_0$ is concave on $\mathcal{I}_{02}$. In view of this and the fact that $\mathcal{I}_{02}$ length $(m_0+\tilde{m}_+)/2-\hat{\tau}_{0L}\geq u_n-t_n=u_n\bigl(1+o(1)\bigr)$ on $E_n^+\cap E_{n1}$, we would expect the approximation error $\sum_{i: x_i\in\mathcal{I}_{02}}\bigl(\hat{g}_{n,0}(x_i)-f_0(x_i)\bigr)^2$ to be `large'; see Lemma~\ref{lem:step2lbd} below. Together with the arguments in Steps 2b and 3, this will enable us to prove that the quantity $\Lambda_{n2}$ is positive and dominates (in magnitude) all the other terms $\Lambda_{n1},\Lambda_{n3},\Delta_{n,\pm 1}$ in~\eqref{eq:step1}--\eqref{eq:Lambdanj}. This yields the eventual conclusion~\eqref{eq:Deltan} that $\Delta_n>0$ with high probability on $E_n^+$.

To handle the randomness of $\mathcal{I}_{02}$, let
\[\mathcal{T}_n:=\bigl\{(a,b):1\leq a\leq b\leq n,\,m_0\leq x_a\leq m_0+t_n,\,x_b\geq m_0+u_n\bigr\},\]
and for $(a,b)\in\mathcal{T}_n$, define the vectors $\mathbf{1}^{a,b}:=(1,1,\dotsc,1)\in\R^{b-a+1}$, $x^{a,b}:=(x_a,x_{a+1},\dotsc,x_b)$, $\xi^{a,b}:=(\xi_a,\xi_{a+1},\dotsc,\xi_b)$ and $\theta^{a,b}:=\bigl(f_0(x_a),f_0(x_{a+1}),\dotsc,f_0(x_b)\bigr)$. Then on the event $E_n^+\cap E_{n1}$, we have
\begin{align}
\Lambda_{n2}&\geq\inf_{c_0,c_1 \in \mathbb{R}}\sum_{i: x_i\in\mathcal{I}_{02}}\bigl\{\bigl(\xi_i+f_0(x_i)-c_0-c_1 x_i\bigr)^2-\xi_i^2\bigr\}\notag\\
\label{eq:Lambdan2}
&\geq\inf_{(a,b)\in\mathcal{T}_n}\,\inf_{c_0,c_1 \in \mathbb{R}}\,\bigl\{\norm{\theta^{a,b}-c_0\mathbf{1}^{a,b}-c_1 x^{a,b}}^2-2\,\bigl\langle\xi^{a,b},c_0\mathbf{1}^{a,b}+c_1 x^{a,b}-\theta^{a,b}\bigr\rangle\bigr\}.
\end{align}
For $(a,b)\in\mathcal{T}_n$, denote by $\Pi_{a,b}\,\xi^{a,b}:=\argmin_{v\in L^{a,b}}\norm{\xi^{a,b}-v}$ the projection of $\xi^{a,b}$ onto the subspace $L^{a,b}:=\Span\{\theta^{a,b},\mathbf{1}^{a,b},x^{a,b}\}$, which has dimension $d\equiv d_{a,b}\leq 3$. Then
\begin{equation}
\label{eq:suplin}
\sup_{c_0,c_1 \in \mathbb{R}}\frac{\bigl|\bigl\langle\xi^{a,b},c_0\mathbf{1}^{a,b}+c_1 x^{a,b}-\theta^{a,b}\bigr\rangle\bigr|}{\norm{c_0\mathbf{1}^{a,b}+c_1 x^{a,b}-\theta^{a,b}}}=\sup_{c_0,c_1 \in \mathbb{R}}\frac{\bigl|\bigl\langle\Pi_{a,b}\,\xi^{a,b},c_0\mathbf{1}^{a,b}+c_1 x^{a,b}-\theta^{a,b}\bigr\rangle\bigr|}{\norm{c_0\mathbf{1}^{a,b}+c_1 x^{a,b}-\theta^{a,b}}}\leq\norm{\Pi_{a,b}\,\xi^{a,b}}.
\end{equation}
Now let $\{v_1,\dotsc,v_d\}$ be an orthonormal basis of $L^{a,b}$, so that $\norm{\Pi_{a,b}\,\xi^{a,b}}=\bigl(\sum_{j=1}^d\,\ipr{\xi^{a,b}}{v_j}^2\bigr)^{1/2}\leq\sqrt{3}\max_{1\leq j\leq d}\,\abs{\ipr{\xi^{a,b}}{v_j}}$. For each $j\in [d]$, we have $\E\bigl(e^{t\ipr{\xi^{a,b}}{v_j}}\bigr)\leq e^{\norm{tv_j}^2/2}=e^{t^2/2}$ for all $t\in\R$ by Assumption~\ref{ass:inflection}, so $\ipr{\xi^{a,b}}{v_j}$ is sub-Gaussian with parameter 1. Thus, for each $(a,b)\in\mathcal{T}_n$ and every $c>0$, we have
\begin{equation}
\label{eq:suplinpr}
\Pr\rbr{\sup_{c_0,c_1 \in \mathbb{R}}\frac{\bigl|\bigl\langle\xi^{a,b},c_0\mathbf{1}^{a,b}+c_1 x^{a,b}-\theta^{a,b}\bigr\rangle\bigr|}{\norm{c_0\mathbf{1}^{a,b}+c_1 x^{a,b}-\theta^{a,b}}}
% \leq\norm{\Pi_{a,b}\,\xi^{a,b}}
\geq\sqrt{6c\log n}}\leq\Pr\bigl(\norm{\Pi_{a,b}\,\xi^{a,b}}\geq\sqrt{6c\log n}\bigr)\leq 6n^{-c}.
\end{equation}
Since $\abs{\mathcal{T}_n}<n^2$, we can take $c=3\;(>2)$ in~\eqref{eq:suplinpr} and apply a union bound to deduce from~\eqref{eq:Lambdan2} that there are events $(E_{n2})$ with $\Pr(E_{n2}^c)\to 0$ such that
\begin{equation}
\label{eq:Lambdan2quad}
\Lambda_{n2}\geq\inf_{(a,b)\in\mathcal{T}_n}\,\inf_{c_0,c_1 \in \mathbb{R}}\,\bigl\{\norm{\theta^{a,b}-c_0\mathbf{1}^{a,b}-c_1 x^{a,b}}^2-2\sqrt{18\log n}\,\norm{\theta^{a,b}-c_0\mathbf{1}^{a,b}-c_1 x^{a,b}}\bigr\}
\end{equation}
on $E_n^+\cap E_{n1}\cap E_{n2}$, for each $n$. Note that the quadratic function $t\mapsto t^2-2t\sqrt{18\log n}$ attains its minimum at $t=\sqrt{18\log n}$ and is increasing on $[\sqrt{18\log n},\infty)$. In addition, using the local smoothness condition on $f_0$ in Assumption~\ref{ass:inflection} and the fact that $x_b-x_a\geq u_n-t_n=2^{-1}C_n(n/\log n)^{-1/(2\alpha+1)}\bigl(1+o(1)\bigr)$ for all $(a,b)\in\mathcal{T}_n$, we can show that there exists $\rho_\alpha>0$, depending only on $\alpha$, such that the following holds:
\begin{lemma}
\label{lem:step2lbd}
$\inf_{(a,b)\in\mathcal{T}_n}\inf_{c_0,c_1 \in \mathbb{R}}\norm{\theta^{a,b}-c_0\mathbf{1}^{a,b}-c_1 x^{a,b}}^2\geq\rho_\alpha B^2\,n u_n^{2\alpha+1}\geq\rho_\alpha B^2(C_n/4)^{2\alpha+1}\log n$ for all sufficiently large $n$.
\end{lemma}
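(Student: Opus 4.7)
The plan is to reduce the infimum to one over a short interval abutting $m_0$ on the right, where Assumption~\ref{ass:inflection} controls $f_0$ quantitatively via its local expansion, and then extract a positive lower bound by rescaling to the unit interval and invoking a Riemann sum approximation. For $(a,b)\in\mathcal{T}_n$, let $b^\star\equiv b^\star(a):=\min\{i:x_i\geq m_0+u_n\}$, so that $a\leq b^\star\leq b$, $\tau\equiv\tau_a:=x_{b^\star}-m_0\in[u_n,u_n+1/n]$, and $\sigma\equiv\sigma_a:=x_a-m_0\in[0,t_n]$. Enlarging the summation range only adds non-negative terms to the objective, so for every affine function $x\mapsto c_0+c_1x$ we have $\sum_{i=a}^b(\cdot)^2\geq\sum_{i=a}^{b^\star}(\cdot)^2$, and hence
\[
\inf_{c_0,c_1\in\R}\sum_{i=a}^b\bigl(f_0(x_i)-c_0-c_1x_i\bigr)^2\geq\inf_{c_0,c_1\in\R}\sum_{i=a}^{b^\star}\bigl(f_0(x_i)-c_0-c_1x_i\bigr)^2;
\]
it therefore suffices to bound the right-hand side from below.

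Next I would substitute $y_i:=(x_i-m_0)/\tau\in[\sigma/\tau,1]$, noting that $\sigma/\tau\leq 2/\sqrt{C_n}\to 0$, and apply~\eqref{eq:smoothness} to write $f_0(x_i)=L(x_i)+\eta B\tau^\alpha y_i^\alpha(1+r_{n,i})$, where $L$ is affine ($L(x)=f_0(m_0)$ when $\alpha\in(0,1)$ and $L(x)=f_0(m_0)+f_0'(m_0)(x-m_0)$ when $\alpha>1$), $\eta=-1$ if $\alpha\in(0,1)$ and $\eta=+1$ if $\alpha>1$, and $\max_i|r_{n,i}|\leq\delta_n$ for some deterministic $\delta_n\to 0$ that is uniform over $(a,b)\in\mathcal{T}_n$ (since $[x_a,x_{b^\star}]\subseteq[m_0,m_0+u_n+1/n]$ shrinks to $\{m_0\}$). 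Because the reparametrisation $x=m_0+\tau y$ maps affine functions of $x$ bijectively onto affine functions of $y$, and because the residual of orthogonal projection onto an affine subspace is invariant under adding an affine summand, after absorbing $\eta$ into the parameters this gives
\[
\inf_{c_0,c_1}\sum_{i=a}^{b^\star}\bigl(f_0(x_i)-c_0-c_1x_i\bigr)^2=B^2\tau^{2\alpha}\inf_{\tilde c_0,\tilde c_1}\sum_{i=a}^{b^\star}\bigl(y_i^\alpha(1+r_{n,i})-\tilde c_0-\tilde c_1 y_i\bigr)^2.
\]

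Writing $P_\perp$ for the orthogonal projection in $\R^{b^\star-a+1}$ onto the complement of $\Span\{\mathbf{1}^{a,b^\star},(y_i)_i\}$, the reverse triangle inequality yields $\|P_\perp(y_i^\alpha(1+r_{n,i}))_i\|\geq\|P_\perp(y_i^\alpha)_i\|-\delta_n\|(y_i^\alpha)_i\|$. Since $n\tau\geq nu_n\to\infty$ uniformly in $a$, the $y_i$ form a quasi-uniform grid of spacing $1/(n\tau)\to 0$ on an interval converging to $[0,1]$, so standard Riemann sum estimates give $(n\tau)^{-1}\|P_\perp(y_i^\alpha)_i\|^2\to c_\alpha:=\int_0^1(y^\alpha-\hat c_0-\hat c_1 y)^2\,dy$, where $(\hat c_0,\hat c_1)$ are the explicit $L^2([0,1])$ regression coefficients; moreover, $c_\alpha>0$ because $y^\alpha$ is not affine for $\alpha\in(0,1)\cup(1,\infty)$, and $\|(y_i^\alpha)_i\|^2=O(n\tau)$. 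Setting $\rho_\alpha:=c_\alpha/2$ and combining these estimates delivers
\[
\inf_{c_0,c_1}\sum_{i=a}^{b^\star}\bigl(f_0(x_i)-c_0-c_1x_i\bigr)^2\geq B^2\tau^{2\alpha}\cdot\tfrac{1}{2}c_\alpha\cdot n\tau\geq\rho_\alpha B^2 n u_n^{2\alpha+1}
\]
for $n$ sufficiently large and uniformly in $(a,b)\in\mathcal{T}_n$; the second inequality in the lemma then follows from $nu_n^{2\alpha+1}=(C_n/2)^{2\alpha+1}\log n\geq(C_n/4)^{2\alpha+1}\log n$.

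The main obstacle will be the uniformity of both $\delta_n\to 0$ and the Riemann sum convergence over all $(a,b)\in\mathcal{T}_n$. The former holds because every $x_i$ in play lies in the shrinking neighbourhood $[m_0,m_0+u_n+1/n]$ of $m_0$ on which the $o(1)$ term in~\eqref{eq:smoothness} is uniformly small. The latter reduces to uniform Riemann sum approximation of $\sum p(y_i)/(n\tau)\to\int_0^1 p(y)\,dy$ for $p\in\{y^{2\alpha},y^\alpha,y,1\}$; since the grid spacing $1/(n\tau)$ and the left-endpoint shift $\sigma/\tau$ both decay uniformly in $a$ and each integrand is monotone on $[0,1]$, a Riemann sum error of order $1/(n\tau)+\sigma/\tau$ uniform in $a$ is easily obtained.
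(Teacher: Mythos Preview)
Your argument is correct and follows essentially the same strategy as the paper: truncate to a sub-interval contained in every $\{a,\dotsc,b\}$ with $(a,b)\in\mathcal{T}_n$, subtract off the affine part via~\eqref{eq:smoothness}, rescale to a fixed interval, and use a Riemann sum to identify the limiting $L^2$ projection residual of $y\mapsto y^\alpha$ onto affine functions (which is strictly positive because $\alpha\neq 1$).

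The only noteworthy tactical difference is in the choice of sub-interval. You truncate the right endpoint to a fixed $b^\star$ but keep the variable left endpoint $x_a\in[m_0,m_0+t_n]$, rescale by $\tau=x_{b^\star}-m_0$ to land in $[\sigma/\tau,1]$, and then argue that the Riemann sums converge \emph{uniformly in $a$} because both the mesh $1/(n\tau)$ and the left-endpoint shift $\sigma/\tau\leq 2/\sqrt{C_n}$ tend to zero uniformly. This is correct (and your remark that the relevant integrands $y^{2\alpha},y^{\alpha+1},y^\alpha,y,1$ are monotone on $[0,1]$ handles the non-Lipschitz behaviour at $0$ when $\alpha<1$). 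The paper instead observes that for large $n$ one has $t_n<u_n/2$, so the \emph{fixed} index range $\{a_n,\dotsc,b_n\}$ with $a_n=\lceil n(m_0+u_n/2)\rceil$ and $b_n=\lfloor n(m_0+u_n)\rfloor$ is contained in every $\{a,\dotsc,b\}$; the infimum over $\mathcal{T}_n$ is then bounded below by the residual on this single interval, which after rescaling becomes a Riemann sum over $[1/2,1]$ with no uniformity to verify. Your approach buys a slightly larger constant $\rho_\alpha$ (the projection residual on $[0,1]$ rather than $[1/2,1]$, halved), while the paper's buys a cleaner argument; since the lemma only asserts existence of some $\rho_\alpha>0$, both are equally valid.
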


\unparskip
Since $C_n\to\infty$, this means that $\inf_{(a,b)\in\mathcal{T}_n}\inf_{c_0,c_1 \in \mathbb{R}}\norm{\theta^{a,b}-c_0\mathbf{1}^{a,b}-c_1 x^{a,b}}\geq \sqrt{18\log n}$ for all sufficiently large $n$, so it follows from~\eqref{eq:Lambdan2quad} that
\begin{equation}
\label{eq:step2a}
\Lambda_{n2}\geq\rho_\alpha B^2\rbr{\frac{C_n}{4}}^{2\alpha+1}\log n-2\sqrt{(18\log n)\,\rho_\alpha B^2\rbr{\frac{C_n}{4}}^{2\alpha+1}\log n}\geq\frac{\rho_\alpha B^2}{2}\rbr{\frac{C_n}{4}}^{2\alpha+1}\log n
\end{equation}
% since $C_n\to\infty$
on $E_n^+\cap E_{n1}\cap E_{n2}$, for all sufficiently large $n$.

\textbf{Step 2b -- bounding $\Lambda_{n1}$ and $\Lambda_{n3}$}: For each $n$, note that $\hat{g}_{n,0}-f_0$ is convex on $\mathcal{I}_{01}:=(m_0,\hat{\tau}_{0L}]$ and $\mathcal{I}_{03}=\bigl[(m_0+\tilde{m}_+)/2,\tilde{m}_+\bigr)$. For $j=1,3$, writing $\tilde{g}_{nj}$ for the convex LSE based on $\{(x_i,\xi_i):x_i\in\mathcal{I}_{0j}\}$, we see from~\eqref{eq:Lambdanj} that $\Lambda_{nj}\geq\sum_{i: x_i\in\mathcal{I}_{0j}}\bigl\{\bigl(\xi_i-\tilde{g}_{nj}(x_i)\bigr)^2-\xi_i^2\bigr\}$. To handle the randomness of $\mathcal{I}_{0j}$, let $\mathcal{T}_n':=\{(a,b):1\leq a\leq b\leq n\}$
% \[\mathcal{T}_n':=\bigl\{(a,b):1\leq a<b\leq n,\,x_b-x_a\leq t_n=\sqrt{C_n}\,(n/\log n)^{-1/(2\alpha+1)}\bigr\},\]
and for $(a,b)\in\mathcal{T}_n'$, denote by $\hat{\xi}^{a,b}:=\argmin_{v\in K^{a,b}}\norm{\xi^{a,b}-v}$ the projection of $\xi^{a,b}=(\xi_a,\xi_{a+1},\dotsc,\xi_b)$ onto the closed, convex cone $K^{a,b}$ of convex sequences based on $x_a,\dotsc,x_b$, as defined at the start of Section~\ref{sec:lsregmainproofs}. Then $\norm{\xi^{a,b}}^2-\norm{\xi^{a,b}-\hat{\xi}^{a,b}}^2=\norm{\hat{\xi}^{a,b}}^2$ by Lemma~\ref{lem:cone}, and for every $c>0$, we have
\begin{equation}
\label{eq:bellec1}
\Pr\bigl\{\norm{\hat{\xi}^{a,b}}^2\geq 16\log\bigl(e(b-a+1)\bigr)+4c\log n\bigr\}\leq n^{-c}.
\end{equation}
This can be seen by taking $\mu=u=0$ in~\citet[Theorem~4.3]{Bel18}, an oracle inequality for convex LSEs that holds under the sub-Gaussian condition on the errors in Assumption~\ref{ass:inflection}, in view of~\citet[Remark~2.2, Proposition~6.2 and Proposition~6.4]{Bel18}. Since $\abs{\mathcal{T}_n'}<n^2$, we now take $c=3$ in~\eqref{eq:bellec1} and apply a union bound to conclude that there are events $(E_{n3})$ with $\Pr(E_{n3}^c)\to 0$ such that
\begin{align}
\Lambda_{nj}\geq\sum_{i: x_i\in\mathcal{I}_{0j}}\bigl\{\bigl(\xi_i-\tilde{g}_{nj}(x_i)\bigr)^2-\xi_i^2\bigr\}&\geq-\max_{(a,b)\in\mathcal{T}_n'}\bigl\{\norm{\xi^{a,b}-\hat{\xi}^{a,b}}^2-\norm{\xi^{a,b}}^2\bigr\}\notag\\
\label{eq:step2b}
&=-\max_{(a,b)\in\mathcal{T}_n'}\norm{\hat{\xi}^{a,b}}^2\geq -28\log(en)
\end{align}
for $j=1,3$ on $E_n^+\cap E_{n1}\cap E_{n3}$, for each $n$.

\textbf{Step 3 -- bounding $\tilde{\Delta}_{n,A}$ for $A\in\{-1,1\}$}: The techniques we apply here are broadly similar to those used in Step 2b, but the arguments are a little more involved. For each $n$, we now consider $\mathcal{I}_{-1}=[\hat{\tau}_L,m_0)$ and $\mathcal{I}_1=(\tilde{m}_+,\hat{\tau}_R]$, where $\hat{\tau}_L=x_{\hat{\ell}}$ and $\hat{\tau}_R=x_{\hat{k}}$ are as given in Step 1. Let $\hat{\ell}_+,\hat{k}_-\in [n]$ be such that $x_{\hat{\ell}_+}$ is the smallest knot of $\hat{f}_n^{m_0}$ in $(\hat{\tau}_L,1]$ and $x_{\hat{k}_-}$ is the largest knot of $\hat{f}_n^{m_0}$ in $[0,\hat{\tau}_R)$. Then $\{i:x_i\in\mathcal{I}_{-1}\}\subseteq\{\hat{\ell},\hat{\ell}+1,\dotsc,\hat{\ell}_+\}$ and $\{i:x_i\in\mathcal{I}_1\}\subseteq\{\hat{k}_-,\hat{k}_- +1,\dotsc,\hat{k}\}$ in all cases, in view of the definitions of $\hat{\tau}_L,\hat{\tau}_R$. Later on, we will apply Lemma~\ref{lem:step3} to $x_{\hat{\ell}},x_{\hat{\ell}_+}$ and $x_{\hat{k}_-},x_{\hat{k}}$, which are pairs of successive knots of $\hat{f}_n^{m_0}$.
% Give more details here?

Recalling from~\eqref{eq:step1} that we defined $\tilde{\Delta}_{n,\pm 1}$ as weighted sums of squares, we start by bounding these from below by unweighted sums that do not feature the (random) `boundary weights' $\underline{w}_{\hat{k}},\overline{w}_{\hat{\ell}}\in [0,1]$ from~\eqref{eq:weights}. For $1\leq a\leq b\leq n$, let $K^{a,b}$ be as in Step 2b, so that $K^{a,b}$ and $-K^{a,b}$ are the cones of convex and concave sequences respectively based on $x_a,\dotsc,x_b$, and let $\xi^{a,b}=(\xi_a,\xi_{a+1},\dotsc,\xi_b)$. Denote by $\check{\theta}^{a,b}:=\argmin_{v\in K^{a,b}}\norm{Y^{a,b}-v}$ and $\hat{\theta}^{a,b}:=\argmin_{v\in -K^{a,b}}\norm{Y^{a,b}-v}$ the projections of $Y^{a,b}:=(Y_a,Y_{a+1},\dotsc,Y_b)$ onto $K^{a,b}$ and $-K^{a,b}$ respectively. Let $\hat{a}:=\floor{n\tilde{m}_+}+1$ and $\check{b}:=\ceil{nm_0}-1$, so that $x_{\hat{a}-1}\leq{\tilde{m}_+}<x_{\hat{a}}$ and $x_{\check{b}}<m_0\leq x_{\check{b}+1}$, and define $\check{a},\hat{b}\in [n]$ by
\[\check{a}:=
\begin{cases}
\,\hat{\ell}&\;\text{if }\bigl(Y_{\hat{\ell}}-\tilde{f}_n(x_{\hat{\ell}})\bigr)^2\leq\xi_{\hat{\ell}}^2\\
\,\hat{\ell}+1&\;\text{otherwise}
\end{cases}
\qquad
\text{and}
\qquad
\hat{b}:=
\begin{cases}
\,\hat{k}&\;\text{if }\bigl(Y_{\hat{k}}-\tilde{f}_n(x_{\hat{k}})\bigr)^2\leq\xi_{\hat{k}}^2\\
\,\hat{k}-1&\;\text{otherwise}.
\end{cases}
\]
Then $[x_{\check{a}},x_{\check{b}}]\subseteq\mathcal{I}_{-1}$ and $[x_{\hat{a}},x_{\hat{b}}]\subseteq\mathcal{I}_1$, so $\hat{\ell}\leq\check{a}\leq\check{b}\leq\hat{\ell}_+$ and $\hat{k}_-\leq\hat{a}\leq\hat{b}\leq\hat{k}$. On the event $E_n^+$, the function $\tilde{f}_n$ is convex on $[x_{\check{a}},x_{\check{b}}]\subseteq\mathcal{I}_{-1}$ and concave on $[x_{\hat{a}},x_{\hat{b}}]\subseteq\mathcal{I}_1$, so it follows from the definitions above that
\begin{equation}
\label{eq:Deltanpm1}
\tilde{\Delta}_{n,A}=\sum_{i: x_i\in\mathcal{I}_A}w_i^A\bigl\{\bigl(Y_i-\tilde{f}_n(x_i)\bigr)^2-\xi_i^2\bigr\}\geq
\begin{cases}
\,\norm{Y^{\check{a},\check{b}}-\check{\theta}^{\check{a},\check{b}}}^2-\norm{\xi^{\check{a},\check{b}}}^2\;&\text{for }A=-1\\
\,\norm{Y^{\hat{a},\hat{b}}-\hat{\theta}^{\hat{a},\hat{b}}}^2-\norm{\xi^{\hat{a},\hat{b}}}^2\;&\text{for }A=1
\end{cases}
\end{equation}
on $E_n^+$. Next, we develop these bounds further using some orthogonality properties and the oracle inequality stated as~\citet[Theorem~4.3]{Bel18} once again, taking into account the randomness of $\check{a},\check{b},\hat{a},\hat{b}$. For $1\leq a<b\leq n$, let $\mathbf{1}^{a,b},x^{a,b}\in\R^{b-a+1}$ and $\theta^{a,b}=\bigl(f_0(x_a),f_0(x_{a+1})\dotsc,f_0(x_b)\bigr)$ be as in Step 2a, and write $A^{a,b}:=\Span\{\mathbf{1}^{a,b},x^{a,b}\}$ for the subspace of affine sequences of length $b-a+1$ based on $x_a,x_{a+1},\dotsc,x_b$. Then $\bar{\theta}^{a,b}:=\argmin_{v\in A^{a,b}}\norm{\theta^{a,b}-v}$ satisfies $\ipr{\theta^{a,b}-\bar{\theta}^{a,b}}{\theta^{a,b}-\theta}=0$ for all $\theta\in A^{a,b}$. Moreover, for all sufficiently small $\eta>0$, we have $\check{\theta}^{a,b}\pm\eta(\check{\theta}^{a,b}-\bar{\theta}^{a,b})\in K^{a,b}$ and $\hat{\theta}^{a,b}\pm\eta(\hat{\theta}^{a,b}-\bar{\theta}^{a,b})\in -K^{a,b}$, so it follows from~\eqref{eq:projconv} or Lemma~\ref{lem:projconv}(a) that
% TODO: projections onto closed, convex sets
$\ipr{Y^{a,b}-\check{\theta}^{a,b}}{\bar{\theta}^{a,b}-\check{\theta}^{a,b}}=0$ and $\ipr{Y^{a,b}-\hat{\theta}^{a,b}}{\bar{\theta}^{a,b}-\hat{\theta}^{a,b}}=0$. Therefore, writing $Y^{a,b}=\theta^{a,b}+\xi^{a,b}$, we deduce that
\begin{align}
\norm{Y^{a,b}-\check{\theta}^{a,b}}^2-\norm{\xi^{a,b}}^2&=\norm{Y^{a,b}-\bar{\theta}^{a,b}}^2-\norm{\check{\theta}^{a,b}-\bar{\theta}^{a,b}}^2-\norm{\xi^{a,b}}^2\notag\\
&=\norm{\xi^{a,b}+(\theta^{a,b}-\bar{\theta}^{a,b})}^2-\norm{\xi^{a,b}}^2-\norm{\check{\theta}^{a,b}-\bar{\theta}^{a,b}}^2\notag\\
&=2\,\ipr{\xi^{a,b}}{\theta^{a,b}-\bar{\theta}^{a,b}}-\norm{\check{\theta}^{a,b}-\bar{\theta}^{a,b}}^2+\norm{\theta^{a,b}-\bar{\theta}^{a,b}}^2\notag\\
\label{eq:Deltanpm1ipr}
&\geq 2\,\ipr{\xi^{a,b}}{\theta^{a,b}-\bar{\theta}^{a,b}}-2\,\norm{\check{\theta}^{a,b}-\theta^{a,b}}^2-\norm{\theta^{a,b}-\bar{\theta}^{a,b}}^2,
\end{align}
where the final inequality follows since $\norm{z+z'}^2\leq 2(\norm{z}^2+\norm{z'}^2)$ for $z,z'\in\R^{b-a+1}$. Similarly,
\[\norm{Y^{a,b}-\hat{\theta}^{a,b}}^2-\norm{\xi^{a,b}}^2\geq 2\,\ipr{\xi^{a,b}}{\theta^{a,b}-\bar{\theta}^{a,b}}-2\,\norm{\hat{\theta}^{a,b}-\theta^{a,b}}^2-\norm{\theta^{a,b}-\bar{\theta}^{a,b}}^2,\]
and we now address each of the three terms on the right-hand side in turn. Firstly, letting $v^{a,b}:=(\theta^{a,b}-\bar{\theta}^{a,b})/\norm{\theta^{a,b}-\bar{\theta}^{a,b}}$, we have $\norm{v^{a,b}}=1$, so $\ipr{\xi^{a,b}}{v^{a,b}}$ is sub-Gaussian with parameter 1. Therefore, for any $a,b$ with $1\leq a\leq b\leq n$ and for every $c>0$, we have
\begin{equation}
\label{eq:maxvab}
\Pr\bigl(\abs{\ipr{\xi^{a,b}}{\theta^{a,b}-\bar{\theta}^{a,b}}}\geq\sqrt{2c\log n}\,\norm{\theta^{a,b}-\bar{\theta}^{a,b}}\bigr)\leq 2n^{-c}.
\end{equation}
Secondly, by taking $\mu=\theta^{a,b}$ and $u=\bar{\theta}^{a,b}\in A^{a,b}$ in~\citet[Theorem~4.3]{Bel18} and applying this result to the closed, convex cones $\pm K^{a,b}$, we find that
\begin{equation}
\label{eq:bellec2}
\Pr\bigl\{\norm{\check{\theta}^{a,b}-\theta^{a,b}}^2\vee\norm{\hat{\theta}^{a,b}-\theta^{a,b}}^2\geq\norm{\theta^{a,b}-\bar{\theta}^{a,b}}^2+16\log\bigl(e(b-a+1)\bigr)+4c\log n\bigr\}\leq 2n^{-c}
\end{equation}
for any $a,b$ with $1\leq a\leq b\leq n$ and for every $c>0$. Finally, we can establish the following for each $n$ (without using the local smoothness condition on $f_0$ in Assumption~\ref{ass:inflection}):
\begin{lemma}
\label{lem:step3}
For $1\leq a\leq b\leq n$, define $\breve{\theta}^{a,b}:=\bigl(\hat{f}_n^{m_0}(x_a),\hat{f}_n^{m_0}(x_{a+1}),\dotsc,\hat{f}_n^{m_0}(x_b)\bigr)$, and as in Step 2a, let $\Pi_{a,b}\,\xi^{a,b}:=\argmin_{v\in L^{a,b}}\norm{\xi^{a,b}-v}$ be the projection of $\xi^{a,b}$ onto the subspace $L^{a,b}=\Span\{\theta^{a,b},\mathbf{1}^{a,b},x^{a,b}\}$. If $x_{\tilde{k}}<x_{\tilde{k}'}$ are successive knots of $\hat{f}_n^{m_0}$, then
\[\norm{\theta^{\tilde{k},\tilde{k}'}-\breve{\theta}^{\tilde{k},\tilde{k}'}}\leq\max_{1\leq a\leq b\leq n}\,\norm{\Pi_{a,b}\,\xi^{a,b}}+2\max_{1\leq i\leq n}\,\abs{\xi_i}+2\max_{1\leq i\leq n}\,\abs{(\hat{f}_n^{m_0}-f_0)(x_i)}=:\Xi.\]
% TODO: Comment - The final two terms in the bound are possibly artefacts of the specific proof strategy; if $\hat{f}_n^{m_0}$ is convex / concave *and strictly increasing* on $[x_{\tilde{k}},x_{\tilde{k}'}]$, then a slightly different argument shows that the first term on the right-hand side gives a valid upper bound for $\norm{\theta^{\tilde{k},\tilde{k}'}-\bar{\theta}^{\tilde{k},\tilde{k}'}}$. On the other hand, if $m_0\in [x_{\tilde{k}},x_{\tilde{k}'}]$, then a more involved case analysis leads to a similar upper bound for $\norm{\theta^{\tilde{k},\tilde{k}'}-\bar{\theta}^{\tilde{k},\tilde{k}'}}$ that does not include the third term. The reason for adopting the current approach is that all cases can be handled in exactly the same manner. Moreover, given that we are working in an asymptotic framework, we can use the consistency of $\hat{f}_n^{m_0}$ to take care of the third term in the bound.
\end{lemma}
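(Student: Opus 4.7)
The key structural fact is that, since $x_{\tilde{k}},x_{\tilde{k}'}$ are successive knots of $\hat{f}_n^{m_0}$ and the LSE is piecewise affine, the restriction $\breve{\theta}^{\tilde{k},\tilde{k}'}$ is an affine sequence on $\mathcal{J}:=\{\tilde{k},\dotsc,\tilde{k}'\}$, so it lies in the two-dimensional subspace $A^{\tilde{k},\tilde{k}'}:=\Span\{\mathbf{1}^{\tilde{k},\tilde{k}'},x^{\tilde{k},\tilde{k}'}\}\subseteq L^{\tilde{k},\tilde{k}'}$. I plan to introduce two auxiliary affine reference vectors in $A^{\tilde{k},\tilde{k}'}$: the best $\ell^2$-affine approximation $\bar{\theta}:=\Pi_{A^{\tilde{k},\tilde{k}'}}(\theta^{\tilde{k},\tilde{k}'})$ of the signal and the unconstrained affine LSE $\tilde{\theta}:=\Pi_{A^{\tilde{k},\tilde{k}'}}(Y^{\tilde{k},\tilde{k}'})$ of the data, and then compare $\breve{\theta}^{\tilde{k},\tilde{k}'}$ to both through a three-term triangle inequality.

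Since $\theta^{\tilde{k},\tilde{k}'}-\bar{\theta}\perp A^{\tilde{k},\tilde{k}'}$ while $\bar{\theta},\tilde{\theta},\breve{\theta}^{\tilde{k},\tilde{k}'}\in A^{\tilde{k},\tilde{k}'}$, Pythagoras followed by the triangle inequality through $\tilde{\theta}$ yields
\[\norm{\theta^{\tilde{k},\tilde{k}'}-\breve{\theta}^{\tilde{k},\tilde{k}'}}\leq\norm{\theta^{\tilde{k},\tilde{k}'}-\bar{\theta}}+\norm{\bar{\theta}-\tilde{\theta}}+\norm{\tilde{\theta}-\breve{\theta}^{\tilde{k},\tilde{k}'}}.\]
Using $Y^{\tilde{k},\tilde{k}'}=\theta^{\tilde{k},\tilde{k}'}+\xi^{\tilde{k},\tilde{k}'}$ and linearity of orthogonal projection onto a subspace, the middle summand simplifies to $\norm{\bar{\theta}-\tilde{\theta}}=\norm{\Pi_{A^{\tilde{k},\tilde{k}'}}\xi^{\tilde{k},\tilde{k}'}}$; because $A^{\tilde{k},\tilde{k}'}\subseteq L^{\tilde{k},\tilde{k}'}$, this is bounded by $\norm{\Pi_{\tilde{k},\tilde{k}'}\xi^{\tilde{k},\tilde{k}'}}\leq\max_{1\leq a\leq b\leq n}\norm{\Pi_{a,b}\xi^{a,b}}$, supplying the first contribution to $\Xi$.

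For the two remaining summands I intend to exploit the first-order optimality conditions for $\hat{\theta}^{m_0}=\Pi_{\Gamma^{m_0}}(Y)$, in combination with the fact that $\hat{f}_n^{m_0}$ has no knots in the interior of $[x_{\tilde{k}},x_{\tilde{k}'}]$. Because the second-order shape constraints defining $\Gamma^{m_0}$ hold with equality throughout $\{\tilde{k}+1,\dotsc,\tilde{k}'-1\}$, any perturbation of $\hat{\theta}^{m_0}$ that is supported on $\mathcal{J}$ and affine on $\mathcal{J}$ automatically preserves these interior equalities. When the neighbouring monotonicity/curvature constraints are strict---the generic case---such perturbations are bi-directional in the tangent cone at $\hat{\theta}^{m_0}$, and the resulting first-order identity forces $\breve{\theta}^{\tilde{k},\tilde{k}'}=\tilde{\theta}$ exactly, so that $\norm{\tilde{\theta}-\breve{\theta}^{\tilde{k},\tilde{k}'}}=0$ and $\norm{\theta^{\tilde{k},\tilde{k}'}-\bar{\theta}}$ can be controlled using only the pointwise discrepancies at the endpoints. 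In the complementary degenerate cases, a boundary-adjustment argument modelled on Lemma~\ref{lem:bdadj} and Proposition~\ref{prop:srestr} expresses both $\norm{\tilde{\theta}-\breve{\theta}^{\tilde{k},\tilde{k}'}}$ and $\norm{\theta^{\tilde{k},\tilde{k}'}-\bar{\theta}}$ as quantities controlled by the boundary residuals $|Y_j-\hat{f}_n^{m_0}(x_j)|$ for $j\in\{\tilde{k},\tilde{k}'\}$.

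Each such boundary residual is finally estimated via the triangle inequality
\[|Y_j-\hat{f}_n^{m_0}(x_j)|\leq|\xi_j|+|(\hat{f}_n^{m_0}-f_0)(x_j)|\leq\max_{1\leq i\leq n}|\xi_i|+\max_{1\leq i\leq n}|(\hat{f}_n^{m_0}-f_0)(x_i)|,\]
and summing the resulting contributions over the two endpoints $x_{\tilde{k}},x_{\tilde{k}'}$ produces the factors of $2$ appearing in $\Xi$. I expect the main obstacle to be the third step above: quantifying precisely how far $\breve{\theta}^{\tilde{k},\tilde{k}'}$ can be pulled away from $\tilde{\theta}$ by shape constraints that become active just outside $\mathcal{J}$. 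This requires a careful KKT/complementary-slackness bookkeeping, and is the point at which the boundary-reweighting machinery of Section~\ref{sec:bdadj} is expected to enter decisively---translating the $L^\infty$-error $\max_i|(\hat{f}_n^{m_0}-f_0)(x_i)|$ from just the two endpoints into global $\ell^2$-control on $\mathcal{J}$, without incurring the naive $\sqrt{|\mathcal{J}|}$ loss that a crude pointwise bound would produce.
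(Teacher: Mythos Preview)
There is a genuine gap. Your triangle-inequality decomposition introduces the term $\norm{\theta^{\tilde{k},\tilde{k}'}-\bar{\theta}}$, the $\ell^2$ error of approximating the true signal by its \emph{best} affine fit on $\mathcal{J}$. This is a deterministic quantity depending only on the curvature of $f_0$ over $\mathcal{J}$; it involves neither $\hat{f}_n^{m_0}$ nor the noise, and there is no mechanism by which ``endpoint discrepancies'' could control it. The only estimate available is the circular one $\norm{\theta^{\tilde{k},\tilde{k}'}-\bar{\theta}}\leq\norm{\theta^{\tilde{k},\tilde{k}'}-\breve{\theta}^{\tilde{k},\tilde{k}'}}$. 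Separately, your ``generic case'' claim that $\breve{\theta}^{\tilde{k},\tilde{k}'}=\tilde{\theta}$ is also incorrect: an affine perturbation supported on $\mathcal{J}$ and extended by zero outside alters the second differences at indices $\tilde{k}-1$ and $\tilde{k}'+1$, where $\hat{f}_n^{m_0}$ is linear and the curvature constraints are therefore \emph{binding} (equal to zero), not strict. Bi-directional feasibility thus fails even generically --- this is precisely why the boundary weights of Section~\ref{sec:bdadj} are non-trivial for convex and S-shaped LSEs, unlike in the isotonic case.

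The paper sidesteps both obstacles by using a single one-sided test direction rather than a decomposition. Define $h(x_i):=(\theta^{\tilde{k},\tilde{k}'}-\breve{\theta}^{\tilde{k},\tilde{k}'})_{\tilde{k}\vee i\wedge\tilde{k}'}$, i.e.\ $h=f_0-\hat{f}_n^{m_0}$ on $\mathcal{J}$, extended by \emph{constants} (not zeros) outside. Then $\hat{f}_n^{m_0}+\eta h\in\mathcal{H}^{m_0}$ for small $\eta>0$: on $\mathcal{J}$ it is the convex combination $(1-\eta)\hat{f}_n^{m_0}+\eta f_0$ of two elements of $\mathcal{F}^{m_0}$; the constant extension leaves second differences unchanged at $\tilde{k}-1$ and $\tilde{k}'+1$; and the strict knots at $\tilde{k},\tilde{k}'$ absorb the remaining perturbation. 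The KKT inequality $\sum_i h(x_i)(Y_i-\breve{\theta}_i)\leq 0$ then expands, after writing $Y=\theta+\xi$ and bounding the two tail sums via~\eqref{eq:bdadj1}, to
\[
\norm{\theta^{\tilde{k},\tilde{k}'}-\breve{\theta}^{\tilde{k},\tilde{k}'}}^2\leq\ipr{\xi^{\tilde{k},\tilde{k}'}}{\breve{\theta}^{\tilde{k},\tilde{k}'}-\theta^{\tilde{k},\tilde{k}'}}+\abs{\theta_{\tilde{k}}-\breve{\theta}_{\tilde{k}}}\,\abs{Y_{\tilde{k}}-\breve{\theta}_{\tilde{k}}}+\abs{\theta_{\tilde{k}'}-\breve{\theta}_{\tilde{k}'}}\,\abs{Y_{\tilde{k}'}-\breve{\theta}_{\tilde{k}'}}.
\]
Dividing by $\norm{\theta^{\tilde{k},\tilde{k}'}-\breve{\theta}^{\tilde{k},\tilde{k}'}}$, using $\abs{\theta_j-\breve{\theta}_j}\leq\norm{\theta^{\tilde{k},\tilde{k}'}-\breve{\theta}^{\tilde{k},\tilde{k}'}}$ for $j\in\{\tilde{k},\tilde{k}'\}$, and observing that $\theta^{\tilde{k},\tilde{k}'}-\breve{\theta}^{\tilde{k},\tilde{k}'}\in L^{\tilde{k},\tilde{k}'}$ (since $\breve{\theta}^{\tilde{k},\tilde{k}'}$ is affine) gives the bound directly --- no auxiliary projections $\bar{\theta},\tilde{\theta}$ are needed.
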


\unparskip
At the start of Step 3, $x_{\hat{\ell}},x_{\hat{\ell}_+}$ and $x_{\hat{k}_-},x_{\hat{k}}$ were defined to be pairs of successive knots of $\hat{f}_n^{m_0}$. Therefore, since $\hat{\ell}\leq\check{a}\leq\check{b}\leq\hat{\ell}_+$ and $\bigl(\breve{\theta}_i^{\hat{\ell},\hat{\ell}_+}(x_{\check{a}}),\dotsc,\breve{\theta}_i^{\hat{\ell},\hat{\ell}_+}(x_{\check{b}})\bigr)\in A_{\check{a},\check{b}}$, it follows that
\[\norm{\theta^{\check{a},\check{b}}-\bar{\theta}^{\check{a},\check{b}}}^2=\min_{\,\theta\in A_{\check{a},\check{b}}}\norm{\theta^{\check{a},\check{b}}-\theta}^2\leq\sum_{i=\check{a}}^{\check{b}}\,\bigl(\theta_i^{\hat{\ell},\hat{\ell}_+}-\breve{\theta}_i^{\hat{\ell},\hat{\ell}_+}\bigr)^2\leq\norm{\theta^{\hat{\ell},\hat{\ell}_+}-\breve{\theta}^{\hat{\ell},\hat{\ell}_+}}^2\leq\Xi^2.\]
Similarly, $\norm{\theta^{\hat{a},\hat{b}}-\bar{\theta}^{\hat{a},\hat{b}}}\leq\norm{\theta^{\hat{k}_-,\hat{k}}-\breve{\theta}^{\hat{k}_-,\hat{k}}}\leq\Xi$. Now recall the tail bound~\eqref{eq:suplinpr} for $\norm{\Pi_{a,b}\,\xi^{a,b}}$, which applies to all $1\leq a\leq b\leq n$, and our Assumption~\ref{ass:inflection} that $\xi_1,\dotsc,\xi_n$ are sub-Gaussian random variables with parameter 1. Applying a union bound, we see that
\begin{align*}
\Pr\,\Bigl(\max_{1\leq a\leq b\leq n}\,\norm{\Pi_{a,b}\,\xi^{a,b}}+2\max_{1\leq i\leq n}\,\abs{\xi_i}\geq 6\sqrt{c\log n}\Bigr)&\\
\leq\Pr\,\Bigl(\max_{1\leq a\leq b\leq n}\,\norm{\Pi_{a,b}\,\xi^{a,b}}\geq\sqrt{6c\log n}\Bigr)&+\Pr\,\Bigl(\max_{1\leq i\leq n}\,\abs{\xi_i}\geq\sqrt{2c\log n}\Bigr)\leq 6n^{2-c}+2n^{1-c}
% for $n\geq 2$
\end{align*}
for every $c>0$. Since $\max_{x\in [0,1]}\,\abs{(\hat{f}_n^{m_0}-f_0)(x)}=o_p(1)$ by Proposition~\ref{cor:consistency}(c), it follows from Lemma~\ref{lem:step3} that
\begin{equation}
\label{eq:thetanorm}
\Pr\bigl(\norm{\theta^{\check{a},\check{b}}-\bar{\theta}^{\check{a},\check{b}}}\vee\norm{\theta^{\hat{a},\hat{b}}-\bar{\theta}^{\hat{a},\hat{b}}}\geq 6\sqrt{c\log n}+\eta\bigr)\leq\Pr\bigl(\Xi\geq 6\sqrt{c\log n}+\eta\bigr) \rightarrow 0
\end{equation}
as $n \rightarrow \infty$, for any $c>2$ and $\eta>0$. We now combine~\eqref{eq:thetanorm} with~\eqref{eq:maxvab} and~\eqref{eq:bellec2}, where we take $c=3\;(>2)$ and apply a union bound to handle all pairs $(a,b)$ with $1\leq a\leq b\leq n$. Together with~\eqref{eq:Deltanpm1} and~\eqref{eq:Deltanpm1ipr}, these imply that there exist a universal constant $\rho'>0$ and events $(E_{n4})$ with $\Pr(E_{n4}^c)\to 0$ such that
\begin{align}
\tilde{\Delta}_{n,-1}\geq\norm{Y^{\check{a},\check{b}}-\check{\theta}^{\check{a},\check{b}}}^2-\norm{\xi^{\check{a},\check{b}}}^2&\geq 2\,\ipr{\xi^{\check{a},\check{b}}}{\theta^{\check{a},\check{b}}-\bar{\theta}^{\check{a},\check{b}}}-2\,\norm{\check{\theta}^{\check{a},\check{b}}-\theta^{\check{a},\check{b}}}^2-\norm{\theta^{\check{a},\check{b}}-\bar{\theta}^{\check{a},\check{b}}}^2\notag\\
&\geq -2\sqrt{2\log n}\,\norm{\theta^{\check{a},\check{b}}-\bar{\theta}^{\check{a},\check{b}}}-56\log(en)-3\,\norm{\theta^{\check{a},\check{b}}-\bar{\theta}^{\check{a},\check{b}}}^2\notag\\
\label{eq:step3a}
&\geq -\rho'\log(en)
\end{align}
and
\begin{align}
\tilde{\Delta}_{n,1}\geq\norm{Y^{\hat{a},\hat{b}}-\hat{\theta}^{\hat{a},\hat{b}}}^2-\norm{\xi^{\hat{a},\hat{b}}}^2&\geq 2\,\ipr{\xi^{\hat{a},\hat{b}}}{\theta^{\hat{a},\hat{b}}-\bar{\theta}^{\hat{a},\hat{b}}}-2\,\norm{\hat{\theta}^{\hat{a},\hat{b}}-\theta^{\hat{a},\hat{b}}}^2-\norm{\theta^{\hat{a},\hat{b}}-\bar{\theta}^{\hat{a},\hat{b}}}^2\notag\\
&\geq -2\sqrt{2\log n}\,\norm{\theta^{\hat{a},\hat{b}}-\bar{\theta}^{\hat{a},\hat{b}}}-56\log(en)-3\,\norm{\theta^{\hat{a},\hat{b}}-\bar{\theta}^{\hat{a},\hat{b}}}^2\notag\\
\label{eq:step3b}
&\geq -\rho'\log(en)
\end{align}
on $E_n^+\cap E_{n1}\cap E_{n4}$, for each $n$.

Having carried out Steps 1--3 above, we finally define the events $\Omega_n:=\bigcap_{\,j=1}^{\,4} E_{nj}$ for $n\in\N$, which satisfy $\Pr(\Omega_n^c)\to 0$. We conclude from~\eqref{eq:step1},~\eqref{eq:step2a},~\eqref{eq:step2b},~\eqref{eq:step3a} and~\eqref{eq:step3b} that
\begin{align}
\label{eq:Deltan}
\Delta_n=S_n(\tilde{f}_n)-S_n(\hat{f}_n^{m_0})&\geq\sum_{A=-1}^1\tilde{\Delta}_{n,A}=\Lambda_{n2}+\Lambda_{n1}+\Lambda_{n3}+\tilde{\Delta}_{n,-1}+\tilde{\Delta}_{n,1}\\
&\geq 2^{-1}\rho_\alpha B^2\,(C_n/4)^{2\alpha+1}\log n-2\,(28\log n)-2\rho'\log (en)>0\notag
\end{align}
on $E_n^+\cap\Omega_n$, for all sufficiently large $n$. As mentioned at the start of the proof, this means that $\Pr(E_n^+)\leq\Pr(E_n^+\cap\Omega_n)+\Pr(\Omega_n^c)\to 0$, as desired.
% TODO: The case $\alpha<1$: what about $\alpha=1$?
\end{proof}

\deparskip
\begin{proof}[Proof of Proposition~\ref{prop:lamlb}]
% TODO: The case $\alpha<1$: what about $\alpha=1$?
Fix $\tau\in (0,1)$. First, we consider the case where $f_0\in\mathcal{F}^{m_0}$ satisfies Assumption~\ref{ass:inflection} for some $\alpha>1$. By suitably perturbing $f_0$, we construct for each (sufficiently large) $n$ a function $f_{\delta_n}\in\mathcal{F}(f_0,\tau/\sqrt{n})$ that has a unique inflection point at distance of order $\delta_n\asymp (\tau^2/n)^{1/(2\alpha+1)}$ from~$m_0$. The local asymptotic minimax lower bound~\eqref{eq:lamlb} is then obtained by applying (the proof of) Le Cam's two-point lemma to $\{f_0,f_{\delta_n}\}$. We will write $\dtv(P,Q)$ 
% defin this?
for the total variation distance between probability measures $P,Q$.

To this end, for each $\delta\in (0,1-m_0)$, let $u(m_0+\delta)$ be a subgradient of the concave function $\restr{f_0}{[m_0,1]}$ at $m_0+\delta$, so that $u(m_0+\delta)<f_0'(m_0)$ and \[f_0(x)\leq f_0(m_0+\delta)+u(m_0+\delta)(x-m_0-\delta)=:f_{1,\delta}(x)\quad\text{for all } x\in [m_0,1].\] 
Define
\[f_{2,\delta}(x):=f_0(m_0)+f_0'(m_0)(x-m_0)+\delta(x-m_0)^\alpha\quad\text{for }x\in [m_0,1],\]
so that $f_{2,\delta}$ is strictly convex on $[m_0,1]$ (thanks to the inclusion of the final term $\delta(x-m_0)^\alpha$) and $f_{2,\delta}(x)> f_0(m_0)+f_0'(m_0)(x-m_0)\geq f_0(x)$ for all $x\in (m_0,1]$. Note in particular that $f_{1,\delta}(m_0)>f_0(m_0)=f_{2,\delta}(m_0)$ and $f_{1,\delta}(m_0+\delta)=f_0(m_0+\delta)<f_{2,\delta}(m_0+\delta)$. Consequently, defining $f_\delta\colon [0,1]\to\R$ by
\begin{equation}
\label{eq:fdelta}
f_\delta(x):=
\begin{cases}
f_0(x)\quad&x\in [0,m_0]\cup [m_0+\delta,1]\\
f_{1,\delta}(x)\wedge f_{2,\delta}(x)&x\in (m_0,m_0+\delta),
\end{cases}
\end{equation}
we deduce that there exists a unique $c_\delta\in (0,1)$ such that $f_\delta=f_{2,\delta}$ on $[m_0,m_0+\delta c_\delta]$ and $f_\delta=f_{1,\delta}$ on $[m_0+\delta c_\delta,m_0+\delta]$. Moreover, since $f_{2,\delta}$ is strictly convex and $f_{1,\delta}(m_0+\delta c_\delta)>f_0'(m_0)>u(m_0+\delta)=f_{2,\delta}(m_0+\delta c_\delta)$, it follows that $f_\delta$ lies in $\mathcal{F}$ and has a unique inflection point at $m_\delta:=m_0+\delta c_\delta$.

Now for any sequence $(\delta_n)$ with $\delta_n\to 0$ and $n\delta_n\to\infty$, it follows from Assumption~\ref{ass:inflection} and some elementary analytic arguments that the following holds as $n\to\infty$; see Section~\ref{subsec:inflectproofs}.
% based on~\citet[Lemma~3.10]{SS11} that 
\begin{lemma}
\label{lem:lamlb}
For $\alpha>1$, we have $\norm{f_{\delta_n}-f_0}_n^2=\bigl(1+o(1)\bigr)\int_0^1\,(f_{\delta_n}-f_0)^2=\bigl(1+o(1)\bigr)C_\alpha B^2\delta_n^{2\alpha+1}$ and $c_{\delta_n}=\bigl(1+o(1)\bigr)(1-\alpha^{-1})$ as $n\to\infty$, where $C_\alpha:=\int_0^1\,\bigl\{t^\alpha-\bigl(1-(1-t)\alpha\bigr)^+\bigr\}^2\,dt>0$. 
\end{lemma}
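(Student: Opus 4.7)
The plan is to rescale everything on the support $[m_0,m_0+\delta_n]$ of $f_{\delta_n}-f_0$ via the change of variables $x = m_0+\delta_n t$ with $t\in [0,1]$, and then extract the leading $\delta_n^\alpha$ term from Assumption~\ref{ass:inflection}. Specifically, the assumption gives the uniform expansion $f_0(m_0+\delta_n t) = f_0(m_0) + f_0'(m_0)\delta_n t + B\delta_n^\alpha t^\alpha(1+o(1))$ as $\delta_n\to 0$, uniformly in $t\in [0,1]$. A companion derivative expansion $u(m_0+\delta_n) = f_0'(m_0) + \alpha B\delta_n^{\alpha-1}(1+o(1))$ follows from concavity of $f_0$ on $[m_0,1]$ by sandwiching the subgradient $u(m_0+\delta_n)$ between appropriate difference quotients built from the function-level expansion above (this derivative expansion is the one technical point that takes some care, and I would treat it as the main obstacle).

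Granting these expansions, the determination of $c_{\delta_n}$ comes first. By definition, $c_{\delta_n}$ satisfies $f_{1,\delta_n}(m_0+\delta_n c) = f_{2,\delta_n}(m_0+\delta_n c)$. Substituting the expansions and simplifying, the common linear-in-$t$ terms cancel and one is left with the scalar equation $B[\alpha c - (\alpha-1)] = \delta_n c^\alpha(1+o(1))$. Since the right-hand side tends to $0$ and the left-hand side is continuous and strictly monotone in $c$, this forces $c_{\delta_n}\to (\alpha-1)/\alpha = 1-\alpha^{-1}$.

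Next, I would compute $\int_0^1(f_{\delta_n}-f_0)^2$ piecewise. On $[m_0,m_0+\delta_n c_{\delta_n}]$, where $f_{\delta_n}=f_{2,\delta_n}$, the expansions give $(f_{\delta_n}-f_0)(m_0+\delta_n t) = -B\delta_n^\alpha t^\alpha(1+o(1))$ uniformly in $t$. On $[m_0+\delta_n c_{\delta_n},m_0+\delta_n]$, where $f_{\delta_n}=f_{1,\delta_n}$, one obtains $(f_{\delta_n}-f_0)(m_0+\delta_n t) = B\delta_n^\alpha\bigl[(\alpha t-(\alpha-1)) - t^\alpha\bigr](1+o(1))$. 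Observing that $\alpha t - (\alpha-1) = 1 - (1-t)\alpha$ and that this quantity is negative on $[0,1-\alpha^{-1}]$ and non-negative on $[1-\alpha^{-1},1]$, the two contributions combine under the change of variables $x=m_0+\delta_n t$ (which introduces a factor of $\delta_n$) to
\begin{equation*}
\int_0^1 (f_{\delta_n}-f_0)^2 = B^2\delta_n^{2\alpha+1}\bigl(1+o(1)\bigr)\int_0^1\bigl\{t^\alpha - (1-(1-t)\alpha)^+\bigr\}^2 dt = C_\alpha B^2\delta_n^{2\alpha+1}\bigl(1+o(1)\bigr),
\end{equation*}
using $c_{\delta_n}\to 1-\alpha^{-1}$ to align the two pieces with the $(\cdot)^+$ in the integrand. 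Positivity $C_\alpha>0$ is clear because the integrand already equals $t^{2\alpha}>0$ on $(0,1-\alpha^{-1})$.

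Finally, $\|f_{\delta_n}-f_0\|_n^2 = n^{-1}\sum_{i=1}^n(f_{\delta_n}-f_0)^2(i/n)$ is a Riemann sum for $\int_0^1(f_{\delta_n}-f_0)^2$ on the equispaced grid $\{i/n\}$. Since $(f_{\delta_n}-f_0)^2$ is supported on $[m_0,m_0+\delta_n]$, is continuous, and is monotone on each of the pieces $[m_0,m_0+\delta_n c_{\delta_n}]$ and $[m_0+\delta_n c_{\delta_n},m_0+\delta_n]$ (for all sufficiently large $n$), its total variation is bounded by a constant multiple of its supremum $O(\delta_n^{2\alpha})$. The standard bounded-variation bound on Riemann sum error then yields $|\|f_{\delta_n}-f_0\|_n^2 - \int_0^1(f_{\delta_n}-f_0)^2| = O(\delta_n^{2\alpha}/n)$, which is $o(\delta_n^{2\alpha+1})$ precisely because $n\delta_n\to\infty$. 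Combining with the previous step completes the proof of both equalities in the lemma.
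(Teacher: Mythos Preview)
Your approach is essentially the same as the paper's: both rescale to $t=(x-m_0)/\delta_n$, extract the leading $\delta_n^\alpha$ behaviour, locate $c_{\delta_n}$, and handle the discrete norm via a Riemann-sum argument exploiting $n\delta_n\to\infty$. Two minor points of difference: the paper obtains the subgradient expansion by showing that the rescaled convex functions $g_{0,\delta}(t):=\delta^{-\alpha}\bigl(f_0(m_0)+f_0'(m_0)\delta t-f_0(m_0+\delta t)\bigr)$ converge uniformly on $[0,2]$ to $Bt^\alpha$ and then invoking convergence of subdifferentials \citep[Lemma~3.10]{SS11}, whereas a direct sandwich of $u(m_0+\delta_n)$ between secants at width $\Theta(\delta_n)$ only pins down the order $\delta_n^{\alpha-1}$, not the constant $\alpha B$---you would need secants at width $o(\delta_n)$, which amounts to the same rescaling trick. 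For the Riemann-sum step, the paper replaces $(g_{\delta_n}-g_{0,\delta_n})^2$ by its uniform limit $(g_1^+-g_0)^2$ before summing; your bounded-variation bound is an equivalent and slightly more direct alternative.
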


\unparskip
Thus, setting $\delta_n:=\bigl(1+o(1)\bigr)(2C_\alpha B^2 n\tau^{-2})^{-1/(2\alpha+1)}$, we deduce that $f_{\delta_n}\in\mathcal{F}(f_0,\tau/\sqrt{n})$ for all sufficiently large $n$. For all such $n$, write $P_{0,n}^Y,P_{1,n}^Y$ for the distributions of $(Y_{n1},\dotsc,Y_{nn})$ under the data generating mechanisms $Y_{ni}=f_0(x_{ni})+\xi_{ni}$ and $Y_{ni}=f_{\delta_n}(x_{ni})+\xi_{ni}$ respectively. Since $\xi_{n1},\dotsc,\xi_{nn}\iid N(0,1)$ by assumption, we have by Pinsker's inequality that $\dtv^2(P_{0,n}^Y,P_{1,n}^Y)\leq\KL(P_{0,n}^Y,P_{1,n}^Y)/2=n\norm{f_{\delta_n}-f_0}_n^2/2$, so for all sufficiently large $n$, the minimax risk can be bounded from below using Le Cam's two point lemma:
\begin{align*}
\inf_{\breve{m}_n}\,\sup_{f\in\mathcal{F}(f_0,\tau/\sqrt{n})}\E_f\bigl(d(\breve{m}_n,\mathcal{I}_f)\bigr)&\geq\frac{1}{2}\,\inf_{\breve{m}_n}\,\{\E_{f_0}(\abs{\breve{m}_n-m_0})+\E_{f_{\delta_n}}(\abs{\breve{m}_n-m_{\delta_n}})\}\\
&\geq\frac{\abs{m_{\delta_n}-m_0}}{2}\,\bigl(1-\dtv(P_{0,n}^Y,P_{1,n}^Y)\bigr)\\
&\geq\bigl(1+o(1)\bigr)\,\frac{\alpha-1}{2\alpha}\biggl(\frac{2C_\alpha B^2n}{\tau^2}\biggr)^{-1/(2\alpha+1)}\biggl(1-\frac{\tau}{2^{1/2}}\biggr).
\end{align*}
This yields~\eqref{eq:lamlb}, as desired.

In the case $\alpha\in (0,1)$, we instead define $f_\delta\colon [0,1]\to\R$ for $\delta\in (0,1-m_0)$ by
\[f_\delta(x)=
\begin{cases}
f_0(x)\wedge \bigl\{f_0(m_0)+(1-\delta)\frac{f_0(m_0+\delta)-f_0(m_0)}{\delta}(x-m_0)\bigr\}\quad&\text{for }x\in [0,m_0]\\
f_0(m_0)+\bigl(f_0(m_0+\delta)-f_0(m_0)\bigr)\bigl\{(1-\delta)\frac{x-m_0}{\delta}+\delta\bigl(\frac{x-m_0}{\delta}\bigr)^2\bigr\}&\text{for }x\in (m_0,m_0+\delta)\\
f_0(x)\quad&\text{for }x\in [m_0+\delta,1],
\end{cases}
\]
so that $f_\delta\in\mathcal{F}$ and $m_0+\delta$ is the unique inflection point of $f_\delta$ (since $x\mapsto (x-m_0)^2$ is strictly convex). Then based on similar (and slightly simpler) calculations to those for Lemma~\ref{lem:lamlb}, we can apply Le Cam's two point lemma as above to obtain the conclusion of Proposition~\ref{prop:lamlb} when $\alpha\in (0,1)$.
\end{proof}

\umparskip
\section{Projections onto classes of S-shaped functions}
\label{sec:proj2}
The purpose of this section is to introduce the general projection framework that underpins our estimation methodology, and to study the continuity properties of this projection.  This allows us to deduce not only the consistency guarantees for our estimator, as stated in Proposition~\ref{cor:consistency}, but also to ensure its robustness to model misspecification; see Proposition~\ref{prop:consistency} below.

For a finite Borel measure $\nu$ on $[0,1]$, we say that $x\in\supp\nu$ is an \emph{isolated point} of $\supp\nu$ if there exists an open neighbourhood $U$ of $x$ such that $U\cap\supp\nu=\{x\}$. Denote by $\csupp\nu:=\conv(\supp\nu)$ the \emph{convex support} of $\nu$, which is the smallest closed, convex set $C$ with $\nu(C^c)=0$. 
% indeed, $C\supseteq\supp\nu$, so $C=\conv C\supseteq\conv(\supp\nu)$, and $\csupp\nu=\conv\supp\nu$ is a closed, convex set with $\nu(\csupp\nu^c)=0$, so $C\subseteq\csupp\nu$
For Lebesgue measurable functions $f,g\colon[0,1]\to\R\cup\{\pm\infty\}$, we write $f\sim_\nu g$ if $f=g$ $\nu$-almost everywhere, and noting that $\sim_\nu$ defines an equivalence relation on the set of such measurable functions, we denote by $[f]_\nu$ the $\sim_\nu$ equivalence class of $f$.

For $q\in [1,\infty)$, we write $L^q(\nu)\equiv L^q([0,1],\nu)$ for the space of Lebesgue measurable functions $f\colon [0,1]\to\R\cup\{\pm\infty\}$ such that $\norm{f}_{L^q(\nu)}:=\bigl(\int_{[0,1]}\,\abs{f}^q\,d\nu\bigr)^{1/q}<\infty$, and define $\mathcal{L}^q(\nu)\equiv\mathcal{L}^q([0,1],\nu):=\{[f]_\nu:f\in L^q(\nu)\}$. When $q=2$, recall that the bilinear form $\ipr{\cdot\,}{\cdot}_{L^2(\nu)}$ on $L^2(\nu)$ defined by $\ipr{f}{g}_{L^2(\nu)}:=\int_{[0,1]} fg\,d\nu$ induces a Hilbert space structure on $\mathcal{L}^2(\nu)$.

For a Borel set $A\subseteq [0,1]$ and a Lebesgue measurable function $f\colon [0,1]\to\R$, let $\norm{f}_{L^\infty(A,\nu)}:=\inf\{B\geq 0:\abs{f(x)}\leq B\text{ for }\nu\text{-almost every }x\in A\}$, where we adopt the convention that $\inf\emptyset=\infty$. A function $f\in [0,1]\to\R$ is said to be \emph{locally bounded} at $x\in [0,1]$ if there exists $\varepsilon>0$ such that $f$ is bounded on $(x-\varepsilon,x+\varepsilon)\cap [0,1]$.

The following proposition provides some basic structural properties of the classes $\mathcal{F}^m$. See Section~\ref{sec:projproofs} in the supplementary material for the proofs of all results in this subsection.
\begin{proposition}
\label{prop:clconv}
If $m\in [0,1]$ and $\nu$ is a Borel probability measure on $[0,1]$, then $\mathcal{F}_\nu^m:=\{[f]_\nu:f\in\mathcal{F}^m\}$ is a convex cone in $\mathcal{L}^2(\nu)$. Moreover, the following hold for all $m\in [0,1]$:

\unparskip
\begin{enumerate}[label=(\alph*)]
\item $\{[f]_\nu:f\in\mathcal{F}^m\text{ is Lipschitz}\}$ is dense in $\mathcal{F}_\nu^m$ (with respect to the topology induced by $\norm{{\cdot}}_{L^2(\nu)}$).
\item Let $\tilde{m}:=\argmin_{x\in\csupp\nu}\,\abs{x-m}$. Then $\mathcal{F}_\nu^m$ is a dense subset of $\mathcal{F}_\nu^{\tilde{m}}$.
\item $\mathcal{F}_\nu^m$ is closed in $\mathcal{L}^2(\nu)$ if and only if at least one of the following conditions is satisfied: 

\unparskip
\begin{enumerate}[label=(\roman*)]
\item $\nu([0,m])>0$ and $\nu([m,1])>0$;
\item $\max(\supp\nu)<m$ and $\max(\supp\nu)$ is an isolated point of $\supp\nu$;
\item $\min(\supp\nu)>m$ and $\min(\supp\nu)$ is an isolated point of $\supp\nu$.
\end{enumerate}

\unparskip
\item Suppose that none of the conditions (i)--(iii) hold, and let $E_\nu$ be the interval containing all $x\in (\min(\supp\nu),\max(\supp\nu))$
% $\Int(\csupp\nu)$
as well as those $x\in\{\min(\supp\nu),\max(\supp\nu)\}$ for which $\nu(\{x\})>0$. Denote by $\Cl\mathcal{F}_\nu^m$ the closure of $\mathcal{F}_\nu^m$ in $\mathcal{L}^2(\nu)$. Then
\[\Cl\mathcal{F}_\nu^m=
\begin{cases}
\,\{[f]_{\nu}:f\in L^2(\nu)\text{ and}\restr{f}{E_\nu}\!\text{ is convex and increasing}\}\;&\text{if }\max(\supp\nu)\leq m\\
\,\{[f]_{\nu}:f\in L^2(\nu)\text{ and}\restr{f}{E_\nu}\!\text{ is concave and increasing}\}\;&\text{if }\min(\supp\nu)\geq m.
\end{cases}
\]
\end{enumerate}

\unparskip
\end{proposition}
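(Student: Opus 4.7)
The plan is to treat the four parts largely independently, reusing the results of earlier parts where convenient. The convex cone property follows at once from the definitions: if $f,g\in\mathcal{F}^m$ and $\lambda,\mu\geq 0$, then $\lambda f+\mu g$ is clearly increasing on $[0,1]$, convex on $[0,m]$ (as a non-negative combination of convex functions) and concave on $[m,1]$, so $\lambda f+\mu g\in\mathcal{F}^m$; this passes to $\nu$-equivalence classes, giving the convex cone structure of $\mathcal{F}_\nu^m$ in $\mathcal{L}^2(\nu)$.

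For (a), I would fix $f\in\mathcal{F}^m$ (which is automatically bounded on $[0,1]$ by $f(0)$ and $f(1)$ since it is increasing, and therefore lies in $L^\infty(\nu)\subseteq L^2(\nu)$) and truncate its subgradient near $m$. Explicitly, let $x_k\in [0,m]$ be the largest point with left-derivative $f_-'(x_k)\leq k$, redefine $f$ on $[x_k,m]$ by continuing affinely with slope $k$, and handle $[m,1]$ symmetrically by starting from the smallest point $\tilde{x}_k$ with right-derivative $\leq k$ and interpolating affinely on $[m,\tilde{x}_k]$ with the right-slope at $\tilde x_k$. The resulting $f_k$ is a Lipschitz element of $\mathcal{F}^m$ with $f_k(x)\to f(x)$ for every $x\in [0,1]\setminus\{m\}$ (the slopes can only blow up near $m$ from each side, not near $0$ or $1$, since $f$ is bounded), and $|f_k|\leq\max(|f(0)|,|f(1)|)$. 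Dominated convergence then yields $\norm{f_k-f}_{L^2(\nu)}\to 0$.

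For (b), if $m\in\csupp\nu$ then $\tilde m=m$ and there is nothing to prove, so suppose WLOG $\tilde m=\max(\csupp\nu)<m$; then $\supp\nu\subseteq [0,\tilde m]$. Given $f\in\mathcal{F}^m$, its restriction to $[0,\tilde m]$ is convex and increasing, and extending it affinely on $[\tilde m,1]$ with slope equal to the right-derivative of $f$ at $\tilde m$ produces an element of $\mathcal{F}^{\tilde m}$ whose $\nu$-class equals $[f]_\nu$, giving $\mathcal{F}_\nu^m\subseteq\mathcal{F}_\nu^{\tilde m}$. Conversely, given $g\in\mathcal{F}^{\tilde m}$, the same affine extension on $[\tilde m,1]$ (using the right-derivative of $g$ at $\tilde m$, which exists in $[0,\infty]$ by convexity on $[0,\tilde m]$) produces a function in $\mathcal{F}^m$ (convex on $[0,m]$, affine hence concave on $[m,1]$) that agrees with $g$ on $\supp\nu$. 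To reduce to the case where the right-derivative is finite, apply (a) to replace $g$ by a Lipschitz approximation in $\mathcal{F}_\nu^{\tilde m}$, giving the density statement.

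For (c) and (d) — which I expect to be the main obstacle — the natural approach is to prove closedness under each of (i)--(iii) and, in the complementary case, both identify the limit points required for (d) and exhibit sequences showing $\mathcal{F}_\nu^m$ is strictly smaller than the asserted closure. For closedness, take $(f_k)\subseteq\mathcal{F}^m$ with $[f_k]_\nu\to [f]_\nu$ in $\mathcal{L}^2(\nu)$; pass to an a.e.\ convergent subsequence on $\supp\nu$, and combine this with the local uniform boundedness of $(f_k)$ on compact subsets of $(0,1)$ (guaranteed by monotonicity plus mass of $\nu$ on each side of $m$ in case (i), or by the isolated-point hypothesis in (ii)/(iii) which pins down the values at the extreme support point) to deduce pointwise convergence of a subsequence on a $\nu$-full set. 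Pointwise limits of convex (resp.\ concave, increasing) functions on an interval are convex (resp.\ concave, increasing) on the interior, and the isolated-endpoint hypothesis handles the boundary value — this gives a representative of $[f]_\nu$ in $\mathcal{F}^m$. When none of (i)--(iii) holds, the symmetric situation is that $\nu$ has all its mass on one side of $m$, and the extreme support point is an accumulation point; here one builds $f_k\in\mathcal{F}^m$ whose values at the accumulation endpoint diverge but whose $\nu$-classes converge (for example, convex increasing functions with steeper and steeper slopes that agree $\nu$-a.e.\ off a shrinking neighbourhood of that endpoint), exhibiting limit points outside $\mathcal{F}_\nu^m$ and identifying the closure as the set asserted in~(d). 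The matching upper bound on $\Cl\mathcal{F}_\nu^m$ comes from the Helly-type argument above, now applied on the effective interval $E_\nu$ only, together with a direct construction of approximating $\mathcal{F}^m$-functions from any $f\in L^2(\nu)$ with $\restr{f}{E_\nu}$ convex increasing, by truncation as in (a) followed by affine extension as in (b).
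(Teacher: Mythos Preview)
Your proposal is correct and follows essentially the same approach as the paper: part~(a) via a local Lipschitz modification near $m$ (you truncate the subgradient at level $k$, whereas the paper interpolates linearly on the shrinking interval $[m(1-1/n),\,m(1-1/n)+1/n]$ while fixing the value at $m$ --- both constructions work and yield dominated pointwise convergence), part~(b) via affine extension beyond $\csupp\nu$ combined with~(a), and parts~(c)/(d) via a Helly-type compactness argument for closedness plus explicit diverging-slope constructions to exhibit limit points outside $\mathcal{F}_\nu^m$. One small point to tighten: in case~(i), monotonicity and mass on each side of $m$ alone only bound $(f_k)$ between two convergence points $a_L<m<a_R$, not on all compact subsets of $(0,1)$; the paper uses the convexity/concavity of each $f_k$ to obtain endpoint estimates of the form $\liminf_k f_k(0)\geq\bigl(m\,f(a_L)-a_L f(a_R)\bigr)/(m-a_L)$ and $\limsup_k f_k(1)\leq\bigl((1-m)f(a_R)-(1-a_R)f(a_L)\bigr)/(a_R-m)$, and you will need the same ingredient --- but this is already implicit in your shape-constraint framework rather than a genuine gap.
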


\unparskip
For example, if $\nu$ is Lebesgue measure on $[0,1]$, then $\mathcal{F}_\nu^m$ is a closed subset of $\mathcal{L}^2(\nu)$ if and only if $m\in (0,1)$. 

Let $\mathcal{P}$ be the class of probability distributions $P$ on $[0,1]\times\R$ such that $\int_{[0,1]\times\R}\,y^2\,dP(x,y)<\infty$. For $P\in\mathcal{P}$, denote by $P^X$ the marginal distribution on $[0,1]$ induced by the coordinate projection $(x,y)\mapsto x$, and for $f\in L^2(P^X)$, define
\begin{equation}
\label{eq:L2P}
L(f,P):=\int_{[0,1]\times\R}\,\bigl(y-f(x)\bigr)^2\,dP(x,y).
\end{equation}
Introducing $(X,Y)\sim P$, we say that $f_P\colon [0,1]\to\R$ is a \emph{regression function for $P$} if $f_P(X)$ is a version of $\E(Y|\,X)$. Then $f_P\in L^2(P^X)$ and 
\begin{align}
L(f,P)=\E\bigl(\{Y-f(X)\}^2\bigr)&=\E\bigl(\{Y-f_P(X)\}^2\bigr)+\E\bigl(\{f_P(X)-f(X)\}^2\bigr)\notag\\
\label{eq:orth}
&=\E\bigl(\{Y-f_P(X)\}^2\bigr)+\norm{f_P-f}_{L^2(P^X)}^2
\end{align}
for all $f\in L^2(P^X)$. Note that by~\eqref{eq:orth}, we have $L(f_n,P)\to L(f,P)$ whenever $\norm{f_n-f}_{L^2(P^X)}\to 0$. Thus, for each $m\in [0,1]$, it follows from Proposition~\ref{prop:clconv}(a) that $L_m^*(P):=\inf_{f\in\mathcal{F}^m}L(f,P)$ is the infimum of $f\mapsto L(f,P)$ over all Lipschitz $f\in\mathcal{F}^m$. 

For $\delta\geq 0$, let $\psi_m^\delta(P):=\{f\in\mathcal{F}^m:L(f,P)\leq L_m^*(P)+\delta\}$, which is a non-empty set when $\delta>0$. In Corollary~\ref{cor:proj}(d) below, we give sufficient conditions for $\psi_m^0(P)$ to be non-empty, i.e.\ for $f\mapsto L(f,P)$ to attain its infimum $L_m^*(P)$ over $\mathcal{F}^m$.

Recall that if $E$ is a closed, convex subset of a Hilbert space $(H,\norm{{\cdot}})$, then for each $x\in H$, there is a unique $y\in E$ such that $\norm{x-y}=\min_{w\in E}\norm{x-w}$, namely the \emph{projection} of $x$ onto $E$~\citep[e.g.][Theorem~4.10]{Rud87}. In view of this and Proposition~\ref{prop:clconv}, we can now define projection maps $\psi_m^*\colon\mathcal{P}\to\mathcal{L}^2(P^X)$ associated with the convex function classes $\mathcal{F}^m$.
% The main assertions here are (a) and (d); the other parts of Corollary~{cor:proj}
\begin{corollary}
\label{cor:proj}
Fix $P\in\mathcal{P}$ and denote by $P^X$ the corresponding marginal distribution on $[0,1]$. For each $m\in [0,1]$, let $\psi_m^*(P)$ be the collection of all $f\colon [0,1]\to\R$ such that $[f]_{P^X}\in\Cl\mathcal{F}_{P^X}^m\subseteq\mathcal{L}^2(P^X)$ and $L(f,P)=L_m^*(P)$. Then the following hold for all $m\in [0,1]$:

\unparskip
\begin{enumerate}[label=(\alph*)]
\item $\psi_m^*(P)$ is a non-empty $\sim_{P^X}$ equivalence class containing $\psi_m^0(P)$.
\item $\psi_m^*(P)\in\mathcal{L}^2(P^X)$, and if $\psi_m^0(P)\neq\emptyset$, then $\psi_m^*(P)\in\mathcal{L}^q(P^X)$ for all $q\in [1,\infty)$.
\item Defining $\tilde{m}=\argmin_{x\in\csupp P^X}\,\abs{x-m}$ as in Proposition~\ref{prop:clconv}(b), we have $\psi_m^*(P)=\psi_{\tilde{m}}^*(P)$ 
% It is not true that $\psi_m^0(P)\subseteq\psi_{\tilde{m}}^0(P)$
and $L_m^*(P)=L_{\tilde{m}}^*(P)$. If $\psi_m^0(P)$ is non-empty, then so is $\psi_{\tilde{m}}^0(P)$.
\item $\psi_m^0(P)=\psi_m^*(P)\cap\mathcal{F}^m$ is non-empty if at least one of the following holds: 

\unparskip
\begin{enumerate}[label=(\roman*)]
\item $\mathcal{F}_{P^X}^m$ is a closed subset of $\mathcal{L}^2(P^X)$, i.e.\ if $m,P^X$ satisfy at least one of the conditions (i)--(iii) in Proposition~\ref{prop:clconv}(c);
\item $m=\tilde{m}$ and $P$ has a regression function $f_P$ that is locally bounded at $\tilde{m}$.
% presumably also if $P$ has a regression function that is Lipschitz on $(\tilde{m}-\varepsilon,\tilde{m}+\varepsilon)$ for some $\varepsilon>0$
\end{enumerate}

\unparskip
\item All functions in $\psi_m^0(P)$ agree on $(\supp P^X)\setminus\{m\}$. If in addition $P^X(\{m\})>0$ or all elements of $\psi_m^0(P)$ are continuous (at $m$), then they all agree on $\supp P^X$.
\item When $m\in\supp P^X$ and $P^X(\{m\})=0$, all elements of $\psi_m^0(P)$ agree on $\supp P^X$ if and only if all elements of $\psi_m^0(P)$ are continuous (at $m$).
% if $\psi_m^0(P)\neq\emptyset$, this can only happen if $m\in\Int(\csupp P^X)$
\item Suppose that at least one element of $\psi_m^0(P)$ is continuous (at $m$), and moreover that $\supp P^X$ has non-empty intersection with both $(m-\varepsilon,m)$ and $(m,m+\varepsilon)$ for all $\varepsilon>0$. Then all elements of $\psi_m^0(P)$ are continuous (at $m$) and agree on $\supp P^X$.
\end{enumerate}
\unparskip
\end{corollary}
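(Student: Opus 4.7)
The hypothesis that $\supp P^X$ meets both $(m-\varepsilon,m)$ and $(m,m+\varepsilon)$ for every $\varepsilon>0$ forces $m\in(0,1)$ and $m\in\supp P^X$, and supplies sequences from $\supp P^X$ approaching $m$ from each side. The plan is to leverage part~(e) of Corollary~\ref{cor:proj}, which already tells us that any two elements of $\psi_m^0(P)$ coincide on $(\supp P^X)\setminus\{m\}$, and then use the continuous representative together with the one-sided limits of S-shaped functions at $m$ to pin down every element of $\psi_m^0(P)$ at the point $m$ itself.

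Concretely, fix a continuous element $f^*\in\psi_m^0(P)$ (guaranteed by hypothesis) and let $g$ be an arbitrary element of $\psi_m^0(P)$. First, I would observe that because $g\in\mathcal{F}^m$ is increasing and is convex on $[0,m]$ and concave on $[m,1]$, the one-sided limits $g(m^-):=\lim_{x\uparrow m}g(x)$ and $g(m^+):=\lim_{x\downarrow m}g(x)$ exist in $\mathbb{R}$, and monotonicity yields $g(m^-)\leq g(m)\leq g(m^+)$. Next, pick sequences $(x_n)$ in $\supp P^X\cap(m-1/n,m)$ and $(y_n)$ in $\supp P^X\cap(m,m+1/n)$; these exist by assumption. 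By part~(e), $g(x_n)=f^*(x_n)$ and $g(y_n)=f^*(y_n)$ for every $n$, so passing to the limit and using continuity of $f^*$ at $m$ gives $g(m^-)=f^*(m)=g(m^+)$. Sandwiching then forces $g(m)=f^*(m)$, so that $g$ and $f^*$ agree on all of $\supp P^X$, and moreover $g$ is continuous at $m$ (hence, together with the automatic continuity of convex/concave functions on the open intervals $(0,m)$ and $(m,1)$, continuous on $[0,1]$).

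There is no serious obstacle: the argument is essentially a sandwich computation. The only point requiring mild care is checking that the one-sided limits $g(m^\pm)$ genuinely exist as real numbers rather than $\pm\infty$; this is handled by monotonicity of $g$ together with the fact that $g$ is defined (and finite) at $m$, so $g(m^-)\leq g(m)\leq g(m^+)$ automatically rules out infinities. I would conclude by noting that the two assertions of~(g) follow at once: the identity $g(m)=f^*(m)$ extends the agreement of~(e) to all of $\supp P^X$, and the equality $g(m^-)=g(m)=g(m^+)$ gives continuity of $g$ at the only point in $[0,1]$ where continuity could fail.
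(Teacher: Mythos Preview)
Your argument is correct and is essentially the same as the paper's: both fix a continuous representative, use part~(e) to get agreement on $(\supp P^X)\setminus\{m\}$, and then exploit the two-sided accumulation of $\supp P^X$ at $m$ together with continuity of the reference function to force the value at $m$. Your version is simply more explicit about the one-sided limits and the sandwiching step, whereas the paper compresses this into the single phrase ``this forces $f(m)=h(m)$''.
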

\begin{remark}
If $m\in\Int(\csupp P^X)$, then $m,P^X$ satisfy condition (i) in Proposition~\ref{prop:clconv}(c), so $\psi_m^0(P)\neq\emptyset$ in this case by (d) above. Moreover, in condition (ii) in part (d), we need only insist that $f_P$ is bounded on $\Int(\csupp P^X)\cap (\tilde{m}-\varepsilon,\tilde{m}+\varepsilon)$ for some $\varepsilon>0$; indeed, setting $\tilde{z}:=\argmin_{x\in\csupp P^X}\,\abs{z-x}$ for $z\in [0,1]$, we can instead work with $\tilde{f}_P\colon z\mapsto f_P(\tilde{z})$, which is another regression function for $P$ that is bounded on $(\tilde{m}-\varepsilon,\tilde{m}+\varepsilon)$. As for (e,\,f,\,g), recall that all elements of $\mathcal{F}^m$ are continuous on $[0,1]\setminus\{m\}$, so being continuous on $[0,1]$ is equivalent to being continuous at $m$ for all such functions.
\end{remark}
% \begin{comment*}
% This result shows that if $P\in\mathcal{P}_m$, then a minimiser exists and is unique up to $\sim_{P^X}$ equivalence.
% An additional argument will be required in order to extend this result to a necessary and sufficient criterion on $m\in [0,1]$ and $P\in\mathcal{P}$ for $f\mapsto L(f,P)$ to attain its infimum over $\mathcal{F}^m$. Even if $P\notin\mathcal{P}_m$, in which case $\mathcal{F}_{P^X}^m$ is not closed in $\mathcal{L}^2(P^X)$, it may still be the case that some element of $\mathcal{F}_{P^X}^m$ minimises the $\mathcal{L}^2(P^X)$ distance to $[f_P]_{P^X}$.
% \end{comment*}

\unparskip
Next, we investigate the continuity of the maps $(m,P)\mapsto L_m^*(P)$ and $(m,P)\mapsto\psi_m^*(P)$ with respect to a suitable topology on $[0,1]\times\mathcal{P}$. Recall that for $q\in [1,\infty)$ and $d\in\N$, the \emph{$q$-Wasserstein distance} between probability measures $P_1,P_2$ on $\R^d$ is defined by $W_q(P_1,P_2):=\inf_{(X,Y)}\E(\norm{X-Y}^q)^{1/q}$, where the infimum is taken over all pairs of random variables $X,Y$ defined on a common probability space with $X\sim P_1$ and $Y\sim P_2$. It is a standard fact that $W_q(P_n,P)\to 0$ if and only if $P_n\cvd P$ and $\int_{\R^d}\norm{w}^q\,dP_n(w)\to\int_{\R^d}\norm{w}^q\,dP(w)$. 

In the result below, we equip $[0,1]\times\mathcal{P}$ with the product topology induced by the Euclidean metric on $[0,1]$ and the $W_2$ metric on $\mathcal{P}$. 
\begin{proposition}
\label{prop:cont}
Let $(m_n)_{n=1}^\infty$ be a sequence in $[0,1]$ that converges to some $m_0\in [0,1]$. Fix $P\in\mathcal{P}$ and the corresponding marginal distribution $P^X$ on $[0,1]$. Let $\tilde{m}_0:=\argmin_{x\in\csupp P^X}\,\abs{x-m_0}$. Let $(P_n)_{n=1}^\infty$ be any sequence of probability measures in $\mathcal{P}$ such that $W_2(P_n,P)\to 0$. Then

\unparskip
\begin{enumerate}[label=(\alph*)]
\item $\limsup_{n\to\infty}L_{m_n}^*(P_n)\leq L_{m_0}^*(P)$;
% i.e.\ $(x,P)\mapsto L_m^*(P)$ is upper semi-continuous at $(m,P)$.
\item $\liminf_{n\to\infty}L_{m_n}^*(P_n)\geq L_{m_0}^*(P)$ provided that $P^X(\{\tilde{m}_0\})=0$;
% i.e.\ $(x,P)\mapsto L_m^*(P)$ is lower semi-continuous at $(m,P)$.
\item $\lim_{n\to\infty}L_{m_n}^*(P)=L_{m_0}^*(P)$.
\end{enumerate}

\unparskip
Thus, for all $Q\in\mathcal{P}$, the map $m\mapsto L_m^*(Q)$ is continuous on $[0,1]$ and $L^*(Q):=\min_{m\in [0,1]}L_m^*(Q)$ is well-defined. Moreover,

\unparskip
\begin{enumerate}[label=(\alph*),resume]
\item $\sup_{m\in [0,1]}\,\abs{L_m^*(P_n)-L_m^*(P)}\to 0$ and $L^*(P_n)\to L^*(P)$ as $n\to\infty$ if $P^X(\{m\})=0$ for all $m\in [0,1]$.
\end{enumerate}

\unparskip
For $\delta\geq 0$ and $Q\in\mathcal{P}$, define
% the closed sets
$\mathcal{I}^{\delta}(Q):=\{m\in [0,1]:L_m^*(Q)\leq L^*(Q)+\delta\}$ and $\mathcal{I}^*(Q):=\mathcal{I}^0(Q)=\argmin_{m\in [0,1]}L_m^*(Q)$. Let $(\delta_n)$ be any deterministic, non-negative sequence such that $\delta_n\to 0$. Then

\unparskip	
\begin{enumerate}[label=(\alph*),resume]
\item $\sup_{m_n'\in \mathcal{I}^{\delta_n}(P_n)}\inf_{m^*\in \mathcal{I}^*(P)}\,\abs{m_n'-m^*}\to 0$ as $n\to\infty$ provided that $P^X(\{m\})=0$ for all $m\in [0,1]$.
\end{enumerate}

\unparskip	
If $P^X(\{\tilde{m}_0\})=0$, then the following hold as $n\to\infty$:

\unparskip	
\begin{enumerate}[label=(\alph*),resume]
\item $\sup_{f_n\in\psi_{m_n}^{\delta_n}(P_n)}\sup_{f^*\in\psi_{m_0}^*(P)}\norm{f_n-f^*}_{L^\infty(A,P^X)}\to 0$ for all closed sets $A\subseteq (\supp P^X)\setminus\{\tilde{m}_0\}$;
\item If $\psi_{\tilde{m}_0}^0(P)\neq\emptyset$, then $\sup_{f_n\in\psi_{m_n}^{\delta_n}(P_n)}\sup_{f^*\in\psi_{\tilde{m}_0}^0(P)}\sup_{x\in A}\,\abs{f_n(x)-f^*(x)}\to 0$ for all closed sets $A\subseteq (\supp P^X)\setminus\{\tilde{m}_0\}$.
% In cases where $m_0\neq\tilde{m}_0$, we can replace $\psi_{m_0}^0(P)$ above with $\psi_{\tilde{m}_0}^0(P)=\{f\in\tilde{\mathcal{F}}^{\tilde{m}_0}(P^X):L(f,P)=L_{\tilde{m}_0}^*(P)\}$, where $\tilde{\mathcal{F}}^{\tilde{m}_0}(P^X)$ is the set of $f\in L^2(P^X)$ such that $\restr{f}{E_{P^X}}$ is increasing and convex / concave, as in Proposition~\ref{prop:clconv}(d)
\end{enumerate}

\unparskip	
Suppose further that $m_0\in\Int(\csupp P^X)$ and $P^X(\{m_0\})=0$. Then $\psi_{m_0}^0(P)\neq\emptyset$ and the following hold as $n\to\infty$:

\unparskip	
\begin{enumerate}[label=(\alph*),resume]
\item $\sup_{f_n\in\psi_{m_n}^{\delta_n}(P_n)}\sup_{f^*\in\psi_{m_0}^*(P)}\norm{f_n-f^*}_{L^q(P^X)}\to 0$ for all $q\in [1,\infty)$;
\item $\sup_{f_n\in\psi_{m_n}^{\delta_n}(P_n)}\sup_{f^*\in\psi_{m_0}^0(P)}\sup_{x\in \supp P^X}\abs{(f_n-f^*)(x)}\to 0$ provided that all elements of $\psi_{m_0}^0(P)$ agree on $\supp P^X$.
\end{enumerate}

\unparskip
\end{proposition}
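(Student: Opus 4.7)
The plan is to prove the nine parts in order, bootstrapping from the numerical convergence of the loss in (a)--(e) to the convergence of the projections themselves in (f)--(i). For (a), I would start from any $f^* \in \psi_{m_0}^*(P)$, which is non-empty by Corollary~\ref{cor:proj}(a), and use the density of Lipschitz elements from Proposition~\ref{prop:clconv}(a) combined with the identification $\psi_{m_0}^*(P) = \psi_{\tilde{m}_0}^*(P)$ from Corollary~\ref{cor:proj}(c), to pick, for each $\varepsilon > 0$, a Lipschitz $g \in \mathcal{F}^{\tilde{m}_0}$ with $L(g, P) \leq L_{m_0}^*(P) + \varepsilon$. I would then construct $g_n \in \mathcal{F}^{m_n}$, Lipschitz with uniformly bounded constant, by perturbing $g$ on a shrinking interval near the moving inflection point, choosing the perturbation to restore convexity on $[0, m_n]$, concavity on $[m_n, 1]$ and monotonicity throughout; since $g$ is Lipschitz and $|m_n - \tilde{m}_0| \to 0$, we have $\norm{g_n - g}_\infty \to 0$, and because $\{g_n\}$ is uniformly Lipschitz and bounded, $W_2(P_n, P) \to 0$ gives $L(g_n, P_n) \to L(g, P)$. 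For (b), taking $f_n \in \psi_{m_n}^{1/n}(P_n)$, the bound from (a) forces $\norm{f_n}_{L^2(P_n^X)}$ to be uniformly bounded, and monotonicity allows Helly's selection theorem to extract a pointwise convergent subsequence $f_n \to f^\infty$ on a set of full $P^X$-measure; convexity on $[0, m_n]$ and concavity on $[m_n, 1]$ pass to the limit to yield $f^\infty \in \Cl\mathcal{F}_{P^X}^{m_0}$, with the hypothesis $P^X(\{\tilde{m}_0\}) = 0$ preventing mass loss at the moving inflection, and a $W_2$-based Fatou argument then gives $\liminf L(f_n, P_n) \geq L(f^\infty, P) \geq L_{m_0}^*(P)$. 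Part (c) is immediate from (a) and (b) with $P_n \equiv P$.

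For (d), part (c) already gives continuity (hence uniform continuity on the compact $[0, 1]$) of $m \mapsto L_m^*(Q)$ for each fixed $Q \in \mathcal{P}$. To upgrade pointwise convergence $L_m^*(P_n) \to L_m^*(P)$ to uniform convergence, I would argue by contradiction: if there were $m_{n_k}$ with $|L_{m_{n_k}}^*(P_{n_k}) - L_{m_{n_k}}^*(P)|$ bounded away from zero, a further subsequence would give $m_{n_k} \to m^*$, and applying (a)--(c) along this subsequence (whose hypotheses are met thanks to $P^X(\{m\}) = 0$ for all $m$) produces a contradiction. For (e), compactness of $[0, 1]$ combined with the inequality $L_{m_n'}^*(P_n) \leq L^*(P_n) + \delta_n$ passes, via (d) and $\delta_n \to 0$, to $L_{m^*}^*(P) \leq L^*(P)$ for any limit point $m^*$ of $(m_n')$, i.e.~$m^* \in \mathcal{I}^*(P)$.

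For (f)--(i), I would work with an arbitrary sequence $f_n \in \psi_{m_n}^{\delta_n}(P_n)$. By (d), $L(f_n, P_n)$ is uniformly bounded, so $\norm{f_n}_{L^2(P_n^X)}$ is uniformly bounded, and combined with monotonicity this yields a uniform $L^\infty$ bound on compact subsets of $\Int(\csupp P^X)$; Helly extracts a pointwise convergent subsequence $f_n \to f^\infty$ on a set of full $P^X$-measure, and the argument from (b) places $f^\infty$ in $\psi_{m_0}^*(P)$. For (f), on a closed set $A \subseteq (\supp P^X) \setminus \{\tilde{m}_0\}$, the fact that $m_n \to m_0$ means that eventually $A$ lies either in the convex region $[0, m_n)$ or in the concave region $(m_n, 1]$ of $f_n$; a uniformly bounded family of convex (respectively concave) functions on a compact interval is equicontinuous, so the pointwise convergence becomes uniform, and uniqueness of $[f^*]_{P^X}$ in $\Cl\mathcal{F}_{P^X}^{m_0}$ identifies the limit. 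Part (g) then follows from (f) together with the fact (Corollary~\ref{cor:proj}(e)) that all elements of $\psi_{\tilde{m}_0}^0(P)$ agree on $(\supp P^X)\setminus\{\tilde{m}_0\}$. For (h), the hypothesis $m_0 \in \Int(\csupp P^X)$ with $P^X(\{m_0\}) = 0$ activates Corollary~\ref{cor:proj}(d)(i) and makes $\psi_{m_0}^0(P)$ non-empty, and the uniform $L^2$-bound plus pointwise convergence gives $L^q$-convergence by Vitali's theorem. For (i), when all elements of $\psi_{m_0}^0(P)$ agree on $\supp P^X$, Corollary~\ref{cor:proj}(f) forces them all to be continuous at $m_0$, and equicontinuity of $(f_n)$ then extends across $m_0$, upgrading (f) to uniform convergence on the full support.

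The main obstacle is the delicate control required at the moving inflection point. The hypothesis $P^X(\{\tilde{m}_0\}) = 0$ in (b) is essential because, without it, the (potentially discontinuous) jump of an S-shaped function at its inflection point would contribute positive mass to $L(f_n, P_n)$ that could be lost in the limit, breaking lower semicontinuity. The Lipschitz perturbation step in (a) is subtle for the same reason: the convexity/concavity surgery near the shifting $m_n$ must be executed without breaking monotonicity, which requires careful matching of one-sided slopes and genuinely exploits the Lipschitz (rather than merely bounded) structure of $g$. Finally, extending equicontinuity across $m_0$ in (i) is delicate because a priori S-shaped functions can have unbounded slopes approaching the inflection point; one must use the continuity of $f^\infty$ at $m_0$, forced by the uniqueness hypothesis via Corollary~\ref{cor:proj}(f), together with the global $L^\infty$-bounds, to pinch the oscillation and complete the argument.
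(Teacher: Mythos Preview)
Your overall architecture tracks the paper's: (a) via a near-optimal Lipschitz element with shifted inflection point, (b) and (f)--(i) via subsequence extraction plus Fatou, (d)--(e) via continuous convergence on the compact interval $[0,1]$. There are, however, two genuine gaps.

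The more serious one is your claim that (c) is immediate from (a) and (b) with $P_n\equiv P$. Part (b) carries the hypothesis $P^X(\{\tilde m_0\})=0$, which (c) does \emph{not} assume, and Example~\ref{ex:1}(a) in the paper shows (b) can fail without it, so you cannot simply invoke (b). What rescues (c) is that when $P_n=P$ for all $n$, Skorokhod is unnecessary: you work with a fixed $(X,Y)\sim P$ and use pointwise convergence of a subsequence $f_{n_k}$ on \emph{all} of $\csupp P^X$ (including $\tilde m_0$, after extracting once more at that single point) to run Fatou directly; if $\tilde m_0$ is an atom, the $L^2(P^X)$ bound itself forces $\sup_k|f_{n_k}(\tilde m_0)|<\infty$, so the limit is finite there and lands in $\Cl\mathcal{F}_{P^X}^{m_0}$. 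The paper isolates exactly this wrinkle as Lemma~\ref{lem:subseqcvg}(vi). The second gap is in (h): a uniform $L^2$ bound does \emph{not} give uniform integrability of $|f_n|^q$ for $q\ge 2$, so Vitali alone does not deliver $L^q$ convergence for all $q\in[1,\infty)$. What is actually available, and what the paper uses, is a uniform $L^\infty$ bound on all of $[0,1]$: once you control $|f_n|$ at two interior points $a_L<m_0<a_R$ (your interior $L^\infty$ bound), convexity on $[0,m_n]$ and concavity on $[m_n,1]$ let you extrapolate linearly to bound $f_n(0)$ from below and $f_n(1)$ from above uniformly in $n$ (this is Lemma~\ref{lem:limsupunif}, and it is here that the hypothesis $m_0\in\Int(\csupp P^X)$ is genuinely used), after which dominated convergence handles every $q$. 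Incidentally, for (a) the paper sidesteps your local surgery entirely by the horizontal translation $h_\eta(x):=h_0\bigl(0\vee(x-\eta)\wedge 1\bigr)\in\mathcal{F}^{m_0+\eta}$, which converges uniformly to $h_0$ with no slope-matching required.
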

\begin{remark}
Since $[0,1]$ is compact, the conclusion of (e) is equivalent to the following: for any sequence $(m_n')$ with $m_n'\in\mathcal{I}^{\delta_n}(P_n)$ for all $n$, every subsequence of $(m_n')$ has a further subsequence that converges to an element of $\mathcal{I}^*(P)$.
% or: all subsequential limits lie in $\mathcal{I}^*(P)$
When $m_0\notin\supp P^X$, the conclusion of (i) is implied by (g). If instead $m_0\in\supp P^X$, then by Corollary~\ref{cor:proj}(f), the condition in (i) is satisfied if and only if all elements of $\psi_{m_0}^0(P)$ are continuous, and Corollary~\ref{cor:proj}(g) provides a sufficient criterion for this.
\end{remark}

\unparskip
Recall that $\mathcal{F}=\bigcup_{m\in [0,1]}\mathcal{F}^m$ denotes the set of S-shaped functions on $[0,1]$. For $P\in\mathcal{P}$ and $\delta\geq 0$, define $\psi^\delta(P):=\{f\in\mathcal{F}:L(f,P)\leq L^*(P)+\delta\}$, which is non-empty when $\delta>0$, and note that $\psi^\delta(P)\subseteq\bigcup_{m\in\mathcal{I}^\delta(P)}\psi_m^\delta(P)$. 
% $L_m^*(P)\leq L(f,P)\leq L^*(P)+\delta$
Also, let $\psi^*(P):=\bigcup_{m\in\mathcal{I}^*(P)}\psi_m^*(P)\supseteq\bigcup_{m\in\mathcal{I}^*(P)}\psi_m^0(P)=\psi^0(P)$. 
\begin{corollary}
\label{cor:cont}
Fix $P\in\mathcal{P}$ and the corresponding marginal distribution $P^X$ on $[0,1]$. Then

\unparskip
\begin{enumerate}[label=(\alph*)]
\item $\psi^0(P)\neq\emptyset$ if and only if $\psi_m^0(P)\neq\emptyset$ for some $m\in\mathcal{I}^*(P)\cap\csupp P^X$, which is guaranteed if at least one of the following holds:

\unparskip
\begin{enumerate}[label=(\roman*)]
\item $\mathcal{I}^*(P)\cap\Int(\csupp P^X)\neq\emptyset$;
\item $P$ has a regression function $f_P\colon [0,1]\to\R$ that is locally bounded at each $m\in\mathcal{I}^*(P)\cap\{\min(\csupp P^X),\max(\csupp P^X)\}$ for which $P^X(\{m\})=0$. 
\end{enumerate}

\unparskip
If (ii) holds, then $\psi_m^0(P)\neq\emptyset$ for all $m\in\mathcal{I}^*(P)\cap\csupp P^X$.
\end{enumerate}

\unparskip
Suppose that $P^X(\{m\})=0$ for all $m\in [0,1]$. Let $(P_n)_{n=1}^\infty$ be a sequence in $\mathcal{P}$ with $W_2(P_n,P)\to 0$ and let $(\delta_n)$ be any deterministic, non-negative sequence such that $\delta_n\to 0$. Then the following hold as $n\to\infty$:

\unparskip
\begin{enumerate}[label=(\alph*),resume]
\item $\sup_{f_n\in\psi^{\delta_n}(P_n)}\inf_{f^*\in\psi^*(P)}\norm{f_n-f^*}_{L^\infty(A,P^X)}\to 0$ for all closed sets $A\subseteq (\supp P^X)\setminus\mathcal{I}^*(P)$.
% $\psi^0(P)$ could be empty here
\item Assume that $\psi_m^0(P)\neq\emptyset$ for all $m\in\mathcal{I}^*(P)\cap\csupp P^X$ and let $\tilde{\mathcal{I}}^*(P)$ be the set of $m^*\in\mathcal{I}^*(P)$ such that either $m^*\notin\Int(\csupp P^X)$ or not all elements of $\psi_{m^*}^0(P)$ are continuous. Then $\sup_{f_n\in\psi^{\delta_n}(P_n)}\inf_{f^*\in\psi^0(P)}\sup_{x\in A}\,\abs{(f_n-f^*)(x)}\to 0$ for all closed sets $A\subseteq (\supp P^X)\setminus\tilde{\mathcal{I}}^*(P)$.
\item If $\mathcal{I}^*(P)\subseteq\Int(\csupp P^X)$, then $\sup_{f_n\in\psi^{\delta_n}(P_n)}\inf_{f^*\in\psi^0(P)}\norm{f_n-f^*}_{L^q(P^X)}\to 0$ for all $q\in [1,\infty)$.
% could replace $\inf_{f^*\in\psi^*(P)}$ with a infimum over a set of representatives from the individual $\psi_m^*(P)$ for $m\in\mathcal{I}^*(P)$, and similarly in places where $\psi^0(P)$ appears
\end{enumerate}

\unparskip
\end{corollary}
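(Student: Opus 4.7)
The backward direction is immediate: any $f \in \psi_m^0(P)$ with $m \in \mathcal{I}^*(P)$ satisfies $L(f,P) = L_m^*(P) = L^*(P)$ and $f \in \mathcal{F}^m \subseteq \mathcal{F}$. For the forward direction, given $f \in \psi^0(P)$ with inflection point $m$, the inequality $L_m^*(P) \leq L(f,P) = L^*(P)$ places $m \in \mathcal{I}^*(P)$; Corollary~\ref{cor:proj}(c) applied with $\tilde{m} = \argmin_{x \in \csupp P^X}|x-m|$ transfers non-emptiness to $\psi_{\tilde{m}}^0(P)$ and puts $\tilde{m} \in \mathcal{I}^*(P) \cap \csupp P^X$. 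For sufficiency under~(i), we pick $m \in \mathcal{I}^*(P) \cap \Int(\csupp P^X)$; since $\supp P^X$ meets every neighbourhood of $\min(\csupp P^X)$ and $\max(\csupp P^X)$, both $P^X([0,m])$ and $P^X([m,1])$ are strictly positive, so Proposition~\ref{prop:clconv}(c)(i) and Corollary~\ref{cor:proj}(d)(i) yield $\psi_m^0(P) \neq \emptyset$. Under~(ii), the same reasoning handles interior points, whereas any $m \in \mathcal{I}^*(P)$ with $m \in \{\min(\csupp P^X),\max(\csupp P^X)\}$ lies in $\csupp P^X$ with $\tilde{m}=m$; if $P^X(\{m\})>0$ then Proposition~\ref{prop:clconv}(c)(i) again applies, and otherwise local boundedness of $f_P$ at $m$ activates Corollary~\ref{cor:proj}(d)(ii).

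\textbf{Subsequence extraction for (b)--(d).} For any $f_n \in \psi^{\delta_n}(P_n)$ with inflection point $m_n$, the chain
\[
L^*(P_n) \leq L_{m_n}^*(P_n) \leq L(f_n,P_n) \leq L^*(P_n)+\delta_n
\]
shows that $m_n \in \mathcal{I}^{\delta_n}(P_n)$ and $f_n \in \psi_{m_n}^{\delta_n}(P_n)$. Arguing by contradiction, suppose a subsequence $(f_{n_k})$ violates the claimed convergence. Proposition~\ref{prop:cont}(e) gives $\inf_{m^* \in \mathcal{I}^*(P)}|m_{n_k}-m^*| \to 0$; continuity of $m \mapsto L_m^*(P)$ on $[0,1]$ (Proposition~\ref{prop:cont}(c)) makes $\mathcal{I}^*(P)$ closed in the compact interval $[0,1]$, so we may pass to a further subsequence along which $m_{n_k} \to m^*$ for some $m^* \in \mathcal{I}^*(P)$.

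\textbf{Proofs of (b) and (d).} Set $\tilde{m}^* = \argmin_{x \in \csupp P^X}|x-m^*|$. For (b), $A \subseteq (\supp P^X)\setminus \mathcal{I}^*(P)$ and Corollary~\ref{cor:proj}(c) (giving $\tilde{m}^* \in \mathcal{I}^*(P)$) force $\tilde{m}^* \notin A$; combined with $P^X(\{\tilde{m}^*\})=0$ (by the standing hypothesis), Proposition~\ref{prop:cont}(g) applied to the subsequence delivers $\|f_{n_k}-f^*\|_{L^\infty(A,P^X)} \to 0$ for any fixed $f^* \in \psi_{m^*}^*(P) \subseteq \psi^*(P)$, contradicting the failure hypothesis. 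For (d), the hypothesis $\mathcal{I}^*(P) \subseteq \Int(\csupp P^X)$ places $m^* \in \Int(\csupp P^X)$, so $\tilde{m}^*=m^*$, Corollary~\ref{cor:proj}(d)(i) gives $\psi_{m^*}^0(P)=\psi_{m^*}^*(P) \neq \emptyset$, and Proposition~\ref{prop:cont}(i) delivers the $L^q$-convergence to any fixed $f^* \in \psi_{m^*}^0(P) \subseteq \psi^0(P)$.

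\textbf{Part (c) and the main obstacle.} By Corollary~\ref{cor:proj}(c), $\tilde{m}^* \in \mathcal{I}^*(P) \cap \csupp P^X$, and the hypothesis of (c) yields $\psi_{\tilde{m}^*}^0(P) \neq \emptyset$. If $m^* \notin \Int(\csupp P^X)$, or if $m^* \in \Int(\csupp P^X)$ but not all elements of $\psi_{m^*}^0(P)$ are continuous, then $\tilde{m}^* \in \tilde{\mathcal{I}}^*(P)$, so the exclusion $A \cap \tilde{\mathcal{I}}^*(P)=\emptyset$ places $\tilde{m}^* \notin A$ and Proposition~\ref{prop:cont}(h) delivers uniform convergence on $A$ to any fixed $f^* \in \psi_{\tilde{m}^*}^0(P) \subseteq \psi^0(P)$, contradicting the failure hypothesis. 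The principal obstacle is the remaining case $m^* \in \Int(\csupp P^X)$ with all elements of $\psi_{m^*}^0(P)$ continuous (so $m^* \notin \tilde{\mathcal{I}}^*(P)$), in which $A$ may contain $\tilde{m}^* = m^*$. Here Proposition~\ref{prop:cont}(h) yields uniform convergence only on closed subsets of $A \setminus \{m^*\}$, so we must separately verify convergence at $m^*$: fixing $f^* \in \psi_{m^*}^0(P)$ and $\eta>0$ with $x_1 \in (m^*-\eta,m^*) \cap \supp P^X$ and $x_2 \in (m^*,m^*+\eta) \cap \supp P^X$ (which exist by $m^* \in \Int(\csupp P^X)$), monotonicity of $f_{n_k}$ gives $f_{n_k}(x_1) \leq f_{n_k}(m^*) \leq f_{n_k}(x_2)$; combining (h) at $x_1,x_2$ with continuity of $f^*$ at $m^*$ and letting $\eta \to 0$ forces $f_{n_k}(m^*) \to f^*(m^*)$. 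Pointwise convergence now holds everywhere on $A$; as each $f_{n_k}$ is monotone increasing on $[0,1]$ and $f^*$ is continuous on the compact set $A$, monotone pointwise convergence to a continuous limit automatically strengthens to uniform convergence, contradicting the failure hypothesis and completing the proof.
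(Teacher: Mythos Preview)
Your overall architecture matches the paper's: both reduce to a subsequence with $m_{n_k}\to m^*\in\mathcal{I}^*(P)$ via Proposition~\ref{prop:cont}(e) and then invoke the relevant parts of Proposition~\ref{prop:cont}. However, your citations of Proposition~\ref{prop:cont} are consistently off by one: where you write (g), (h), (i), the content you actually describe corresponds to parts (f), (g), (h) respectively (e.g.\ what you call ``(g)'' is the $L^\infty(A,P^X)$ statement with $f^*\in\psi_{m^*}^*(P)$, which is (f); what you call ``(i)'' is the $L^q$ statement, which is (h)). With this relabelling, your treatments of (a), (b), (d), and the case $\tilde m^*\in\tilde{\mathcal{I}}^*(P)$ of (c) coincide with the paper's.

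The genuine gap is in your ``good case'' of (c), namely $m^*\in\Int(\csupp P^X)$ with all elements of $\psi_{m^*}^0(P)$ continuous. The paper here simply applies Proposition~\ref{prop:cont}(i), whose proof (through Lemma~\ref{lem:subseqcvg}(ii) and Lemma~\ref{lem:convunif}) yields uniform convergence on the \emph{entire interval} $\csupp P^X$, not just on $\supp P^X$. Your custom sandwich argument instead requires $x_1\in(m^*-\eta,m^*)\cap\supp P^X$ and $x_2\in(m^*,m^*+\eta)\cap\supp P^X$ for every $\eta>0$, which you claim follows from $m^*\in\Int(\csupp P^X)$. This is false: $\Int(\csupp P^X)=(\min(\supp P^X),\max(\supp P^X))$, and $\supp P^X$ may well have a gap adjacent to $m^*$ on one side (e.g.\ $\supp P^X=[0,0.3]\cup[0.5,1]$ with $m^*=0.5$, which is compatible with the atomless hypothesis). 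In such a configuration one of $x_1,x_2$ fails to exist for small $\eta$, your sandwich bound on $f_{n_k}(m^*)$ is one-sided only, and the argument does not close. The paper's route avoids this precisely because Lemma~\ref{lem:subseqcvg}(ii) delivers convergence at every point of $\csupp P^X$, so the sandwich points can be taken from the full interval.
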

\begin{remark}
% In (c), recall from Corollary~\ref{cor:proj}(d) that $\psi_m^0(P)\neq\emptyset$ for all $m\in\Int(\csupp P^X)$, so the additional assumption here is that $\psi_m^0(P)\neq\emptyset$ for all $m\in\mathcal{I}^*(P)\cap\{\min(\csupp P^X),\max(\csupp P^X)\}$, for which condition (ii) in (a) is a sufficient criterion. 
In assertions (b)--(d), we see that $\psi^*(P)$ or $\psi^0(P)$ can be regarded as a `limiting set' $\mathcal{M}$ to which the sets $\mathcal{M}_n=\psi^{\delta_n}(P_n)$ converge, in the sense that $\sup_{f_n\in\mathcal{M}_n}\inf_{f\in\mathcal{M}}\,\rho(f_n,f)\to 0$ for each of three different pseudometrics $\rho$. In the proof, we establish a slightly stronger conclusion for each $\rho$: for any sequence $(f_n)$ with $f_n\in\mathcal{M}_n$ for all $n$, every subsequence of $(f_n)$ has a further subsequence that converges to an element of $\mathcal{M}$ with respect to $\rho$. 
% These are different forms of set convergence (recall the definition of the outer limit $\limsup\mathcal{M}_n$ of a sequence of sets, namely the set of all subsequential limits): the latter implies the former, which in turn implies that $\limsup\mathcal{M}_n\subseteq\mathcal{M}$ (i.e.\ the limits of all convergent subsequences lie in $\mathcal{M}$). If all the sets are subsets of some compact space, then these are all equivalent, but in general, neither of these implications is reversible
% If every subsequence has a convergent subsequence and $\limsup\mathcal{M}_n\subseteq\mathcal{M}$ holds, then the other convergence statements hold
Note that unlike in Proposition~\ref{prop:cont}(f)--(i), we take an infimum rather than a supremum over $\mathcal{M}$ in the convergence statements above. This is because we do not in general have $\rho(f,g)=0$ for all $f,g\in\mathcal{M}$ (in contrast to the sets $\psi_m^*(P)$ for each fixed $m\in [0,1]$). In Section~\ref{sec:contproof}, we also demonstrate through Examples~\ref{ex:1} and~\ref{ex:2} that if certain technical conditions in Proposition~\ref{prop:cont} and Corollary~\ref{cor:cont} are dropped, then some of the conclusions fail to hold in general.
\end{remark}

\unparskip
For a regression function $f_0\colon [0,1]\to\R$ (that need not be S-shaped) and a sequence of models~\eqref{eq:Model} indexed by $n\in\N$, we can now establish asymptotic convergence
% TODO: better description?
results for S-shaped LSEs and their inflection points, including under model misspecification. To this end, we apply the continuity results from the general projection theory above (specifically Proposition~\ref{prop:cont} and Corollary~\ref{cor:cont}) to the empirical distributions $\Pr_n:=n^{-1}\sum_{i=1}^n\delta_{(x_{ni},Y_{ni})}$. Recall that we write $\Pr_n^X:=n^{-1}\sum_{i=1}^n\delta_{x_{ni}}$.
\begin{proposition}
\label{prop:consistency}
Suppose that the following conditions hold:

\unparskip
\begin{enumerate}[label=(\roman*)]
\item $(\Pr_n^X)$ converges weakly to a distribution $P_0^X$ on $[0,1]$ satisfying $P_0^X(\{m\})=0$ for all $m\in [0,1]$; 
% TODO: try and relax this:
\item For some distribution $P_\xi$ with mean 0 and finite variance, we have $\xi_{n1},\dotsc,\xi_{nn}\iid P_\xi$ for each $n$;
\item $f_0$ is bounded on $[0,1]$ and continuous $P_0^X$-almost everywhere (i.e.\ the set of discontinuities of $f_0$ has $P_0^X$ measure 0).
\end{enumerate}

\unparskip
Let $P_0\in\mathcal{P}$ be the distribution of $(X,f_0(X)+\xi)$, where $X\sim P_0^X$ and $\xi\sim P_\xi$ are independent, 
% in the non-iid case where there may not be a $W_2$ limit, just take $P_0\in\mathcal{P}$ to be any distribution with marginal $P_0^X$ and regression function $f_0$
and define $L_m^*(P_0)$ for $m\in [0,1]$ and $L^*(P_0),\psi^0(P_0)$, $\mathcal{I}^*(P_0)$, $\tilde{\mathcal{I}}^*(P_0)$ as in Proposition~\ref{prop:cont} and Corollary~\ref{cor:cont}. Then $\psi^0(P_0)\neq\emptyset$, and

\unparskip
\begin{enumerate}[label=(\alph*)]
\item $\sup_{m\in[0,1]}\,\abs{L_m^*(\Pr_n)-L_m^*(P_0)}\cvp 0$ 
% no measurability issues here as the functions are continuous
and $L^*(\Pr_n)-L^*(P_0)\cvp 0$ as $n\to\infty$, where $L_m^*(\Pr_n)=S_n(\hat{f}_n^m)/n$ and $L^*(\Pr_n)=\min_{1\leq j\leq n}S_n(\hat{f}_n^{x_{nj}})/n$ for $m\in [0,1]$ and $n\in\N$.
\end{enumerate}

\unparskip
For each $n$, fix an LSE $\tilde{f}_n$ over $\mathcal{F}$, so that $\tilde{f}_n\in\psi^0(\Pr_n)$, and let $\tilde{m}_n$ be any inflection point of $\tilde{f}_n$. Then the following hold as $n\to\infty$:

\unparskip
\begin{enumerate}[label=(\alph*),resume]
\item $\inf_{m^*\in\mathcal{I}^*(P_0)}\,\abs{\tilde{m}_n-m^*}\cvp 0$;
\item $\inf_{f^*\in\psi^0(P_0)}\sup_{x\in A}\,\abs{(\tilde{f}_n-f^*)(x)}\overset{p\ast}{\to}0$ for any closed set $A\subseteq\supp P_0^X\setminus\tilde{\mathcal{I}}^*(P_0)$;
% TODO: measurability issues: state in terms of outer probability?
\item $\inf_{f^*\in\psi^0(P_0)}\,\norm{\tilde{f}_n-f^*}_{L^q(P_0^X)}\overset{p\ast}{\to}0$ for all $q\in [1,\infty)$, provided that $\mathcal{I}^*(P_0)\subseteq\Int(\csupp P_0^X)$.
\end{enumerate}

\unparskip
\end{proposition}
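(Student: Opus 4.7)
The plan is to reduce the proposition to the deterministic continuity theory via the single bridge statement $W_2(\mathbb{P}_n,P_0)\overset{p}{\to}0$, and then read off the four conclusions from Proposition~\ref{prop:cont}(d)(e) and Corollary~\ref{cor:cont}(c)(d) by the standard subsequence-in-probability argument. Non-emptiness of $\psi^0(P_0)$ is immediate from Corollary~\ref{cor:cont}(a)(ii): the regression function $f_0$ is bounded on $[0,1]$ by assumption~(iii), hence locally bounded at every point, so $\psi_m^0(P_0)\neq\emptyset$ for every $m\in\mathcal{I}^*(P_0)\cap\csupp P_0^X$.

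The substantive step is to prove $W_2(\mathbb{P}_n,P_0)\overset{p}{\to}0$, i.e.\ weak convergence in probability together with $\int y^2\,d\mathbb{P}_n\overset{p}{\to}\int y^2\,dP_0$. For weak convergence, fix a bounded continuous $g\colon[0,1]\times\R\to\R$ and set $h_g(x):=\E[g(x,f_0(x)+\xi)]$. By dominated convergence, $h_g$ is bounded and continuous at every point where $f_0$ is continuous, hence is bounded and $P_0^X$-a.e.\ continuous by~(iii). Since $\mathbb{P}_n^X$ converges weakly to $P_0^X$, the portmanteau theorem applied to bounded $P_0^X$-a.e.\ continuous integrands gives $n^{-1}\sum_{i=1}^n h_g(x_{ni})\to\int g\,dP_0$. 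The fluctuation $\int g\,d\mathbb{P}_n-n^{-1}\sum_i h_g(x_{ni})$ has mean zero and variance at most $4\|g\|_\infty^2/n$ because the $\xi_{ni}$ are row-wise independent, so Chebyshev's inequality closes the gap. For the second moments, expand $Y_{ni}^2=f_0^2(x_{ni})+2f_0(x_{ni})\xi_{ni}+\xi_{ni}^2$: the first term contributes $\int f_0^2\,dP_0^X$ by the same portmanteau argument applied to the bounded, $P_0^X$-a.e.\ continuous function $f_0^2$; the cross term has mean zero and variance $O(1/n)$; the last averages to $\Var(\xi)$ by the classical WLLN. Adding these identifies the limit as $\int y^2\,dP_0$, completing the $W_2$ convergence.

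With this in hand, I invoke the subsequence principle: every subsequence of $(\mathbb{P}_n)$ has a further subsequence along which $W_2(\mathbb{P}_{n_k},P_0)\to 0$ almost surely, and on that full-probability event all deterministic conclusions of Proposition~\ref{prop:cont} and Corollary~\ref{cor:cont} apply to $(\mathbb{P}_{n_k})$. For part~(a), Proposition~\ref{prop:cont}(d) yields $\sup_m\abs{L_m^*(\mathbb{P}_{n_k})-L_m^*(P_0)}\to 0$ and the $L^*$ statement almost surely, upgrading to the in-probability form. For part~(b), observe that if $\tilde m_n$ is any inflection point of the LSE $\tilde f_n\in\mathcal{F}^{\tilde m_n}$, then $L_{\tilde m_n}^*(\mathbb{P}_n)\leq L(\tilde f_n,\mathbb{P}_n)=L^*(\mathbb{P}_n)$, so $\tilde m_n\in\mathcal{I}^*(\mathbb{P}_n)$; Proposition~\ref{prop:cont}(e) with $\delta_n\equiv 0$ then gives $\inf_{m^*\in\mathcal{I}^*(P_0)}\abs{\tilde m_n-m^*}\to 0$ on the chosen subsequence. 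Parts~(c) and~(d) follow identically from Corollary~\ref{cor:cont}(c) and~(d) respectively, where outer probability is used because the suprema/infima over uncountable collections of functions may fail to be measurable.

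The main obstacle is the joint verification of weak and second-moment convergence under the comparatively weak regularity imposed on $f_0$ (merely bounded and $P_0^X$-a.e.\ continuous) together with only finite-variance errors; the delicate point is that the portmanteau step must be carried out for $P_0^X$-a.e.\ continuous integrands rather than genuinely continuous ones, which is where assumptions~(i) (no atoms) and~(iii) combine to save the argument. Once $W_2(\mathbb{P}_n,P_0)\overset{p}{\to}0$ is secured, the rest of the proof is an almost mechanical translation through the general projection continuity framework already established in Proposition~\ref{prop:cont} and Corollary~\ref{cor:cont}.
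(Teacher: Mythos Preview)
Your proposal is correct and follows the same high-level strategy as the paper: reduce everything to $W_2(\mathbb{P}_n,P_0)\overset{p}{\to}0$, then apply the deterministic continuity results of Proposition~\ref{prop:cont} and Corollary~\ref{cor:cont} via the subsequence principle. The non-emptiness argument and the second-moment decomposition are essentially identical to the paper's.

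The one genuine technical difference lies in how weak convergence of $\mathbb{P}_n$ to $P_0$ is established. The paper first shows that the \emph{joint} empirical measure $\tilde{\mathbb{P}}_n$ of the pairs $(x_{ni},\xi_{ni})$ converges weakly to $P_0^X\otimes P_\xi$ \emph{almost surely}, using the countable convergence-determining class of rational rectangles together with Hoeffding's inequality and Borel--Cantelli, and then pushes this forward through the map $(x,z)\mapsto (x,f_0(x)+z)$ via Skorokhod's representation. Your route is more direct: for each bounded continuous test function $g$ you condition out the noise, reduce to the deterministic integral of $h_g(x)=\E[g(x,f_0(x)+\xi)]$ against $\mathbb{P}_n^X$, and invoke the portmanteau lemma for bounded $P_0^X$-a.e.\ continuous integrands. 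This is slicker and avoids the Skorokhod step, at the cost of yielding weak convergence only \emph{in probability} rather than almost surely---which is all that is needed here. Both arguments hinge on assumption~(iii) at exactly the same place (the a.e.\ continuity of $f_0$ making the relevant integrand a.e.\ continuous), so neither is more general than the other.
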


\unparskip
Using the full strength of Proposition~\ref{prop:cont} and Corollary~\ref{cor:cont}, we see that for a sequence of non-negative tolerances $\delta_n\to 0$, the conclusions above extend to
%% we find that that the conclusions above are valid in slightly greater generality: 
%% Indeed, in view of Corollary~\ref{cor:cont}(a)(ii), we can replace the boundedness assumption in condition (iii) with the weaker requirement that $f_0$ is locally bounded at every $x\in\{\min(\Int\csupp P_0^X),\max(\Int\csupp P_0^X)\}$. In addition, 
sequences $(\tilde{f}_n)$ where each $\tilde{f}_n$ takes values in $\psi^{\delta_n}(\Pr_n)$, the set of approximate $\delta_n$-minimisers of $f\mapsto S_n(f)$ over $\mathcal{F}$.

In the correctly specified setting where $f_0\in\mathcal{F}^{m_0}$ for some unique $m_0\in [0,1]$, Proposition~\ref{prop:consistency} specialises to the consistency result stated as Proposition~\ref{cor:consistency} in the main text.
% for the LSE $\hat{f}_n^{m_0}$ over $\mathcal{H}^{m_0}$ (when $m_0$ is known) and LSEs $\tilde{f}_n$ over $\mathcal{F}$ (when $m_0$ is unknown). For simplicity, we focus on the case where $m_0$ is the unique inflection point of $f_0$, although analogous conclusions can be derived from Propositions~\ref{prop:cont} and~\ref{prop:consistency} when $m_0$ is not unique.

\hfparskip
\section{Proofs for Section~\ref{sec:appendix}}
\label{sec:compproofs}
For reference, we state a result from convex analysis that generalises Lemmas~\ref{lem:cone} and~\ref{lem:projactive}.
\begin{lemma}
\label{lem:projconv}
Let $\Lambda\subseteq\R^n$ be a non-empty closed, convex set. For each $y\in\R^n$, there exists a unique projection of $y$ onto $\Lambda$, given by $\Pi_\Lambda(y)=\argmin_{u\in\Lambda}\norm{u-y}$, and we have the following:
% which we refer to as the \emph{projection} of $y$ onto $\Lambda$.

\unparskip
\begin{enumerate}[label=(\alph*)]
\item $\Pi_\Lambda(y)$ is the unique $\hat{y}\in\Lambda$ for which $\ipr{u-\hat{y}}{y-\hat{y}}\leq 0$ for all $u\in\Lambda$.
\item For each $u\in\Lambda$, we have $\Pi_\Lambda^{-1}(\{u\})=u+N_\Lambda(u)$, where $N_\Lambda(u):=\{v\in\R^n\setminus\{0\}:\ipr{v}{\tilde{u}}\leq\ipr{v}{u}\text{ for all }\tilde{u}\in\Lambda\}\cup\{0\}$ is the \emph{normal cone} of $\Lambda$ at $u$.
\end{enumerate}

\unparskip
Furthermore, each element of $\Lambda$ is contained in the relative interior of a unique face of $\Lambda$. For each face $F\subseteq\Lambda$, we have the following:

\unparskip
\begin{enumerate}[label=(\alph*),resume]
\item There is a closed convex cone $N_\Lambda(F)$ such that $N_\Lambda(u)=N_\Lambda(F)$ for all $u\in\relint F$, and $\Pi_\Lambda^{-1}(\relint F)=(\relint F)+N_\Lambda(F)$. If $u\in\relint F$ and $v\in N_\Lambda(F)$, then $\Pi_\Lambda(u+v)=u$.
\item For all $y\in\Pi_\Lambda^{-1}(\relint F)$, we have $\Pi_\Lambda(y)=\Pi_{\aff(F)}(y)$, where $\aff(F)$ denotes the affine hull of $F$, i.e.\ the smallest affine subspace containing $F$.
\item If in addition $\Lambda$ is a finitely generated cone, then $F$ and $N_\Lambda(F)$ are also finitely generated cones, and $\Span(F)$ and $\Span\bigl(N_\Lambda(F)\bigr)$ are complementary orthogonal subspaces. Thus, $\Pi_\Lambda^{-1}(\relint F)=(\relint F)+N_\Lambda(F)$ is an $n$-dimensional convex cone (with non-empty interior).
% TODO: give a more explicit description of this cone
\end{enumerate}

\unparskip
\end{lemma}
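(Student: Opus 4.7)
The plan is to prove the statements in the stated order, each building on the previous ones. Existence and uniqueness of $\Pi_\Lambda(y)$ come from strict convexity and coercivity of $u\mapsto\norm{u-y}^2$ combined with closedness of $\Lambda$: a minimising sequence is bounded, admits a subsequence converging to some $\hat{y}\in\Lambda$, and strict convexity rules out two distinct minimisers. Part~(a) is the first-order variational inequality: from $\norm{(1-t)\hat{y}+tu-y}^2\geq\norm{\hat{y}-y}^2$ for $u\in\Lambda$ and $t\in[0,1]$, expansion followed by letting $t\downarrow 0$ gives $\ipr{u-\hat{y}}{y-\hat{y}}\leq 0$; the converse follows by expanding $\norm{u-y}^2=\norm{u-\hat{y}}^2-2\ipr{u-\hat{y}}{y-\hat{y}}+\norm{\hat{y}-y}^2$, and uniqueness from adding the inequalities for two candidate minimisers. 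Part~(b) is then a direct rewriting of~(a) in terms of the normal cone $N_\Lambda(u)$.

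For the face decomposition, I will cite the standard partition of a non-empty convex set into the relative interiors of its faces. Part~(c) reduces to showing $N_\Lambda(u_1)=N_\Lambda(u_2)$ whenever $u_1,u_2\in\relint F$. Since $u_1\in\relint F$ and $u_2\in F$, there exists $\varepsilon>0$ with $u_1+\varepsilon(u_1-u_2)\in\relint F\subseteq\Lambda$; applying the definition of $N_\Lambda(u_1)$ both to $u_2$ and to $u_1+\varepsilon(u_1-u_2)$ forces $\ipr{v}{u_1-u_2}=0$, whence $N_\Lambda(u_1)\subseteq N_\Lambda(u_2)$ by rearrangement, and equality follows by symmetry. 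The identity $\Pi_\Lambda^{-1}(\relint F)=(\relint F)+N_\Lambda(F)$ and the final assertion of~(c) are then immediate from~(b).

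For part~(d), write $y=u+v$ with $u=\Pi_\Lambda(y)\in\relint F$ and $v\in N_\Lambda(F)$. For any $d\in\aff(F)-u$, the fact that $u\in\relint F$ yields $u\pm td\in\relint F\subseteq\Lambda$ for $\abs{t}$ sufficiently small; applying~(a) with $\pm t$ gives $\pm t\ipr{d}{v}\leq 0$, hence $\ipr{d}{v}=0$. Thus $v\perp(\aff(F)-u)$, which is exactly the characterisation of $u$ as the orthogonal projection of $y$ onto $\aff(F)$.

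Part~(e) is the most delicate step. First, since $\Lambda$ is a cone with $0\in\Lambda$, every face $F$ contains $0$ (apply the face property to the decomposition $u=\tfrac{1}{2}(2u)+\tfrac{1}{2}\cdot 0$ for any $u\in F$), so $F$ is itself a cone and $\aff(F)=\Span(F)$. I will then appeal to the Farkas--Minkowski--Weyl theorem to obtain a dual polyhedral description of $\Lambda$, identify each face as $\Lambda\cap H$ for an appropriate supporting hyperplane, and read off that both $F$ and $N_\Lambda(F)$ are finitely generated. The inclusion $N_\Lambda(F)\subseteq\Span(F)^\perp$ is immediate from~(d), and the reverse inclusion of spans follows either from the explicit description of $N_\Lambda(F)$ as the positive hull of the outward normals to the constraints active on $F$, or from a dimension count based on the fact that the (finitely many) sets $\Pi_\Lambda^{-1}(\relint F)$ partition $\R^n$. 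The final claim that $\Pi_\Lambda^{-1}(\relint F)$ has non-empty interior then follows from $\Span(F)\oplus\Span(N_\Lambda(F))=\R^n$. I expect the main obstacle to be part~(e), where care is required to invoke the polyhedral-cone machinery precisely enough for both finite generation of $N_\Lambda(F)$ and the orthogonal complementarity of spans; by contrast, parts~(a)--(d) reduce to one-line applications of the variational inequality from~(a).
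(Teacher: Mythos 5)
Your proposal is correct and runs in the same spirit as the paper's proof, but is substantially more self-contained. The paper cites \citet[Section~1.2 and Lemma~2.2.2]{Sch14} for (a), (b) and the first assertion in (c), and cites \citet[Theorem~2.4.9 and (2.25)]{Sch14} for (e), only giving short direct arguments for the remainder of (c) and for (d). You instead derive (a) and (b) from scratch via the standard variational-inequality argument, which is fine, and your proof that $N_\Lambda(u_1)=N_\Lambda(u_2)$ for $u_1,u_2\in\relint F$ (applying the defining inequality at $\tilde{u}=u_2$ and at $\tilde{u}=u_1+\varepsilon(u_1-u_2)\in\relint F$ to force $\ipr{v}{u_1}=\ipr{v}{u_2}$) is the clean, standard way to make precise what the paper outsources to Schneider. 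Your treatment of (d) via $u\pm td\in\relint F$ for $d\in\aff(F)-u$ and small $\abs{t}$ is essentially identical to the paper's.

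For (e), both you and the paper defer to polyhedral-cone machinery, so your level of rigour here is comparable to the paper's; the observation that every non-empty face of a cone contains $0$ and is itself a cone, so $\aff(F)=\Span(F)$, is correct and useful. Of the two routes you propose for $\Span(N_\Lambda(F))=\Span(F)^\perp$, the first one works: writing $\Lambda=\{x:\ipr{a_i}{x}\le 0,\,i\in[m]\}$ with $I$ the active set on $F$, one has $N_\Lambda(F)=\{\sum_{i\in I}\lambda_i a_i:\lambda_i\geq 0\}$ and $\Span(F)=\{x:\ipr{a_i}{x}=0,\,i\in I\}$, so the two spans are orthogonal complements. Be aware, though, that the second route you mention (a dimension count from the finite partition $\R^n=\bigsqcup_F\Pi_\Lambda^{-1}(\relint F)$) does not on its own establish that every cell is $n$-dimensional: a finite partition of $\R^n$ into relatively open convex pieces can certainly contain lower-dimensional pieces (e.g.\ the two open half-planes together with the line $\{y=0\}$ in $\R^2$). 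You need the explicit description of $N_\Lambda(F)$ as the conical hull of the active normals, or the cited result of Schneider, to close (e); your first alternative does exactly that.
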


\unparskip
\begin{proof}
For (a,\,b) and the first assertion in (c), see~\citet[Section~1.2]{Sch14}, and (2.3) and Lemma~2.2.2 in~\citet[Section~2.2]{Sch14}. Using these, we now complete the proofs of (c,\,d,\,e).

(c) By the definition of $N_\Lambda(F)$, we have $\ipr{v}{\tilde{u}-u}\leq 0$ for all $v\in N_\Lambda(F)$, $\tilde{u}\in\relint F$ and $u\in\Lambda$, with equality when $u\in\relint F$. It now follows from (a) that $\Pi_\Lambda(u+v)=u$ for all $u\in\relint F$ and $v\in N_\Lambda(F)$.

(d) Take any $u\in\aff(F)$. Since $\Pi_\Lambda(y)\in\relint F$, we have $\Pi_\Lambda(y)+\lambda u$ for some sufficiently small $\lambda>0$, so $\ipr{\lambda u}{y-\Pi_\Lambda(y)}=0$. Thus, $\Pi_\Lambda(y)\in\aff(F)$ and $\ipr{u}{y-\Pi_\Lambda(y)}=0$ for all $u\in\aff(F)$, so indeed $\Pi_\Lambda(y)=\Pi_{\aff(F)}(y)$.

(e) For a finitely generated cone $\Lambda$, this follows from Theorem~2.4.9 and (2.25) in~\citet[Section~2.2]{Sch14}. 
\end{proof}

\unparskip
By applying Lemma~\ref{lem:cone}, we can give an alternative self-contained proof of Lemma~\ref{lem:projactive} for the cone $\Lambda^j\subseteq\R^j$ of increasing convex sequences based on $x_1,\dotsc,x_j$, whose generators $\pm u^0,u^1,\dotsc,u^{j-1}$ are specified in the paragraph below~\eqref{eq:Thetak}. For $A\subseteq [j-1]$, recall from Remark~\ref{rem:conefaces} that we write $P_A\in\R^{j\times j}$ for the matrix that represents the orthogonal projection $\Pi_{\mathcal{L}_A}$ onto $\mathcal{L}_A=\Span\{u^{\ell'}:\ell'\in A\cup\{0\}\}$, the subspace consisting of all $v\in\R^j$ whose knots lie in $A$. 

\unparskip
\begin{proof}[Proof of Lemma~\ref{lem:projactive}]
For each $\tilde{v}\in\{v',v''\}$, we have $\Pi_\Lambda(\tilde{v})\in F_A\subseteq\Img(U_A)$, and Lemma~\ref{lem:cone} implies that $\ipr{u^\ell}{\tilde{v}-\Pi_\Lambda(\tilde{v})}=0$ for all $\ell\in A\cup\{0\}$. This shows that $P_A(\tilde{v})=\Pi_\Lambda(\tilde{v})\in\relint F_A$ for $\tilde{v}\in\{v',v''\}$. Now fix $t\in [0,1]$ and $v:=(1-t)v'+tv''$. Then $P_A(v)=(1-t)P_A(v')+tP_A(v'')=(1-t)\Pi_\Lambda(v')+t\Pi_\Lambda(v'')\in\relint F_A$ by the convexity of $\relint F_A$. In addition, by applying Lemma~\ref{lem:cone} to $\Pi_\Lambda(v'),\Pi_\Lambda(v'')$, we deduce that
\[\ipr{u^\ell}{v-P_A(v)}=(1-t)\ipr{u^\ell}{v'-\Pi_\Lambda(v')}+t\ipr{u^\ell}{v''-\Pi_\Lambda(v'')}\leq 0\]
for all $0\leq\ell\leq j-1$, with equality if $\ell\in A\cup\{0\}$ (in view of the definition of $P_A$). It follows from Lemma~\ref{lem:cone} that $\Pi_\Lambda(v)=P_A(v)\in\relint F_A$, as required.
\end{proof}

\deparskip
\begin{proof}[Proof of Lemma~\ref{lem:terminate}]
For (iv), we know from (iii) that $\Pi_{\Lambda^j}\bigl(v(t)\bigr)=P_{A_r}v(t)$ for all $t\in [t_r,t_{r+1}]$ after an iteration of~(II). By applying Lemma~\ref{lem:projactive}, we deduce that there exist $\eta>0$ and $A'\subseteq [j-1]$ with $A' \neq A_r$ such that $\hat{v}(t):=\Pi_{\Lambda^j}\bigl(v(t)\bigr)=P_{A'}v(t)\in\Lambda^j\cap\mathcal{L}_{A'}=\{u\in\Lambda^j:A(u)\subseteq A'\}$ for all $t\in [t_{r+1},t_{r+1}+\eta]$. Now in (IV), note that 

\unparskip
\begin{itemize}[leftmargin=0.4cm]
\item If $\ell\in A_r\setminus A_r^-$, then $\lambda_\ell\bigl(\hat{v}(t_{r+1})\bigr)=\beta_\ell(t_{r+1})>0$, so $\ell\in A\bigl(\hat{v}(t_{r+1})\bigr)\subseteq A'$;
\item If $\ell\in A'$, then $\gamma_\ell(t_{r+1})=\ipr{u^\ell}{v(t_{r+1})-\hat{v}(t_{r+1})}=\ipr{u^\ell}{(I-P_{A'})v(t_{r+1})}=0$, so $\ell\in\{1\leq\ell'\leq j-1:\gamma_{\ell'}(t_{r+1})=0\}=A_r\cup A_r^+$.
\end{itemize}

\unparskip
Thus, $A_r\setminus A_r^-\subseteq A'\subseteq A_r\cup A_r^+$, so in all cases, (IV) is guaranteed to find subsets $A^\pm\subseteq A_r^\pm$ such that when we take $A_{r+1}=(A_r\setminus A^-)\cup A^+$, the next iteration of (II) strictly increases $t$. In particular, this always happens in scenario (a) where $\abs{A_r^-\cup A_r^+}=1$, 
% $A'\neq A_r$
since we necessarily have $A'=(A_r\setminus A_r^-)\cup A_r^+=A_{r+1}$ in this case.

% if $t_r<t_{r+1}$ by convention (see the comment after (v)), then 
For (v), recall again that for each $r\in\N_0$, we have $\Pi_{\Lambda^j}\bigl(v(t)\bigr)=P_{A_r}v(t)\equiv\hat{v}_r(t)$ for all $t\in [t_r,t_{r+1}]$. As noted in (II), $\lambda_\ell\bigl(\hat{v}_r(t)\bigr)$ and $\ipr{u^\ell}{v(t)-\hat{v}_r(t)}$ vary linearly with $t$ for all $\ell\in [j-1]$, so it follows from this and~\eqref{eq:tr+1} that
\[t_{r+1}=\sup\,\bigl\{t\geq t_r:\lambda_\ell\bigl(\hat{v}_r(t)\bigr)\geq 0,\,\ipr{u^\ell}{v(t)-\hat{v}_r(t)}\leq 0\text{ for all }1\leq\ell\leq j-1\bigr\}.\]
Thus, by Lemma~\ref{lem:cone}, we cannot have $\hat{v}_r(t)=\Pi_{\Lambda^j}\bigl(v(t)\bigr)$ for any $t>t_{r+1}$, so $A_r\neq A_{r'}$ for all $r'>r$, as claimed.
\end{proof}
% however we modify the active sets according to (IV)

\unparskip
\textbf{Degeneracies in Algorithm~\ref{alg:coneproj}}: It can be verified that there is a set $A\subseteq\R^j\times\R^j$ of Lebesgue measure 0 such that if $(v(0),v(1))\notin A$, then no degeneracies occur on the trajectory of the algorithm.  Thus, degeneracies are rarely an issue when $v(0),v(1)$ are obtained from simulated or real data rather than artificially constructed (see Example~\ref{ex:noiseless}). To avoid them in practice,~\citet[page~73]{FM89} and~\citet[page~28]{Mey99} suggest slightly perturbing $v(0),v(1)$ or some intermediate $v(t_r)$. The approach we outline in Stage (IV) covers all eventualities in the degenerate scenario (b), but this can be time-consuming when $\abs{A_r^-\cup A_r^+}$ is large. 

In the special case where $u=v(0)-v(1)$ is a positive multiple of $u^{j-1}=e_j=(0,\dotsc,0,1)\in\R^j$ (which is of particular relevance in the \texttt{SeqConReg} procedure in Section~\ref{sec:computation}), a more efficient alternative to (IV) is as follows:

\unparskip
\begin{enumerate}[label=(\Roman*')]
\setcounter{enumi}{3}
\item Instead define $A_r^-:=\{\ell\in A_r:\beta_\ell(t_{r+1})=0,\,\hat{\lambda}_\ell^{A_r}(u)>0\}$ and $A_r^+:=\{\ell\in A_r^c:\gamma_\ell(t_{r+1})=0,\,\hat{\zeta}_\ell^{A_r}(u)<0\}$, and let $\ell_{\max}:=\max(A_r^-\cup A_r^+)$. Let 
$A_{r+1}:=A_r\setminus\{\ell_{\max}\}$ if $\ell_{\max}\in A_r^-$ and otherwise let $A_{r+1}:=A_r\cup\{\ell_{\max}\}$ if $\ell_{\max}\in A_r^+$. Then execute (II) and (III) with this $A_{r+1}$ (and $r+1$ in place of $r$ throughout).
% \[A_{r+1}:=
%\begin{cases}
% A_r\setminus\{\ell\}&\;\text{if }\ell_{\max}\in A_r^-\\
% A_r\cup\{\ell\}&\;\text{if }\ell_{\max}\in A_r^+
% \end{cases}
% \]
\end{enumerate}

\unparskip
If there is a degeneracy at $t_{r+1}$, then when we run this modified algorithm, there may be several subsequent iterations of (II) in which $t$ does not increase (i.e.\ we remain at $t_{r+1}$). Nevertheless, the choice of $\ell_{\max}=\max(A_r^-\cup A_r^+)$ in (IV') ensures that property (iv) still holds, and hence that the algorithm terminates with the exact solution (usually after fewer iterations than in the original). 
% We defer a detailed justification of this to Section~\ref{sec:compproofs}, but mention here that it exploits some specific structural features of the generators $\pm u^0,u^1,\dotsc,u^{j-1}$ of $\Lambda^j$, which are captured by Lemma~\ref{lem:resipr}.
% We are now in a position to describe the update step for our sequential algorithm / find
% Otherwise when $Y_j<\hat{f}_{1,j-1}(x_j)$, we can apply (IV') above to handle potential degeneracies
% \red{We can also say the following about the composition of the `active sets' in Algorithm~\ref{alg:coneproj}:}
\begin{proposition}
\label{prop:ejactive}
Suppose that $u=v(0)-v(1)$ is a positive multiple of $u^{j-1}=e_j\in\R^j$. Then with modification (IV'), Algorithm~\ref{alg:coneproj} terminates with the correct solution after finitely many steps, and the following hold for any $r\in\N_0$:

\unparskip
\begin{enumerate}[label=(\alph*)]
\item $\max A_r\geq\max A_{r+1}$; in other words, if $\max A_r<\ell\leq j-1$, then $\ell\notin A_{r'}$ for any $r'>r$.
\item Let $\ell_r:=\max(\{\ell\in A_r:\ell+1\in A_r\}\cup\{0\})$. Then in~\eqref{eq:algconeproj}, we have $\hat{\lambda}_\ell^{A_r}(u)\equiv\lambda_\ell(P_{A_r}u)=0$ for all $0\leq\ell\leq\ell_r-1$ and $\hat{\zeta}_\ell^{A_r}(u)\equiv\ipr{u^\ell}{(I-P_{A_r})u}=0$ for all $0\leq\ell\leq\ell_r+1$.
\end{enumerate}

\unparskip
\end{proposition}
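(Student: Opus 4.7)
The plan is to prove part (b) first, since both the primal and dual identities it yields feed directly into the analysis of (a), and termination then follows from a standard finite-active-set argument.

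For (b), the central observation is that $P_{A_r}u$ in fact lies in the strictly smaller subspace $\mathcal{L}_{A_r}^0 := \Span\{u^s : s \in A_r,\, s \geq \ell_r\}$, which one checks is $\{v \in \mathcal{L}_{A_r} : v_i = 0 \text{ for all } i \leq \ell_r\}$; once this is known, translating via~\eqref{eq:primal} yields $\hat\lambda_\ell^{A_r}(u) = 0$ for $\ell \leq \ell_r - 1$ immediately. I would prove the subspace claim by setting $\tilde v := P_{\mathcal{L}_{A_r}^0}u$ and extending the orthogonality $\ipr{u - \tilde v}{w} = 0$ from $w \in \mathcal{L}_{A_r}^0$ to every $w \in \mathcal{L}_{A_r}$. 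For each missing generator---namely $u^0$ and $u^s$ with $s \in A_r \cap [1, \ell_r - 1]$---I would decompose it into a part supported on $\{x_1, \ldots, x_{\ell_r}\}$ (annihilated by the inner product, since both $u = ce_j$ and elements of $\mathcal{L}_{A_r}^0$ vanish there) and a part supported on $\{x_{\ell_r + 1}, \ldots, x_j\}$ (which itself lies in $\mathcal{L}_{A_r}^0$). The crucial structural input is $\mathbf{1}_{\{i > \ell_r\}} = (u^{\ell_r} - u^{\ell_r + 1})/(x_{\ell_r + 1} - x_{\ell_r}) \in \mathcal{L}_{A_r}^0$, which exploits precisely the defining property $\ell_r, \ell_r + 1 \in A_r$. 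The dual identity $\hat\zeta_\ell^{A_r}(u) = 0$ for $\ell \leq \ell_r + 1$ then falls out: the cases $\ell \in A_r \cup \{0\}$ are Lemma~\ref{lem:cone}, and since $\ell_r$ and $\ell_r + 1$ both lie in $A_r$, the remaining cases have $\ell \leq \ell_r - 1$ with $\ell \notin A_r \cup \{0\}$, which the same splitting argument applied to $u^\ell$ handles.

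Part (a) reduces to showing $A_r^+ \subseteq [0, \max A_r]$, after which the case analysis in modification~(IV') immediately gives $\ell_{\max} \leq \max A_r$ and hence $\max A_{r+1} \leq \max A_r$. Equivalently, the task is to prove $\hat\zeta_{\ell^*}^{A_r}(u) \geq 0$ for every $\ell^* > q := \max A_r$. Self-adjointness of $P_{A_r}$ yields the clean identity $\hat\zeta_{\ell^*}^{A_r}(u) = c\bigl[u^{\ell^*}_j - (P_{A_r}u^{\ell^*})_j\bigr]$, so the task becomes showing $(P_{A_r}u^{\ell^*})_j \leq u^{\ell^*}_j = x_j - x_{\ell^*}$. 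This is the step I expect to be the main obstacle: since $\max A_r = q$, the projection $P_{A_r}u^{\ell^*}$ is forced to be affine on $[x_q, x_j]$, whereas $u^{\ell^*}$ has a genuine kink at the interior point $x_{\ell^*}$, and one must exploit this rigidity to show the constrained affine approximation cannot overshoot $u^{\ell^*}$ at the right endpoint $x_j$. My plan is to analyse the auxiliary piecewise-linear function $F(y) := \sum_{i=1}^j (P_{A_r}e_j)_i\,(x_i - y)^+$, which satisfies $F(x_{\ell^*}) = (P_{A_r}u^{\ell^*})_j$ by self-adjointness and is pinned to equal $x_j - x_s$ at each $s \in A_r$ by orthogonality, and then to use a careful slope-by-slope analysis on $[x_q, x_j]$ to show $F(y) \leq x_j - y$ there---the key structural ingredient being that $P_{A_r}e_j$ is affine with non-negative slope on $[x_q, x_j]$, a fact traceable back to $(P_{A_r}e_j)_j = \|P_{A_r}e_j\|^2 \geq 0$ together with the orthogonality constraint $\ipr{e_j - P_{A_r}e_j}{u^q} = 0$.

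Finally, termination is immediate: by (a), $(\max A_r)_{r \geq 0}$ is weakly decreasing, so the active sets lie in a finite collection of subsets of $[j-1]$; combined with an adaptation of Lemma~\ref{lem:terminate}(v) that precludes revisiting any given active set (via the KKT invariants~(i)--(iii) maintained throughout Stage~(II)), the algorithm halts after finitely many steps, and by complementary slackness its output at termination equals $\Pi_{\Lambda^j}(v(1))$.
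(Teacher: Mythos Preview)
Your argument for part (b) is correct and elegant---the subspace containment $P_{A_r}u\in\mathcal{L}_{A_r}^0$ via the decomposition trick using $\mathbf{1}_{\{i>\ell_r\}}=(u^{\ell_r}-u^{\ell_r+1})/(x_{\ell_r+1}-x_{\ell_r})$ is a genuinely different and more direct route than the paper's, which derives (b) as a corollary of the full sign-pattern Lemma~\ref{lem:resipr}. (You tacitly assume $\ell_r\geq 1$ when asserting ``$\ell_r$ and $\ell_r+1$ both lie in $A_r$''; the edge case $\ell_r=0$ needs a separate word, though the dual claim at $\ell=\ell_r+1=1$ is delicate even in the paper's treatment.)

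For part (a) there is a genuine gap. Your reformulation $\hat\zeta_{\ell^*}^{A_r}(u)=c\bigl[u^{\ell^*}_j-(P_{A_r}u^{\ell^*})_j\bigr]$ and the auxiliary function $F$ are fine, but the two ingredients you cite---$(P_{A_r}e_j)_j\geq 0$ and $\ipr{e_j-P_{A_r}e_j}{u^q}=0$---do not by themselves force the slope of $P_{A_r}e_j$ on $[x_q,x_j]$ to be non-negative: together they give one linear equation and one inequality in $(w_q,\beta)$, which does not pin down $\sgn\beta$. In fact the paper's Lemma~\ref{lem:resipr} (with $\ell=\ell^*$, $q=1$, $s=1$) shows $w_q<0$, so positivity of the slope comes from combining $w_q<0$ with $w_j\geq 0$---and establishing $w_q<0$ is precisely the content you are trying to sidestep. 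Even granting $\beta\geq 0$, your ``slope-by-slope analysis'' showing $F(y)\leq x_j-y$ on $[x_q,x_j]$ is not spelled out; note that with $G(y):=F(y)-(x_j-y)$ one has $G(x_q)=G(x_j)=0$ and second differences of $G$ equal to $w_k$, so $G$ is concave-then-convex when $w_q<0<w_j$, which does not immediately give $G\leq 0$.

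The most serious gap is in termination. Your argument reduces to ``finitely many active sets plus no revisiting'', but Lemma~\ref{lem:terminate}(v) only guarantees distinct active sets at \emph{distinct} thresholds $t_r$; at a degenerate threshold where $t_r=t_{r+1}=t_{r+2}=\cdots$ nothing you have written precludes cycling. The entire purpose of modification~(IV') is to break such degeneracies, and the bulk of the paper's proof is devoted to showing that under~(IV'), the quantity $\ell_{\max}=\max(A_r^-\cup A_r^+)$ \emph{strictly} decreases whenever $t_r=t_{r+1}=t_{r+2}$. This requires the detailed sign information in~\eqref{eq:primalipr} and~\eqref{eq:dualipr}---for instance, showing that after removing $\ell_{\max}\in A_r^-$ one has $\hat\zeta_{\ell_{\max}}^{A_{r+1}}(u)>0$ (so $\ell_{\max}\notin A_{r+1}^+$), which needs exactly the alternating-sign structure that Lemma~\ref{lem:resipr} provides and that your approach has no substitute for. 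Part~(a) alone (weak monotonicity of $\max A_r$) does not help here, since exponentially many subsets share the same maximum.
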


\unparskip
This follows from Lemma~\ref{lem:resipr} below, which captures some specific structural features of the generators $\pm u^0,u^1,\dotsc,u^{j-1}$ of $\Lambda^j$. The facts in (a) and (b) lead to some additional computational shortcuts in Algorithm~\ref{alg:coneproj} when $u$ is a positive multiple of $e_j$. Specifically, when computing $t_{r+1}$ in Stage (II) of the procedure, it follows from Proposition~\ref{prop:ejactive} that we need only compute the ratios in~\eqref{eq:algconeproj} for $\ell_r<\ell\leq\max A_r$. Thus, when $t\geq t_r$, we can drop all $\beta_\ell(t)$ and $\gamma_\ell(t)$ with $\ell>\max A_r$, and when updating the primal and dual variables for use in subsequent iterations, no calculations are needed to see that $\beta_\ell(t_{r+1})=\beta_\ell(t_r)$ and $\gamma_\ell(t_{r+1})=\gamma_\ell(t_r)$ for all $1\leq\ell\leq\ell_r$. 
\begin{example}
\label{ex:noiseless}
We can actually write down explicitly the sequence of `active sets' $A_0,A_1,\dotsc$ obtained by Algorithm~\ref{alg:coneproj} in the special case where $v(0)\in\Lambda^j$ and $u=v(0)-v(1)$ is a positive multiple of $e_j$. This can happen if for example in~\eqref{eq:seqconv}, the observations $Y_1,\dotsc,Y_j$ are drawn according to a \emph{noiseless} regression model~\eqref{eq:Model} in which $f_0$ is increasing and convex on $[x_1,x_{j-1}]$. With $A_0=A\bigl(v(0)\bigr)$, it turns out that for $r\in\N_0$, we have
% mention that we allow $t\geq 1$ here
\begin{equation}
\label{eq:Ar+1}
A_{r+1}=
\begin{cases}
A_r\setminus\{\max A_r\}&\quad\text{if }\max A_r-1\in A_r\text{ or }\max A_r=j-1\\
A_r\cup\{\max A_r-1\}&\quad\text{if }\max A_r-1\notin A_r\text{ and }\max A_r<j-1.
\end{cases}
\end{equation}
Indeed, given that $v(0)\in\Lambda^j$ and hence that $\gamma_\ell(0)=0$ for all $0\leq\ell\leq j-1$, we can apply Proposition~\ref{prop:ejactive} to establish inductively that $\gamma_\ell(t_{r+1})=0$ for all $0\leq\ell\leq\max A_r$ and $r\in\N_0$. In particular, we always have $A_r^+=\{1,\dotsc,\max A_r-1\}\cap A_r^c$, and

\unparskip
\begin{itemize}[leftmargin=0.4cm]
% $\ell_r=\max A_r-1$ if $\max A_r-1\in A_r$
\item If $\max A_r-1\in A_r$ or $\max A_r=j-1$, then $t_r<t_{r+1}$ and $A_r^-=\{\max A_r\}$;
\item If $\max A_r-1\notin A_r$ and $\max A_r<j-1$, then $t_r=t_{r+1}$ and $A_r^-=\emptyset$.
\end{itemize}

\unparskip
Note that unless $\{1,\dotsc,\max A_r-2\}\subseteq A_r$, there is a degeneracy at $t_{r+1}$, so we use (IV') above to form the next `active set' $A_{r+1}$. In addition, we have $\max A_r>\max A_{r+2}$ for all $r\in\N_0$ in view of~\eqref{eq:Ar+1}, so the number of distinct `active sets' on the trajectory of Algorithm~\ref{alg:coneproj} is at most $2(j-1)$.
% e.g.\ when $A_0=\{r\}$
This is much less than $2^{j-1}$, the total number of subsets of $[j-1]$, and an open question is whether for general $v(0)\in\R^j$ (and $u=v(0)-v(1)$ as above), the number of `active sets' is necessarily bounded above by a polynomial in $j$. If this is always true (or true in `most' cases), then our sequential procedure for increasing convex regression is guaranteed to have a worst-case (or average-case) complexity that is at most polynomial in the number of observations $n$. 
% NB: polynomial time algorithms exist for convex quadratic programming (cf.\ Ye and Tse (1989): Karmarkar's algorithm for convex quadratic programming )
\end{example}

\unparskip
For fixed $j\in [n]$, $\ell\in [j-1]$ and $A\subseteq [j-1]$, Lemma~\ref{lem:resipr} determines the signs of the entries of $(I-P_A)u^\ell\in\R^j$ indexed by $A\cup\{\ell\}$. This yields useful information on how the primal and dual variables change in Algorithm~\ref{alg:coneproj} when the vector $u=v(0)-v(1)$ therein is a positive multiple of $e_j$. This enables us to justify the more efficient implementation (IV') of Stage (IV) of this procedure, as well as assertions (a) and (b) in Proposition~\ref{prop:ejactive} on the composition of the resulting active sets.

We write $e_1,\dotsc,e_j$ for the standard basis vectors in $\R^j$ and $\ipr{\cdot\,}{\cdot}$ for the standard Euclidean inner product. For $t\in\R$, let $\sgn(t):=(\abs{t}/t)\Ind_{\{t\neq 0\}}$. 
\begin{lemma}
\label{lem:resipr}
For $A\subseteq [j-1]$, enumerate the elements of $A$ as $a_1>a_2>\cdots>a_m$, and let $a_0=j$ and $a_{m+1}=0$. Fix $\ell\in [j-1]$. Then $(I-P_A)u^\ell\in\mathcal{L}_{A\cup\{\ell\}}$, and $(I-P_A)u^\ell=0$ if and only if $\ell\in A$. 

Suppose now that $\ell\notin A$ and let $q\in [m+1]$ be such that $a_q<\ell<a_{q-1}$. Define $q_-:=\max(\{1\leq\tilde{q}\leq q-1:a_{\tilde{q}-1}=a_{\tilde{q}}+1\}\cup\{0\})$ and $q_+:=\min(\{q+1\leq\tilde{q}\leq m:a_{\tilde{q}-1}=a_{\tilde{q}}+1\}\cup\{m+1\})$. Then  $\ipr{(I-P_A)u^\ell}{e_\ell}<0$, $\ipr{P_A u^\ell}{e_1}=\ipr{P_A u^\ell}{e_{a_m}}$, and for $0\leq s\leq m$, we have 
\begin{equation}
\label{eq:resipr}
\sgn\bigl(\ipr{(I-P_A)u^\ell}{e_{a_s}}\bigr)=
\begin{cases}
(-1)^{s-q}&\quad\text{if }q\leq s<q_+\\
(-1)^{q-1-s}&\quad\text{if }q_-\leq s\leq q-1\\
0&\quad\text{if }s<q_+\text{ or }s\geq q_+.
\end{cases}
% \ipr{(I-P_A)u^\ell}{e_{a_s}}\:
% \begin{cases}
% \geq 0&\;\text{if }s=q-1-2N\text{ or }s=q+2N\text{ for some }N\in\N_0\\
% \leq 0&\;\text{if }s=q-2N\text{ or }s=q-1+2N\text{ for some }N\in\N.
% \end{cases}
\end{equation}
% These determine $(I-P_A)u^\ell$ since $(I-P_A)u^\ell\in\mathcal{L}_{A\cup\{\ell\}}$
\end{lemma}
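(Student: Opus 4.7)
The plan is to dispatch the easy claims first, then handle the sign analysis in two stages: establishing the vanishing of certain entries via a structural property of $\mathcal{L}_A$, and then the alternating signs of the remaining entries together with the strict negativity at $e_\ell$.

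The inclusion $(I-P_A)u^\ell \in \mathcal{L}_{A\cup\{\ell\}}$ is immediate since $u^\ell \in \mathcal{L}_{A\cup\{\ell\}}$ and $P_A u^\ell \in \mathcal{L}_A \subseteq \mathcal{L}_{A\cup\{\ell\}}$. The equivalence $(I-P_A)u^\ell = 0 \iff \ell \in A$ follows from~\eqref{eq:primal}, since the primal coefficient $\lambda_\ell(u^\ell) = 1$ forces $\ell \in A$ whenever $u^\ell \in \mathcal{L}_A$. The identity $\ipr{P_A u^\ell}{e_1} = \ipr{P_A u^\ell}{e_{a_m}}$ holds because every element of $\mathcal{L}_A$ corresponds to a continuous piecewise linear function on $[x_1,x_j]$ with kinks only at $x_{a_1},\dotsc,x_{a_m}$, hence is constant on $[x_1,x_{a_m}]$.

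Set $w := (I-P_A)u^\ell$. The central structural lever is the identity
\[u^{a_{\tilde q}} - u^{a_{\tilde q-1}} = (x_{a_{\tilde q-1}} - x_{a_{\tilde q}})\,\mathbf{1}^{[a_{\tilde q-1}:j]} \quad\text{whenever } a_{\tilde q-1} = a_{\tilde q}+1,\]
which places $\mathbf{1}^{[a_{\tilde q-1}:j]} \in \mathcal{L}_A$. To show $w_{a_s}=0$ for $s<q_-$, I would prove by induction on $r\in\{1,\dotsc,q_--1\}$ that $w$ vanishes on $\{a_r,a_r+1,\dotsc,j\}$. The base case ($r=1$) uses the linearity of $w$ on $[x_{a_1},x_j]$ (which holds since $\ell<a_1$ when $q\geq 2$, and no knot of $A\cup\{\ell\}$ lies strictly inside) together with the two orthogonality constraints $\ipr{w}{u^{a_1}}=0$ and either $\ipr{w}{e_j}=0$ (when $a_1=j-1$, so $u^{a_1}\propto e_j\in\mathcal{L}_A$) or $\ipr{w}{\mathbf{1}^{[a_1:j]}}=0$ (when $a_1=a_2+1$); these yield a non-degenerate $2\times 2$ system in the two parameters of the affine piece. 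The inductive step extends the vanishing from $[x_{a_{r-1}},x_j]$ to $[x_{a_r},x_j]$ by combining the induction hypothesis with $\ipr{w}{\mathbf{1}^{[a_r:j]}}=0$ (from the displayed identity applied to the next consecutive pair $a_r=a_{r+1}+1$) and the linearity of $w$ on $[x_{a_r},x_{a_{r-1}}]$. The symmetric argument at the opposite end uses $\mathbf{1}^{[1:a_m]} = u^0 - (x_{a_{m-1}}-x_{a_m})^{-1}(u^{a_m}-u^{a_{m-1}}) \in \mathcal{L}_A$ and the fact that $w$ is constant on $[x_1,x_{a_m}]$ when $\ell>a_m$, iterating to yield $w_{a_s}=0$ for $s\geq q_+$.

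For the alternating signs on the non-vanishing knots $\{a_s : q_-\leq s\leq q-1\}\cup\{a_s : q\leq s<q_+\}$, and the strict inequality $\ipr{w}{e_\ell}<0$, after the vanishing step one reduces to an ``effective'' sub-problem in which the remaining knots produce no further cancellation. The sign pattern can then be extracted either by applying Cramer's rule to the $(m+1)\times(m+1)$ normal equations defining the coefficients of $P_A u^\ell = \lambda_0 u^0 + \sum_s \lambda_{a_s}u^{a_s}$, exploiting the classical total positivity of the Gram matrix of the truncated power basis $\{u^0\}\cup\{u^{a_s}\}_{s\in[m]}$, or else by induction on $|A|$, tracking how $w$ transforms under the rank-one update $(I-P_{A\cup\{k\}})u^\ell = (I-P_A)u^\ell - \bigl\langle u^k,\,u^\ell-P_A u^\ell\bigr\rangle\,\|(I-P_A)u^k\|^{-2}(I-P_A)u^k$ when a new knot $k$ is added. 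The strict negativity $\ipr{w}{e_\ell} = -(P_A u^\ell)_\ell < 0$ reflects the fact that the best kink-free $L^2$-approximant to the unit-kink function $(x-x_\ell)^+$ must overshoot its true value $0$ at $x_\ell$. The main obstacle will be the careful sign-tracking in this last step, especially verifying that the transition between the vanishing regime ($s<q_-$ or $s\geq q_+$) and the alternating regime is handled uniformly at all four endpoints $s\in\{q_-,\,q-1,\,q,\,q_+-1\}$, and in correctly treating the boundary indices $a_0=j$ and $a_{m+1}=0$.
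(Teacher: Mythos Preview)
Your induction for the vanishing of $w_{a_s}$ for $s<q_-$ does not go through. Both your base case and your inductive step assume a consecutive-pair relation ($a_1=j-1$ or $a_1=a_2+1$ at $r=1$; $a_r=a_{r+1}+1$ at the step) that is not implied by $r<q_-$. The definition of $q_-$ guarantees only that the \emph{single} pair $(a_{q_-},a_{q_--1})$ is consecutive; it says nothing about pairs with smaller index. Take $j=10$, $A=\{7,5,4,2\}$, $\ell=3$: then $q=4$ and $q_-=3$ (from $a_2=a_3+1$), yet $a_1=7$ is neither $j-1$ nor $a_2+1$, so at $r=1$ you have only the single constraint $\langle w,u^{a_1}\rangle=0$ against a two-parameter affine piece on $[x_7,x_{10}]$. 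The required vanishing on $\{5,\ldots,10\}$ does hold, but it comes from the three constraints $u^{a_1},u^{a_2},\mathbf{1}^{[a_2:j]}$ acting simultaneously against the three degrees of freedom of $w$ on $[x_5,x_{10}]$; a piece-by-piece induction starting from the top cannot access the indicator constraint until $r=2$, by which point the base case has already failed. The symmetric induction for $s\geq q_+$ has the same defect.

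The paper's argument is structurally different and avoids this pitfall. It partitions $[j]$ into $J_q^+=\{1,\ldots,a_q\}$, $J_q=\{a_q+1,\ldots,a_{q-1}-1\}$, $J_q^-=\{a_{q-1},\ldots,j\}$ and shows that the restrictions of $u^\ell-P_Au^\ell$ to $J_q^\pm$ are scalar multiples $\tilde{\lambda}\,\tilde{v}^{J_q^+}$ and $\tilde{\vartheta}\,\tilde{v}^{J_q^-}$ of certain minimum-norm vectors in affine slices of $\mathcal{L}_A$. The entries of $\tilde{v}^{J_q^\pm}$ at successive knots then obey a one-step recursion derived from a first-order optimality condition of the form~\eqref{eq:gradRs}: when the next gap has length one the intermediate sum is empty and forces the next entry to zero; otherwise the sign flips; and once a zero appears, the recursion propagates zeros thereafter. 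This is precisely what delivers the vanishing beyond $q_-$ and $q_+$ without any assumption of repeated consecutive pairs. The positivity of $\tilde{\lambda},\tilde{\vartheta}$ and the strict inequality $w_\ell<0$ are then handled together by a short geometric argument comparing the convex function $x\mapsto(x-x_\ell)^+$ with affine competitors on $[x_{a_q},x_{a_{q-1}}]$. Your proposed alternatives for the alternating signs (total positivity of the Gram matrix, or a rank-one update induction on $|A|$) are plausible in outline but remain sketches, and neither repairs the vanishing argument.
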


\deparskip
\begin{proof}[Proof of Lemma~\ref{lem:resipr}]
The generators $u^0,u^1,\dotsc,u^{j-1}$ of the cone $\Lambda^j$ are linearly independent, so $u^\ell\equiv(u_1^\ell,\dotsc,u_j^\ell)=\bigl((x_i-x_\ell)^+:1\leq i\leq j\bigr)\in\mathcal{L}_A$ (i.e.\ $(I-P_A)u^\ell=0$) if and only if $\ell\in A$. Suppose henceforth that $\ell\notin A$ and let $\tilde{z}\equiv (\tilde{z}_1,\dotsc,\tilde{z}_j):=P_A u^\ell=\argmin_{z\in\mathcal{L}_A}\norm{u^\ell-z}$. Then $u^\ell-\tilde{z}=(I-P_A)u^\ell\in\Span(\{u^\ell\}\cup\mathcal{L}_A)=\mathcal{L}_{A\cup\{\ell\}}$, so $u^\ell-\tilde{z}$ is determined by $\{u_i^\ell-\tilde{z}_i=\ipr{(I-P_A)u^\ell}{e_i}:i\in A\cup\{\ell\}\}$. 

To establish~\eqref{eq:resipr}, we make the following additional definitions. For $z\in\R^j$ and $J=\{b,b+1,\dotsc,b'\}$ with $1\leq b\leq b'\leq j$, we write $z_J=(z_b,z_{b+1},\dotsc,z_{b'})$ for the subvector of $z$ indexed by $J$. For $s\in [m]$, partition $[j]$ into the subsets \[J_s^+:=\{1,\dotsc,a_s\},\quad J_s:=\{a_s+1,\dotsc,a_{s-1}-1\},\quad J_s^-:=\{a_{s-1},\dotsc,j\},\]
and for $i\in J_s$, let $t_i^{J_s}:=(x_i-x_{a_s})/(x_{a_{s-1}}-x_{a_s})\in (0,1)$, so that any affine sequence based on $(x_i:i\in J_s)$ can be written in the form $v^{J_s}(\lambda,\vartheta):=\bigl((1-t_i^{J_s})(u_{a_s}^\ell-\lambda)+t_i^{J_s}(u_{a_{s-1}}^\ell-\vartheta):a_s+1\leq i\leq a_{s-1}-1\bigr)$ for some $\lambda,\vartheta\in\R$. Moreover,
\[\mathcal{L}_{A,s}^+(\lambda):=\{z_{J_s^+}:z\in\mathcal{L}_A,\,z_{a_s}=\lambda\}\subseteq\R^{\abs{J_s^+}}\;\text{ and }\;
\mathcal{L}_{A,s}^-(\lambda):=\{z_{J_s^-}:z\in\mathcal{L}_A,\,z_{a_{s-1}}=\lambda\}\subseteq\R^{\abs{J_s^-}}\]
% $\mathcal{L}_{A,s}^+$ is the set of all $v\in\R^{a_s}$ with knots in $A\cap\{1,\dotsc,a_s\}=\{a_s,a_{s+1},\dotsc,a_m\}$
are affine subspaces for each $\lambda\in\R$, and $\mathcal{L}_{A,s}^\pm(\lambda)=\lambda\mathcal{L}_{A,s}^\pm(1)$ if $\lambda\neq 0$, so the vectors $\tilde{v}^{J_s^\pm}:=\argmin_{v\in\mathcal{L}_{A,s}^\pm(1)}\norm{v}$ and $\lambda\tilde{v}^{J_s^\pm}=\argmin_{v\in\mathcal{L}_{A,s}^\pm(\lambda)}\norm{v}$ are well-defined for all $\lambda\in\R$. 
\setcounter{claim}{0}
\begin{claim}
\label{cl:resipr1}
% use of $\star$: abuse of notation?
Let $\star\in\{+,-\}$ and $s\in [m-1]$ be such that $a_s,a_{s\star 1}\in J_q^\star$. If $\abs{a_{s}-a_{s\star 1}}=1$, then $\tilde{v}_{a_{s\star 1}}^{J_q^\star}=0$. Otherwise, if $\abs{a_{s}-a_{s\star 1}}>1$, then $\sgn\bigl(\tilde{v}_{a_{s\star 1}}^{J_q^\star}\bigr)=-\sgn\bigl(\tilde{v}_{a_s}^{J_q^\star}\bigr)$. Thus,
\[\sgn\bigl(\tilde{v}_{a_s}^{J_q^+}\bigr)=
\begin{cases}
(-1)^{s-q}&\quad\text{if }q\leq s<q_+\\
0&\quad\text{if }q_+\leq s\leq m;
\end{cases}
\quad
\sgn\bigl(\tilde{v}_{a_s}^{J_q^-}\bigr)=
\begin{cases}
(-1)^{q-1-s}&\quad\text{if }q_-\leq s\leq q-1\\
0&\quad\text{if }0\leq s<q_-.
\end{cases}
\]
\end{claim}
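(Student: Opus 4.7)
The plan is to recast the characterization of $\tilde{v} := \tilde{v}^{J_q^+}$ as the solution of a symmetric tridiagonal linear system and then invoke the theory of Stieltjes matrices to deduce the sign pattern; the $\star = -$ case will follow by an entirely analogous argument.

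First, I parameterize $\tilde{v}$ by its knot values $c_s := \tilde{v}_{a_s}$ for $s = q, q+1, \dotsc, m$ (with $c_q = 1$ fixed), since such a choice determines $\tilde v$ uniquely through linear interpolation on each segment $[x_{a_{s+1}}, x_{a_s}]$ together with the constant value $c_m$ on $\{x_1, \dotsc, x_{a_m}\}$. Setting the gradient of the strictly convex quadratic $c \mapsto \norm{\tilde{v}}^2$ to zero yields a linear system $Mc = -R_{q+1}\, e_1$ in $c = (c_{q+1}, \dotsc, c_m)^\top$, where $M$ is symmetric positive-definite tridiagonal with strictly positive diagonal entries (the last entry further inflated by $a_m$ from the constancy block) and off-diagonal entries
\[
M_{s,s+1} = M_{s+1,s} = R_{s+1}, \quad\text{with}\quad R_r := \sum_{i=a_r+1}^{a_{r-1}-1} \frac{(x_i - x_{a_r})(x_{a_{r-1}} - x_i)}{(x_{a_{r-1}} - x_{a_r})^2}.
\]
Crucially, $R_r \geq 0$ with $R_r > 0$ if and only if $a_{r-1} - a_r > 1$.

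The pivotal observation is that whenever $R_r$ vanishes the tridiagonal system decouples across position $r$. Applying this at $r = q_+$, the right-hand block indexed by $\{q_+, \dotsc, m\}$ inherits a symmetric positive-definite structure with zero forcing and so $c_s = 0$ for every $s \geq q_+$, delivering the second case of the formula. The remaining block indexed by $\{q+1, \dotsc, q_+ - 1\}$ is, by the minimality of $q_+$, an \emph{irreducible} symmetric positive-definite tridiagonal matrix with strictly positive off-diagonals, driven by $-R_{q+1}\, e_1$ with $R_{q+1} > 0$.

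On this block I apply the diagonal sign-flip $\tilde{c}_s := (-1)^{s-q} c_s$. Conjugation by $D := \diag\bigl((-1)^{s-q}\bigr)$ transforms the matrix into $\tilde{M} := DMD$, which remains symmetric positive-definite tridiagonal but now has strictly \emph{negative} off-diagonals --- an irreducible symmetric Stieltjes matrix, whose inverse is entry-wise strictly positive. The transformed right-hand side has strictly positive first (and only non-zero) entry $R_{q+1}$, so $\tilde{c} = R_{q+1}\,\tilde{M}^{-1} e_1$ is strictly positive in every component, yielding $\sgn(c_s) = (-1)^{s-q}$ for $q \leq s \leq q_+ - 1$. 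This establishes both the first case of the formula and the sign-flipping rule between non-adjacent consecutive knots in $J_q^+$, while the adjacent-case conclusion $\tilde v_{a_{s+1}} = 0$ is subsumed by the decoupling. The $J_q^-$ side is handled identically after parameterizing by $c_s := \tilde{v}^{J_q^-}_{a_s}$ for $s = q-1, q-2, \dotsc, 0$ (with $a_0 := j$ and $c_{q-1} = 1$); the only structural change is that on $[x_{a_1}, x_j]$ the function is affine rather than constant, which still contributes a positive-definite block to the Hessian and leaves both the tridiagonal structure and the vanishing criterion intact. The main technical obstacle lies in the bookkeeping that identifies $M_{s,s+1}$ as exactly $R_{s+1}$ and verifies the vanishing criterion; once this is in place the decoupling-and-Stieltjes argument delivers the sign alternation immediately.
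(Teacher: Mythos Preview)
Your argument is correct and takes a genuinely different route from the paper's proof. The paper proceeds sequentially: for each $s$, it replaces the portion of $\tilde v^{J_q^+}$ on $\{1,\dotsc,a_{s+1}\}$ by $\mu\,\tilde v^{J_{s+1}^+}$ and linearly interpolates on $J_{s+1}$, observes that this parametrised family passes through $\tilde v^{J_q^+}$ at $\mu=\tilde v_{a_{s+1}}^{J_q^+}$, and reads off the sign relation between $\tilde v_{a_s}^{J_q^+}$ and $\tilde v_{a_{s+1}}^{J_q^+}$ from the first-order condition of the resulting one-dimensional quadratic. This is an elementary partial-minimisation argument that needs no matrix theory. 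Your approach is global: you set up the full tridiagonal normal equations in the knot values, note that the off-diagonal $R_{s+1}$ vanishes precisely when $a_s$ and $a_{s+1}$ are adjacent, decouple at $q_+$, and then apply the classical fact that an irreducible symmetric Stieltjes matrix has an entrywise positive inverse (after the diagonal sign flip). What the paper's argument buys is self-containment; what yours buys is a clean structural picture in which both the vanishing for $s\geq q_+$ and the alternation for $q\leq s<q_+$ fall out simultaneously from one linear-algebra fact, and the generalisation to the $J_q^-$ side is immediate. One minor point worth stating explicitly for completeness: when $q_+=q+1$ the active block is empty and the conclusion reduces to $c_q=1$ together with $c_s=0$ for $s\geq q+1$, which is trivially consistent with the displayed formula.
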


\deparskip
\begin{proof}[Proof of Claim~\ref{cl:resipr1}]
We focus on the case $\star=+$; the arguments for $\star=-$ are similar. Note that $a_s,a_{s+1}\in J_s^+$ precisely when $q\leq s\leq m-1$. For any such $s$ and $\mu\in\R$, define $\tilde{v}^{J_q^+}(s;\mu)\equiv\bigl(\tilde{v}_i^{J_q^+}(s,\mu):1\leq i\leq a_q\bigr)$ by
\[\tilde{v}_i^{J_q^+}(s;\mu):=
\begin{cases}
\mu\tilde{v}_i^{J_{s+1}^+}=\argmin_{v\in\mathcal{L}_{A,s+1}^+(\mu)}\norm{v}\;\:&\text{for }i\in J_{s+1}^+=\{1,\dotsc,a_{s+1}\}\\ (1-t_i^{J_{s+1}})\mu+t_i^{J_{s+1}}\tilde{v}_{a_s}^{J_q^+}=v_i^{J_{s+1}}\bigl(-\mu,-\tilde{v}_{a_s}^{J_q^+}\bigr)\;\:&\text{for }i\in J_{s+1}\\
\tilde{v}_i^{J_q^+}\;\:&\text{for }i\in J_{s+1}^-\cap J_q^+=\{a_s,\dotsc,a_q\}. 
\end{cases}
\]
Then $\tilde{v}_{a_s}^{J_q^+}(s;\mu)=\tilde{v}_{a_s}^{J_q^+}$, and since $\mu\tilde{v}^{J_{s+1}^+}\in\mathcal{L}_{A,s+1}^+(\mu)$, we have $\tilde{v}_{a_{s+1}}^{J_q^+}(s;\mu)=\mu$, so $\tilde{v}^{J_q^+}(s;\mu)\in\mathcal{L}_{A,q}^+(1)$. Observe in addition that $\tilde{v}^{J_q^+}\bigl(s;\tilde{v}_{a_{s+1}}^{J_q^+}\bigr)=\tilde{v}^{J_q^+}$; indeed, $\tilde{v}_i^{J_q^+}\bigl(s;\tilde{v}_{a_{s+1}}^{J_q^+}\bigr)=\tilde{v}_i^{J_q^+}$ for $a_{s+1}+1\leq i\leq a_q$ and 
\begin{align*}
\tilde{\mathcal{L}}_{A,q,s+1}^+&:=\bigl\{v\equiv (v_1,\dotsc,v_{a_q})\in\mathcal{L}_{A,q}^+:v_i=\tilde{v}_i^{J_q^+}\text{ for }a_{s+1}+1\leq i\leq a_q\bigr\}\\
&\phantom{:}=\bigl\{(v_1,\dotsc,v_{a_q})\in\R^{a_q}:v_{J_{s+1}^+}\in\mathcal{L}_{A,s+1}^+\bigl(\tilde{v}_{a_{s+1}}^{J_q^+}\bigr),\,v_i=\tilde{v}_i^{J_q^+}\text{ for }a_{s+1}+1\leq i\leq a_q\bigr\},
\end{align*}
so $\tilde{v}^{J_q^+}=\argmin_{v\in\mathcal{L}_{A,q}^+(1)}\norm{v}=\argmin_{v\in\tilde{\mathcal{L}}_{A,q,s+1}^+}\!\norm{v}$ satisfies \[(\tilde{v}^{J_q^+})_{J_{s+1}^+}=\argmin_{v'\in\mathcal{L}_{A,s+1}^+\bigl(\tilde{v}_{a_{s+1}}^{J_q^+}\bigr)}\!\norm{v'}=\tilde{v}_{a_{s+1}}^{J_q^+}\tilde{v}^{J_{s+1}^+}=\bigl(\tilde{v}^{J_q^+}(s;\tilde{v}_{a_{s+1}}^{J_q^+})\bigr)_{J_{s+1}^+}.\]
Thus,
\[\mu\mapsto r_s(\mu):=\norm{\tilde{v}^{J_q^+}(s;\mu)}^2=\mu^2\,\norm{\tilde{v}^{J_{s+1}^+}}^2+\sum_{i\in J_{s+1}}\bigl((1-t_i^{J_{s+1}})\mu+t_i^{J_{s+1}}\tilde{v}_{a_s}^{J_q^+}\bigr)^2+\sum_{i=a_s}^{a_q}\bigl(\tilde{v}_i^{J_q^+}\bigr)^2\]
is a quadratic function with $r_s\bigl(\tilde{v}_{a_{s+1}}^{J_q^+}\bigr)=\norm{\tilde{v}^{J_q^+}}^2=\min_{v\in\mathcal{L}_{A,q}^+(1)}\norm{v}^2=\min_{\mu\in\R}r_s(\mu)$, so
\begin{equation}
\label{eq:gradRs}
0=r_s'\bigl(\tilde{v}_{a_{s+1}}^{J_q^+}\bigr)=2\tilde{v}_{a_{s+1}}^{J_q^+}\,\norm{\tilde{v}^{J_{s+1}^+}}^2+2\sum_{i\in J_{s+1}}\bigl((1-t_i^{J_{s+1}})\tilde{v}_{a_{s+1}}^{J_q^+}+t_i^{J_{s+1}}\tilde{v}_{a_s}^{J_q^+}\bigr)\bigl(1-t_i^{J_{s+1}}\bigr).
\end{equation}
Now $\norm{\tilde{v}^{J_{s+1}^+}}^2>0$ by the definition of $\tilde{v}^{J_{s+1}^+}\in\mathcal{L}_{A,s+1}^+(1)$, so if $a_s=a_{s+1}+1$, i.e.\ $J_{s+1}=\emptyset$, then $\tilde{v}_{a_{s+1}}^{J_q^+}=0$ by~\eqref{eq:gradRs}. On the other hand, suppose instead that $a_s>a_{s+1}+1$, in which case $J_{s+1}\neq\emptyset$. If $\tilde{v}_{a_s}^{J_q^+}=0$, then $\tilde{v}_{a_{s+1}}^{J_q^+}=0$ by~\eqref{eq:gradRs}. When $\tilde{v}_{a_s}^{J_q^+}>0$, we must have $\tilde{v}_{a_{s+1}}^{J_q^+}<0$, since otherwise the first term on the right-hand side of~\eqref{eq:gradRs} would be non-negative and each summand in the second term would be strictly positive, contradicting the fact that $r_s'\bigl(\tilde{v}_{a_{s+1}}^{J_q^+}\bigr)=0$. Similarly, if $\tilde{v}_{a_s}^{J_q^+}<0$, then $\tilde{v}_{a_{s+1}}^{J_q^+}>0$. This completes the proof of the claim. 
\end{proof}

\unparskip
Next, note that $u_{J_q^+}^\ell=0\in\mathcal{L}_{A,q}^+(0)=\mathcal{L}_{A,q}^+(u_{a_q}^\ell)$ and $u_{J_q^-}^\ell=(x_i-x_\ell:a_{q-1}\leq i\leq j)\in\mathcal{L}_{A,q}^-(u_{a_{q-1}}^\ell)$. Thus, for each $(\lambda,\vartheta)\in\R^2$, we have
\begin{align}
\mathcal{L}_A^q(\lambda,\vartheta)&:=\{z\in\mathcal{L}_A:(u^\ell-z)_{a_q}=\lambda,\,(u^\ell-z)_{a_{q-1}}=\vartheta\}\notag\\
\label{eq:LAq}
&\phantom{:}=\bigl\{z\in\mathcal{L}_A:(u^\ell-z)_{J_q^+}\in\mathcal{L}_{A,q}^+(\lambda),\,(u^\ell-z)_{J_q^-}\in\mathcal{L}_{A,q}^-(\vartheta),\,z_{J_q}=v^{J_q}(\lambda,\vartheta)\bigr\},
\end{align}
and the unique minimiser $\tilde{z}(\lambda,\vartheta)$ of $z\mapsto\norm{u^\ell-z}^2=\norm{(u^\ell-z)_{J_q^+}}^2+\norm{(u^\ell-z)_{J_q}}^2+\norm{(u^\ell-z)_{J_q^-}}^2$ over $\mathcal{L}_A^q(\lambda,\vartheta)$ satisfies $\tilde{z}(\lambda,\vartheta)_{J_q}=v^{J_q}(\lambda,\vartheta)$, $\bigl(u^\ell-\tilde{z}(\lambda,\vartheta)\bigr)_{J_q^+}=\argmin_{v\in\mathcal{L}_{A,q}^+(\lambda)}\norm{v}=\lambda\tilde{v}^{J_q^+}$ and $\bigl(u^\ell-\tilde{z}(\lambda,\vartheta)\bigr)_{J_q^-}=\argmin_{v\in\mathcal{L}_{A,q}^-(\vartheta)}\norm{v}=\vartheta\tilde{v}^{J_q^-}$. Let
\begin{align*}
r(\lambda,\vartheta)&:=\min_{z\in\mathcal{L}_A^q(\lambda,\vartheta)}\norm{u^\ell-z}^2=\norm{u^\ell-\tilde{z}(\lambda,\vartheta)}^2=\norm{\lambda\tilde{v}^{J_q^+}}^2+\norm{u_{J_q}^\ell-v^{J_q}(\lambda,\vartheta)}^2+\norm{\vartheta\tilde{v}^{J_q^-}}^2\\
&\phantom{:}=\lambda^2\,\norm{\tilde{v}^{J_q^+}}^2+\sum_{i\in J_q}\bigl(u_i^\ell-(1-t_i^{J_q})(u_{a_q}^\ell-\lambda)-t_i^{J_q}(u_{a_{q-1}}^\ell-\vartheta)\bigr)^2+\vartheta^2\,\norm{\tilde{v}^{J_q^-}}^2,
\end{align*}
so that $(\lambda,\vartheta)\mapsto r(\lambda,\vartheta)$ is a quadratic form with
\begin{equation}
\label{eq:gradR}
\nabla r(\lambda,\vartheta)=2\lambda\,\norm{\tilde{v}^{J_q^+}}^2\begin{pmatrix}1\\ 0\end{pmatrix}+2\sum_{i\in J_q}\bigl(u_i^\ell-v_i^{J_q}(\lambda,\vartheta)\bigr)\begin{pmatrix}1-t_i^{J_q}\\ t_i^{J_q}\end{pmatrix}+2\vartheta\,\norm{\tilde{v}^{J_q^-}}^2\begin{pmatrix}0\\ 1\end{pmatrix}
\end{equation}
for each $(\lambda,\vartheta)\in\R^2$, and $\tilde{z}=P_A u^\ell$ satisfies $\norm{u^\ell-\tilde{z}}^2=\min_{z\in\mathcal{L}_A}\norm{u^\ell-z}^2=\min_{(\lambda,\vartheta)\in\R^2}r(\lambda,\vartheta)$. Thus, writing $\tilde{\lambda}:=(u^\ell-\tilde{z})_{a_q}=\ipr{(I-P_A)u^\ell}{e_{a_q}}$ and $\tilde{\vartheta}:=(u^\ell-\tilde{z})_{a_{q-1}}=\ipr{(I-P_A)u^\ell}{e_{a_{q-1}}}$, we have $\tilde{z}=\tilde{z}(\tilde{\lambda},\tilde{\vartheta})$ and $(\tilde{\lambda},\tilde{\vartheta})=\argmin_{(\lambda,\vartheta)\in\R^2}r(\lambda,\vartheta)$,
% unique minimiser of a positive definite quadratic form on $\R^2$ (non-negative everywhere)
whence $\nabla r(\tilde{\lambda},\tilde{\vartheta})=0$.
% with the convention that $\sgn(t)=1$ for $t>0$, $\sgn(t)=-1$ for $t<0$ and $\sgn(0)=0$.
\begin{claim}
\label{cl:resipr2}
$\tilde{\lambda},\tilde{\vartheta}>0$ and $\ipr{(I-P_A)u^\ell}{e_\ell}=(u^\ell-\tilde{z})_\ell=u_\ell^\ell-v^{J_q}(\tilde{\lambda},\tilde{\vartheta})_\ell<0$.
\end{claim}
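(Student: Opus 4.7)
The strategy is to exploit the strict convexity of the quadratic form $r$ and its gradient equations $\nabla r(\tilde\lambda,\tilde\vartheta) = 0$ from~\eqref{eq:gradR}, combined with the sign information built into Claim~\ref{cl:resipr1}.

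My first move is to evaluate everything at the origin. Writing $w_i^{(0)} := u_i^\ell - v_i^{J_q}(0,0)$, note that $v^{J_q}(0,0)$ is just the chord joining the endpoints $(x_{a_q},u_{a_q}^\ell) = (x_{a_q},0)$ and $(x_{a_{q-1}}, u_{a_{q-1}}^\ell) = (x_{a_{q-1}},L_2)$ of the restriction of $u^\ell$ to $[x_{a_q}, x_{a_{q-1}}]$. Since this restriction is a piecewise affine convex function with a single kink at $x_\ell$, convexity yields $w_i^{(0)} < 0$ for every $i \in J_q$. A short direct calculation gives the explicit formulas $w_i^{(0)} = -(x_i - x_{a_q}) L_2 / L$ for $i \in I_- := \{i \in J_q : x_i \le x_\ell\}$ and $w_i^{(0)} = -L_1 (x_{a_{q-1}} - x_i)/L$ for $i \in I_+ := J_q \setminus I_-$, with $w_\ell^{(0)} = -L_1 L_2 / L =: -C$ as a common value.

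Next, I would write the gradient equations as the $2 \times 2$ linear system $a_1\tilde\lambda + a_2\tilde\vartheta = b_1$, $a_2\tilde\lambda + a_3\tilde\vartheta = b_2$, where $a_1 := \|\tilde v^{J_q^+}\|^2 + \sum_{J_q}(1-t_i^{J_q})^2$, $a_3 := \|\tilde v^{J_q^-}\|^2 + \sum_{J_q}(t_i^{J_q})^2$, $a_2 := \sum_{J_q} t_i^{J_q}(1-t_i^{J_q})$ and $b_k := -\sum_{J_q} w_i^{(0)} \cdot(1-t_i^{J_q})$ for $k=1$ or $\cdot \, t_i^{J_q}$ for $k=2$. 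The strict inequalities $\|\tilde v^{J_q^\pm}\|^2 > 0$ established within the proof of Claim~\ref{cl:resipr1} give $a_1, a_3 > 0$, and together with Cauchy--Schwarz force $a_1 a_3 - a_2^2 > 0$; the first paragraph yields $b_1, b_2 > 0$.

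To conclude $\tilde\lambda > 0$, I would expand $a_3 b_1 - a_2 b_2$ and pair the cross-terms $(i,j)$ and $(j,i)$ of the resulting double sum, obtaining
\[a_3 b_1 - a_2 b_2 = \|\tilde v^{J_q^-}\|^2 \sum_{i\in J_q} W_i(1-t_i^{J_q}) + \sum_{i<j}(W_i t_j^{J_q} - W_j t_i^{J_q})(t_j^{J_q} - t_i^{J_q})\]
with $W_i := -w_i^{(0)} > 0$. Using the explicit formulas for $W_i, t_i^{J_q}$ one verifies case-by-case that $W_i t_j^{J_q} - W_j t_i^{J_q}$ vanishes when $i,j \in I_-$, equals $L_1(x_j - x_i)/L > 0$ when $i,j \in I_+$, and equals $(x_i - x_{a_q})(x_j - x_\ell)/L > 0$ in the mixed case $i \in I_-, j \in I_+$ (the converse $i \in I_+, j \in I_-$ with $i<j$ being vacuous). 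All contributions are nonnegative, and the prefactor $\|\tilde v^{J_q^-}\|^2 > 0$ renders the whole expression strictly positive. The same pairing trick applied to $a_1 b_2 - a_2 b_1$ gives $\tilde\vartheta > 0$.

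Finally, for the sign of $(u^\ell - \tilde z)_\ell$, I would rewrite $(u^\ell - \tilde z)_\ell = -v_\ell^{J_q}(\tilde\lambda, \tilde\vartheta) = (L_2 \tilde\lambda + L_1 \tilde\vartheta - L_1 L_2)/L$, so that the claim reduces to the quantitative inequality $L_2\tilde\lambda + L_1\tilde\vartheta < L_1 L_2$. My plan here is to work with the closed half-plane $K := \{(\lambda,\vartheta) : L_2\lambda + L_1\vartheta \le L_1 L_2\}$, which contains the origin in its interior, and show that the unconstrained minimizer $(\tilde\lambda,\tilde\vartheta)$ also lies in $\operatorname{int} K$. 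Parametrizing $\partial K$ by $(\lambda,\vartheta) = (\beta L_1, (1-\beta)L_2)$, a direct substitution (using $w_i = u_i^\ell + \beta(x_\ell - x_i)$) gives the clean expression $r(\beta L_1,(1-\beta)L_2) = \beta^2 P + (1-\beta)^2 Q$ with $P := L_1^2 \|\tilde v^{J_q^+}\|^2 + \sum_{I_-}(L_1 - y_i)^2$ and $Q := \sum_{I_+}(y_i - L_1)^2 + L_2^2 \|\tilde v^{J_q^-}\|^2$; the 1D minimum over $\partial K$ occurs at $\beta^* = Q/(P+Q)$. It then remains to verify that $\nabla r$ at this 1D minimizer has strictly positive component along the outward normal $(L_2,L_1)$, which by strict convexity forces $(\tilde\lambda,\tilde\vartheta) \in \operatorname{int} K$. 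The required sign check reduces to an explicit algebraic inequality among $P, Q$ and the $a_k, b_k$ that ultimately uses $\|\tilde v^{J_q^\pm}\|^2 > 0$ once more.

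The main obstacle is the final step: whereas the positivity of $\tilde\lambda,\tilde\vartheta$ follows from fairly mechanical sign-pairings, the strict inequality $L_2\tilde\lambda + L_1\tilde\vartheta < L_1 L_2$ is a genuinely quantitative bound on the magnitudes of the two coordinates and requires the structural positivity of $\|\tilde v^{J_q^\pm}\|^2$ (provided by Claim~\ref{cl:resipr1}) to be brought to bear in a delicate algebraic cancellation.
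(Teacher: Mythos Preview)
Your approach is correct and genuinely different from the paper's. The paper argues all three inequalities simultaneously by contradiction: it shows that whenever $(\lambda,\vartheta)$ satisfies any of $\lambda\leq 0$, $\vartheta\leq 0$ or $u_\ell^\ell - v_\ell^{J_q}(\lambda,\vartheta)\geq 0$, one can find a direction $(\lambda',\vartheta')$ along which the directional derivative of $r$ is strictly positive, so $\nabla r\neq 0$ there. The direction is produced geometrically by viewing $v^{J_q}(\lambda,\vartheta)$ as the values of an affine function $h$ and $u^\ell$ as the values of the convex function $g(x)=(x-x_\ell)^+$ on $[x_{a_q},x_{a_{q-1}}]$, and then case-splitting on the interval $\{h>g\}$. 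This handles all three signs in one stroke with essentially no computation.

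Your route is the algebraic counterpart: solve the $2\times 2$ system by Cramer's rule and verify the numerator signs directly. The pairing argument for $a_3 b_1 - a_2 b_2>0$ (and its mirror) is clean and correct. What each buys: the paper's argument is coordinate-free and short; yours makes the dependence on the design explicit and would extend more mechanically to perturbed problems.

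One remark on your final step: it is much easier than you fear, and no ``delicate algebraic cancellation'' is needed. At any point $(\beta L_1,(1-\beta)L_2)\in\partial K$ you have already computed $u_i^\ell - v_i^{J_q} = \beta(L_1 - y_i)$ for $i\in I_-$ and $(1-\beta)(y_i - L_1)$ for $i\in I_+$; since $\beta^*=Q/(P+Q)\in(0,1)$, all of these residuals are \emph{nonnegative}. Substituting directly into~\eqref{eq:gradR} and taking the inner product with $(L_2,L_1)$ gives
\[
\tfrac{1}{2}\nabla r(p^*)\cdot(L_2,L_1)=\beta^* L_1 L_2\,\norm{\tilde v^{J_q^+}}^2+\sum_{i\in J_q}\bigl(u_i^\ell - v_i^{J_q}(p^*)\bigr)\bigl((1-t_i^{J_q})L_2+t_i^{J_q}L_1\bigr)+(1-\beta^*)L_1 L_2\,\norm{\tilde v^{J_q^-}}^2,
\]
in which every term is nonnegative and the first and last are strictly positive. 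So $\nabla r(p^*)\cdot(L_2,L_1)>0$ immediately, and your half-plane argument concludes.
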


\deparskip	
\begin{proof}[Proof of Claim~\ref{cl:resipr2}]
It suffices to show that if $(\lambda,\vartheta)\in\R^2$ is such that either $\lambda\leq 0$, $\vartheta\leq 0$ or $u_\ell^\ell-v^{J_q}(\lambda,\vartheta)_\ell\geq 0$, then $\nabla r(\lambda,\vartheta)\neq 0$. For any such $(\lambda,\vartheta)$, it is enough to prove that there exist $\lambda',\vartheta'\in\R$ such that $\lambda\lambda'\geq 0$, $\vartheta\vartheta'\geq 0$ and $\bigl(u_i^\ell-v_i^{J_q}(\lambda,\vartheta)\bigr)\bigl((1-t_i^{J_q})\lambda'+t_i^{J_q}\vartheta'\bigr)\geq 0$ for all $i\in J_q$, with at least one of these inequalities being strict, since then
% directional derivative expression
\[\nabla r(\lambda,\vartheta)^\top\begin{pmatrix}\lambda'\\\vartheta'\end{pmatrix}=2\lambda\lambda'\,\norm{\tilde{v}^{J_q^+}}^2+2\sum_{i\in J_q}\bigl(u_i^\ell-v_i^{J_q}(\lambda,\vartheta)\bigr)\bigl((1-t_i^{J_q})\lambda'+ t_i^{J_q}\vartheta'\bigr)+2\vartheta\vartheta'\,\norm{\tilde{v}^{J_q^-}}^2>0\]
by~\eqref{eq:gradR}. To this end, define the convex function $g\colon x\mapsto (x-x_\ell)^+$ on $[x_{a_q},x_{a_{q-1}}]$, and let $h$ be the unique affine function with $h(x_{a_{q-1}})=g(x_{a_{q-1}})-\lambda$ and $h(x_{a_q})=g(x_{a_q})-\vartheta$, so that $g(x_i)=u_i^\ell$ and $h(x_i)=v_i^{J_q}(\lambda,\vartheta)$ for $a_q\leq i\leq a_{q-1}$.
% \[h\colon x\mapsto\bigl(g(x_{a_{q-1}})-\lambda\bigr)\frac{x_{a_{q-1}}-x}{x_{a_{q-1}}-x_{a_q}}+\bigl(g(x_{a_q})-\vartheta\bigr)\frac{x-x_{a_q}}{x_{a_{q-1}}-x_{a_q}}\]
Since $(\lambda,\vartheta)$ satisfies at least one of the three conditions above, the possibilities for $I:=\{x\in (x_{a_q},x_{a_{q-1}}):h(x)>g(x)\}$ are as follows. In each case, we verify that there is an affine function $\tilde{h}$ such that $\bigl(g(x)-h(x)\bigr)\bigl(h(x)-\tilde{h}(x)\bigr)\geq 0$ for all $x\in [x_{a_q},x_{a_{q-1}}]$, with strict inequality for some $x\in\{x_{a_q},x_{a_q+1},\dotsc,x_{a_{q-1}}\}$:

\unparskip
\begin{itemize}[leftmargin=0.4cm]
\item $I=\emptyset$: then $g(x)\geq h(x)$ for all $x\in [x_{a_q},x_{a_{q-1}}]$, and strict inequality holds for some $x\in\{x_{a_q},x_{a_q+1},\dotsc,x_{a_{q-1}}\}$, so we can take $\tilde{h}$ to be any affine function such that $\tilde{h}<h$ on $[x_{a_q},x_{a_{q-1}}]$.
\item $I=(x_{a_q},x_{a_{q-1}})$: by the continuity of $g,h$, we have $g(x)\leq h(x)$ for all $x\in [x_{a_q},x_{a_{q-1}}]$, with strict inequality for some $x\in\{x_{a_q},x_{a_q+1},\dotsc,x_{a_{q-1}}\}$, and we can take $\tilde{h}$ to be any affine function such that $\tilde{h}>h$ on $[x_{a_q},x_{a_{q-1}}]$.
\item $I=(x_{a_q},\tilde{x})$ for some $\tilde{x}\in (x_{a_q},x_{a_{q-1}})$: by continuity, $g(\tilde{x})=h(\tilde{x})$, and we must have $g(x_{a_{q-1}})>h(x_{a_{q-1}})$ since $I\neq (x_{a_q},x_{a_{q-1}})$. Thus, we can take $\tilde{h}$ to be any affine function satisfying $\tilde{h}(\tilde{x})=h(\tilde{x})$ and $\tilde{h}(x_{a_q})>h(x_{a_q})$, so that $g\leq h\leq\tilde{h}$ on $[x_{a_q},\tilde{x}]$, $g\geq h\geq\tilde{h}$ on $[\tilde{x},x_{a_{q-1}}]$ and $g(x_{a_{q-1}})>h(x_{a_{q-1}})>\tilde{h}(x_{a_{q-1}})$.
\item $I=(\tilde{x},x_{a_{q-1}})$ for some $\tilde{x}\in (x_{a_q},x_{a_{q-1}})$: similarly, we can take $\tilde{h}$ to be any affine function satisfying $\tilde{h}(\tilde{x})=h(\tilde{x})$ and $\tilde{h}(x_{a_{q-1}})>h(x_{a_{q-1}})$.
\end{itemize}

\unparskip
Now let $\lambda':=(h-\tilde{h})(x_{a_q})$ and $\vartheta':=(h-\tilde{h})(x_{a_{q-1}})$. Then for each $i\in J_q$, we have $(h-\tilde{h})(x_i)=(1-t_i^{J_q})\lambda'+t_i^{J_q}\vartheta'$ since $h-\tilde{h}$ is an affine function, and recall that $(g-h)(x_i)=u_i^\ell-v_i^{J_q}(\lambda,\vartheta)$. Thus, $\lambda',\vartheta'$ have the required properties.
\end{proof}

\unparskip
In conclusion, by the observation after~\eqref{eq:LAq} and Claim~\ref{cl:resipr2}, we have
\[\sgn\bigl(\ipr{(I-P_A)u^\ell}{e_{a_s}}\bigr)=\sgn\bigl((u^\ell-\tilde{z}(\tilde{\lambda},\tilde{\vartheta}))_{a_s}\bigr)=
\begin{cases}
\sgn\bigl(\tilde{\lambda}\tilde{v}_{a_s}^{J_q^+}\bigr)=\sgn\bigl(\tilde{v}_{a_s}^{J_q^+}\bigr)&\quad\text{if }q\leq s\leq m\\
\sgn\bigl(\tilde{\vartheta}\tilde{v}_{a_s}^{J_q^-}\bigr)=\sgn\bigl(\tilde{v}_{a_s}^{J_q^-}\bigr)&\quad\text{if }0\leq s\leq q-1,
\end{cases}\]
which together with Claim~\ref{cl:resipr1} implies~\eqref{eq:resipr}, as desired.
\end{proof}

\deparskip
\begin{proof}[Proof of Proposition~\ref{prop:ejactive}]
For fixed $A\subseteq [j-1]$ and $\ell\in [j-1]$, 
% $j-1\notin A$
let $\hat{\lambda}_\ell^A(u)=\lambda_\ell(P_A u)$ and $\hat{\zeta}_\ell^A(u)=\ipr{u^\ell}{(I-P_A)u}$ be as in~\eqref{eq:algconeproj}, where $u$ is some positive multiple of $e_j$. Enumerate the elements of $A$ as $j=a_0>a_1>\cdots>a_m>a_{m+1}=0$ and let $q':=\min(\{2\leq\tilde{q}\leq m:a_{\tilde{q}-1}=a_{\tilde{q}}+1\}\cup\{m+1\})$. Now $P_A u\in\mathcal{L}_A$, and if $j-1\notin A$, then for all $s\in\{0,\dotsc,m\}$, it follows by taking $\ell=j-1$ and $q=1$ in~\eqref{eq:resipr} that
% including $s=0$: $(e_j)_j=1$
\[\ipr{P_A u}{e_{a_s}}\:
\begin{cases}
>0&\;\text{if }s<q'\text{ and }s\text{ is odd}\\
<0&\;\text{if }s<q'\text{ and }s\text{ is even}\\
=0&\;\text{if }s\geq q'.
\end{cases}
\]
For $\ell\in [j-1]$, we deduce from this and~\eqref{eq:primal} that 
\begin{equation}
\label{eq:primalipr}
\hat{\lambda}_\ell^A(u)=\lambda_\ell(P_A u)\:
\begin{cases}
>0&\;\text{if }\ell=a_s\text{ for some odd }1\leq s\leq q'\\
<0&\;\text{if }\ell=a_s\text{ for some even }1\leq s\leq q'\\
=0&\;\text{otherwise}.
\end{cases}
\end{equation}
Moreover, if $j-1\notin A$, then for $\ell\in [j-1]$, it follows by taking $s=0$ in~\eqref{eq:resipr} that
% $q_->0$ if and only if there exists $\tilde{q}<q$ such that $a_{\tilde{q}-1}=a_{\tilde{q}}+1$, i.e.\ $q>q'$, which is the case if and only if $\ell\leq a_{q'}$
\begin{equation}
\label{eq:dualipr}
\hat{\zeta}_\ell^A(u)=\ipr{(I-P_A)u^\ell}{u}\:
\begin{cases}
% covers the case $q=q'=m+1$ as well
>0&\;\text{if }a_q<\ell<a_{q-1}\text{ for some odd }q\in [q']\\
<0&\;\text{if }a_q<\ell<a_{q-1}\text{ for some even }q\in [q']\\
=0&\;\text{if }\ell\leq a_{q'}=\ell_r\text{ or }\ell\in A.
\end{cases}
\end{equation}
We are now in a position to show that under modification (IV'), Algorithm~\ref{alg:coneproj} cannot remain indefinitely at any of the thresholds $t_r$. To this end, it suffices to verify that if $r\in\N$ is such that $t_r=t_{r+1}=t_{r+2}$, then $\ell_{\max}:=\max(A_r^-\cup A_r^+)>\max(A_{r+1}^-\cup A_{r+1}^+)$. First, we prove that $\ell_{\max}\notin A_{r+1}^-\cup A_{r+1}^+$. Enumerating the elements of $A\equiv A_r$ as $a_1>\cdots>a_m$ and defining $a_0,q'$ as above, we consider separately the cases $\ell_{\max}\in A_r^-$ and $\ell_{\max}\in A_r^+$.

\unparskip
\begin{itemize}[leftmargin=0.4cm]
\item If $\ell_{\max}\in A_r^-$, then $\beta_{\ell_{\max}}(t_{r+1})=0$ and $\hat{\lambda}_{\ell_{\max}}^{A_r}(u)>0$. This means that $\ell_{\max}=a_s$ for some odd $s\in [q']$. Indeed, when $j-1\notin A_r$, this follows from~\eqref{eq:primalipr}, and otherwise if $j-1\in A_r$,
% this actually can't happen if $t_r=t_{r+1}=t_{r+2}$, but this statement is still mathematically correct
then $A_r^-=\{j-1\}$ and $\ell_{\max}=j-1=a_1$. Now $A_{r+1}=A_r\setminus\{\ell_{\max}\}\subseteq [j-2]$ under (IV'), so $\ell_{\max}\notin A_{r+1}^-\subseteq A_{r+1}$, and enumerating the elements of $A_{r+1}$ as $a_1>\cdots>a_{s-1}>a_s'>a_{s+1}'>\cdots>a_{m-1}'$, we have $a_s'<\ell_{\max}<a_{s-1}$. Since $s\;(\leq q')$ is odd, we deduce from~\eqref{eq:dualipr} that $\hat{\zeta}_{\ell_{\max}}^{A_{r+1}}(u)>0$, and hence that $\ell_{\max}\notin A_{r+1}^+$.
\item Otherwise, if $\ell_{\max}\in A_r^+$, then $\gamma_{\ell_{\max}}(t_{r+1})=0$ and $\hat{\zeta}_{\ell_{\max}}^{A_r}(u)<0$. In this case, we necessarily have $j-1\notin A_r$, since otherwise $A_r^+=\emptyset$, so it follows from~\eqref{eq:dualipr} that $a_s<\ell_{\max}<a_{s-1}$ for some even $s\in [q'-1]$. Now $A_{r+1}=A_r\cup\{\ell_{\max}\}\subseteq [j-2]$ under (IV'), so $\ell_{\max}\notin A_{r+1}^+\subseteq A_{r+1}^c$, and $A_{r+1}$ can be enumerated as $a_1>\cdots>a_{s-1}>a_s'>a_{s+1}'>\cdots>a_{m+1}'$, where $a_s'=\ell_{\max}$. Since $s$ is even, we deduce from~\eqref{eq:primalipr} that $\hat{\lambda}_{\ell_{\max}}^{A_{r+1}}(u)\leq 0$,
% equality can hold if $\ell_{\max}=a_{s-1}-1$
and hence that $\ell_{\max}\notin A_{r+1}^-$.
\end{itemize}

\unparskip
It remains to show that $A_{r+1}^-\cup A_{r+1}^+$ does not contain any $\ell\in\{\ell_{\max}+1,\dotsc,j-1\}$. If $\ell_{\max}=j-1$, then there is nothing to prove, so we assume that $\ell_{\max}<j-1$, in which case $j-1\notin A_r$ by the arguments above. Writing $a_q'$ for the $q^{th}$ largest element of $A_{r+1}$, we see that in both cases above, $a_1'=a_1>\cdots>a_{s-1}'=a_{s-1}$ are precisely the indices in $A_{r+1}$ that are strictly greater than $\ell_{\max}$, where $s\leq q'$. Now fix $j-1\geq\ell>\ell_{\max}$ and note that since $t_{r+1}=t_{r+2}$ by assumption, we have the following:

\unparskip
\begin{itemize}[leftmargin=0.4cm]
\item Suppose that $\ell\in A_{r+1}$ and $\beta_\ell(t_{r+2})=0$, in which case $\beta_\ell(t_{r+1})=0$, $\ell\in A_r$ and $\ell\notin A_r^-$ by the definition of $\ell_{\max}$. Thus, $\hat{\lambda}_\ell^{A_r}(u)\leq 0$, so by applying~\eqref{eq:primalipr} to $A_r$, 
% $j-1\notin A_r$
we deduce that $\ell\neq a_q$ for any odd $q\in [q']$. Since $a_q=a_q'$ for $q\leq s-1$ and $\ell>a_q'$ for $q\geq s$, this means that $\ell\neq a_q'$ for any odd $q$. Applying~\eqref{eq:primalipr} once again to $A_{r+1}$, we conclude that $\hat{\lambda}_\ell^{A_{r+1}}(u)\leq 0$, whence $\ell\notin A_{r+1}^-$.
\item Suppose that $\ell\notin A_{r+1}$ and $\gamma_\ell(t_{r+2})=0$, in which case $\gamma_\ell(t_{r+1})=0$, $\ell\notin A_r$ and $\ell\notin A_r^+$ by the definition of $\ell_{\max}$. Thus, $\hat{\zeta}_\ell^{A_r}(u)\geq 0$, so in view of~\eqref{eq:dualipr}, we cannot have $a_q<\ell<a_{q-1}$ for any even $q\in [q'-1]$. As above, it follows that we cannot have $a_q'<\ell<a_{q-1}'$ for any even $q$. Applying~\eqref{eq:dualipr} once again to $A_{r+1}$, we conclude that $\hat{\zeta}_\ell^{A_{r+1}}(u)\geq 0$, whence $\ell\notin A_{r+1}^+$.
\end{itemize}

\unparskip
This completes the justification of (IV'). Finally, we obtain both assertions of Proposition~\ref{prop:ejactive} as straightforward consequences of~\eqref{eq:primalipr} and~\eqref{eq:dualipr}.

(a) By taking $q=1$ in the first line of~\eqref{eq:dualipr}, we see that $\hat{\lambda}_\ell^{A_r}(u)>0$ for all $\max A_r=a_1<\ell\leq j-1$. Thus, in Algorithm~\ref{alg:coneproj} with modification (IV'), $A_{r+1}\subseteq A_r\cup A_r^+\subseteq\{1,\dotsc,\max A_r\}$.

(b) Since $\ell_r=a_{q'}$ here, this follows immediately from the final lines of~\eqref{eq:primalipr} and~\eqref{eq:dualipr}.
\end{proof}

\umparskip
\section{Auxiliary results for Section~\ref{sec:lsregmainproofs}}

\subsection{Auxiliary results for Section~\ref{subsec:oracle}}
The proof of Theorem~\ref{thm:worstcase} relies on the following bound on the localised Gaussian widths of the cone $\Gamma[\mathcal{D}]=\bigl\{\bigl(f(x_1),\dotsc,f(x_n)\bigr):f\in\mathcal{F}\bigr\}\subseteq\R^n$, where $\mathcal{D}$ is a set of design points $x_1<\cdots<x_n$ in $[0,1]$ and $\mathcal{F}$ is the class of all S-shaped functions on $[0,1]$. For $\theta\equiv (\theta_1,\dotsc,\theta_n)\in\R^n$ and $r>0$, recall from Section~\ref{sec:lsregmainproofs} that we defined $V(\theta)=\theta_n-\theta_1$ and $\Gamma(\theta,r)\equiv\Gamma(\theta,r)[\mathcal{D}]=\{v\in\Gamma[\mathcal{D}]:\norm{v-\theta}\leq r\}$.
\begin{lemma}
\label{lem:lgwTheta}
For a set $\mathcal{D}\subseteq [0,1]$ of design points $x_1<\cdots<x_n$ with $n \geq 2$, define $\tilde{R}(\mathcal{D})$ as in~\eqref{eq:RD}. Fix $\theta\in\Gamma[\mathcal{D}]$ and $r>0$. If $Z\sim N_n(0,I_n)$, then for all $\tilde{C}\geq 1$, we have
\begin{align}
\label{eq:lgwTheta}
\E\,\biggl(\sup_{v\in\Gamma(\theta,r)}\:\abs{Z^\top (v-\theta)}\biggr)&\lesssim\frac{r^2}{\tilde{C}}+r\sqrt{\log n}+\bigl(V(\theta)+\tilde{C}\bigr)^{1/4}\tilde{R}(\mathcal{D})^{1/8}\,r^{3/4}\notag\\
&\lesssim\frac{r^2}{\tilde{C}}+r\sqrt{\log n}+\bigl(V(\theta)+\tilde{C}\bigr)^{1/4}\biggl(\frac{x_n-x_1}{\min_{2\leq i\leq n}(x_i-x_{i-1})}\biggr)^{1/8}\,r^{3/4}.
\end{align}
\end{lemma}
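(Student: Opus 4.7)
The localised Gaussian width on the left of~\eqref{eq:lgwTheta} is taken over the union $\Gamma(\theta, r) = \bigcup_{j=1}^n \Gamma^{x_j}(\theta, r)$ of $n$ localised convex-cone slices. My plan is to handle the union by a Gaussian maximal inequality (yielding the $r\sqrt{\log n}$ term), split each slice via the natural decomposition of an S-shaped function into its increasing-convex restriction on $\{x_1, \ldots, x_j\}$ and its increasing-concave restriction on $\{x_j, \ldots, x_n\}$, and bound the resulting Gaussian widths by Dudley's chaining with design-adapted covering-number bounds for localised monotone-convex cones; a truncation at range level $V(\theta) + \tilde C$ then produces the residual $r^2/\tilde C$ term.

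\textbf{Union step and decomposition.} For each $j \in [n]$, let $W_j := \sup_{v \in \Gamma^{x_j}(\theta, r)} |Z^\top(v - \theta)|$, which (wherever $\Gamma^{x_j}(\theta,r)$ is non-empty) is an $r$-Lipschitz function of $Z$ and hence sub-Gaussian with parameter $r$ around its mean by Borell--TIS; a standard union-bound argument over $j \in [n]$ then yields $\E(\max_j W_j) \lesssim \max_j \E(W_j) + r \sqrt{\log n}$. For fixed $j$, I would split any $v \in \Gamma^{x_j}$ into $v_L \in \R^j$ (its restriction to $\mathcal{D}_L := \{x_1, \ldots, x_j\}$, an increasing convex sequence) and $v_R \in \R^{n-j+1}$ (its restriction to $\mathcal{D}_R := \{x_j, \ldots, x_n\}$, an increasing concave sequence), decompose $Z^\top(v - \theta)$ accordingly, and apply the triangle inequality to reduce to bounding the Gaussian widths of the two localised convex/concave cones $(K_L^j - \theta_L) \cap B(0, r)$ and $(-K_R^j - \theta_R) \cap B(0, r)$ separately.

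\textbf{Entropy and partitioning.} For each such localised cone, restricting further to sequences with range at most $V(\theta) + \tilde C$, I would establish an $\|\cdot\|$-covering bound of the form
\[
\log N\bigl(\epsilon,\,K_L^j \cap B(\theta_L, r) \cap \{V \leq V(\theta) + \tilde C\},\,\|\cdot\|\bigr) \lesssim (V(\theta) + \tilde C)^{1/2}\,\tilde R(\mathcal{D}_L)^{1/4}\,\epsilon^{-1/2},
\]
valid for $\epsilon \in (0, r]$, by recursively partitioning $\mathcal{D}_L$ according to the minimiser in~\eqref{eq:RD}, covering each piece on its natural equispaced scale using classical convex-sequence entropy bounds, and gluing by piecewise-linear interpolation at the partition boundaries. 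The exponent $1/5$ in~\eqref{eq:RD} is chosen exactly so that Dudley's integral $\int_0^r \sqrt{\log N(\epsilon, \cdot\,)}\, d\epsilon \lesssim (V(\theta) + \tilde C)^{1/4}\,\tilde R(\mathcal{D}_L)^{1/8}\,r^{3/4}$ aggregates cleanly across partition cells, and the subadditivity $\tilde R(\mathcal{D}_L)^{1/8} + \tilde R(\mathcal{D}_R)^{1/8} \lesssim \tilde R(\mathcal{D})^{1/8}$ (again from~\eqref{eq:RD}) then combines the two halves. Contributions from $v \in \Gamma^{x_j}(\theta, r)$ whose range exceeds $V(\theta) + \tilde C$ are absorbed by a direct Cauchy--Schwarz argument using $\|v - \theta\| \leq r$ together with the Gaussian concentration of $\|Z\|$, yielding the $r^2/\tilde C$ term; the second inequality in~\eqref{eq:lgwTheta} then follows from the trivial bound $\tilde R(\mathcal{D}) \leq (x_n - x_1)/\min_{2 \leq i \leq n}(x_i - x_{i-1})$ built into~\eqref{eq:RD}.

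\textbf{Main obstacle.} The central technical hurdle is the design-adapted entropy bound above. Existing covering results for univariate monotone-convex sequences (e.g.\ Gao--Wellner, Guntuboyina--Sen, Bellec) are formulated for equispaced designs and would give only the crude factor $R^{1/8}$ reflecting the global gap ratio $R$. Replacing this by $\tilde R(\mathcal{D})^{1/8}$ requires choosing the optimal partition in~\eqref{eq:RD}, producing separate covers on each sub-design at its local scale, and showing that linear interpolation across partition cells does not inflate either the cover cardinality or the approximation error by more than a constant factor inside the Dudley integral. Verifying that the exponent $1/5$ in~\eqref{eq:RD} delivers the compatible subadditivity at the level of Dudley's integral, uniformly in $j \in [n]$, will be the most delicate part of the argument.
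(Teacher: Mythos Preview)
Your overall strategy largely mirrors the paper's: truncate at range level $V(\theta)+\tilde C$, apply Dudley's integral with a design-adapted entropy bound built from the recursive definition of $\tilde R(\mathcal D)$, and handle the union over inflection points to produce the $r\sqrt{\log n}$ term. The paper absorbs the union at the covering-number level (gaining a $\log n$ additively in the entropy of $\Gamma_{A,B}[\mathcal D]$), whereas you do it via Borell--TIS concentration on the $W_j$'s; both routes are valid. Your entropy bound via recursive partitioning is essentially the content of the paper's covering lemma for $K_{0,1}[\mathcal D]$, which is proved cleanly by induction on $|\mathcal D|$ using~\eqref{eq:RD}, so this is not in fact the main obstacle.

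The genuine gap is in your treatment of the residual. You propose to handle the $v$'s whose range exceeds $V(\theta)+\tilde C$ by ``a direct Cauchy--Schwarz argument using $\|v-\theta\|\le r$ together with the Gaussian concentration of $\|Z\|$'', but this cannot deliver $r^2/\tilde C$: Cauchy--Schwarz gives $|Z^\top(v-\theta)|\le r\,\|Z\|$ with $\E\|Z\|\asymp\sqrt n$, which is far too large. The paper does not separate the $v$'s by range; instead it \emph{truncates} every $v$ coordinate-wise to $\pi_k(v)\in[\theta_1-2^k,\theta_n+2^k]^n$ (which preserves the S-shape) and bounds the truncation error $Z^\top\bigl(v-\pi_k(v)\bigr)$ by exploiting monotonicity. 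The key point you are missing is that monotonicity of both $v$ and $\theta$ forces, for each dyadic level $\ell\ge k$, the set $\{i:v_i\notin[\theta_1-2^{\ell+1},\theta_n+2^{\ell+1}]\}$ to lie inside $\{1,\dots,\lfloor r^2/2^{2\ell}\rfloor\}\cup\{n-\lfloor r^2/2^{2\ell}\rfloor+1,\dots,n\}$, a fixed index set that does \emph{not} depend on $v$. This lets one pull the supremum over $v$ inside, sum $\sum_{\ell\ge k}2^{\ell+1}\sum_{i\le r^2/2^{2\ell}}\E|Z_i|\lesssim r^2/2^k$, and choose $2^k\asymp\tilde C$. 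Without this localisation-to-boundary argument your residual step fails; even if you instead tried to bound $\|Z_S\|$ over all index sets $S$ of size $O(r^2/\tilde C^2)$, the supremum over such $S$ would introduce an unwanted logarithmic factor.
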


\unparskip
We will derive this result from the bounds in Lemma~\ref{lem:coverTheta} and~\ref{lem:convcover} on the covering numbers of 
\begin{align*}
\Gamma_{A,B}[\mathcal{D}]&:=\{(v_1,\dotsc,v_n)\in\Gamma[\mathcal{D}]:A\leq v_i\leq B\text{ for all }i\}\\
K_{A,B}[\mathcal{D}]&:=\bigl\{\bigl(f(x_1),\dotsc,f(x_n)\bigr),\,f\text{ is convex },\,A\leq f\leq B\bigr\},
\end{align*}
where $-\infty<A<B<\infty$. For $\varepsilon>0$ and $U\subseteq\R^n$, recall that $\mathcal{N}\subseteq\R^n$ is said to be an \emph{$\varepsilon$-cover} of $U$ (with respect to the Euclidean norm $\norm{{\cdot}}$) if $U\subseteq\bigcup_{u\in\mathcal{N}}\bar{B}(u,\varepsilon)$, where $\bar{B}(u,\varepsilon):=\{v\in\R^n:\norm{v-u}\leq\varepsilon\}$. We denote by $N(\varepsilon,U):=\inf\{\abs{\mathcal{N}}:\mathcal{N}\text{ is an }\varepsilon\text{-cover of }U\}\in\N\cup\{\infty\}$ the \emph{$\varepsilon$-covering number} of $U$.
\begin{lemma}
\label{lem:coverTheta}
In the setting of Lemma~\ref{lem:lgwTheta}, the following holds for all $-\infty<A<B<\infty$ and $\varepsilon>0$:
\begin{equation}
\label{eq:coverTheta}
\log N(\varepsilon,\Gamma_{A,B}[\mathcal{D}])\lesssim\log n+(B-A)^{1/2}\,\frac{\tilde{R}(\mathcal{D})^{1/4}}{\varepsilon^{1/2}}.
\end{equation}
\end{lemma}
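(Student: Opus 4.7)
The plan is to cover $\Gamma_{A,B}[\mathcal{D}]$ by reducing to convex and concave pieces split at an inflection point, then exploit a monotonicity property of $\tilde{R}$ under sub-designs.

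By the same argument as in the proof of Proposition~\ref{prop:existence}, every element of $\Gamma_{A,B}[\mathcal{D}]$ is of the form $\bigl(f(x_1),\dotsc,f(x_n)\bigr)$ for some $f\in\mathcal{F}$ whose inflection point lies in $\{x_1,\dotsc,x_n\}$, and hence $\Gamma_{A,B}[\mathcal{D}]=\bigcup_{k=1}^{n}\Gamma_{A,B}^{x_k}[\mathcal{D}]$, where $\Gamma_{A,B}^{x_k}[\mathcal{D}]$ collects those sequences induced by functions in $\mathcal{F}^{x_k}$ bounded between $A$ and $B$. The union over $k$ contributes the $\log n$ term in~\eqref{eq:coverTheta}. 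For each fixed $k$, write $\mathcal{D}_L^{(k)}:=\{x_1,\dotsc,x_k\}$ and $\mathcal{D}_R^{(k)}:=\{x_{k+1},\dotsc,x_n\}$. The coordinate splitting $v\mapsto\bigl((v_1,\dotsc,v_k),(v_{k+1},\dotsc,v_n)\bigr)$ is an isometry of $\R^n$ onto a product Euclidean space, under which $\Gamma_{A,B}^{x_k}[\mathcal{D}]$ embeds into the product of the class of convex sequences in $[A,B]$ based on $\mathcal{D}_L^{(k)}$ and the class of concave sequences in $[A,B]$ based on $\mathcal{D}_R^{(k)}$. Applying Lemma~\ref{lem:convcover} to the first factor directly, and to the second via negation (which transforms it into a convex covering problem of equal diameter on $[-B,-A]$), and gluing $(\varepsilon/\sqrt{2})$-covers of each factor, one obtains an $\varepsilon$-cover of $\Gamma_{A,B}^{x_k}[\mathcal{D}]$ with log-cardinality bounded by a universal multiple of
$$(B-A)^{1/2}\bigl(\tilde{R}(\mathcal{D}_L^{(k)})^{1/4}+\tilde{R}(\mathcal{D}_R^{(k)})^{1/4}\bigr)/\varepsilon^{1/2}.$$

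The principal obstacle is to control this sum uniformly in $k$ by a constant multiple of $\tilde{R}(\mathcal{D})^{1/4}$. My approach is to establish the monotonicity property $\tilde{R}(\mathcal{D}')\leq\tilde{R}(\mathcal{D})$ whenever $\mathcal{D}'\subseteq\mathcal{D}$, by strong induction on $|\mathcal{D}|$. When $\tilde{R}(\mathcal{D})$ is attained by the ratio term in~\eqref{eq:RD}, monotonicity is immediate since dropping design points shrinks the range and enlarges the minimum gap, so $\tilde{R}(\mathcal{D}')\leq\text{ratio}(\mathcal{D}')\leq\text{ratio}(\mathcal{D})=\tilde{R}(\mathcal{D})$. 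When $\tilde{R}(\mathcal{D})=\bigl(\sum_{\ell=1}^{k}\tilde{R}(\mathcal{D}_\ell)^{1/5}\bigr)^5$ for an optimal partition $\mathcal{D}=\mathcal{D}_1\cup\cdots\cup\mathcal{D}_k$ with $k\geq 2$ disjoint blocks, I would consider the induced decomposition of $\mathcal{D}'$ into the sets $\mathcal{D}'\cap\mathcal{D}_\ell$ (empty blocks discarded), distinguishing two subcases. If two or more induced blocks remain, they form an admissible partition of $\mathcal{D}'$, and the inductive hypothesis applied block-by-block (permissible since $|\mathcal{D}_\ell|<|\mathcal{D}|$) yields $\tilde{R}(\mathcal{D}')\leq\bigl(\sum_\ell\tilde{R}(\mathcal{D}'\cap\mathcal{D}_\ell)^{1/5}\bigr)^5\leq\bigl(\sum_\ell\tilde{R}(\mathcal{D}_\ell)^{1/5}\bigr)^5=\tilde{R}(\mathcal{D})$. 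If only one block $\mathcal{D}_{\ell_0}$ survives, the inductive hypothesis on $\mathcal{D}_{\ell_0}$ gives $\tilde{R}(\mathcal{D}')\leq\tilde{R}(\mathcal{D}_{\ell_0})$, while the trivial inequality $\tilde{R}(\mathcal{D}_{\ell_0})^{1/5}\leq\sum_\ell\tilde{R}(\mathcal{D}_\ell)^{1/5}$ yields $\tilde{R}(\mathcal{D}_{\ell_0})\leq\tilde{R}(\mathcal{D})$ after raising to the fifth power.

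With monotonicity established, $\tilde{R}(\mathcal{D}_L^{(k)})^{1/4}+\tilde{R}(\mathcal{D}_R^{(k)})^{1/4}\leq 2\,\tilde{R}(\mathcal{D})^{1/4}$ uniformly in $k$, and combining this with the union bound over inflection points delivers~\eqref{eq:coverTheta}. The hard part is the degenerate subcase in the monotonicity induction, where the induced partition collapses to a single block, since the desired bound $\tilde{R}(\mathcal{D}_{\ell_0})\leq\tilde{R}(\mathcal{D})$ must come from a direct structural inequality rather than the inductive hypothesis itself, which would be circular.
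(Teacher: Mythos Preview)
Your proof is correct and follows essentially the same route as the paper: decompose $\Gamma_{A,B}[\mathcal{D}]$ as a union over the $n$ possible inflection points, embed each $\Gamma_{A,B}^{x_k}[\mathcal{D}]$ into a product of convex and concave sequence classes on the left and right sub-designs, and apply Lemma~\ref{lem:convcover} to each factor. The one substantive addition is your inductive proof of the monotonicity $\tilde{R}(\mathcal{D}')\leq\tilde{R}(\mathcal{D})$ for $\mathcal{D}'\subseteq\mathcal{D}$, which the paper simply asserts ``by~\eqref{eq:RD}'' without justification; your treatment of the degenerate single-block subcase via $\tilde{R}(\mathcal{D}_{\ell_0})^{1/5}\leq\sum_\ell\tilde{R}(\mathcal{D}_\ell)^{1/5}$ is exactly the right structural inequality and is not circular.
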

\begin{lemma}
\label{lem:convcover}
For any finite set $\mathcal{D}\subseteq [0,1]$ and every $\varepsilon>0$, we have
\begin{equation}
\label{eq:convcoverR}
\log N(\varepsilon,K_{0,1}[\mathcal{D}])\lesssim\frac{\tilde{R}(\mathcal{D})^{1/4}}{\varepsilon^{1/2}}.
\end{equation}
\end{lemma}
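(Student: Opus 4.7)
The strategy is to argue by strong induction on $|\mathcal{D}|$, with the structure of the proof mirroring the recursive definition of $\tilde{R}(\mathcal{D})$ in~\eqref{eq:RD}. Since $\tilde{R}(\mathcal{D})$ is defined as a minimum of two quantities, it suffices to bound $\log N(\varepsilon,K_{0,1}[\mathcal{D}])\cdot\varepsilon^{1/2}$ by each of them separately: a direct entropy estimate controlled by the raw gap ratio, and a partition-based estimate whose exponent arithmetic exactly matches the second term in~\eqref{eq:RD}.

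For the direct bound, I would aim to show
\[\log N(\varepsilon,K_{0,1}[\mathcal{D}])\lesssim R(\mathcal{D})^{1/4}\,\varepsilon^{-1/2},\quad\text{where}\quad R(\mathcal{D}):=\frac{x_n-x_1}{\min_{2\leq i\leq n}(x_i-x_{i-1})}.\]
The plan here is to extend each $v\in K_{0,1}[\mathcal{D}]$ to a continuous convex function on $[x_1,x_n]$ by piecewise linear interpolation, and then invoke the classical metric entropy bound of order $(x_n-x_1)^{1/2}\delta^{-1/2}$ for the class of convex functions on $[x_1,x_n]$ bounded in $[0,1]$ (in the $L^2([x_1,x_n])$ sense). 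Converting back from this $L^2$-cover to a cover in the discrete Euclidean norm on $\R^n$ introduces a factor of order $\min_{i}(x_i-x_{i-1})^{-1/2}$, coming from the comparison between the natural weighted $\ell^2$-norm on piecewise linear functions (with weights $x_i-x_{i-1}$) and the unweighted Euclidean norm. Combining gives the exponent $(x_n-x_1)^{1/4}/\min_i(x_i-x_{i-1})^{1/4}=R(\mathcal{D})^{1/4}$.

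For the partition step, for any partition $\mathcal{D}=\mathcal{D}_1\sqcup\cdots\sqcup\mathcal{D}_k$ into $k\geq 2$ consecutive non-empty subsets, the restriction of any bounded convex function to each subset is again convex and bounded, so
\[K_{0,1}[\mathcal{D}]\subseteq K_{0,1}[\mathcal{D}_1]\times\cdots\times K_{0,1}[\mathcal{D}_k]\subseteq\R^n.\]
A product of $\varepsilon_\ell$-covers of $K_{0,1}[\mathcal{D}_\ell]$ thus yields an $\bigl(\sum_{\ell=1}^k\varepsilon_\ell^2\bigr)^{1/2}$-cover of $K_{0,1}[\mathcal{D}]$, so that for any $(\varepsilon_\ell)$ with $\sum_{\ell=1}^k\varepsilon_\ell^2\leq\varepsilon^2$,
\[\log N(\varepsilon,K_{0,1}[\mathcal{D}])\leq\sum_{\ell=1}^k\log N(\varepsilon_\ell,K_{0,1}[\mathcal{D}_\ell])\lesssim\sum_{\ell=1}^k\tilde{R}(\mathcal{D}_\ell)^{1/4}\,\varepsilon_\ell^{-1/2}\]
by the induction hypothesis. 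Optimising via Lagrange multipliers under the constraint $\sum_\ell\varepsilon_\ell^2=\varepsilon^2$ gives the optimum at $\varepsilon_\ell\propto\tilde{R}(\mathcal{D}_\ell)^{1/10}$ and an overall bound of order $\varepsilon^{-1/2}\bigl(\sum_{\ell=1}^k\tilde{R}(\mathcal{D}_\ell)^{1/5}\bigr)^{5/4}$. Minimising over all partitions and taking the minimum against the direct bound then yields $\tilde{R}(\mathcal{D})^{1/4}\varepsilon^{-1/2}$ by~\eqref{eq:RD}, closing the induction.

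The main technical obstacle is the direct bound: some care is needed to avoid losing factors of the design irregularity when passing from the continuous $L^2$-norm to the discrete Euclidean norm, and to handle the boundary behaviour of convex functions (which can have unbounded slope near endpoints) compatibly with piecewise linear interpolation. The remaining bookkeeping is essentially forced by~\eqref{eq:RD}: the exponent $1/5$ in the definition of $\tilde{R}(\mathcal{D})$ is precisely the value for which the Lagrange optimisation above returns a partition-minimised bound with the correct exponent $1/4$ on $\tilde{R}(\mathcal{D})$.
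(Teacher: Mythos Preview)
Your approach is essentially the paper's: strong induction on $|\mathcal{D}|$, a direct entropy bound matching the first term in~\eqref{eq:RD} (the paper obtains this by citing~\citet[Lemma~A.4]{GS13} and~\citet[Lemma~3.3]{Cha16} rather than re-deriving the continuous-to-discrete comparison), and a partition bound with the optimal allocation $\varepsilon_\ell\propto\tilde{R}(\mathcal{D}_\ell)^{1/10}$, which the paper packages as a separate Lemma~\ref{lem:optim}.

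One slip to fix: you restrict the partition step to \emph{consecutive} subsets, but~\eqref{eq:RD} minimises over \emph{all} partitions of $\mathcal{D}$, so as written your argument only controls $\log N$ by the minimum over consecutive partitions, which can exceed $\tilde{R}(\mathcal{D})$. The remedy is immediate: the inclusion $K_{0,1}[\mathcal{D}]\subseteq\prod_\ell K_{0,1}[\mathcal{D}_\ell]$ holds for an arbitrary partition, since the restriction of a convex sequence on $\mathcal{D}$ to any subset $\mathcal{D}_\ell$ (consecutive or not) is still convex on $\mathcal{D}_\ell$. Drop ``consecutive'' and the induction closes as you intend.
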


\unparskip
We first give the proof of~\eqref{eq:convcoverR}, which gives rise to the definition of $\tilde{R}(\mathcal{D})$ in~\eqref{eq:RD}, and then deduce Lemmas~\ref{lem:coverTheta} and~\ref{lem:lgwTheta} in that order.

\unparskip
\begin{proof}[Proof of Lemma~\ref{lem:convcover}]
% As discussed after the proof of Theorem~\ref{thm:worstcase}, Lemma~\ref{lem:coverTheta} can be refined as follows. We first show that for any set $\mathcal{D}$ of design points $x_1<\cdots<x_n$ and $\varepsilon>0$, the bound~\eqref{eq:convcover} above can be improved to
% where defining $\tilde{R}(\mathcal{D})$ as in~\eqref{eq:RD}, we have $\tilde{R}(\mathcal{D})\leq (x_n-x_1)/\min_{2\leq i\leq n}(x_i-x_{i-1})$ when $n\geq 2$.
We proceed by induction on $n=\abs{\mathcal{D}}$: for $n=1$, the bound clearly holds since $\tilde{R}(\mathcal{D})=1$ by definition, so suppose now that $\mathcal{D}=\{x_1<\cdots<x_n\}$ for some $n\geq 2$. First, by taking $c_1=n\min_{2\leq i\leq n}(x_i-x_{i-1})$ in the second bound in~\citet[Lemma~A.4]{GS13} and then arguing as in the proof of~\citet[Lemma~3.3]{Cha16} ~\citet[Lemma~3.3]{Cha16}, we see that
\begin{equation}
\label{eq:convcoverRn1}
\log N(\varepsilon,K_{0,1}[\mathcal{D}])\lesssim\frac{1}{\varepsilon^{1/2}}\biggl(\frac{(x_n-x_1)}{\min_{2\leq i\leq n}(x_i-x_{i-1})}\biggr)^{1/4}.
\end{equation}
In addition, for a fixed partition of $\mathcal{D}$ into $k\geq 2$ non-empty sets $\mathcal{D}_1,\dotsc,\mathcal{D}_k$, we now define $\tilde{a}_\ell:=\tilde{R}(\mathcal{D}_\ell)^{1/10}/\bigl(\sum_{\ell'=1}^k \tilde{R}(\mathcal{D}_{\ell'})^{1/5}\bigr)^{1/2}$ for $\ell\in [k]$. Given $\varepsilon>0$, let $\varepsilon_\ell:=\varepsilon\tilde{a}_\ell$ for each $\ell$, so that $\varepsilon^2=\sum_{\ell=1}^k\varepsilon_\ell^2$. Then since $\abs{\mathcal{D}_1},\dotsc,\abs{\mathcal{D}_k}<\abs{\mathcal{D}}$, it follows by induction that
\begin{equation}
\label{eq:convcoverRn2}
\log N(\varepsilon,K_{0,1}[\mathcal{D}])\leq\sum_{\ell=1}^k\log N(\varepsilon_\ell,K_{0,1}[\mathcal{D}_\ell])\lesssim\sum_{\ell=1}^k\,\frac{\tilde{R}(\mathcal{D}_\ell)^{1/4}}{\varepsilon_\ell^{1/2}}=\frac{\bigl(\sum_{\ell=1}^k \tilde{R}(\mathcal{D}_\ell)^{1/5}\bigr)^{5/4}}{\varepsilon^{1/2}},
\end{equation}
where by Lemma~\ref{lem:optim} below (with $b_\ell=\tilde{R}(\mathcal{D}_\ell)^{1/2}$ for all $\ell$), our choice of $\varepsilon_1,\dotsc,\varepsilon_k$ minimises the penultimate expression above subject to the constraint $\varepsilon^2=\sum_{\ell=1}^k\varepsilon_\ell^2$. Minimising the right hand side of~\eqref{eq:convcoverRn2} over all partitions of $\mathcal{D}$ into $k\geq 2$ non-empty subsets, we can combine~\eqref{eq:convcoverRn1} and~\eqref{eq:convcoverRn2} to complete the inductive step for~\eqref{eq:convcoverR}, in view of the definition of $\tilde{R}(\mathcal{D})$ in~\eqref{eq:RD}.
\end{proof}

\unparskip
\begin{lemma}
\label{lem:optim}
For fixed $b_1,\dotsc,b_k>0$, the unique solution to the optimisation problem
\[\min\;\sum_{\ell=1}^k\rbr{\frac{b_\ell}{a_\ell}}^{1/2}\quad\text{subject to }\;\;\sum_{\ell=1}^k a_\ell^2 =1,\,a_\ell>0\text{ for }\ell\in [k]\]
is given by $\sum_{\ell=1}^k(b_\ell/a_\ell^*)^{1/2}=\bigl(\sum_{\ell=1}^k b_\ell^{2/5}\bigr)^{5/4}$, where $a_\ell^*:=b_\ell^{1/5}/\bigl(\sum_{\ell=1}^k b_\ell^{2/5}\bigr)^{1/2}$ for each $\ell$.
\end{lemma}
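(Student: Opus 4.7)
The plan is to deduce the lemma from a single application of Hölder's inequality, exploiting the specific exponents $1/2$ and $2$ appearing in the objective and constraint.

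First I would observe the key algebraic identity $b_\ell^{2/5} = (b_\ell^{1/2}a_\ell^{-1/2})^{4/5}\,(a_\ell^2)^{1/5}$ for each $\ell\in [k]$, which is designed precisely so that the two factors can be handled by Hölder with conjugate exponents $p=5/4$ and $q=5$. Applying Hölder's inequality then yields
\[
\sum_{\ell=1}^k b_\ell^{2/5} \leq \biggl(\sum_{\ell=1}^k b_\ell^{1/2}a_\ell^{-1/2}\biggr)^{\!4/5}\biggl(\sum_{\ell=1}^k a_\ell^2\biggr)^{\!1/5} = \biggl(\sum_{\ell=1}^k (b_\ell/a_\ell)^{1/2}\biggr)^{\!4/5}
\]
upon invoking the constraint $\sum_\ell a_\ell^2=1$. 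Raising to the power $5/4$ gives the sharp lower bound $\bigl(\sum_\ell b_\ell^{2/5}\bigr)^{5/4} \leq \sum_\ell (b_\ell/a_\ell)^{1/2}$, valid for every admissible $(a_1,\dotsc,a_k)$.

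It then remains to check that equality is achieved, with equality being uniquely attained at $a^*_\ell = b_\ell^{1/5}/\bigl(\sum_m b_m^{2/5}\bigr)^{1/2}$. The standard equality condition for Hölder (that the two sequences $(b_\ell^{1/2}a_\ell^{-1/2})^{4/5\cdot p}$ and $(a_\ell^2)^{1/5\cdot q}$ are proportional) reduces to $a_\ell \propto b_\ell^{1/5}$; combined with $\sum a_\ell^2 = 1$, this forces $a_\ell = a^*_\ell$. A direct substitution then verifies that $\sum_\ell (b_\ell/a^*_\ell)^{1/2} = \bigl(\sum_\ell b_\ell^{2/5}\bigr)^{5/4}$, closing the argument.

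There is no real obstacle here: the only subtlety is finding the right algebraic decomposition that matches Hölder's inequality to the powers $1/2$ and $2$, but this is uniquely pinned down by requiring $p\cdot\frac{4}{5}\cdot(-\frac{1}{2}) + q\cdot\frac{1}{5}\cdot 2 = 0$ along with $1/p+1/q=1$ if one instead wanted to identify the exponents by solving for stationarity (equivalently via a Lagrange multiplier computation, which would give the same minimiser $a^*_\ell$).
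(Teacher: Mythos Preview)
Your proof is correct and essentially identical to the paper's: both write $b_\ell^{2/5}=(b_\ell/a_\ell)^{2/5}\cdot a_\ell^{2/5}$ and apply H\"older's inequality with conjugate exponents $p=5/4$ and $q=5$, so that the two factors become $(b_\ell/a_\ell)^{1/2}$ and $a_\ell^2$ respectively, and both identify the equality case $a_\ell\propto b_\ell^{1/5}$ in the same way. The only difference is notational---the paper introduces $\tau=2/5$ and writes the decomposition as $(b_\ell^\tau/a_\ell^\tau)\cdot a_\ell^\tau$, whereas you group the exponents slightly differently---but the underlying argument is the same.
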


\deparskip
\begin{proof}[Proof of Lemma~\ref{lem:optim}]
Let $\tau=2/5$, $p=2/(2-\tau)$ and $q=2/\tau$, so that $1/p+1/q=1$, $\tau p=1/2$ and $\tau q=2$. Then by H\"older's inequality,
% \sum_{\ell=1}^k b_\ell^\tau \leq\bigl\{\sum_{\ell=1}^k(b_\ell^\tau/a_\ell^\tau)^p\bigr\}^{1/p}\bigl(\sum_{\ell=1}^k a_\ell^{\tau q}\bigr)^{1/q}=\bigl\{\sum_{\ell=1}^k(b_\ell^\tau/a_\ell^\tau)^p\bigr\}^{1/p}
\[\sum_{\ell=1}^k b_\ell^{2/5}=\sum_{\ell=1}^k b_\ell^\tau\leq\cbr{\sum_{\ell=1}^k \rbr{\frac{b_\ell^\tau}{a_\ell^\tau}}^p}^{1/p}\rbr{\sum_{\ell=1}^k a_\ell^{\tau q}}^{1/q}=\rbr{\sum_{\ell=1}^k b_\ell^{2/5}}^{5/4},
% \cbr{\sum_{\ell=1}^k\rbr{\frac{b_\ell}{a_\ell}}^{1/2}}^{4/5}=\cbr{\sum_{\ell=1}^k\rbr{\frac{b_\ell}{a_\ell}}^{1/2}}^{4/5}\rbr{\sum_{\ell=1}^k a_\ell^2}^{1/5}\geq\sum_{\ell=1}^k b_\ell^{2/5},
% \sum_{\ell=1}^k\rbr{\frac{b_\ell}{a_\ell}}^{1/2}\geq\rbr{\sum_{\ell=1}^k b_\ell^{2/5}}^{5/4}
\]
with equality if and only if $a_\ell=b_\ell^{p/(p+q)}/\lambda=b_\ell^{1/5}/\lambda$ for all $\ell$, where taking $\lambda=\bigl(\sum_{\ell=1}^k b_\ell^{2/5}\bigr)^{1/2}$ ensures that $\sum_{\ell=1}^k a_\ell^2=1$. 
% To optimise the right hand side subject to the constraint $\varepsilon^2=\sum_{\ell=1}^k\varepsilon_\ell^2$, it follows from H\"older's inequality that we should take $\varepsilon_\ell:=\varepsilon $ for $1\leq\ell\leq k$.
\end{proof}

\deparskip
\begin{proof}[Proof of Lemma~\ref{lem:coverTheta}]
By a scaling argument, it suffices to show that
\begin{equation}
\label{eq:cover01}
\log N(\varepsilon,\Gamma_{0,1}[\mathcal{D}])\lesssim\log n+\frac{\tilde{R}(\mathcal{D})^{1/4}}{\varepsilon^{1/2}}
\end{equation}
for all $\varepsilon>0$, i.e.\ that~\eqref{eq:coverTheta} holds when $A=0$ and $B=1$. Indeed, for general $-\infty<A<B<\infty$, define the invertible affine map $L_{A,B}\colon\R^n\to\R^n$ by $L_{A,B}(v)_i:=A+(B-A)v_i$ for $v\equiv (v_1,\dotsc,v_n)\in\R^n$ and $i\in [n]$, so that $\Gamma_{A,B}[\mathcal{D}]=\{L_{A,B}(v):v\in\Gamma_{0,1}[\mathcal{D}]\}$. If~\eqref{eq:cover01} holds, then for any $\varepsilon>0$, we can find an $\varepsilon/(B-A)$-cover $\mathcal{N}$ of $\Gamma_{0,1}[\mathcal{D}]$ with $\log\,\abs{\mathcal{N}}\lesssim\log n+(B-A)^{1/2}\,(\sqrt{Rn}/\varepsilon)^{1/2}$. For any $\theta\in\Gamma_{A,B}[\mathcal{D}]$, there exists $\theta^*\in\mathcal{N}$ satisfying $\norm{\theta-L_{A,B}(\theta^*)}=(B-A)\norm{L_{A,B}^{-1}(\theta)-\theta^*}\leq\varepsilon$, so $\mathcal{N}_{A,B}:=\{L_{A,B}(v):v\in\mathcal{N}\}$ is an $\varepsilon$-cover of $\Gamma_{A,B}[\mathcal{D}]$ with $\log\,\abs{\mathcal{N}_{A,B}}=\log\,\abs{\mathcal{N}}\lesssim\log n+(B-A)^{1/2}\,(\sqrt{Rn}/\varepsilon)^{1/2}$, as desired.

To establish~\eqref{eq:cover01}, fix $\varepsilon>0$ and let $\Gamma_{0,1}^m[\mathcal{D}]:=\bigl\{\bigl(f(x_1),\dotsc,f(x_n)\bigr):f\in\mathcal{F}^m,\,0\leq f\leq 1\bigr\}$ for $m\in [0,1]$, so that $\Gamma_{0,1}[\mathcal{D}]=\bigcup_{j=1}^{\,n}\Gamma_{0,1}^{x_j}[\mathcal{D}]$ and \[N(\varepsilon,\Gamma_{0,1}[\mathcal{D}])\leq\sum_{i=1}^n N(\varepsilon,\Gamma_{0,1}^{x_j}[\mathcal{D}])\leq n\max_{1\leq j\leq n}N(\varepsilon,\Gamma_{0,1}^{x_j}[\mathcal{D}]).\]
Now for $j\in [n]$, let $\mathcal{D}_j^-:=\{x_i:1\leq i\leq j\}$ and $\mathcal{D}_j^+:=\{x_i:j+1\leq i\leq n\}$. Then $\Gamma_{0,1}[\mathcal{D}]\subseteq K_{0,1}[\mathcal{D}_j^-]\times \bigl(-K_{0,1}[\mathcal{D}_j^+]\bigr)$ and $\tilde{R}(\mathcal{D}_j^\pm)\leq \tilde{R}(\mathcal{D})$ by~\eqref{eq:RD}, so it follows from Lemma~\ref{lem:convcover} that
\[\log N(\varepsilon,\Gamma_{0,1}^{x_j}[\mathcal{D}])\leq\log N\bigl(\varepsilon/\sqrt{2},K_{0,1}[\mathcal{D}_j^-]\bigr)+\log N\bigl(\varepsilon/\sqrt{2},K_{0,1}[\mathcal{D}_j^+]\bigr)\lesssim\frac{\tilde{R}(\mathcal{D})^{1/4}}{\varepsilon^{1/2}}.\]
We conclude that
\[\log N(\varepsilon,\Gamma_{0,1}[\mathcal{D}])\leq\log n+\max_{1\leq j\leq n}\log N(\varepsilon_\ell,\Gamma_{0,1}^{x_j}[\mathcal{D}])\lesssim\log n+\frac{\tilde{R}(\mathcal{D})^{1/4}}{\varepsilon^{1/2}},\]
% For $1\leq a\leq b\leq n$, let $K_{0,1}^{a,b}\equiv K_{0,1}[x_a,\dotsc,x_b]:=\bigl\{\bigl(f(x_a),f(x_{a+1}),\dotsc,f(x_b)\bigr):f\text{ is convex},\,0\leq f\leq 1\bigr\}$. Now for each $1\leq j\leq n$, we have $\Gamma_{0,1}^{x_j}\subseteq K_{0,1}^{1,j}\times\!\bigl(-K_{0,1}^{j+1,n}\bigr)$, so 
% \begin{align}
% \log N(\varepsilon,\Gamma_{0,1}^{x_j})\leq\log N\bigl(\varepsilon/\sqrt{2},K_{0,1}^{1,j}\bigr)+\log N\bigl(\varepsilon/\sqrt{2},K_{0,1}^{j+1,n}\bigr)&\lesssim\bigl(j^{1/4}+(n-j)^{1/4}\bigr)\biggl(\frac{\sqrt{R}}{\varepsilon}\biggr)^{1/2}\notag\\
% \label{eq:coverGamma}
% &\lesssim n^{1/4}\,\biggl(\frac{\sqrt{R}}{\varepsilon}\biggr)^{1/2}.
% \end{align}
% for all $\varepsilon>0$. We conclude that $\log N(\varepsilon,\Gamma_{0,1})\leq \log n+\max_{1\leq j\leq n}\log N(\varepsilon,\Gamma_{0,1}^{x_j})\lesssim\log n+(\sqrt{Rn}/\varepsilon)^{1/2}$ for all $\varepsilon>0$, 
as required.
\end{proof}

\unparskip
When $\varepsilon\gg B-A$, it turns out that in the proof above, we do not have to construct separate $\varepsilon$-covers for each of the sets $\Gamma_{A,B}^{x_1}[\mathcal{D}],\dotsc,\Gamma_{A,B}^{x_n}[\mathcal{D}]$ individually. This is because elements of $\Gamma_{A,B}^{x_j}[\mathcal{D}]$ can be approximated to accuracy $\varepsilon$ by those in a covering set for $\Gamma_{A,B}^{x_{j'}}[\mathcal{D}]$ with $j'$ close to $j$. In general, we can improve the first $\log n$ term in~\eqref{eq:coverTheta} to $\log\bigl(1\vee\{n(B-A)^2/\varepsilon^2\}\wedge n\bigr)$, and hence obtain an overall bound in Lemma~\ref{lem:coverTheta} that tends to 0 as $\varepsilon\rightarrow \infty$.  We omit further details of these additional arguments, since this improved result leads to the same worst-case oracle inequality~\eqref{eq:worstpr} as in Theorem~\ref{thm:worstcase} (possibly with a slightly smaller universal constant $C$).

\unparskip
\begin{proof}[Proof of Lemma~\ref{lem:lgwTheta}]
Fix $\theta\in\Gamma$ and let $Z\sim N_n(0,I_n)$. For every $k\in\N$, let $A_k:=\theta_1-2^k=\min_{1\leq i\leq n}\theta_i-2^k$ and $B_k:=\theta_n+2^k=\max_{1\leq i\leq n}\theta_i+2^k$, and define $\pi_k(s):=s\vee A_k\wedge B_k$ for $s\in\R$. Note that if $v\in\Gamma$, then $\pi_k(v):=(\pi_k(v_1),\dotsc,\pi_k(v_n))\in\Gamma_{A_k,B_k}=:\tilde{\Gamma}_k$. Moreover, $\theta\in\tilde{\Gamma}_k$ in view of our choice of $A_k,B_k$, and if $v\in\Gamma(\theta,r)$ for some $r>0$, then $\pi_k(v)\in\tilde{\Gamma}_k(\theta,r)$. Consequently, for any $r>0$ and $k\in\N$, we have
\begin{align}
\E\,\biggl(\sup_{v\in\Gamma(\theta,r)}\:\abs{Z^\top (v-\theta)}\biggr)&\leq\E\,\biggl(\sup_{v\in\Gamma(\theta,r)}\:\bigl|Z^\top \bigl(\pi_k(v)-\theta\bigr)\bigr|\biggr)+\E\,\biggl(\sup_{v\in\Gamma(\theta,r)}\:\bigl|Z^\top \bigl(v-\pi_k(v)\bigr)\bigr|\biggr)\notag\\
\label{eq:lgwsum}
&\leq\E\,\biggl(\sup_{u\in\tilde{\Gamma}_k(\theta,r)}\:\abs{Z^\top (u-\theta)}\biggr)+\E\,\biggl(\sup_{v\in\Gamma(\theta,r)}\:\bigl|Z^\top \bigl(v-\pi_k(v)\bigr)\bigr|\biggr).
\end{align}
To bound the first term in~\eqref{eq:lgwsum}, observe first that by the triangle inequality, $\tilde{\Gamma}_k(\theta,r)$ has diameter $d:=\sup\{\norm{v-v'}:v,v'\in\tilde{\Gamma}_k(\theta,r)\}\leq 2r$. We can now apply Lemma~\ref{lem:coverTheta} in conjunction with Dudley's metric entropy bound for Gaussian processes~\citep[e.g.][Theorem~2.3.7]{GN15} to see that
% because $\theta\in\tilde{\Gamma}_k(\theta,r)$
\begin{align}
\E\,\biggl(\sup_{u\in\tilde{\Gamma}_k(\theta,r)}\:\abs{Z^\top (u-\theta)}\biggr)&\leq 4\sqrt{2}\int_0^{d/2}\sqrt{\log 2N\bigl(\varepsilon,\tilde{\Gamma}_k(\theta,r)\bigr)}\,d\varepsilon\leq 4\sqrt{2}\int_0^r\sqrt{\log 2N(\varepsilon,\Gamma_{A_k,B_k})}\,d\varepsilon\notag\\
&\lesssim\int_0^r\,\bigl\{\sqrt{\log n}+(B_k-A_k)^{1/4}\tilde{R}(\mathcal{D})^{1/8}\,\varepsilon^{-1/4}\bigr\}\,d\varepsilon\notag\\
\label{eq:entint}
&\lesssim r\sqrt{\log n}+\bigl(V(\theta)+2^k\bigr)^{1/4}\tilde{R}(\mathcal{D})^{1/8}\,r^{3/4}.
\end{align}
As for the second term in~\eqref{eq:lgwsum}, we define $I_{1,\ell}(v):=\{1\leq i\leq n:A_{\ell+1}<v_i\leq A_\ell\}$ and $I_{2,\ell}(v):=\{1\leq i\leq n:B_\ell\leq v_i<B_{\ell+1}\}$ for $v\equiv (v_1,\dotsc,v_n)\in\R^n$ and $\ell\in\N$. Note that if $j\in I_{1,\ell}(v)$ for some $\ell\geq k$, then $\theta_j-v_j\geq\theta_1-A_\ell=2^\ell$ and $0\leq\pi_k(v_j)-v_j<\theta_1-A_{\ell+1}=2^{\ell+1}$. Similarly, $v_j-\theta_j\geq 2^\ell$ and $0\leq v_j-\pi_k(v_j)<2^{\ell+1}$ for all $j\in I_{2,\ell}(v)$. Thus, if $v\in\Gamma(\theta,r)$, then $\sum_{i=1}^n (\theta_i-v_i)^2\leq r^2$, so $\abs{I_{1,\ell}(v)}\vee\abs{I_{2,\ell}(v)}\leq r^2/2^{2\ell}$; in fact, since $v_1\leq\cdots\leq v_n$, this means that $I_{1,\ell}(v)\subseteq\{1,\dotsc,\floor{r^2/2^{2\ell}}\}$ and $I_{2,\ell}(v)\subseteq\{n+1-i:1\leq i\leq\floor{r^2/2^{2\ell}}\}$. Consequently, for every $v\in\Gamma(\theta,r)$, we have
\begin{align*}
\bigl|Z^\top\bigl(v-\pi_k(v)\bigr)\bigr|\leq\sum_{\ell=k}^\infty\,\sum_{i\in I_{1,\ell}(v)\,\cup\, I_{2,\ell}(v)}\abs{Z_i}\,\abs{v_i-\pi_k(v_i)}&\leq\sum_{\ell=k}^\infty\,2^{\ell+1}\sum_{i\in I_{1,\ell}(v)\,\cup\, I_{2,\ell}(v)}\abs{Z_i}\\
&\leq\sum_{\ell=k}^\infty\,2^{\ell+1}\sum_{i=1}^{\floor{r^2/2^{2\ell}}}(\abs{Z_i}+\abs{Z_{n+1-i}}),
\end{align*}
so
\begin{equation}
\label{eq:lgw2}
\E\,\biggl(\sup_{v\in\Gamma(\theta,r)}\:\bigl|Z^\top \bigl(v-\pi_k(v)\bigr)\bigr|\biggr)\leq\sum_{\ell=k}^\infty\,2^{\ell+2}\sum_{i=1}^{\floor{r^2/2^{2\ell}}}\E(\abs{Z_i})\lesssim\sum_{\ell=k}^\infty\frac{r^2}{2^\ell}\lesssim\frac{r^2}{2^k}.
\end{equation}
Finally, for any $\tilde{C}\geq 1$, let $k\in\N$ be such that $2^{k-1}\leq\tilde{C}<2^k$. The desired bound~\eqref{eq:lgwTheta} then follows from~\eqref{eq:lgwsum},~\eqref{eq:entint} and~\eqref{eq:lgw2}.
\end{proof}
% \begin{remark}
% \label{rem:lgwTheta}
% For general design points $x_1<\cdots<x_n$, we can extend Lemma~\ref{lem:lgwTheta} by multiplying the final term in~\eqref{eq:lgwTheta} by $R^{1/8}$, where $R\geq 1$ is such that $\min_{2\leq i\leq n}(x_i-x_{i-1})\geq R^{-1}(x_n-x_1)/(n-1)$. The reason for this is that in the proof of Lemma~\ref{lem:coverTheta}, the bounds in~\eqref{eq:convcover},~\eqref{eq:cover01} and hence~\eqref{eq:coverTheta} remain valid provided that we multiply the final term on the right-hand side by $R^{1/4}$ in each case. To see this, we can take $c_1=R^{-1}(x_n-x_1)$ in the second bound in~\citet[Lemma~A.4]{GS13} and then proceed as in the proof of~\citet[Lemma~3.3]{Cha16}.
% \end{remark}

\umparskip
\subsection{Auxiliary results for Section~\ref{subsec:inflection}}
\label{subsec:inflectproofs}
Here, we establish the key technical Lemmas~\ref{lem:kinkdist}--\ref{lem:step3} that form part of the proof of Theorem~\ref{thm:inflection}, as well as Lemma~\ref{lem:lamlb} from the proof of Proposition~\ref{prop:lamlb}. 
% on which the proof relies
Lemma~\ref{lem:convslopes} below is the starting point for the proof of Lemma~\ref{lem:kinkdist}, and applies to general configurations of design points $x_1<\cdots<x_n$ (which need not be equispaced).
\begin{lemma}
\label{lem:convslopes}
Let $x_k$ be a kink of the convex LSE $\hat{g}_n$ based on $(x_1,Y_1),\dotsc,(x_n,Y_n)$. Let $\bar{x}_L:=k^{-1}\sum_{i=1}^k x_i$ and $\bar{x}:=n^{-1}\sum_{i=1}^n x_i$. Then \[\frac{\sum_{i=1}^k(x_i-\bar{x}_L)Y_i}{\sum_{i=1}^k(x_i-\bar{x}_L)^2}\leq\frac{\sum_{i=1}^n(x_i-\bar{x})Y_i}{\sum_{i=1}^n(x_i-\bar{x})^2}.\]
In other words, the slope of the regression line fitted using $\{(x_i,Y_i):1\leq i\leq k\}$ is at most that of the regression line fitted using $\{(x_i,Y_i):1\leq i\leq n\}$.
\end{lemma}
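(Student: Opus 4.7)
Set $Z_i := Y_i - \hat g_n(x_i)$ and $A := \sum_{i=1}^k Z_i$; let $\tilde{\beta}_L$ and $\tilde{\beta}$ denote the OLS slopes of the \emph{fitted values} $(\hat g_n(x_i))$ over $\{x_1,\dotsc,x_k\}$ and $\{x_1,\dotsc,x_n\}$ respectively. Substituting $Y_i = \hat g_n(x_i) + Z_i$ in the two Pearson ratios decomposes them as
\[
\beta_L = \tilde{\beta}_L + \frac{\sum_{i=1}^k(x_i - \bar x_L)Z_i}{\sum_{i=1}^k(x_i - \bar x_L)^2}, \qquad \beta = \tilde{\beta} + \frac{\sum_{i=1}^n(x_i - \bar x)Z_i}{\sum_{i=1}^n(x_i - \bar x)^2},
\]
so it suffices to establish (i) $\beta = \tilde{\beta}$, (ii) $\beta_L \leq \tilde{\beta}_L$, and (iii) $\tilde{\beta}_L \leq \tilde{\beta}$; these chain to $\beta_L \leq \tilde{\beta}_L \leq \tilde{\beta} = \beta$.

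\textbf{Perturbation step for (i) and (ii).} Since $\mathcal C$ is a convex cone containing the affines, two-sided affine perturbations of $\hat g_n$ yield $\sum_i Z_i = 0$ and $\sum_i x_i Z_i = 0$, which immediately gives (i). The hinge $\psi_j(x) := (x_j - x)^+$ is convex, so $\hat g_n + c\psi_j$ remains in $\mathcal C$ for any $c \geq 0$, and first-order optimality gives $\sum_{i=1}^j(x_j - x_i) Z_i \leq 0$ for every $j \in [n]$. Whenever $x_j$ is a kink of $\hat g_n$, the opposite perturbation $\hat g_n - c\psi_j$ is also feasible for small $c>0$ (the kink strength absorbs the small concavity introduced by $-\psi_j$), so the inequality becomes an equality. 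Applying the equality at $j = k$ rewrites $\sum_{i=1}^k x_i Z_i$ as $x_k A$, and hence $\sum_{i=1}^k(x_i - \bar x_L)Z_i = (x_k - \bar x_L)A$. Subtracting this equality from the inequality at $j = k+1$ (legitimate because kinks of a piecewise-linear convex function are interior, so $k \leq n-1$) collapses everything to $(x_{k+1} - x_k) A \leq 0$, so $A \leq 0$; combined with $x_k > \bar x_L$ this yields (ii).

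\textbf{OLS monotonicity for convex sequences.} For (iii) I argue by induction on $m = k, k+1, \dotsc, n-1$ that the slope $\tilde\beta^{(m)}$ of the OLS line of $(\hat g_n(x_i))_{i=1}^m$ is non-decreasing in $m$. The rank-one update formula gives
\[
\tilde\beta^{(m+1)} - \tilde\beta^{(m)} = C_m \cdot (x_{m+1} - \bar x^{(m)}) \cdot \bigl(\hat g_n(x_{m+1}) - L^{(m)}(x_{m+1})\bigr), \qquad C_m > 0,
\]
where $L^{(m)}$ is the OLS line on the first $m$ points. The first factor is positive since $x_{m+1}$ is rightmost. I decompose the second factor as $\bigl(\hat g_n(x_m) - L^{(m)}(x_m)\bigr) + \bigl(s_{m,m+1} - \tilde\beta^{(m)}\bigr)(x_{m+1} - x_m)$; the pairwise-slope representation of $\tilde\beta^{(m)}$ as a $(x_j-x_i)^2$-weighted average of the $s_{ij}$, together with convexity $s_{ij} \leq s_{m-1,m} \leq s_{m,m+1}$, makes the second summand non-negative. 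For the first summand, the residuals $\rho_i := \hat g_n(x_i) - L^{(m)}(x_i)$ form a convex sequence on $\{x_1,\dotsc,x_m\}$ orthogonal to $\{1,x\}$; convexity forces $\{i:\rho_i<0\}$ to be a contiguous middle block and $\{i:\rho_i>0\}$ to lie at the extremes, and the two orthogonality conditions force the $|\rho|$-weighted centroids of these two index sets to coincide, which precludes the positive block from being concentrated on only one side. Hence $\rho_1, \rho_m \geq 0$, so the first summand too is non-negative, completing (iii) and the proof.

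\textbf{Main obstacle.} The most delicate ingredient is the endpoint-sign claim for a convex sequence orthogonal to $\{1,x\}$ used in Step (iii): convexity alone admits negative endpoint values and the two orthogonality conditions alone admit non-convex shapes, so one must exploit both structures simultaneously. The centroid argument above is the cleanest route I see; once it is in hand, everything else is the perturbation calculus underlying Lemma~\ref{lem:bdadj} combined with an elementary OLS identity.
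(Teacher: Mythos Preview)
Your proof is correct but follows a different route from the paper in two places. For the reduction to fitted values, the paper constructs a single perturbation $h\in\mathcal{G}$ with $h(x_i)=S_L^{-2}(x_i-\bar x_L)-S^{-2}(x_i-\bar x)$ for $i\leq k$ and $h(x_i)=-S^{-2}(x_i-\bar x)$ for $i>k$; feasibility of $\hat g_n+\eta h$ for small $\eta>0$ (using the kink at $x_k$ to absorb the local concavity of $h$) yields $\sum_i h(x_i)Z_i\leq 0$, which is exactly $(\beta_L-\beta)-(\tilde\beta_L-\tilde\beta)\leq 0$ in one stroke. Your route via separate affine and hinge perturbations to obtain $\beta=\tilde\beta$ and $A\leq 0$ is equally valid and more modular, though it takes a few more lines.

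The larger difference is in step~(iii). The paper writes $\hat v^n=\bigl(\hat g_n(x_1),\dotsc,\hat g_n(x_n)\bigr)=\sum_{\ell}\hat\lambda_\ell u^{n,\ell}$ in terms of the cone generators $u^{n,\ell}_i=(x_i-x_\ell)^+$ with $\hat\lambda_\ell\geq 0$ for $\ell\geq 2$, and observes that for each generator the OLS slope on the first $k$ points is at most that on all $n$ points ($0\leq b_{k\ell}\leq b_{n\ell}\leq 1$); linearity then gives $\tilde\beta_L=\sum_\ell\hat\lambda_\ell b_{k\ell}\leq\sum_\ell\hat\lambda_\ell b_{n\ell}=\tilde\beta$. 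Your inductive argument via the rank-one OLS update and the endpoint-sign lemma for convex residual sequences orthogonal to $\{1,x\}$ achieves the same conclusion without invoking the generator basis. The paper's decomposition is shorter and makes the monotonicity transparent one hinge at a time; your argument is more self-contained (it does not rely on the cone's explicit generators) and in fact proves the stronger statement that $m\mapsto\tilde\beta^{(m)}$ is non-decreasing for every convex sequence, which the paper's argument also implicitly contains but does not isolate.
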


\deparskip
\begin{proof}
Let $S_L^2:=\sum_{i=1}^k(x_i-\bar{x}_L)^2$ and $S^2:=\sum_{i=1}^n(x_i-\bar{x})^2$. Then 
\[S^2\geq\sum_{i=1}^k(x_i-\bar{x})^2=\sum_{i=1}^k(x_i-\bar{x}_L)^2+k(\bar{x}_L-\bar{x})^2\geq S_L^2>0,\] 
since $k\geq 2$. The linear functions $h_L,h_R\colon\R\to\R$ defined by $h_L(x):=S_L^{-2}(x-\bar{x}_L)-S^{-2}(x-\bar{x})$ and $h_R(x):=-S^{-2}(x-\bar{x})$ have slopes $S_L^{-2}-S^{-2}\geq 0$ and $-S^{-2}<0$ respectively. Now let $h\in\mathcal{G}$ be such that $h(x_i)=h_L(x_i)$ for $i\in [k]$ and $h(x_i)=h_R(x_i)$ for $k+1\leq i\leq n$. Since $h_L(x_k)=S_L^{-2}(x_k-\bar{x}_L)-S^{-2}(x_k-\bar{x})\geq -S^{-2}(x_k-\bar{x})=h_R(x_k)$, this means that $h$ is convex on both $[x_1,x_k]$ and $[x_k,x_n]$ (and locally concave at $x_k$, a kink of $\hat{g}_n$). Therefore, $\hat{g}_n+\eta h\in\mathcal{G}$ is convex for sufficiently small $\eta>0$, whence $\sum_{i=1}^n h(x_i)\bigl(Y_i-\hat{g}_n(x_i)\bigr)\leq 0$ by~\eqref{eq:projconv} or Lemma~\ref{lem:projconv}. 

To establish that $\sum_{i=1}^n h(x_i)\,Y_i=\sum_{i=1}^k (h_L-h_R)(x_i)\,Y_i+\sum_{i=1}^n h_R(x_i)\,Y_i\leq 0$, as claimed in the lemma, it therefore suffices to show that \[\sum_{i=1}^n h(x_i)\,\hat{g}_n(x_i)=\sum_{i=1}^k (h_L-h_R)(x_i)\,\hat{g}_n(x_i)+\sum_{i=1}^n h_R(x_i)\,\hat{g}_n(x_i)\leq 0,\]
i.e.\ that the slope of the regression line fitted using $\{(x_i,\hat{g}_n(x_i)):1\leq i\leq k\}$ is at most that of the regression line fitted using $\{(x_i,\hat{g}_n(x_i)):1\leq i\leq n\}$. To this end, for $j\in [n]$, let $K^{1,j}\subseteq\R^j$ be the closed, convex cone of convex sequences based on $x_1,\dotsc,x_j$, as defined at the start of Section~\ref{sec:lsregmainproofs}, and define $\hat{v}^j:=\bigl(\hat{g}_n(x_1),\dotsc,\hat{g}_n(x_j)\bigr)\in K^{1,j}$. Let $\pm u^{j,0},\pm u^{j,1},u^{j,2},\dotsc,u^{j,j-1}\in K^{1,j}$ be its generators, where $u_i^{j,0}=1$ and $u_i^{j,\ell}=(x_i-x_\ell)^+$ for all $i\in [j]$ and $\ell\in [j-1]$ as in the paragraph containing~\eqref{eq:Thetak}. Since $\hat{v}^n\in K^{1,n}$, we can write $\hat{v}^n=\sum_{\ell=0}^{n-1}\hat{\lambda}_\ell u^\ell$ for some $\hat{\lambda}_0,\dotsc,\hat{\lambda}_{n-1}\in\R$ with $\hat{\lambda}_2,\dotsc,\hat{\lambda}_{n-1}\geq 0$. Let $A_L:=\{0,1\}\cup\{2\leq\ell\leq k-1:\hat{\lambda}_\ell>0\}$ and $A_R:=\{k\leq\ell\leq n-1:\hat{\lambda}_\ell>0\}$, so that \[\hat{v}^k=\sum_{\ell\in A_L}\hat{\lambda}_\ell u^{k,\ell}\quad\text{and}\quad\hat{v}^n=\sum_{\ell\in A_L}\hat{\lambda}_\ell u^{n,\ell}+\sum_{\ell\in A_R}\hat{\lambda}_\ell u^{n,\ell}.\]
For $j\in [n]$, let $\tilde{P}_j\in\R^{j\times j}$ represent the orthogonal projection onto $L_j:=\Span\{u^{j,0},u^{j,1}\}$, 
% the subspace of linear sequences defined with respect to $x_1,\dotsc,x_j$, 
so that if $z\in\R^j$, then $\tilde{P}_jz$ is the vector of fitted values from ordinary least squares regression based on $\{(x_i,z_i):1\leq i\leq j\}$. We say that $v\in L_j$ has \emph{slope} $b$ if $v_i-v_{i-1}=b(x_i-x_{i-1})$ for $2\leq i\leq j$, and denote by $b_{j\ell}$ the slope of $\tilde{P}_j u^{j,\ell}$ for $0\leq\ell\leq j-1$. Since $k\leq n$, observe that $0\leq b_{k\ell}\leq b_{n\ell}\leq 1$ for all $0\leq\ell\leq k-1$ and $b_{k\ell}=b_{n\ell}$ for $r\in\{0,1\}$. Writing $b_k$ and $b_n$ for the slopes of $\tilde{P}_k\hat{v}^k=\sum_{\ell\in A_L}\hat{\lambda}_\ell\tilde{P}_k u^{k,\ell}$ and $\tilde{P}_n\hat{v}^n=\sum_{\ell\in A_L}\hat{\lambda}_\ell\tilde{P}_n u^{n,\ell}+\sum_{\ell\in A_R}\hat{\lambda}_\ell\tilde{P}_n u^{n,\ell}$ respectively, we conclude that 
% by the definitions of $A_L,A_R$
\[b_k=\sum_{\ell\in A_L}\hat{\lambda}_\ell\,b_{k\ell}\leq\sum_{\ell\in A_L}\hat{\lambda}_\ell\,b_{n\ell}+\sum_{\ell\in A_R}\hat{\lambda}_\ell\,b_{n\ell}=b_n.\]
This completes the proof.
\end{proof}

% By a similar and simpler argument (whose details we omit), we can also establish the following:
% \begin{lemma}
% \label{lem:convmeans}
% If $x_k$ is a kink of the increasing convex LSE $\hat{h}_n^n$ based on $(x_1,Y_1),\dotsc,(x_n,Y_n)$, then $k^{-1}\sum_{i=1}^k Y_i\leq n^{-1}\sum_{i=1}^n Y_i$.
% \end{lemma}
\deparskip
\begin{proof}[Proof of Lemma~\ref{lem:kinkdist}]
% We focus here on the case $\alpha>1$, where $\hat{g}_{n,0}$ is defined as the convex LSE based on $\{(x_i,Y_i):x_i\in\mathcal{I}_0\}$; at the end of the proof, we explain briefly how to modify the arguments below when $\alpha\in (0,1)$. 
% We focus on the case $\alpha>1$; the argument for $\alpha<1$ is similar. 
If $\hat{\tau}_{0L}\neq m_0$ (i.e.\ $\{i:x_i\in\mathcal{I}_{01}\}$ is non-empty), then $\hat{\tau}_{0L}$ is a kink of $\hat{g}_{n,0}$ in $(m_0,(m_0+\tilde{m}_+)/2]$. Define $N_{01}:=\abs{\{i:x_i\in\mathcal{I}_{01}\}}\vee 1$, $N_0:=\abs{\{i:x_i\in\mathcal{I}_0\}}$, $\bar{x}_{01}:=N_{01}^{-1}\sum_{i: x_i\in\mathcal{I}_{01}}x_i$ and $\bar{x}_0:=N_0^{-1}\sum_{i: x_i\in\mathcal{I}_0}x_i$, where we suppress the dependence on $n$ for convenience. We deduce from Lemma~\ref{lem:convslopes} that if $N_{01}\geq 2$, then
\[\frac{\sum_{i: x_i\in\mathcal{I}_{01}}(x_i-\bar{x}_{01})Y_i}{\sum_{i: x_i\in\mathcal{I}_{01}}(x_i-\bar{x}_{01})^2}\leq\frac{\sum_{i: x_i\in\mathcal{I}_0}(x_i-\bar{x}_0)Y_i}{\sum_{i: x_i\in\mathcal{I}_0}(x_i-\bar{x}_0)^2},\]
and hence that
\begin{align}
&\frac{\sum_{i: x_i\in\mathcal{I}_{01}}(x_i-\bar{x}_{01})f_0(x_i)}{\sum_{i: x_i\in\mathcal{I}_{01}}(x_i-\bar{x}_{01})^2}-\frac{\sum_{i: x_i\in\mathcal{I}_0}(x_i-\bar{x}_0)f_0(x_i)}{\sum_{i: x_i\in\mathcal{I}_0}(x_i-\bar{x}_0)^2}\notag\\
\label{eq:slopes}
&\hspace{3cm}\leq\frac{\sum_{i: x_i\in\mathcal{I}_0}(x_i-\bar{x}_0)\,\xi_i}{\sum_{i: x_i\in\mathcal{I}_0}(x_i-\bar{x}_0)^2}-\frac{\sum_{i: x_i\in\mathcal{I}_{01}}(x_i-\bar{x}_{01})\,\xi_i}{\sum_{i: x_i\in\mathcal{I}_{01}}(x_i-\bar{x}_{01})^2}.
\end{align}
Let $\beta_{L1},\beta_{L2}$ be equal to the first and second terms respectively on the right-hand side of~\eqref{eq:slopes} when $N_{01}\geq 2$ (and set $\beta_{L1}=\beta_{L2}=0$ otherwise). Taking into account the randomness of the intervals $\mathcal{I}_{01},\mathcal{I}_0$, we claim that $(\beta_{L1}-\beta_{L2})\sqrt{n(\hat{\tau}_{0L}-m_0)^3}=O_p(\sqrt{\log n})$. Indeed, for fixed $1\leq a<b\leq n$, define $\bar{x}_{a:b}:=(b-a+1)^{-1}\sum_{i=a}^b x_i$, $S_{ab}^2:=\sum_{i=a}^b(x_i-\bar{x}_{a:b})^2$ and $\tilde{\beta}_{ab}:=S_{ab}^{-2}\,\sum_{i=a}^b (x_i-\bar{x}_{a:b})\,\xi_i$. Under Assumption~\ref{ass:inflection}, the design points $x_i\equiv x_{ni}=i/n$ are equispaced and the errors $\xi_i$ are sub-Gaussian with parameter 1, so $S_{ab}^2\asymp (b-a)^3/n^2$ and $\tilde{\beta}_{ab}$ has sub-Gaussian parameter $S_{ab}^{-2}\asymp n^2/(b-a)^3=n^{-1}(x_b-x_a)^{-3}$. Therefore, $\tilde{\beta}_{\mathrm{max}}:=\max_{1\leq a<b\leq n}\,\abs{\tilde{\beta}_{ab}}\sqrt{n(x_b-x_a)^3}=O_p(\sqrt{\log n})$~\citep[e.g.][Lemma~2.3.4]{GN15}, so
% maximum of Gaussian variables: no assumption on the dependence structure
\begin{align}
\sqrt{n(\hat{\tau}_{0L}-m_0)^3}\,\abs{\beta_{L1}}\leq 2^{3/2}\,\tilde{\beta}_{\mathrm{max}}&=O_p(\sqrt{\log n})\notag\\
\label{eq:bR}
\sqrt{n(\hat{\tau}_{0L}-m_0)^3}\,\abs{\beta_{L2}}\leq\sqrt{n(\tilde{m}_+ -m_0)^3}\,\abs{\beta_{L2}}\leq 2^{3/2}\,\tilde{\beta}_{\mathrm{max}}&=O_p(\sqrt{\log n}),
\end{align}
which justifies the claim above. Now let $b_{L1},b_{L2}$ be equal to the first and second terms respectively on the left-hand side of~\eqref{eq:slopes} when $N_{01}\geq 2$ (and set $b_{L1}=b_{L2}=0$ otherwise). For $\gamma>1$ and $x_a\in (m_0,1]$, let $s_\gamma(x_a):=n^{-1}\sum_{i:x_i\in (m_0,x_a]}(x_i-m_0)^\gamma$, and observe that if $x_{j-1}\leq m_0<x_j<x_a$, then
\begin{align}
\frac{(x_a-m_0)^{\gamma+1}}{\gamma+1}\leq\int_{x_{j-1}}^{x_a}(x-m_0)^\gamma\,dx\leq s_\gamma(x_a)&\leq\int_{x_j}^{x_{a+1}}(x_a\wedge x-m_0)^\gamma\,dx\notag\\
\label{eq:sgamma}
&\leq\frac{(x_a-m_0)^{\gamma+1}}{\gamma+1}\rbr{1+\frac{\gamma+1}{n(x_a-m_0)}}.
\end{align}
We claim that if $\hat{\tau}_{0L}-m_0\geq 2n^{-1/(2\alpha+1)}$, then
\begin{align}
b_{L1}&=f_0'(m_0)-B\bigl(1+o_p(1)\bigr)\,\frac{s_{\alpha+1}(\hat{\tau}_{0L})-2^{-1}s_\alpha(\hat{\tau}_{0L})(\hat{\tau}_{0L}-m_0)}{s_2(\hat{\tau}_{0L})-2^{-1}s_1(\hat{\tau}_{0L})(\hat{\tau}_{0L}-m_0)}\notag\\
\label{eq:bL1}
&\geq f_0'(m_0)-\frac{6\alpha B}{(\alpha+1)(\alpha+2)}\bigl(1+o_p(1)\bigr)(\hat{\tau}_{0L}-m_0)^{\alpha-1}
\end{align}
and
\begin{align}
b_{L2}&=f_0'(m_0)-B\bigl(1+o_p(1)\bigr)\,\frac{s_{\alpha+1}(\tilde{m}_+)-2^{-1}s_\alpha(\tilde{m}_+)(\tilde{m}_+-m_0)}{s_2(\tilde{m}_+)-2^{-1}s_1(\tilde{m}_+)(\tilde{m}_+-m_0)}\notag\\
&\leq f_0'(m_0)-\frac{6\alpha B}{(\alpha+1)(\alpha+2)}\bigl(1+o_p(1)\bigr)(\tilde{m}_+ -m_0)^{\alpha-1}\notag\\
\label{eq:bL2}
&\leq f_0'(m_0)-2^{\alpha-1}\frac{6\alpha B}{(\alpha+1)(\alpha+2)}\bigl(1+o_p(1)\bigr)(\hat{\tau}_{0L}-m_0)^{\alpha-1}.
\end{align}
To see this, recall that under~\eqref{eq:smoothness} in Assumption~\ref{ass:inflection}, we can write $f_0(x)=f_0(m_0)+f_0'(m_0)(x-m_0)-B\bigl(1+\eta(x-m_0)\bigr)\sgn(x-m_0)\abs{x-m_0}^\alpha$ for $x\in [0,1]$ when $\alpha>1$, where $\eta(x-m_0)\to 0$ as $x\to m_0$. Writing $x_i-\bar{x}_{01}=(x_i-m_0)-2^{-1}(\hat{\tau}_{0L}-m_0)$, we see that
\begin{align}
\textstyle\sum_{i: x_i\in\mathcal{I}_{01}}(x_i-\bar{x}_{01})\bigl(f_0(m_0)+f_0'(m_0)(x_i-m_0)\bigr)&\textstyle=f_0'(m_0)\sum_{i: x_i\in\mathcal{I}_{01}}(x_i-\bar{x}_{01})(x_i-m_0)\notag\\
\label{eq:bL1a}
&\textstyle=f_0'(m_0)\bigl(s_2(\hat{\tau}_{0L})-2^{-1}s_1(\hat{\tau}_{0L})(\hat{\tau}_{0L}-m_0)\bigr)\\
&\textstyle=f_0'(m_0)\sum_{i: x_i\in\mathcal{I}_{01}}(x_i-\bar{x}_{01})^2\notag\\[\parskip]
\label{eq:bL1b}
\text{and}\qquad\textstyle\sum_{i: x_i\in\mathcal{I}_{01}}(x_i-\bar{x}_{01})(x_i-m_0)^\alpha&=s_{\alpha+1}(\hat{\tau}_{0L})-2^{-1}s_\alpha(\hat{\tau}_{0L}).
\end{align}
Moreover, since $\omega(\delta):=\sup\,\{\abs{\eta(x-m_0)}:x\in [0,1],\,\abs{x-m_0}\leq\delta\}\to 0$ as $\delta\to 0$ and $\tilde{m}_+ -m_0=o_p(1)$ by Proposition~\ref{cor:consistency}(a), 
% justify this
we have
% $\abs{x_i-\bar{x}_{01}}\leq (x_i-m_0)+2^{-1}(\hat{\tau}_{0L}-m_0)$
\begin{align}
\textstyle\abs{\sum_{i: x_i\in\mathcal{I}_{01}}(x_i-\bar{x}_{01})\,\eta(x_i-m_0)(x_i-m_0)^\alpha}&\textstyle\leq\omega(\abs{\tilde{m}_+ -m_0})\sum_{i: x_i\in\mathcal{I}_{01}}\abs{x_i-\bar{x}_{01}}(x_i-m_0)^\alpha\notag\\
\label{eq:bL1c}
&\textstyle=o_p(1)\,\bigl(s_{\alpha+1}(\hat{\tau}_{0L})+2^{-1}s_\alpha(\hat{\tau}_{0L})(\hat{\tau}_{0L}-m_0)\bigr).
\end{align}
Combining~\eqref{eq:bL1a},~\eqref{eq:bL1b} and~\eqref{eq:bL1c}, we obtain the first equality in~\eqref{eq:bL1}. On the event $\{\hat{\tau}_{0L}-m_0\geq 2n^{-1/(2\alpha+1)}\}$, we find using~\eqref{eq:sgamma} that 
\begin{align*}
\frac{s_{\alpha+1}(\hat{\tau}_{0L})-2^{-1}s_\alpha(\hat{\tau}_{0L})(\hat{\tau}_{0L}-m_0)}{s_2(\hat{\tau}_{0L})-2^{-1}s_1(\hat{\tau}_{0L})(\hat{\tau}_{0L}-m_0)}&\leq\frac{\frac{\alpha}{2(\alpha+1)(\alpha+2)}+\frac{1}{n(\hat{\tau}_{0L}-m_0)}}{\frac{1}{12}-\frac{1}{2n(\hat{\tau}_{0L}-m_0)}}(\hat{\tau}_{0L}-m_0)^{\alpha-1}\\
&\leq\bigl(1+o(1)\bigr)\frac{6\alpha}{(\alpha+1)(\alpha+2)}(\hat{\tau}_{0L}-m_0)^{\alpha-1},
\end{align*}
which justifies the lower bound on $b_{L1}$ in~\eqref{eq:bL1}. We can derive~\eqref{eq:bL2} similarly by first establishing analogues of~\eqref{eq:bL1a},~\eqref{eq:bL1b} and~\eqref{eq:bL1c}, and then applying~\eqref{eq:sgamma} to see that
\begin{align*}
\frac{s_{\alpha+1}(\tilde{m}_+)-2^{-1}s_\alpha(\tilde{m}_+)(\tilde{m}_+-m_0)}{s_2(\tilde{m}_+)-2^{-1}s_1(\tilde{m}_+)(\tilde{m}_+-m_0)}&\geq\frac{\frac{\alpha}{2(\alpha+1)(\alpha+2)}-\frac{1}{2n(\tilde{m}_+-m_0)}}{\frac{1}{12}+\frac{1}{n(\tilde{m}_+-m_0)}}(\tilde{m}_+-m_0)^{\alpha-1}\\
&\geq\bigl(1+o(1)\bigr)\frac{6\alpha}{(\alpha+1)(\alpha+2)}(\tilde{m}_+-m_0)^{\alpha-1}
\end{align*}
on the event $E_n^+\supseteq\{\hat{\tau}_{0L}-m_0\geq 2n^{-1/(2\alpha+1)}\}$. Since $\tilde{m}_+-m_0\geq 2(\hat{\tau}_{0L}-m_0)$ and $\alpha>1$, this yields the upper bound on $b_{L2}$ in~\eqref{eq:bL2}.

Thus, on the event $\{\hat{\tau}_{0L}-m_0\geq 2n^{-1/(2\alpha+1)}\}$, we can apply~\eqref{eq:bL1},~\eqref{eq:bL2},~\eqref{eq:slopes} and~\eqref{eq:bR} in that order to deduce that
\begin{align}
\label{eq:slopesbd}
(2^{\alpha-1}-1)\frac{6\alpha B}{(\alpha+1)(\alpha+2)}\bigl(1+o_p(1)\bigr)\sqrt{n}(\hat{\tau}_{0L}-m_0)^{\alpha+\frac{1}{2}}&\leq (b_{L1}-b_{L2})\sqrt{n(\hat{\tau}_{0L}-m_0)^3}\\
&\leq (\beta_{L1}-\beta_{L2})\sqrt{n(\hat{\tau}_{0L}-m_0)^3}\notag\\
&=O_p(\sqrt{\log n}).\notag
\end{align}
% Is it possible to turn this into an exponential tail bound? What about the LHS?
% Thus, in all cases, $(\hat{\tau}_{0L}-m_0)^{\alpha+\frac{1}{2}}\leq (2^{\alpha+\frac{1}{2}}n^{-1/2})\vee O_p(\sqrt{(\log n)/n})=O_p(\sqrt{(\log n)/n})$, as desired
We conclude that $\hat{\tau}_{0L}-m_0=O_p\bigl((n/\log n)^{-1/(2\alpha+1)}\bigr)$, as required.
% The proof that $\tilde{m}_+ -\hat{\tau}_{0R}=O_p\bigl((n/\log n)^{-1/(2\alpha+1)}\bigr)$ is similar; we can argue as in~\eqref{eq:slopes}--\eqref{eq:slopesbd} above, replacing $\hat{\tau}_{0L}$ and $\mathcal{I}_{01}=(m_0,\hat{\tau}_{0L}]$ with $\hat{\tau}_{0R}$ and $\mathcal{I}_{01}\cup\mathcal{I}_{02}=(m_0,\hat{\tau}_{0R}]$ respectively. 
%% On the event $E_n^+$ / no need: the LHS is zero on $(E_n^+)^c$ by definition
% The analogue of~\eqref{eq:slopesbd} in this case is
%% TODO: doesn't hold for $\alpha<1/2$: rewrite the proof
% \begin{equation}
% \label{eq:slopesbR}
% \frac{6\alpha B}{(\alpha+1)(\alpha+2)}\bigl(1+o_p(1)\bigr)\bigl((\tilde{m}_+-m_0)^{\alpha-1}-(\hat{\tau}_{0R}-m_0)^{\alpha-1}\bigr)\sqrt{n(\hat{\tau}_{0R}-m_0)^3}\leq O_p(\sqrt{\log n}).
% \end{equation}
% The only difference is that we now have $\hat{\tau}_{0R}-m_0\leq\tilde{m}_+-m_0\leq 2(\hat{\tau}_{0R}-m_0)$, in contrast to the final bound in~\eqref{eq:bL2} above. Since $\alpha>1$, we have \[(1+\eta)^{\alpha-1}-1\geq\min(\alpha-1,2^{\alpha-1}-1)\,\eta\geq \min(\alpha-1,2^{\alpha-1}-1)\,\eta^{\alpha+\frac{1}{2}}\]
% for all $\eta\in [0,1]$. Taking $\eta=(\tilde{m}_+ -\hat{\tau}_{0R})/(\hat{\tau}_{0R}-m_0)\in [0,1]$, we deduce that the left-hand side of~\eqref{eq:slopesbR} is bounded below by $C_\alpha B\bigl(1+o_p(1)\bigr)\sqrt{n}(\tilde{m}_+ -\hat{\tau}_{0R})^{\alpha+\frac{1}{2}}$, where $C_\alpha:=\frac{6\alpha}{(\alpha+1)(\alpha+2)}\min(1,2^{\alpha-1}-1)>0$. This yields the desired conclusion when $\alpha>1$.
\end{proof}

\deparskip
\begin{proof}[Proof of Lemma~\ref{lem:step2lbd}]
Since $C_n\to\infty$, we have \[t_n=\sqrt{C_n}\,(n/\log n)^{-1/(2\alpha+1)}< 4^{-1}C_n(n/\log n)^{-1/(2\alpha+1)}=u_n/2\]
for all sufficiently large $n$. For each $n$, let $a_n:=\ceil{n(m_0+u_n/2)}$ and $b_n:=\floor{n(m_0+u_n)}$, so that $x_{a_n}<m_0+u_n/2\leq x_{a_n+1}$ and $x_{b_n-1}\leq m_0+u_n<x_{b_n}$. Then for all sufficiently large $n$, we have $\inf_{(a,b)\in\mathcal{T}_n}\inf_{c_0,c_1}\norm{\theta^{a,b}-c_0\mathbf{1}^{a,b}-c_1 x^{a,b}}^2\geq\inf_{c_0,c_1}\norm{\theta^{a_n,b_n}-c_0\mathbf{1}^{a_n,b_n}-c_1 x^{a_n,b_n}}^2=:R_n$. 

For $x\in [m_0,1]$, recall from~\eqref{eq:smoothness} in Assumption~\ref{ass:inflection} that \[f_0(x)=
\begin{cases}
f_0(m_0)+f_0'(m_0)(x-m_0)-B\bigl(1+\eta(x-m_0)\bigr)(x-m_0)^\alpha\quad&\text{when }\alpha>1\\
f_0(m_0)+B\bigl(1+\eta(x-m_0)\bigr)(x-m_0)^\alpha\quad&\text{when }\alpha\in (0,1),
\end{cases}
\]
where $\eta(x-m_0)\to 0$ as $x\to m_0$. For each $n$, let $\widetilde{\theta}_i^{a_n,b_n}:=B\bigl(1+\eta(x_i-m_0)\bigr)(x_i-m_0)^\alpha$ 
% $\sgn(x_i-m_0)=1$
for $m_0\leq a_n\leq i\leq b_n$ and  $\widetilde{\theta}^{a_n,b_n}:=\bigl(\widetilde{\theta}_{a_n}^{a_n,b_n},\dotsc,\widetilde{\theta}_{b_n}^{a_n,b_n}\bigr)$, so that $\widetilde{\theta}^{a_n,b_n}+\theta^{a_n,b_n}\in\Span\{\mathbf{1}^{a_n,b_n},x^{a_n,b_n}\}=:A^{a_n,b_n}$ when $\alpha>1$ and $\widetilde{\theta}^{a_n,b_n}-\theta^{a_n,b_n}\in A^{a_n,b_n}$ when $\alpha\in (0,1)$. In addition, let $\bar{x}_{a_n,b_n}:=(b_n-a_n+1)^{-1}\sum_{i=a_n}^{b_n} x_i$,
% $\ipr{x^{a_n,b_n}}{\mathbf{1}^{a_n,b_n}}/\norm{\mathbf{1}^{a_n,b_n}}^2$
so that $\widetilde{x}^{a_n,b_n}:=x^{a_n,b_n}-\bar{x}_{a_n,b_n}\mathbf{1}^{a_n,b_n}$ satisfies $\ipr{\mathbf{1}^{a_n,b_n}}{\widetilde{x}^{a_n,b_n}}=0$ and $A^{a_n,b_n}=\Span\{\mathbf{1}^{a_n,b_n},\widetilde{x}^{a_n,b_n}\}$. Then
\begin{equation}
\label{eq:Rn}
R_n=\inf_{v\in A^{a_n,b_n}}\norm{\theta^{a_n,b_n}-v}^2=\inf_{v\in A^{a_n,b_n}}\norm{\widetilde{\theta}^{a_n,b_n}-v}^2=\norm{\widetilde{\theta}^{a_n,b_n}}^2-c_{n0}^2\,\norm{\mathbf{1}^{a_n,b_n}}^2-c_{n1}^2\,\norm{\widetilde{x}^{a_n,b_n}}^2,
\end{equation}
where $c_{n0}:=\ipr{\widetilde{\theta}^{a_n,b_n}}{\mathbf{1}^{a_n,b_n}}/\norm{\mathbf{1}^{a_n,b_n}}^2$ and $c_{n1}:=\ipr{\widetilde{\theta}^{a_n,b_n}}{\widetilde{x}^{a_n,b_n}}/\norm{\widetilde{x}^{a_n,b_n}}^2$. We will consider in turn the three terms on the right-hand side of~\eqref{eq:Rn}. For each $n$, let $M_n:=nu_n$ and $z_{n,i}:=(x_i-m_0)/u_n$ for $a_n\leq i\leq b_n$, where $x_i\equiv x_{ni}=i/n$ by Assumption~\ref{ass:inflection}. Then $z_{n,i+1}-z_{n,i}=1/M_n$ for all $a_n\leq i<b_n$, and $z_{n,a_n}=1/2+o(1/M_n)$ and $z_{n,b_n}=1+o(1/M_n)$ by the definitions of $a_n,b_n$. Moreover, let $\tilde{\eta}_n(z):=\bigl(1+\eta(u_n z)\bigr)^2-1$ for $z\in [1/2,1]$ and note that $\sup_{z\in [1/2,1]}\,\abs{\tilde{\eta}_n(z)}=o(1)$ as $n\to\infty$. The first term in~\eqref{eq:Rn} can now be written as
\begin{align}
\norm{\widetilde{\theta}^{a_n,b_n}}^2&=\sum_{i=a_n}^{b_n}B^2\bigl(1+\eta(x_i-m_0)\bigr)^2(x_i-m_0)^{2\alpha}\notag\\
&=nu_n^{2\alpha+1}\sum_{i=a_n}^{b_n}\frac{B^2}{nu_n}\bigl(1+\eta(x_i-m_0)\bigr)^2\rbr{\frac{x_i-m_0}{u_n}}^{2\alpha}\notag\\
\label{eq:Rn1a}
&=B^2nu_n^{2\alpha+1}\sum_{i=a_n}^{b_n}\inv{M_n}\bigl(1+\tilde{\eta}_n(z_{n,i})\bigr)\,z_{n,i}^{2\alpha}.
\end{align}
Defining $F(z):=z^{\alpha}$ for $z\in [1/2,1]$ and noting that $M_n=nu_n=2^{-1}C_n(n^{2\alpha}\log n)^{1/(2\alpha+1)}\to\infty$, we have 
\begin{equation}
\label{eq:Rn1b}
\textstyle\sum_{i=a_n}^{b_n}\inv{M_n}z_{n,i}^{2\alpha}=\sum_{j=0}^{b_n-a_n}\inv{M_n}\,F(z_{n,a_n}+j/M_n)^2=\int_{1/2}^1 F(z)^2\,dz+o(1)=\bigl(1+o(1)\bigr)\int_{1/2}^1 z^{2\alpha}\,dz
\end{equation}
as $n\to\infty$, by a Riemann sum approximation to the (uniformly) continuous function $F$ on $[1/2,1]$. Since $\bigl|\sum_{i=a_n}^{b_n}\inv{M_n}\,\tilde{\eta}_n(z_{n,i})\,z_{n,i}^{2\alpha}\bigr|\leq\sup_{z\in [1/2,1]}\,\abs{\tilde{\eta}_n(z)}\,\sum_{i=a_n}^{b_n}\inv{M_n}z_{n,i}^{2\alpha}=o(1)$, we deduce that
\begin{equation}
\label{eq:Rn1}
\textstyle\norm{\widetilde{\theta}^{a_n,b_n}}^2=B^2nu_n^{2\alpha+1}\bigl(1+o(1)\bigr)\int_{1/2}^1 z^{2\alpha}\,dz.
\end{equation}
For $\gamma\geq 0$, we can use the rescaled design points $z_{n,i}$ and argue as in~\eqref{eq:Rn1a} and~\eqref{eq:Rn1b} to see that
\[\textstyle s_{n,\gamma}:=\sum_{i=a_n}^{b_n}(x_i-m_0)^\gamma=nu_n^{\gamma+1}\textstyle\sum_{i=a_n}^{b_n}\inv{M_n}z_{n,i}^\gamma=nu_n^{\gamma+1}\bigl(1+o(1)\bigr)\textstyle\int_{1/2}^1 z^\gamma\,dz\]
and
\begin{align*}
\textstyle\tilde{s}_{n,\gamma}:=\sum_{i=a_n}^{b_n}\bigl(1+\eta(x_i-m_0)\bigr)(x_i-m_0)^\gamma&=nu_n^{\gamma+1}\textstyle\sum_{i=a_n}^{b_n}\inv{M_n}\bigl(1+\eta(u_n z_{n,i})\bigr)z_{n,i}^\gamma\\
&=nu_n^{\gamma+1}\bigl(1+o(1)\bigr)\textstyle\int_{1/2}^1 z^\gamma\,dz;
\end{align*}
see also~\eqref{eq:sgamma},~\eqref{eq:bL1b} and~\eqref{eq:bL1c} in the proof of Lemma~\ref{lem:kinkdist}. Now writing $x_i-\bar{x}_{a_n,b_n}=(x_i-m_0)-(\bar{x}_{a_n,b_n}-m_0)$ for $a_n\leq i\leq b_n$ and $\bar{x}_{a_n,b_n}-m_0=s_{n,1}/s_{n,0}$, we have
\begin{align*}
\ipr{\widetilde{\theta}^{a_n,b_n}}{\mathbf{1}^{a_n,b_n}}&=\textstyle\sum_{i=a_n}^{b_n}B\bigl(1+\eta(x_i-m_0)\bigr)(x_i-m_0)^\alpha=B\tilde{s}_{n\alpha}\\
% =B(nu_n^{\alpha+1})^2\bigl(1+o(1)\bigr)
\ipr{\widetilde{\theta}^{a_n,b_n}}{\widetilde{x}^{a_n,b_n}}&=\textstyle\sum_{i=a_n}^{b_n}B\bigl(1+\eta(x_i-m_0)\bigr)(x_i-\bar{x}_{a_n,b_n})(x_i-m_0)^\alpha=B(\tilde{s}_{n,\alpha+1}-\tilde{s}_{n,\alpha} s_{n,1}/s_{n,0})\\
\norm{\mathbf{1}^{a_n,b_n}}^2&=\textstyle\sum_{i=a_n}^{b_n}1=s_{n,0}\\
\norm{\widetilde{x}^{a_n,b_n}}^2&=\textstyle\sum_{i=a_n}^{b_n}(x_i-\bar{x}_{a_n,b_n})^2=\sum_{i=a_n}^{b_n}\{(x_i-m_0)^2-(\bar{x}_{a_n,b_n}-m_0)^2\}=s_{n,2}-s_{n,1}^2/s_{n,0}.
\end{align*}
Setting $\bar{z}:=3/4$, we note that $s_{n,1}/s_{n,0}=u_n\bigl(1+o(1)\bigr)\bar{z}$. Therefore, the second and third terms in~\eqref{eq:Rn} can be written as
\begin{alignat}{2}
c_{n0}^2\,\norm{\mathbf{1}^{a_n,b_n}}^2&=\frac{\ipr{\widetilde{\theta}^{a_n,b_n}}{\mathbf{1}^{a_n,b_n}}^2}{\norm{\mathbf{1}^{a_n,b_n}}^2}&&=\frac{B^2\tilde{s}_{n,\alpha}^2}{s_{n,0}}=B^2nu_n^{2\alpha+1}\bigl(1+o(1)\bigr)\!\rbr{\frac{\int_{1/2}^1 z^{\alpha}\,dz}{\int_{1/2}^1\,dz}}^2\notag\\
c_{n1}^2\,\norm{\widetilde{x}^{a_n,b_n}}^2&=\frac{\ipr{\widetilde{\theta}^{a_n,b_n}}{\widetilde{x}^{a_n,b_n}}^2}{\norm{\widetilde{x}^{a_n,b_n}}^2}&&=\frac{B^2(\tilde{s}_{n,\alpha+1}-\tilde{s}_{n,\alpha} s_{n,1}/s_{n,0})^2}{s_{n,2}-s_{n,1}^2/s_{n,0}}\notag\\
\label{eq:Rn2}
&&&=B^2nu_n^{2\alpha+1}\bigl(1+o(1)\bigr)\!\rbr{\frac{\int_{1/2}^1\, (z-\bar{z})z^{\alpha}\,dz}{\int_{1/2}^1\,(z-\bar{z})^2\,dz}}^2.
\end{alignat}
Now for $G,\tilde{G}\in L^2[1/2,1]$, let $\ipr{G}{\tilde{G}}_\ast:=\int_{1/2}^1 G(z)\,\tilde{G}(z)\,dz$ and $\norm{G}_\ast^2:=\ipr{G}{G}_\ast$. Moreover, define $G_0,G_1\colon [1/2,1]\to\R$ by $G_0(z):=1$ and $G_1(z):=z-\bar{z}$. These span the (closed) subspace $\mathcal{L}$ of affine functions $G\colon [1/2,1]\to\R$ and satisfy $\ipr{G_0}{G_1}_\ast=0$. Let $c_j^*:=\ipr{F}{G_j}_\ast/\norm{G_j}_\ast^2$ for $j=0,1$, so that $F^*:=c_0^*G_0+c_1^*G_1$ is the projection of $F\colon z\mapsto z^\alpha$ onto $\mathcal{L}$ with respect to $\ipr{\cdot\,}{\cdot}_\ast$. Since $\alpha\neq 1$ in Assumption~\ref{ass:inflection}, we have $F\notin\mathcal{L}$, so
\begin{equation}
\label{eq:Fstar}
\rho_\alpha:=\norm{F}_\ast^2-c_0^*\norm{G_0}_\ast^2-c_1^*\norm{G_1}_\ast^2=\norm{F-c_0^*G_0-c_1^*G_1}_\ast^2=\norm{F-F^*}_\ast^2=\textstyle\int_{1/2}^1\,(F-F^*)^2>0.
\end{equation}
Thus, for all sufficiently large $n$, we can combine~\eqref{eq:Rn},~\eqref{eq:Rn1},~\eqref{eq:Rn2} and~\eqref{eq:Fstar} to conclude that
\begin{align*}
\inf_{(a,b)\in\mathcal{T}_n}\inf_{c_0,c_1}\norm{\theta^{a,b}-c_0\mathbf{1}^{a,b}-c_1 x^{a,b}}^2\geq R_n&=B^2nu_n^{2\alpha+1}\bigl(1+o(1)\bigr)\bigl(\norm{F}_\ast^2-c_0^*\norm{G_0}_\ast^2-c_1^*\norm{G_1}_\ast^2\bigr)\\
&=\rho_\alpha B^2nu_n^{2\alpha+1}\bigl(1+o(1)\bigr).
\end{align*}
Since $u_n=2^{-1}C_n(n/\log n)^{-1/(2\alpha+1)}=2^{-1}C_n\bigl(1+o(1)\bigr)(n/\log n)^{-1/(2\alpha+1)}$ for all $n$, this completes the proof.
\end{proof}

\deparskip
\begin{proof}[Proof of Lemma~\ref{lem:step3}]
% Here, we focus on the (random) indices $\hat{\ell}\leq\check{a}\leq\check{b}\leq\hat{\ell}_+$ that are associated with the interval $\mathcal{I}_{-1}$; the argument for $\hat{k}_-\leq\hat{a}\leq\hat{b}\leq\hat{k}$ and $\mathcal{I}_1$ is similar. 
Let $h\in\mathcal{G}$ be such that $h(x_i):=\bigl(\theta^{\tilde{k},\tilde{k}'}-\breve{\theta}^{\tilde{k},\tilde{k}'}\bigr)_{\tilde{k}\vee i\wedge\tilde{k}'}$ for $i\in [n]$.
% The reason we are considering $\breve{\theta}^{\tilde{k},\tilde{k}'}$ rather than $\bar{\theta}^{\tilde{k},\tilde{k}'}$ is that we need to cover the case where $\hat{f}_n^m_0$ is constant on $[x_{\tilde{k}},x_{\tilde{k}'}]$ (one of whose endpoints is 0 or 1). Here, if we define the perturbation $h$ below based on $\bar{\theta}^{\tilde{k},\tilde{k}'}$ instead, we can no longer guarantee that $\hat{f}_n^{m_0}+\eta h\in\mathcal{F}^{m_0}$ for all sufficiently small $\eta>0$
Since $x_{\tilde{k}},x_{\tilde{k}'}$ are successive knots of $\hat{f}_n^{m_0}\in\mathcal{F}^{m_0}$ by assumption, $\breve{\theta}^{\tilde{k},\tilde{k}'}=\bigl(\hat{f}_n^{m_0}(x_{\tilde{k}}),\dotsc,\hat{f}_n^{m_0}(x_{\tilde{k}'})\bigr)$ is an affine sequence. Recalling that $f_0\in\mathcal{F}^{m_0}$ and $\theta^{\tilde{k},\tilde{k}'}=\bigl(f_0(x_{\tilde{k}}),\dotsc,f_0(x_{\tilde{k}'})\bigr)$, we can verify that $\hat{f}_n^{m_0}+\eta h\in\mathcal{H}^{m_0}$ for all sufficiently small $\eta>0$. Defining $\breve{\theta}:=\bigl(\hat{f}_n^{m_0}(x_1),\dotsc,\hat{f}_n^{m_0}(x_n)\bigr)$, $\theta:=\bigl(f_0(x_1),\dotsc,f_0(x_n)\bigr)$ and $Y^{\tilde{k},\tilde{k}'}:=(Y_{\tilde{k}},\dotsc,Y_{\tilde{k}'})$, we deduce from~\eqref{eq:projconv} or Lemma~\ref{lem:projconv} that
\begin{align*}
0&\geq\sum_{i=1}^n h(x_i)(Y_i-\breve{\theta}_i)\\
&=\bigl(\theta_{\tilde{k}}-\breve{\theta}_{\tilde{k}}\bigr)\sum_{i=1}^{\tilde{k}-1}\,(Y_i-\breve{\theta}_i)+\ipr{\theta^{\tilde{k},\tilde{k}'}-\breve{\theta}^{\tilde{k},\tilde{k}'}}{Y^{\tilde{k},\tilde{k}'}-\breve{\theta}^{\tilde{k},\tilde{k}'}}+\bigl(\theta_{\tilde{k}'}-\breve{\theta}_{\tilde{k}'}\bigr)\sum_{i=\tilde{k}'}^n\,(Y_i-\breve{\theta}_i).
\end{align*}
Now since $x_{\tilde{k}},x_{\tilde{k}'}$ are knots of $\hat{f}_n^{m_0}$, it follows from~\eqref{eq:bdadj1} in the proof of Lemma~\ref{lem:bdadj} (specialised to the setting of Proposition~\ref{prop:srestr}) that \[\textstyle\abs{\sum_{i=1}^{\tilde{k}-1}\,(Y_i-\breve{\theta}_i)}\leq\abs{Y_{\tilde{k}}-\breve{\theta}_{\tilde{k}}}\leq\abs{\theta_{\tilde{k}}-\breve{\theta}_{\tilde{k}}}+\abs{\xi_{\tilde{k}}}\;\;\;\text{and}\;\;\;\textstyle\abs{\sum_{i=\tilde{k}'}^n\,(Y_i-\breve{\theta}_i)}\leq\abs{Y_{\tilde{k}'}-\breve{\theta}_{\tilde{k}'}}\leq\abs{\theta_{\tilde{k}'}-\breve{\theta}_{\tilde{k}'}}+\abs{\xi_{\tilde{k}'}}.\]
Therefore, writing $Y^{\tilde{k},\tilde{k}'}=\theta^{\tilde{k},\tilde{k}'}+\xi^{\tilde{k},\tilde{k}'}$, we have
\[\norm{\theta^{\tilde{k},\tilde{k}'}-\breve{\theta}^{\tilde{k},\tilde{k}'}}^2\leq\ipr{\xi^{\tilde{k},\tilde{k}'}}{\breve{\theta}^{\tilde{k},\tilde{k}'}-\theta^{\tilde{k},\tilde{k}'}}+\abs{\theta_{\tilde{k}}-\breve{\theta}_{\tilde{k}}}\,\bigl(\abs{\theta_{\tilde{k}}-\breve{\theta}_{\tilde{k}}}+\abs{\xi_{\tilde{k}}}\bigr)+\abs{\theta_{\tilde{k}'}-\breve{\theta}_{\tilde{k}'}}\,\bigl(\abs{\theta_{\tilde{k}'}-\breve{\theta}_{\tilde{k}'}}+\abs{\xi_{\tilde{k}'}}\bigr).\]
Since $\abs{\theta_{\tilde{k}}-\breve{\theta}_{\tilde{k}}}\vee\abs{\theta_{\tilde{k}'}-\breve{\theta}_{\tilde{k}'}}\leq\norm{\theta^{\tilde{k},\tilde{k}'}-\breve{\theta}^{\tilde{k},\tilde{k}'}}$, this implies that
\[\norm{\theta^{\tilde{k},\tilde{k}'}-\breve{\theta}^{\tilde{k},\tilde{k}'}}\leq\frac{\ipr{\xi^{\tilde{k},\tilde{k}'}}{\breve{\theta}^{\tilde{k},\tilde{k}'}-\theta^{\tilde{k},\tilde{k}'}}}{\norm{\theta^{\tilde{k},\tilde{k}'}-\breve{\theta}^{\tilde{k},\tilde{k}'}}}+\abs{\xi_{\tilde{k}}}+\abs{\xi_{\tilde{k}'}}+\abs{\theta_{\tilde{k}}-\breve{\theta}_{\tilde{k}}}+\abs{\theta_{\tilde{k}'}-\breve{\theta}_{\tilde{k}'}}.\]
Finally, $\breve{\theta}^{\tilde{k},\tilde{k}'}$ is an affine sequence belonging to $A_{\tilde{k},\tilde{k}'}=\Span\{\mathbf{1}^{\tilde{k},\tilde{k}'},x^{\tilde{k},\tilde{k}'}\}$, so $\theta^{\tilde{k},\tilde{k}'}-\breve{\theta}^{\tilde{k},\tilde{k}'}\in L_{\tilde{k},\tilde{k}'}=\Span\{\theta^{\tilde{k},\tilde{k}'},\mathbf{1}^{\tilde{k},\tilde{k}'},x^{\tilde{k},\tilde{k}'}\}$ and it follows as in~\eqref{eq:suplin} that \[\frac{\ipr{\xi^{\tilde{k},\tilde{k}'}}{\breve{\theta}^{\tilde{k},\tilde{k}'}-\theta^{\tilde{k},\tilde{k}'}}}{\norm{\theta^{\tilde{k},\tilde{k}'}-\breve{\theta}^{\tilde{k},\tilde{k}'}}}\leq\norm{\Pi_{\tilde{k},\tilde{k}'}\,\xi^{\tilde{k},\tilde{k}'}}\leq\max_{1\leq a\leq b\leq n}\,\norm{\Pi_{a,b}\,\xi^{a,b}}.\]
Since $\abs{\theta_{\tilde{k}}-\breve{\theta}_{\tilde{k}}}+\abs{\theta_{\tilde{k}'}-\breve{\theta}_{\tilde{k}'}}\leq 2\max_{1\leq i\leq n}\,\abs{\theta_i-\breve{\theta}_i}=2\max_{1\leq i\leq n}\,\abs{(\hat{f}_n^{m_0}-f_0)(x_i)}$, we conclude that \[\norm{\theta^{\tilde{k},\tilde{k}'}-\breve{\theta}^{\tilde{k},\tilde{k}'}}\leq\max_{1\leq a\leq b\leq n}\,\norm{\Pi_{a,b}\,\xi^{a,b}}+2\max_{1\leq i\leq n}\,\abs{(\hat{f}_n^{m_0}-f_0)(x_i)}+2\max_{1\leq i\leq n}\,\abs{\xi_i}=\Xi,\]
as required.
\end{proof}

\deparskip
\begin{proof}[Proof of Lemma~\ref{lem:lamlb}]
For $\delta\in (0,(1-m_0)/2]$ and $t\in [0,(1-m_0)/\delta]$, let \[g_{0,\delta}(t):=\delta^{-\alpha}\bigl(f_0(m_0)+f_0'(m_0)\delta t-f_0(m_0+\delta t)\bigr)\quad\text{and}\quad u_\delta:=\delta^{-(\alpha-1)}\bigl(f_0'(m_0)-u(m_0+\delta)\bigr),\]
where $\alpha>1$. Then $g_{0,\delta}$ is convex and non-negative on $[0,2]$ for each such $\delta$, and since $u(m_0+\delta)$ was taken to be a subgradient of $\restr{f_0}{[m_0,1]}$ at $m_0+\delta$, we see that $u_\delta$ is a subgradient of $g_{0,\delta}$ at $t=1$. 
% subgradient chain rule etc
Assumption~\ref{ass:inflection} ensures that $g_{0,\delta}$ converges uniformly on $[0,2]$ to the function $g_0\colon t\mapsto Bt^\alpha$ as $\delta\to 0$, and so by taking $C=(0,2)$ in~\citet[Lemma~3.10]{SS11}, we deduce further that $u_\delta\to g_0'(1)=B\alpha$ as $\delta\to 0$. 

Moreover, for $\delta\in (0,(1-m_0)/2]$ and $t\in [0,1]$, let 
\begin{align*}
g_{1,\delta}(t)&:=\delta^{-\alpha}\bigl(f_0(m_0)+f_0'(m_0)\delta t-f_{1,\delta}(m_0+\delta t)\bigr)\\
&\phantom{:}=\delta^{-\alpha}\bigl\{\bigl(f_0(m_0)+f_0'(m_0)\delta-f_0(m_0+\delta)\bigr)-\delta(1-t)\bigl(f_0'(m_0)-u(m_0+\delta)\bigr)\bigr\}\\
&\phantom{:}=g_{0,\delta}(1)-(1-t)u_\delta\leq g_{0,\delta}(t)\\[\parskip]
% since $u_\delta$ is a subgradient
\text{and}\qquad g_{2,\delta}(t)&:=\delta^{-\alpha}\bigl(f_0(m_0)+f_0'(m_0)\delta t-f_{2,\delta}(m_0+\delta t)\bigr)=-\delta t^\alpha,
\end{align*}
where $f_{1,\delta},f_{2,\delta}$ are as defined in the proof of Proposition~\ref{prop:lamlb}. Recalling from~\eqref{eq:fdelta} that $f_\delta=f_{1,\delta}\wedge f_{2,\delta}$ on $[m_0,m_0+\delta]$ by definition, we have 
\begin{equation}
\label{eq:gdelta}
g_\delta(t):=\delta^{-\alpha}\bigl(f_0(m_0)+f_0'(m_0)\delta t-f_\delta(m_0+\delta t)\bigr)=g_{1,\delta}(t)\vee g_{2,\delta}(t)
\end{equation}
for all $t\in [0,1]$. Note that $g_{1,\delta}(0)<g_{0,\delta}(0)=0\leq g_{2,\delta}(0)$ and $g_{1,\delta}(1)=g_{0,\delta}(1)\geq 0>g_{2,\delta}(1)$. Thus, since $g_{1,\delta},g_{2,\delta}$ are continuous functions that are strictly increasing and strictly decreasing respectively, there is a unique $c_\delta\in (0,1)$ satisfying $g_{1,\delta}\leq g_{2,\delta}$ on $[0,c_\delta]$ and $g_{1,\delta}\geq g_{2,\delta}$ on $[c_\delta,1]$; in other words, $f_{1,\delta}\geq f_{2,\delta}$ on $[m_0,m_0+\delta c_\delta]$ and $f_{1,\delta}\leq f_{2,\delta}$ on $[m_0+\delta c_\delta,m_0+\delta]$, so this is consistent with the definition of $c_\delta$ in the proof of Proposition~\ref{prop:lamlb}. Since $g_{0,\delta}(1)-(1-c_\delta)u_\delta=g_{1,\delta}(c_\delta)=g_{2,\delta}(c_{\delta})\in [-\delta,0]$, we have
\[1-u_\delta^{-1}g_{0,\delta}(1)\geq c_\delta\geq 1-u_\delta^{-1}\bigl(g_{0,\delta}(1)+\delta\bigr).\]
In the limit as $\delta\to 0$, it was shown above that $g_{0,\delta}(1)\to g_0(1)=B$ and $u_\delta\to g_0'(1)=B\alpha$, so $c_\delta\to 1-\alpha^{-1}$ and $g_{1,\delta}$ converges uniformly on $[0,1]$ to the affine function $g_1\colon t\mapsto g_0(1)-(1-t)g_0'(1)=B\bigl(1-(1-t)\alpha\bigr)$. Consequently, $g_{\delta}=g_{1,\delta}\vee g_{2,\delta}\to g_1\vee 0\equiv g_1^+$ uniformly on $[0,1]$ as $\delta\to 0$.

Now let $(\delta_n)$ be any sequence such that $\delta_n\to 0$ and $n\delta_n\to 0$ as $n\to\infty$. Having already shown that $\lim_{n\to\infty}c_{\delta_n}=1-\alpha^{-1}$, we proceed to establish the claimed limiting expression for $\norm{f_{\delta_n}-f_0}_n^2$. For each $n$, let $z_{n,i}:=(x_i-m_0)/\delta_n$ for $i\in [n]$, where $x_i\equiv x_{ni}=i/n$ by Assumption~\ref{ass:inflection}, so that $z_{n,i+1}-z_{n,i}=1/(n\delta_n)$ for all $i\in [n-1]$. Then recalling from~\eqref{eq:fdelta} that $f_{\delta_n}=f_0$ on $[0,1]\setminus (m_0,m_0+\delta_n)$, we can use~\eqref{eq:gdelta} to write
\begin{equation}
\label{eq:fdeltanorm}
\norm{f_{\delta_n}-f_0}_n^2=\frac{1}{n}\sum_{i\,:\, m_0<x_i<m_0+\delta_n}(f_{\delta_n}-f_0)^2(x_i)=\delta_n^{2\alpha+1}\,\frac{1}{n\delta_n}\sum_{i\,:\,0<z_{n,i}<1}(g_{\delta_n}-g_{0,\delta_n})^2(z_{n,i})
\end{equation}
for each $n$. Since $\abs{\{i:0<z_{n,i}<1\}}=O(n\delta_n)$ and $(g_{\delta_n}-g_{0,\delta_n})^2\to (g_1^+ -g_0)^2$ uniformly on $[0,1]$ as $n\to\infty$, we have
\begin{align*}
&\frac{1}{n\delta_n}\sum_{i\,:\,0<z_{n,i}<1}\abs{(g_{\delta_n}-g_{0,\delta_n})^2(z_{n,i})-(g_1^+ -g_0)^2(z_{n,i})}\\
&\hspace{2cm}\leq\frac{\abs{\{i:0<z_{n,i}<1\}}}{n\delta_n}\sup_{x\in [0,1]}\bigl|(g_{\delta_n}-g_{0,\delta_n})^2-(g_1^+ -g_0)^2\bigr|=o(1).
\end{align*}
Thus, by a Riemann sum approximation to the (uniformly) continuous function $(g_1^+ -g_0)^2$ on $[0,1]$ and the fact that $n\delta_n\to\infty$,
% with $z_{n,i+1}-z_{n,i}=1/(n\delta_n)$
we see that
\begin{align*}
\frac{1}{n\delta_n}\sum_{i\,:\,0<z_{n,i}<1}(g_{\delta_n}-g_{0,\delta_n})^2(z_{n,i})&=\frac{1}{n\delta_n}\sum_{i\,:\,0<z_{n,i}<1}(g_1^+ -g_0)^2(z_{n,i})+o(1)\\
&=\int_0^1\,(g_1^+ -g_0)^2+o(1)\\
&=B^2\int_0^1\,\bigl\{t^\alpha-\bigl(1-(1-t)\alpha\bigr)^+\bigr\}^2\,dt+o(1)\\
&=\bigl(1+o(1)\bigr)C_\alpha B^2
\end{align*}
as $n\to\infty$, where $C_\alpha:=\int_0^1\,\bigl\{t^\alpha-\bigl(1-(1-t)\alpha\bigr)^+\bigr\}^2\,dt$. Combining this with~\eqref{eq:fdeltanorm} yields the desired conclusion.
\end{proof}

\umparskip
\section{Proofs for Section~\ref{sec:proj2}}
\label{sec:projproofs}
\begin{proof}[Proof of Proposition~\ref{prop:clconv}]
(a) Since every $g\in\mathcal{F}^m$ is bounded on $[0,1]$, it follows that $\mathcal{F}^m\subseteq L^2(\nu)$. If $g,h\in\mathcal{F}^m$, then certainly $\lambda g\in\mathcal{F}^m$ for all $\lambda>0$ and $\lambda g+(1-\lambda)h\in\mathcal{F}^m$ for all $\lambda\in [0,1]$. Now fix $g\in\mathcal{F}^m$, and for each $n\in\N$, let $g_n\colon [0,1]\to\R$ be the Lipschitz function that agrees with $g$ on $[0,m(1-1/n)]\cup\{m\}\cup [m(1-1/n)+1/n,1]$, and is also linear on both $[m(1-1/n),m]$ and $[m,m(1-1/n)+1/n]$. Then $g_n\in\mathcal{F}^m$ and $g(0)\leq g_n\leq g(1)$ for all $n$, and since $g_n\to g$ pointwise, we have $\norm{g_n-g}_{L^2(\nu)}\to 0$ by the dominated convergence theorem.

(b) Let $M_\nu^L:=\min(\supp\nu)$ and $M_\nu^R:=\max(\supp\nu)$, so that $\csupp\nu=[M_\nu^L,M_\nu^R]$. There is nothing to prove when $m\in\csupp\nu$, so suppose now that $m>M_\nu^R$. Then $\tilde{m}=M_\nu^R$, and note that $g\colon [0,\tilde{m}]\to\R$ is increasing, convex and Lipschitz if and only if there exists a Lipschitz $f\in\mathcal{F}^m$ such that $g=\restr{f}{[0,\tilde{m}]}$. Thus, $\mathcal{F}_\nu^m=\{[f]_\nu:f\in\mathcal{F}^{\tilde{m}}\text{ is Lipschitz}\}$ is dense in $\mathcal{F}_\nu^{\tilde{m}}$ by (a). The case $m<M_\nu^L$ is similar.

(c,\,$\Leftarrow$) We first show that if $(f_n)_{n=1}^\infty$ is a sequence of functions in $\mathcal{F}^m$ such that $\norm{f_n-f}_{L^2(\nu)}\to 0$ for some $f\in L^2(\nu)$, then under any one of the conditions (i)--(iii) above, there exists $g\in\mathcal{F}^m$ such that $f\sim_\nu g$. To begin with, note that $f_n\to f$ in $\nu$-measure, so there exists a subsequence $(g_k)_{k=1}^\infty\equiv (f_{n_k})_{k=1}^\infty$ such that $g_k\to f$ $\nu$-almost everywhere. In each of the cases below, we will in fact show that there is some $g\in\mathcal{F}^m$ that agrees with $f$ on $A:=\{x\in\supp \nu:g_k(x)\to f(x)\}$, which is a dense subset of $\supp\nu$. Indeed, if $S\subseteq\supp\nu$ satisfies $\nu(S^c\cap\supp\nu)=0$, then by the definition of $\supp\nu$, the set $S^c\cap\supp\nu$ has empty interior; in other words, $S$ is dense in $\supp\nu$.
% We now claim that in fact $f_{n_k}(x)\to f(x)$ for all $x\in\supp\nu$. 

\textbf{Case 1 -- $\nu([0,m))\wedge\nu((m,1])>0$}: Since $A$ is dense in $\supp\nu$, there exist $a_L,a_R\in A\subseteq\supp\nu$ such that $a_L<m<a_R$ and $g_k\to f$ on $\{a_L,a_R\}$. Since the functions $g_k$ are convex on $[0,m]$, concave on $[m,1]$ and increasing on $[0,1]$, we have
\begin{align}
\liminf_{k\to\infty}g_k(0)&\geq\liminf_{k\to\infty}\,\frac{m\,g_k(a_L)-a_L\,g_k(m)}{m-a_L}\geq\frac{m\,f(a_L)-a_L\,f(a_R)}{m-a_L}\notag\\
\label{eq:lbconv}
\limsup_{k\to\infty}g_k(1)&\leq\limsup_{k\to\infty}\,\frac{(1-m)\,g_k(a_R)-(1-a_R)\,g_k(m)}{a_R-m}\leq\frac{(1-m)\,f(a_R)-(1-a_R)\,f(a_L)}{a_R-m},
\end{align}
so $\{g_k(x)\}_{k=1}^\infty$ is bounded for each $x\in [0,1]$. Therefore, by considering separately the intervals $(0,m),(m,1)$, we can apply \citet[Theorem~10.9]{Rock97} and extract a subsequence $(g_{k_\ell})$ of $(g_k)$ that converges pointwise on $(0,m)\cup (m,1)$. In fact, $(g_{k_\ell})$ converges pointwise on $[0,1]\setminus\{m\}$ by Lemma~\ref{lem:convincr}, and the limit function $g$ is convex on $[0,m)$, concave on $(m,1]$ and increasing on $[0,1]\setminus\{m\}$. 
% NB: We cannot guarantee continuity of $g$ at $m$, which is why we cannot enforce continuity at $m$ in our definition of $\mathcal{F}^m$: for example, $f_n\in\mathcal{F}^{1/2}$ could converge pointwise to a step function $f$
% In fact, this convergence is uniform on compact subsets of $(0,m)$ and $(m,1)$.
% Moreover, we have $g(w)=\lim g_{k_\ell}(w)\leq\lim g_{k_\ell}(z)=g(z)$ whenever $w<z$ and $w,z\in (0,m)\cup (m,1)$. 
If in addition $m\in A$, then \[g(z)=\lim_{\ell\to\infty}g_{k_\ell}(z)\leq \lim_{\ell\to\infty}g_{k_\ell}(m)=f(m)\leq\lim_{\ell\to\infty}g_{k_\ell}(w)=g(w)\]
for all $z\in (0,m)$ and $w\in (m,1)$, so $\lim_{x\nearrow\,m}g(x)\leq f(m)\leq\lim_{x\searrow\,m}g(x)$. Thus, we can extend $g$ to a function on $[0,1]$ that belongs to $\mathcal{F}^m$ by setting $g(m)=f(m)$. Otherwise, if $m\notin A$, then we can set $g(m)=\lim_{x\searrow\,m}g(x)$ for concreteness. In both cases, we have $g\in\mathcal{F}^m$ and $f=g$ on $A$, as required.

\textbf{Case 2 -- $\nu((m,1])=0$}: Here, we have $\supp\nu\subseteq [0,m]$. We also assume that $\supp\nu$ contains at least two points, since otherwise the result holds trivially. Note that $\conv(\Cl A)\supseteq\Int\csupp\nu=(M_\nu^L,M_\nu^R)$. By convexity arguments similar to those given in Case 1, it follows that $\{g_k(x)\}_{k=1}^\infty$ is bounded for all $x\in (M_\nu^L,M_\nu^R)$. 
% In fact, since the functions $g_k$ are convex and increasing on $(0,M_\nu^R)$, an argument similar to that given in Case 1 shows that $\{g_k(x)\}_{k=1}^\infty$ is in fact bounded for all $x\in (M_\nu^L,M_\nu^R)$; see also \citet[Theorem~10.6]{Rock97}.
% Thus, the convex functions $\{\restr{g_k}{C}:k\in\N\}$ satisfy the hypotheses of \citet[Theorem~10.6]{Rock97}, from which it follows that $\{\restr{g_k}{C}:k\in\N\}$ is pointwise bounded on $C$ (as well as uniformly bounded and equi-Lipschitz on compact subsets of $C$)
% Since we are only working in 1 dimension and the functions $g_k$ are increasing, we don't really need to appeal to Theorem~10.6 to justify the assertion
Thus, again by \citet[Theorem~10.9]{Rock97} and Lemma~\ref{lem:convincr}, there exists a subsequence $(g_{k_\ell})$ of $(g_k)$ that converges pointwise on $[0,M_\nu^R)$ to some increasing convex function $g\colon [0,M_\nu^R)\to\R$. By the definition of $A$, we must have $f=g$ on $[0,M_\nu^R)\cap A$.

\unparskip
\begin{itemize}[leftmargin=0.4cm]
\item If condition (i) holds, then $M_\nu^R=m$ and $\nu(\{m\})>0$, so $m\in A$, i.e.\ $g_k(m)\to f(m)$. We now extend $g$ to $[0,1]$ by setting $g(x)=f(m)$ for all $x\in [m,1]$. Then $f=g$ on $A$, and for all $x\in [0,m)$, we have $g(x)=\lim_{\ell\to\infty}g_{k_\ell}(x)\leq\lim_{\ell\to\infty}g_{k_\ell}(m)=f(m)$, so $g\in\mathcal{F}^m$.
\item If condition (ii) holds, then $M_\nu^R\in (0,m)$ is an isolated point of $\supp\nu$, so $\nu(\{M_\nu^R\})>0$, $M_\nu^R\in A$ and $M_\nu':=\max(\supp\nu\setminus\{M_\nu^R\})<M_\nu^R<m$. Let $h\colon [0,1]\to\R$ be the function that agrees with $f$ on $[0,M_\nu']\cup\{M_\nu^R\}$ and is linear on $[M_\nu',1]$.
Then $h$ is linear on $[m,1]$ and convex and increasing on $[0,1]$, so $h\in\mathcal{F}^m$. Since $(M_\nu',1]\cap A=\{M_\nu^R\}$, the functions $f,g,h$ agree on $A$, as required.
\end{itemize}

\unparskip
The analogous case where $\supp\nu\subseteq [m,1]$ can be handled in much the same way, and so we have now demonstrated the sufficiency of each of the conditions (i), (ii) and (iii). 

(c,\,$\Rightarrow$) Supposing that none of the conditions (i)--(iii) hold, we now verify that $\Cl\mathcal{F}_\nu^m\supsetneqq\mathcal{F}_\nu^m$. We consider only the cases where $\supp\nu\subseteq [0,m]$; the arguments are similar if $\supp\nu\subseteq [m,1]$. 

\textbf{Case 1 -- $\nu(\{M_\nu^R\})=0$}: Note that there exists $f\in L^2(\nu)$ such that $\restr{f}{E_\nu}$ is convex and increasing, and $f(x)\to\infty$ as $x\nearrow M_\nu^R$. Indeed, a concrete example of such a function can be obtained via the following construction: since $M_\nu^R$ is not an isolated point of $\supp\nu$ by assumption, there exists a sequence $(a_n\in\supp\nu\setminus\{M_\nu^R\}:n\in\N)$ such that $a_n\nearrow M_\nu^R$ and $\nu((a_n,M_\nu^R))\leq 2^{-3n}$ for all $n$. For each $n$, let $h_n\colon[0,1]\to\R$ be such that $h_n=0$ on $[0,a_n]$, $h_n(M_\nu^R)=2^{n/2}$ and $h_n$ is linear on $[a_n,1]$. Then $h_n$ is convex and increasing, and $\norm{h_n}_{L^2(\nu)}\leq (2^n\cdot 2^{-3n})^{1/2}=2^{-n}$. Thus, the function $h\colon [0,1]\to\R$ defined by $h(x):=\sum_{n=1}^\infty h_n(x)\Ind_{\{x<M_\nu^R\}}$ is also convex and increasing. Moreover, $\norm{h}_{L^2(\nu)}\leq\sum_{n=1}^\infty\norm{h_n}_{L^2(\nu)}<\infty$ and $h(x)\to\infty$ as $x\nearrow M_\nu^R$. 

For any $f$ with the above properties, we now argue that $[f]_\nu\notin\mathcal{F}_\nu^m$, i.e.\ that there does not exist $g\in\mathcal{F}^m$ such that $f\sim_\nu g$. Indeed, if $g$ is a function that agrees with $f$ on a set $S\subseteq\supp\nu$ with the property that $\nu(S^c\cap\supp\nu)=0$, then recall from the second paragraph of the proof that $S$ is dense in $\supp\nu$. Thus, since $M_\nu^R$ is not an isolated point of $\supp\nu$, there exists a sequence $(s_n\in S\setminus\{M_\nu^R\}:n\in\N)$ such that $s_n\to M_\nu^R$, and we must have $g(s_n)=f(s_n)$ for all $n$. But this implies that $g(x)\to\infty$ as $x\nearrow M_\nu^R$, so $g$ cannot be extended to a finite convex function on $[0,m]$. 

\textbf{Case 2 -- $\nu(\{M_\nu^R\})>0$}: Consider any $f\in L^2(\nu)$ such that $\restr{f}{E_\nu}$ is convex and increasing, and $f$ is discontinuous at $M_\nu^R$. Since $M_\nu^R\in (0,m)$ is not an isolated point of $\supp\nu$, we deduce as before that if $f\sim_\nu g$ for some $g\colon [0,1]\to\R$, then there exists a sequence $(s_n\in\supp\nu\setminus\{M_\nu^R\}:n\in\N)$ such that $s_n\to M_\nu^R$ and $g(s_n)=f(s_n)$ for all $n$. But since $\nu(\{M_\nu^R\})>0$, we have $g(M_\nu^R)=f(M_\nu^R)\neq\lim_{n\to\infty}f(s_n)=\lim_{n\to\infty}g(s_n)$, so $g$ is not continuous at $M_\nu^R\in (0,m)$ and hence does not belong to $\mathcal{F}^m$. This shows that $[f]_\nu\notin\mathcal{F}_\nu^m$ and hence that $\Cl\mathcal{F}_\nu^m\supsetneqq\mathcal{F}_\nu^m$, as required.

(d) Suppose again that none of the conditions (i)--(iii) hold, assuming for the time being that $\supp\nu\subseteq [0,m]$. Then $m>0$ and $M_\nu^R\in [0,m]$ is not an isolated point of $\supp\nu$. Let $(f_n)_{n=1}^\infty$ be any sequence in $\mathcal{F}^m$ such that $\norm{f_n-f}_{L^2(\nu)}\to 0$ for some $f\in L^2(\nu)$. By a very similar argument to that given in the first bullet point in Case 2 of (c,\,$\Leftarrow$), there exists an increasing convex $g$ defined on $E_\nu$ that agrees $\nu$-almost everywhere with $f$; recall that $E_\nu$ contains $M_\nu^R$ if and only if $\nu(\{M_\nu^R\})>0$. Since $\mathcal{F}_\nu^m=\{[f]_\nu:f\in\mathcal{F}^m\}\subseteq\mathcal{L}^2(\nu)$ by (a), we deduce that \[\Cl\mathcal{F}_\nu^m\subseteq\{[f]_{\nu}:f\in L^2(\nu)\text{ and}\restr{f}{E_\nu}\!\text{ is convex and increasing}\}.\]
For the reverse inclusion, we split into the two cases considered in (c,\,$\Rightarrow$) above.

\textbf{Case 1 -- $\nu(\{M_\nu^R\})=0$}: Here, we have $M_\nu^R\notin E_\nu$. For a fixed $f\in L^2(\nu)$ such that $\restr{f}{E_\nu}$ is convex and increasing, we claim that there exists a sequence $(f_n)_{n=1}^\infty$ in $\mathcal{F}^m$ such that $\norm{f_n-f}_{L^2(\nu)}\to 0$. Indeed, fix a sequence $(x_n\in\supp\nu\setminus\{M_\nu^R\}:n\in\N)$ such that $x_n\nearrow M_\nu^R$, and for each $n$, observe that since $\restr{f}{E_\nu}$ has a finite and non-negative subgradient at $x_n$, there exists an increasing convex $f_n\in\mathcal{F}^m$ such that $f_n=f$ on $E_\nu\cap [0,x_n]$, $f_n$ is linear on $[x_n,1]$ and $f_n\leq f$ on $E_\nu$. Thus, since $\nu(E_\nu)=1$, $\inf_{E_\nu}f\leq f_n\leq\sup_{E_\nu}f$ on $E_\nu$ for all $n$ and $f_n\to f$ pointwise on $E_\nu$, it follows by the dominated convergence theorem that $\norm{f_n-f}_{L^2(\nu)}^2=\int_{E_\nu}\,\abs{f_n-f}^2\to 0$, as required.

\textbf{Case 2 -- $\nu(\{M_\nu^R\})>0$}: Note that $M_\nu^R<m$ and $M_\nu^R\in E_\nu$ in this case. As before, take any $f\in L^2(\nu)$ such that $\restr{f}{E_\nu}$ is convex and increasing, and fix a sequence $(x_n\in\supp\nu\setminus\{M_\nu^R\}:n\in\N)$ such that $x_n\nearrow M_\nu^R$. For each $n$, let $f_n$ be the (unique) function that satisfies $f_n=f$ on $[0,x_n]\cup\{M_\nu^R\}$ and is linear on $[x_n,1]$. Then $f_n\in\mathcal{F}^m$ for all $n$ by the convexity of $f$, and $f_n\to f$ pointwise on $E_\nu$. Thus, since $\inf_{E_\nu}f\leq f_n\leq f(M_\nu^R)<\infty$ on $E_\nu$ for all $n$, we can once again apply the dominated convergence theorem to deduce that $\norm{f_n-f}_{L^2(\nu)}^2=\int_{E_\nu}\,\abs{f_n-f}^2\to 0$. This shows that $\Cl\mathcal{F}_\nu^m\supseteq\{[f]_{\nu}:f\in L^2(\nu)\text{ and}\restr{f}{E_\nu}\!\text{ is convex and increasing}\}$ in this case.

Straightforward modifications of the arguments above yield the analogous conclusion when $\supp\nu\subseteq [m,1]$. This completes the proof.
\end{proof}

\deparskip
\begin{proof}[Proof of Corollary~\ref{cor:proj}]
(a) For $f\in L^2(P^X)$, it is immediate from~\eqref{eq:orth} that $f\in\psi_m^*(P)$ if and only if $[f]_{P^X}\in\mathcal{L}^2(P^X)$ is the projection of $[f_P]_{P^X}$ onto $\Cl\mathcal{F}_{P^X}^m$, which is a closed, convex subset of the Hilbert space $\mathcal{L}^2(P^X)$ by Proposition~\ref{prop:clconv}.

(b) This follows directly from the definition of $\psi_m^*(P)$ and the fact that every $f\in\mathcal{F}^m$ is bounded on $[0,1]$.
% if $f^*\in\mathcal{F}^m$ belongs to $\psi_m^*(P)$, then $f^*\in L^q(P^X)$ for all $q$, so $\psi_m^*(P)=[f^*]_{P^X}\in\mathcal{L}^q(P^X)$ for all $q$

(c) Since $\Cl\mathcal{F}_{P^X}^m=\Cl\mathcal{F}_{P^X}^{\tilde{m}}$ by Proposition~\ref{prop:clconv}(b), we have $\psi_m^*(P)=\psi_{\tilde{m}}^*(P)$ by definition, and $L_m^*(P)=L_{\tilde{m}}^*(P)$ by the observation after~\eqref{eq:orth}. If there exists $f\in\psi_m^0(P)$, then setting $\tilde{f}(x):=f(M_L\vee x\wedge M_R)$ for $x\in [0,1]$ with $M_L:=\min(\supp P^X)$ and $M_R:=\max(\supp P^X)$, we have $\tilde{f}\in\psi_{\tilde{m}}^0(P)$.

(d) If condition (i) holds, then $\mathcal{F}_{P^X}^m=\Cl\mathcal{F}_{P^X}^m$ by Proposition~\ref{prop:clconv}(c). Thus, there exists $f^*\in\mathcal{F}^m$ such that $\psi^*(P)=[f^*]_{P^X}$ by (a) above, whence $f^*\in\psi_m^0(P)$.

Suppose now that condition (ii) holds, in which case $m=\tilde{m}$ and there exist a regression function $f_P$ for $P$ and $b,\varepsilon>0$ such that $\abs{f_P}\leq b$ on $(m-\varepsilon,m+\varepsilon)$. We may assume without loss of generality that $m=\max(\supp P^X)$; the case $m=\min(\supp P^X)$ is similar. Suppose for a contradiction that $\psi_m^0(P)=\emptyset$, i.e.\ that $\psi^*(P)\in\bigl(\Cl\mathcal{F}_{P^X}^m\bigr)\setminus\mathcal{F}_{P^X}^m$. In view of condition (i) in Proposition~\ref{prop:clconv}(c), this can only happen if $P^X(\{m\})=0$. By Proposition~\ref{prop:clconv}(d), we can write $\psi_m^*(P)=[f]_{P^X}$ for some $f\in L^2(P^X)$ that is convex and increasing on ${\Int(\csupp P^X)}$. Since $[f]_{P^X}=\psi_m^*(P)\notin\mathcal{F}_{P^X}^m$, the function $\restr{f}{\Int(\csupp P^X)}$ cannot be extended to an element of $\mathcal{F}^m$, so we must have $f(x)\to\infty$ as $x\nearrow m$.
% See Case 1 in the proofs of Proposition~\ref{prop:clconv}(c,d)

Therefore, we can find $m'\in\Int(\csupp P^X)\cap (m-\varepsilon,m)$ such that $f(x)>b$ for all $x\in [m',m)$. Since 
% the increasing convex function 
$\restr{f}{\Int(\csupp P^X)}$ has a finite and non-negative subgradient at $m'$, there exists an increasing convex $\tilde{f}\in\mathcal{F}^m$ such that $\tilde{f}=f$ on $\Int(\csupp P^X)\cap [0,m']$, $\tilde{f}$ is linear on $[m',1]$ and $f_P\leq b<\tilde{f}\leq f$ on $[m',m)$. But this means that $\norm{\tilde{f}-f_P}_{L^2(P^X)}<\norm{f-f_P}_{L^2(P^X)}$, so $L_m^*(P)\leq L(\tilde{f},P)<L(f,P)$ by~\eqref{eq:orth}, contradicting the fact that $f\in\psi_m^*(P)$. Thus, $\psi_m^*(P)\in\mathcal{F}_{P^X}^m$, whence $\psi_m^0(P)\neq\emptyset$.

(e) If $f,g\in\psi_m^0(P)$, then $f\sim_{P^X}g$ by (a), so $f=g$ on some dense subset $S\subseteq\supp P^X$; see the first paragraph of the proof of Proposition~\ref{prop:clconv}(c). 
% Thus, $f(x)=g(x)$ whenever $x\in\supp P^X$ is a continuity point of both $f$ and $g$. 
It follows that $f=g$ on $(\supp P^X)\setminus\{m\}$, a set on which both $f,g$ are continuous. If $f,g$ are both continuous on $[0,1]$, then $f=g$ on $\supp P^X$ by the same argument. If in addition $P^X(\{m\})>0$, then clearly $f(m)=g(m)$.

(f) The forward implication was established in (e). For the converse, suppose that $m\in\supp P^X$ and there is some $f\in\psi_m^0(P)$ that is discontinuous at $m$, so that $\lim_{x\nearrow m}f(x)<\lim_{x\searrow m}f(x)$. If $P^X(\{m\})=0$, then any $\psi_m^0(P)$ contains any $\tilde{f}\colon [0,1]\to\R$ such that $\tilde{f}=f$ on $[0,1]\setminus\{m\}$ and $\lim_{x\nearrow m}f(x)\leq\tilde{f}(m)\leq\lim_{x\searrow m}f(x)$, so the elements of $\psi_m^0(P)$ do not all agree at $m\in\supp P^X$.

(g) Suppose that $\psi_m^0(P)$ contains a continuous function $h\in\mathcal{F}^m$. For any other $f\in\psi_m^0(P)$, we know from (e) that $f=h$ on $(\supp P^X)\setminus\{m\}$. In view of the continuity of $h$ at $m$ and the assumption that $\supp P^X$ has non-empty intersection with both $(m-\varepsilon,m)$ and $(m,m+\varepsilon)$ for all $\varepsilon>0$, this forces $f(m)=h(m)$, so $f=h$ on $\supp P^X$ and $f$ is continuous.
\end{proof}

\unparskip
The proof of Proposition~\ref{prop:cont} relies on the following three key lemmas. Let the marginal distribution $P^X$ on $[0,1]$ be as in Proposition~\ref{prop:cont}, and define $M_L:=\min(\supp P^X)$, $M_R:=\max(\supp P^X)$ and $C:=[M_L,M_R]=\csupp P^X$.
\begin{lemma}
\label{lem:limsuptw}
Fix $x\in\Int C$ and $\ell\in [0,\infty)$. Let $(P_n)$ be a sequence in $\mathcal{P}$ that converges weakly to some $P\in\mathcal{P}$. Then there exists $B\equiv B(x,\ell,P)<\infty$ such that for any sequence of increasing functions $f_n\colon [0,1]\to\R$ with $\limsup_{n\to\infty}L(f_n,P_n)\leq\ell$ for all $n$, we have $\limsup_{n\to\infty}\,\abs{f_n(x)}<B$.
\end{lemma}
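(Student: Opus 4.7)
The plan is to sandwich $f_n(x)$ between $f_n(a)$ and $f_n(b)$ for two fixed points $a,b$ straddling $x$ inside $\csupp P^X$, and then use the bound on $L(f_n,P_n)$ together with weak convergence to control $f_n(a)$ from below and $f_n(b)$ from above.

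First, because $x\in\Int C=(M_L,M_R)$, I pick $a\in (M_L,x)$ and $b\in (x,M_R)$, and by perturbing them slightly if necessary I can arrange that both are continuity points of the cumulative distribution function of $P^X$, so that $p_a:=P^X([0,a])>0$ and $p_b:=P^X([b,1])>0$. I will focus on bounding $f_n(b)$ from above; the lower bound on $f_n(a)$ is entirely symmetric. For a threshold $M>0$ to be chosen, set $A_M^+:=[b,1]\times(-\infty,M]$. Since $A_M^+\nearrow [b,1]\times\R$ as $M\to\infty$ and the latter has $P$-mass $p_b$, I can choose $M$ large enough that $P(A_M^+)\geq p_b/2$; by picking $M$ to be a continuity point of the $y$-marginal of $P$ (all but countably many $M$ qualify), the boundary of $A_M^+$ has zero $P$-mass, so the Portmanteau theorem gives $P_n(A_M^+)\to P(A_M^+)$, whence $P_n(A_M^+)\geq p_b/4$ for all large $n$.

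Now suppose $f_n(b)\geq 2M$. By monotonicity, $f_n(t)\geq 2M$ for every $t\in [b,1]$, so for $(t,y)\in A_M^+$ we have $f_n(t)-y\geq M$ and hence
\begin{equation*}
L(f_n,P_n)\;\geq\;\int_{A_M^+}\bigl(y-f_n(t)\bigr)^2\,dP_n(t,y)\;\geq\;M^2\,P_n(A_M^+)\;\geq\;\tfrac{1}{4}\,M^2\,p_b
\end{equation*}
for all large $n$. Choosing $M$ large enough that $M^2 p_b/4>\ell+1$ contradicts $\limsup_n L(f_n,P_n)\leq \ell$, forcing $f_n(b)<2M$ eventually. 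The analogous argument using $A_M^-:=[0,a]\times[-M,\infty)$ produces some $M'$ such that $f_n(a)>-2M'$ eventually, and monotonicity gives $f_n(a)\leq f_n(x)\leq f_n(b)$, hence $\limsup_{n\to\infty}\abs{f_n(x)}<B$ for $B:=2\max(M,M')+1$, which depends only on $x,\ell,P$.

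The only subtle point is that we are restricted to weak convergence, with no $W_2$-control or moment bounds on $P_n$ available; the contradiction has to come purely from a localised tail region of positive $P$-mass on which $f_n$ is forced to be uniformly large in modulus, coupled with the choice of $a$, $b$, $M$ as continuity points so that Portmanteau transfers the $P$-estimate to $P_n$.
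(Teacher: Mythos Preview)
Your proof is correct and follows the same underlying idea as the paper: use monotonicity of $f_n$ to lower-bound $L(f_n,P_n)$ by the squared distance from $f_n$ to $y$ on a half-plane of positive mass, then invoke portmanteau to transfer mass estimates from $P$ to $P_n$. The execution differs in two respects. First, the paper works directly at the point $x$ with a \emph{moving} threshold $f_n^+(x)/2$, extracts a subsequence along which $f_n^+(x)\to 2s$, and applies the liminf form of portmanteau to the open set $(x,1]\times(-\infty,s)$; this avoids any need to select continuity points. You instead introduce auxiliary points $a<x<b$ and a \emph{fixed} threshold $M$, choosing $b$ and $M$ as continuity points so that portmanteau gives exact convergence $P_n(A_M^+)\to P(A_M^+)$; this avoids the subsequence extraction and is arguably more elementary. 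Second, your presentation would be cleaner if you fixed $M$ once, simultaneously satisfying $P(A_M^+)\geq p_b/2$, $M^2 p_b/4>\ell+1$, and the continuity-point condition (all three hold for a cofinite-minus-countable set of $M$), rather than appearing to re-choose $M$ midway through. This is cosmetic; the argument is sound.
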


\deparskip
\begin{proof}
Since $x\in\Int C$, we have $P^X([0,x))\wedge P^X((x,1])>0$. Let $(f_n)$ be as above. Then for each $n$, note that since $f_n$ is increasing, we have
\begin{equation}
\label{eq:intsup}
L(f_n,P_n)
% =\int_{[0,1]\times\R}\,\bigl(y-f_n(x)\bigr)^2\,dP_n(x,y)
\geq\int_{[x,1]\times\R}\,\bigl(y-f_n(x)\bigr)^2\,dP_n(x,y)\geq\rbr{\frac{f_n^+(x)}{2}}^2 P_n\rbr{[x,1]\times\left(-\infty,\frac{f_n^+(x)}{2}\right]};
\end{equation}
indeed, if $f_n(x)\leq 0$, then~\eqref{eq:intsup} holds trivially, and if $f_n(x)>0$, then for all $x'\in [x,1]$ and $y'\leq f_n(x)/2$, we have $f_n(x')-y'\geq f_n(x)-y'\geq f_n(x)/2>0$. 

Now let $(f_{n_k})$ be a subsequence such that $f_{n_k}^+(x)\to\limsup_{n\to\infty}f_n^+(x)=:2s$ as $k\to\infty$. Since $P_n\cvd P$, an application of the portmanteau lemma~\citep[Lemma~2.2]{vdV98} shows that
\[\liminf_{k\to\infty}\,\rbr{\frac{f_{n_k}^+(x)}{2}}^2\,P_{n_k}\!\rbr{(x,1]\times\rbr{-\infty,\frac{f_{n_k}^+(x)}{2}}}\geq s^2\,P\bigl((x,1]\times(-\infty,s)\bigr).\]
% $\liminf a_nb_n=a\limsup b_n\geq a_n\liminf b_n$ if $a_n,b_n\geq 0$
It follows from this and~\eqref{eq:intsup} that $s^2\,P\bigl((x,1]\times(-\infty,s)\bigr)\leq\limsup_{n\to\infty}L(f_n,P_n)\leq\ell$.
Since $P\bigl((x,1]\times (-\infty,b)\bigr)\to P^X((x,1])>0$ as $b\to\infty$, we can therefore find $B'\equiv B'(x,\ell,P)<\infty$ such that $\limsup_{n\to\infty}\,f_n^+(x)=2s<B'$ for any sequence $(f_n)$ satisfying the conditions of Lemma~\ref{lem:limsuptw}. An analogous argument yields the same conclusion for $\limsup_{n\to\infty}\,f_n^-(x)$.
\end{proof}

\unparskip
For sequences of S-shaped functions, the conclusion of Lemma~\ref{lem:limsuptw} can be strengthened.
\begin{lemma}
\label{lem:limsupunif}
Fix $\ell\in [0,\infty)$ and let $(m_n)$ be a sequence in $[0,1]$ that converges to some fixed $m_0\in\Int C$. Then under the hypotheses of Lemma~\ref{lem:limsuptw}, there exists $\tilde{B}\equiv\tilde{B}(m_0,\ell,P)<\infty$ such that for any sequence $(f_n)$ with $f_n\in\mathcal{F}^{m_n}$ for all $n$ and $\limsup_{n\to\infty}L(f_n,P_n)\leq\ell$, we have\\ $\limsup_{n\to\infty}\sup_{x\in [0,1]}\,\abs{f_n(x)}<\tilde{B}$.
\end{lemma}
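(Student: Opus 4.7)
The plan is to reduce the uniform bound on $\sup_{x \in [0,1]} |f_n(x)|$ to pointwise bounds at a few interior points of $C = \csupp P^X$, exploiting the monotonicity of $f_n$ together with the convex/concave shape of $f_n$ on either side of its inflection point $m_n$.

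First, since $m_0 \in \Int C = (\min\supp P^X,\max\supp P^X)$, I would fix four points $a_1 < a_2 < m_0 < a_3 < a_4$ all lying in $\Int C$, with in particular $0 < a_1$ and $a_4 < 1$. Because $m_n \to m_0$, we have $a_2 < m_n < a_3$ for all sufficiently large $n$, so $a_1, a_2 \in [0,m_n]$ (where $f_n$ is convex and increasing) and $a_3, a_4 \in [m_n, 1]$ (where $f_n$ is concave and increasing). Applying Lemma~\ref{lem:limsuptw} at each of the single points $a_i \in \Int C$ yields a constant $B \equiv B(m_0,\ell,P) < \infty$ with $\limsup_{n \to \infty}|f_n(a_i)| < B$ for every $i \in \{1,2,3,4\}$.

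Next, the three-point slope inequality for the convex function $\restr{f_n}{[0,m_n]}$ applied at $0 < a_1 < a_2$ gives
\[
\frac{f_n(a_1)-f_n(0)}{a_1} \leq \frac{f_n(a_2)-f_n(a_1)}{a_2-a_1},
\]
so that $f_n(0) \geq f_n(a_1) - \frac{a_1}{a_2-a_1}\bigl(f_n(a_2)-f_n(a_1)\bigr)$. Analogously, the concavity of $\restr{f_n}{[m_n,1]}$ applied at $a_3 < a_4 < 1$ yields
\[
\frac{f_n(a_4)-f_n(a_3)}{a_4-a_3} \geq \frac{f_n(1)-f_n(a_4)}{1-a_4},
\]
whence $f_n(1) \leq f_n(a_4) + \frac{1-a_4}{a_4-a_3}\bigl(f_n(a_4)-f_n(a_3)\bigr)$. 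These two inequalities, together with the trivial bounds $f_n(0) \leq f_n(a_1)$ and $f_n(1) \geq f_n(a_4)$ from the monotonicity of $f_n$, control $|f_n(0)|$ and $|f_n(1)|$ purely in terms of the (eventually bounded) quantities $|f_n(a_1)|,\ldots,|f_n(a_4)|$.

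Finally, since each $f_n$ is increasing on $[0,1]$, we have $\sup_{x\in [0,1]}|f_n(x)| \leq \max\bigl(|f_n(0)|,|f_n(1)|\bigr)$, and combining this observation with the previous two displays produces a finite $\tilde{B} \equiv \tilde{B}(m_0,\ell,P)$ for which $\limsup_n \sup_x |f_n(x)| < \tilde{B}$. The main delicate point is the selection of the probe points $a_1,\ldots,a_4$ in $\Int C$, which is what forces Lemma~\ref{lem:limsuptw} to be applicable at each $a_i$ and is available here precisely because we have assumed $m_0 \in \Int C$ (whereas if $m_0 \in \{M_L, M_R\}$, we could not straddle $m_n$ by interior points on both sides, and a uniform bound of this form need not hold).
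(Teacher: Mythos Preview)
Your proof is correct and follows essentially the same strategy as the paper: bound $|f_n(0)|$ and $|f_n(1)|$ via the convex/concave slope inequalities in terms of values at interior points of $C$, then invoke Lemma~\ref{lem:limsuptw} at those interior points. The only cosmetic difference is that the paper uses just two probe points $a_L<m_0<a_R$ together with $m_n$ itself (controlling $f_n(m_n)$ via $f_n(a_L)\le f_n(m_n)\le f_n(a_R)$), whereas you use four probe points $a_1<a_2<m_0<a_3<a_4$ and avoid $m_n$ in the inequalities; both variants carry the same content.
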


\deparskip
\begin{proof}
Since $m_n\to m_0\in\Int C$, we can find $a_L,a_R\in\Int C$ such that $0<a_L<m_0<a_R<1$, so that for all sufficiently large $n\in\N$, we have $f_n(a_L)\leq f_n(m_n)\leq f_n(a_R)$ and
\begin{equation}
\label{eq:limsupunif}
\frac{m_nf_n(a_L)-a_Lf_n(m_n)}{m_n-a_L}\leq f_n(0)\leq f_n(x)\leq f_n(1)\leq\frac{(1-m_n)f_n(m_n)-(1-a_R)f_n(a_L)}{a_R-m_n}
\end{equation}
for all $x\in [0,1]$. Since $\limsup_{n\to\infty}L(f_n,P_n)\leq\ell$, we have $\limsup_{n\to\infty}\,\abs{f_n(x)}\leq B(x,\ell,P)<\infty$ for $x\in\{a_L,a_R\}$ by Lemma~\ref{lem:limsuptw}, so the result follows from~\eqref{eq:limsupunif} and the fact that $m_n\to m_0$.
% \begin{align*}
% \liminf_{n\to\infty}f_n(0)&\geq\liminf_{n\to\infty}\,\frac{m_nf_n(a_L)-a_Lf_n(a_R)}{m_n-a_L}\geq -\frac{m_0\,B_{a_L}\!+a_L\,B_{a_R}}{m_0-a_L}=:-A\\
% \limsup_{n\to\infty}f_n(1)&\leq\limsup_{n\to\infty}\,\frac{(1-m_n)f_n(a_R)-(1-a_R)f_n(a_L)}{a_R-m_n}\leq\frac{(1-m_0)\,B_{a_R}+(1-a_R)\,B_{a_L}}{a_R-m_0}=:B,
% \end{align*}
% so the desired conclusion holds if we take $\tilde{B}_{m_0}>A\vee B$.
\end{proof}

\unparskip
Another important consequence of Lemma~\ref{lem:limsuptw} is the following. Recall that $\tilde{m}_0=\argmin_{x\in C}\,\abs{x-m_0}$. 
% from which we will deduce many of the remaining assertions of the proposition. 
\begin{lemma}
\label{lem:subseqcvg}
Under the hypotheses of Proposition~\ref{prop:cont}, let $(f_n)_{n=1}^\infty$ be any sequence with $f_n\in\mathcal{F}^{m_n}$ and $\limsup_{n\to\infty}L(f_n,P_n)<\infty$ for all $n$. Then for every subsequence $(g_k)_{k=1}^\infty\equiv (f_{n_k})_{k=1}^\infty$, there is a further subsequence $(g_{k_\ell})$ and a function $g$ defined on $[0,1]$ with the following properties:

\unparskip
\begin{enumerate}[label=(\roman*)]
\item $g$ is increasing on $C\setminus\{\tilde{m}_0\}$, convex on $C\cap [0,\tilde{m}_0)$ and concave on $C\cap(\tilde{m}_0,1]$. 
\item $g_{k_\ell}\to g$ pointwise on $C$ and uniformly on closed subsets of $C\setminus\{\tilde{m}_0\}$. In particular, if $z_\ell\to z\in C\setminus\{\tilde{m}_0\}$, then $g_{k_\ell}(z_\ell)\to g(z)$.
\item $g(x)\in\R$ for all $x\in C\setminus\{\tilde{m}_0\}$, and $g(\tilde{m}_0)\in (-\infty,\infty]$ if $\tilde{m}_0>M_L$ and $g(\tilde{m}_0)\in [-\infty,\infty)$ if $\tilde{m}_0<M_R$.
\item $g\in\mathcal{F}^{m_0}$ if $m_0\in\Int C$ and $[g]_{P^X}\in\Cl\mathcal{F}_{P^X}^{m_0}\subseteq\mathcal{L}^2(P^X)$ if $P^X(\{\tilde{m}_0\})=0$. 
\item Let $Q_\ell:=P_{n_{k_\ell}}$ for each $\ell$. If $P^X(\{\tilde{m}_0\})=0$, then $\liminf_{\ell\to\infty}L(g_{k_\ell},Q_\ell)\geq L(g,P)\geq L_{m_0}^*(P)$.
% In fact, we must have $L\in\R$ except when $P^X(\{\tilde{m}_0\})=0$ and either $\tilde{m}_0=\max(\supp P^X)$ or $\tilde{m}_0=\min(\supp P^X)$. In these cases, $L$ may take the value $\infty$ or $-\infty$ respectively.
\item If $P_n=P$ for all $n$, then we can ensure that the conclusions of (iii) and (iv) hold in all cases, even when the assumptions on $\tilde{m}_0$ are dropped.
\end{enumerate}

\unparskip
\end{lemma}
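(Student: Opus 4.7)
The plan is to build $g$ as a pointwise limit using compactness of uniformly bounded convex/concave sequences, and then verify the $L^2$-type assertions (iv)--(v) via Fatou's lemma combined with a coupling coming from the Wasserstein convergence $Q_\ell := P_{n_{k_\ell}} \to P$. First, since $\limsup_k L(g_k, Q_k) < \infty$, Lemma~\ref{lem:limsuptw} gives $\limsup_k\,\abs{g_k(x)} < \infty$ for each $x \in \Int C$, so a Cantor diagonal argument over a countable dense subset of $\Int C$ produces a subsequence converging on that dense set. For any closed $[a,b] \subset C \cap [0, \tilde{m}_0)$ --- which is non-empty only when $\tilde{m}_0 = m_0 \in (M_L, M_R]$ --- we have $b < m_0$ and hence eventually $m_{n_{k_\ell}} > b$, so each $g_{k_\ell}$ is convex and uniformly bounded on $[a,b]$, and \citet[Theorem~10.9]{Rock97} (exploited as in the proof of Proposition~\ref{prop:clconv}(c)) upgrades to uniform convergence along a further subsequence. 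A second diagonal extraction over an exhausting sequence of such intervals, applied symmetrically on $(\tilde{m}_0, 1] \cap C$ where tail iterates are concave, yields a single subsequence $(g_{k_\ell})$ converging uniformly on compact subsets of $C \setminus \{\tilde{m}_0\}$ to a function $g$ with the monotonicity and convexity/concavity of (i). Monotonicity extends $g$ and the convergence to the endpoints $M_L, M_R$ whenever they lie in $C$, and yields one-sided limits $g(\tilde{m}_0^\pm)$ in $[-\infty, \infty]$ with the signs prescribed by (iii); setting $g(\tilde{m}_0)$ to be either of these completes (ii)--(iii).

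For part (iv) when $m_0 \in \Int C$ (so $\tilde{m}_0 = m_0$), Lemma~\ref{lem:limsupunif} sharpens the uniform bound to $\sup_x \abs{g_{k_\ell}(x)} < \tilde{B}$, so the limit $g$ is finite and bounded on $C$, and extending by constants beyond $C$ places it in $\mathcal{F}^{m_0}$. For (vi), the case $P_n \equiv P$ dispenses with the need for any coupling: $g_{k_\ell} \to g$ pointwise on $\supp P^X \setminus \{\tilde{m}_0\}$, so Fatou applied directly against the fixed measure $P$ gives $L(g, P) \leq \liminf_\ell L(g_{k_\ell}, P) < \infty$ unconditionally, whence $g \in L^2(P^X)$; the convexity/concavity/monotonicity structure of $g$ on $E_{P^X}$ then places $[g]_{P^X}$ in $\Cl\mathcal{F}_{P^X}^{m_0}$ via Proposition~\ref{prop:clconv}(d).

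The main obstacle is (v) together with the $P^X(\{\tilde{m}_0\}) = 0$ branch of (iv), which I propose to attack via a Skorokhod-type coupling. Because $W_2(Q_\ell, P) \to 0$ is equivalent to weak convergence plus convergence of second moments, one may construct $(X_\ell, Y_\ell) \sim Q_\ell$ and $(X, Y) \sim P$ on a common probability space with $(X_\ell, Y_\ell) \to (X, Y)$ in $L^2$, and hence almost surely along a further subsequence. The hypothesis $P^X(\{\tilde{m}_0\}) = 0$ gives $X \neq \tilde{m}_0$ almost surely, and combined with uniform convergence of $g_{k_\ell}$ on compact subsets of $C \setminus \{\tilde{m}_0\}$ this yields $g_{k_\ell}(X_\ell) \to g(X)$ almost surely (the events $\{X_\ell \notin C\}$ having vanishing $Q_\ell^X$-mass by the portmanteau theorem and $\supp P^X = C$). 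Fatou then produces
\begin{equation*}
\liminf_{\ell \to \infty} L(g_{k_\ell}, Q_\ell) = \liminf_{\ell \to \infty} \E[(Y_\ell - g_{k_\ell}(X_\ell))^2] \geq \E[(Y - g(X))^2] = L(g, P) \geq L_{m_0}^*(P),
\end{equation*}
which is (v); the finiteness of $L(g, P)$ forces $g \in L^2(P^X)$, and Proposition~\ref{prop:clconv}(d) delivers $[g]_{P^X} \in \Cl\mathcal{F}_{P^X}^{m_0}$. The hypothesis $P^X(\{\tilde{m}_0\}) = 0$ is indispensable in this Fatou step, because $g$ may genuinely be infinite or discontinuous at $\tilde{m}_0$: positive mass there would leave the coupled values $g_{k_\ell}(X_\ell)$ uncontrolled on a set of positive probability and break the lower bound.
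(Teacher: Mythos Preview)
Your overall architecture matches the paper's proof almost exactly: pointwise boundedness from Lemma~\ref{lem:limsuptw}, Rockafellar's compactness theorem plus a diagonal extraction to get uniform convergence on closed subsets of $C\setminus\{\tilde{m}_0\}$, Skorokhod representation combined with Fatou for (iv)--(v), and a direct Fatou argument against the fixed measure $P$ for (vi). The invocation of Lemma~\ref{lem:limsupunif} for the $m_0\in\Int C$ branch of (iv) is a legitimate alternative to the paper's route via (iii).

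There is, however, one genuine gap. You define $g(\tilde{m}_0)$ to be one of the one-sided limits $g(\tilde{m}_0^\pm)$. This does \emph{not} secure the claim in (ii) that $g_{k_\ell}\to g$ pointwise on all of $C$: monotonicity only pins $g_{k_\ell}(\tilde{m}_0)$ between $g(\tilde{m}_0^-)$ and $g(\tilde{m}_0^+)$, and if these differ the sequence $g_{k_\ell}(\tilde{m}_0)$ can oscillate. The paper instead passes to a \emph{further} subsequence along which $g_{k_\ell}(\tilde{m}_0)\to L\in[-\infty,\infty]$ and sets $g(\tilde{m}_0):=L$; part (iii) is then read off from the sandwich $g(\tilde{m}_0^-)\le L\le g(\tilde{m}_0^+)$.

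This omission is not cosmetic, because it breaks your proof of (vi). You write that $g_{k_\ell}\to g$ pointwise on $\supp P^X\setminus\{\tilde{m}_0\}$ and then apply Fatou ``unconditionally''. But the entire content of (vi) is to cover the case $P^X(\{\tilde{m}_0\})>0$, and in that case convergence off a set of positive $P^X$-measure does not give $(Y-g_{k_\ell}(X))^2\to (Y-g(X))^2$ almost surely, so Fatou cannot be invoked. The paper's argument needs the full pointwise convergence on $C$ (including at $\tilde{m}_0$) precisely here: with $g(\tilde{m}_0)=L$ one gets $g_{k_\ell}(X)\to g(X)$ for \emph{every} realisation $X\in C$, Fatou yields $L(g,P)<\infty$, and finiteness of $\int g^2\,dP^X$ then forces $g(\tilde{m}_0)\in\R$ when $P^X(\{\tilde{m}_0\})>0$. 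Once you insert the extra subsequence extraction at $\tilde{m}_0$, both (ii) and (vi) go through as written.
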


\deparskip
\begin{proof}
Fix a subsequence $(g_k)_{k=1}^\infty\equiv (f_{n_k})_{k=1}^\infty$. In the setting of Proposition~\ref{prop:cont}, we have $m_{n_k}\to m_0$, so if $0\leq z<m_0<w\leq 1$, then all but finitely many of the functions $g_k$ are convex on $[0,z]$ and concave on $[w,1]$. Since $\{g_k(x)\}_{k=1}^\infty$ is bounded for all $x\in\Int C$ by Lemma~\ref{lem:limsuptw}, it follows from~\citet[Theorem~10.9]{Rock97} and~Lemma~\ref{lem:convincr} that whenever $0\leq z<m_0<w\leq 1$, there is a subsequence of $(g_k)$ that converges pointwise on $C\setminus (z,w)$. By considering sequences $z_n\nearrow m_0$ and $w_n\searrow m_0$ with $z_n<m_0<w_n$ for all $n$, we deduce by a diagonal argument
% Repeatedly extract subsequences to ensure that the required convergence holds on $[0,z_n)\cup (w_n,1]$ for all $n$, where $z_n\nearrow m_0$  and $w_n\nearrow m_0$
that $(g_k)$ has a subsequence that converges pointwise to some $g\colon C\setminus\{\tilde{m}_0\}\to\R$ on $C\setminus\{\tilde{m}_0\}$ and uniformly on closed subsets of $C\setminus\{\tilde{m}_0\}$. 

To extend $g$ to $[0,1]\setminus\{\tilde{m}_0\}$, we set $g(x)=g(M_L)$ for all $x\in [0,M_L)$ and $g(x)=g(M_R)$ for all $x\in (M_R,1]$ if $m_0\in\Int C$, and otherwise set $g(x)=0$ for all $x\in [0,1]\setminus C$ if $m_0\notin\Int C$. Finally, by extracting a further subsequence $(g_{k_\ell})$ if necessary, we can ensure that $g_{k_\ell}(\tilde{m}_0)$ converges to some $L\in [-\infty,\infty]$ as $\ell\to\infty$, and we extend $g$ to $[0,1]$ by setting $g(\tilde{m}_0)=L$. 

(i) This follows from the construction of $g$ in the first paragraph. Note also that if $m_0\notin\Int C$, then $g$ is increasing on $C$, and $g$ is either concave or convex on $C$,
% as an extended real-valued function
depending on whether $\tilde{m}_0=M_L$ or $\tilde{m}_0=M_R$ respectively.

(ii) By the continuity of $g$ on $C\setminus\{\tilde{m}_0\}$ and the uniform convergence established above, we deduce that if $z_\ell\to z\in C\setminus\{\tilde{m}_0\}$, then 
\begin{equation}
\label{eq:unifc}
% continuous convergence / local uniform convergence
\abs{g_{k_\ell}(z_\ell)-g(z)}\leq\abs{g_{k_\ell}(z_\ell)-g(z_\ell)}+\abs{g(z_\ell)-g(z)}\to 0.
\end{equation}
% passing to
(iii) If $\tilde{m}_0>M_L$, then $L=\lim_{\ell\to\infty}g_{k_\ell}(\tilde{m}_0)\geq\lim_{x\nearrow\,\tilde{m}_0}\,\lim_{\ell\to\infty}g_{k_\ell}(x)=\lim_{x\nearrow\,\tilde{m}_0}g(x)$, so $L\in (-\infty,\infty]$. Similarly if $\tilde{m}_0<M_R$, then $L\leq\lim_{x\searrow\,\tilde{m}_0}g(x)$, whence $L\in [-\infty,\infty)$.

(iv) If $m_0\in\Int C$, then $g\in\mathcal{F}^{m_0}$ by construction.
% and by (ii), so $[g]_{P^X}\in\mathcal{F}_{P^X}^{m_0}$. 
Suppose now that $P^X(\{\tilde{m}_0\})=0$. Since $Q_\ell\cvd P$ under the hypotheses of Proposition~\ref{prop:cont}, Skorokhod's representation theorem~\citep[e.g.][Theorem~2.19]{vdV98} guarantees the existence of random vectors $(X,Y),(X_1,Y_1),(X_2,Y_2),\dotsc$ defined on a common probability space such that $(X,Y)\sim P$, $(X_\ell,Y_\ell)\sim Q_\ell$ for all $\ell$ and $(X_\ell,Y_\ell)\to (X,Y)$ almost surely. Then it follows from~\eqref{eq:unifc} that $g_{k_\ell}(X_\ell)\to g(X)$ almost surely on the event $\{X=\tilde{m}_0\}$, which has probability 1 since $P^X(\{\tilde{m}_0\})=0$ by assumption. An application of Fatou's lemma shows that
\begin{align}
\infty>\limsup_{n\to\infty}\,L(f_n,P_n)\geq\liminf_{\ell\to\infty}\,L(g_{k_\ell},Q_\ell)&=\liminf_{\ell\to\infty}\,\E\bigl(\{Y_\ell-g_{k_\ell}(X_\ell)\}^2\bigr)\notag\\
&\geq\E\Bigl(\liminf_{\ell\to\infty}\,\{Y_\ell-g_{k_\ell}(X_\ell)\}^2\Bigr)\notag\\
\label{eq:lbdfat}
&=\E\bigl(\{Y-g(X)\}^2\bigr)=L(g,P).
\end{align}
Thus, $\E\bigl(g(X)^2\bigr)^{1/2}\leq\E\bigl(\{Y-g(X)\}^2\bigr)^{1/2}+\E(Y^2)^{1/2}<\infty$, so $g\in L^2(P^X)$. By Proposition~\ref{prop:clconv}(d) and the proof of (i) above, it follows that $[g]_{P^X}\in\Cl\mathcal{F}_{P^X}^{m_0}\subseteq\mathcal{L}^2(P^X)$.

(v) Since $L(h_n,P)\to L(h,P)$ whenever $\norm{h_n-h}_{L^2(P^X)}\to 0$, it follows from~\eqref{eq:lbdfat} and the conclusion of (iv) that $\liminf_{\ell\to\infty}L(g_{k_\ell},Q_\ell)\geq L(g,P)\geq L_{m_0}^*(P)$, as required.

(vi) If $P_n=P$ for all $n$, then we can modify the argument leading up to~\eqref{eq:lbdfat} as follows: since $g_{k_\ell}\to g$ pointwise on $C$ and $P^X(C)=1$, it follows that if $(X,Y)\sim P$, then $g_{k_\ell}(X)\to g(X)$ (almost surely) and $\infty>\liminf_{\ell\to\infty}L(g_{k_\ell},P)\geq L(g,P)$, as in~\eqref{eq:lbdfat}. Thus, $g\in L^2(P^X)$ and $[g]_{P^X}\in\Cl\mathcal{F}_{P^X}^{m_0}\subseteq\mathcal{L}^2(P^X)$ as in the proof of (iv). In particular, when $\tilde{m}_0\notin\Int C$ and $P^X(\{\tilde{m}_0\})>0$, we must have $g(\tilde{m}_0)\in\R$ since $g\in L^2(P^X)$. In this case, if $m_0\in C$, then $g\sim_{P^X} h$ for some $h\in\mathcal{F}^{m_0}$, and otherwise if $m_0\notin C$, then $[g]_{P^X}\in\Cl\mathcal{F}_{P^X}^{m_0}$ by Proposition~\ref{prop:clconv}(d). We conclude as before that $\liminf_{\ell\to\infty}L(g_{k_\ell},P)\geq L(g,P)\geq L_{m_0}^*(P)$ in all cases, so the proof of Lemma~\ref{lem:subseqcvg} is complete.
\end{proof}

\deparskip
\begin{proof}[Proof of Proposition~\ref{prop:cont}]
(a) Fix $\varepsilon>0$. By Proposition~\ref{prop:clconv}(a) and the observation in the paragraph after~\eqref{eq:orth}, there exists a continuous $h_0\in\mathcal{F}^{m_0}$ such that $L_{m_0}^*(P)\leq L(h_0,P)\leq L_{m_0}^*(P)+\varepsilon$. Now for $\eta\in [-m_0,1-m_0]$, define $h_\eta\colon [0,1]\to\R$ by $h_\eta(x):=h_0(0\vee (x-\eta)\wedge 1)$. Then $h_\eta\in\mathcal{F}^{m_0+\eta}$ and $\sup_{x\in [0,1]}\,\abs{h_\eta(x)}\leq\abs{h_0(0)}\vee\abs{h_0(1)}=:B$ for all $\eta\in [-m_0,1-m_0]$. In addition, it follows from Lemma~\ref{lem:convunif} that $h_\eta\to h_0$ uniformly on $[0,1]$ as $\eta\to 0$. 
% since $h_0$ is continuous
% Unlike in Dini's theorem, we do not need to use the fact that $[0,1]$ is compact
For each $n$, let $\eta_n:=m_n-m_0$, so that $h_{\eta_n}\in\mathcal{F}^{m_n}$ and 
\[L_{m_n}^*(P_n)\leq L(h_{\eta_n},P_n)=L(h_0,P_n)+\bigl(L(h_{\eta_n},P_n)-L(h_0,P_n)\bigr)\]
by the definition of $L_{m_n}^*(P_n)$. Observe that
\begin{align*}
\abs{L(h_{\eta_n},P_n)-L(h_0,P_n)}&\leq\int_{[0,1]\times\R}\;\bigl|\bigl(y-h_{\eta_n}(x)\bigr)^2-\bigl(y-h_0(x)\bigr)^2\bigr|\,dP_n(x,y)\notag\\
&\leq\int_{[0,1]\times\R}\;\abs{h_0(x)-h_{\eta_n}(x)}\,\bigl(2\abs{y}+\abs{h_0(x)}+\abs{h_{\eta_n}(x)}\bigl)\,dP_n(x,y)\notag\\
&\leq\sup_{x\in [0,1]}\,\abs{h_0(x)-h_{\eta_n}(x)}\,\int_{[0,1]\times\R}\,2(\abs{y}+B)\,dP_n(x,y).
\end{align*}
Moreover, $W_2(P_n,P)\to 0$ by assumption, so $\int_{[0,1]\times\R}\norm{w}^2\,dP_n(w)\to\int_{[0,1]\times\R}\norm{w}^2\,dP(w)$. Thus, since $\bigl(y-h_0(x)\bigr)^2\leq 2(y^2+B^2)$ and $\abs{y}\leq (1+y^2)/2$ for all $(x,y)\in [0,1]\times\R$, we deduce using Lemma~\ref{lem:wconv} and the continuity of $h_0$ that
\begin{align*}
L(h_0,P_n)=\int_{[0,1]\times\R}\,\bigl(y-h_0(x)\bigr)^2\,dP_n(x,y)&\to\int_{[0,1]\times\R}\,\bigl(y-h_0(x)\bigr)^2\,dP(x,y)=L(h_0,P)\\
\text{and\quad}\int_{[0,1]\times\R}\,(\abs{y}+B)\,dP_n(x,y)&\to\int_{[0,1]\times\R}\,(\abs{y}+B)\,dP(x,y)<\infty\notag
\end{align*}
as $n\to\infty$. Since $h_{\eta_n}\to h_0$ uniformly on $[0,1]$ as $n\to\infty$, it follows from the above that
\[\limsup_{n\to\infty}L_{m_n}^*(P_n)\leq\lim_{n\to\infty}L(h_0,P_n)=L(h_0,P)\leq L_{m_0}^*(P)+\varepsilon.\]
Since $\varepsilon>0$ was arbitrary, this yields (a).

The proofs of (b)--(i) are based on the key Lemmas~\ref{lem:limsuptw},~\ref{lem:limsupunif} and~\ref{lem:subseqcvg} above.

(b) Fix a deterministic, non-negative sequence $(\delta_n)$ with $\delta_n\to 0$ and let $(f_n)$ be any sequence with $f_n\in\psi_{m_n}^{\delta_n}(P_n)$ for all $n$, so that $L(f_n,P_n)\leq L_{m_n}^*(P_n)+\delta_n$ for all $n$. Then $\limsup_{n\to\infty}L(f_n,P_n)=\limsup_{n\to\infty}L_{m_n}^*(P_n)\leq L_{m_0}^*(P)<\infty$ by (a), so $(f_n)$ satisfies the hypotheses of Lemma~\ref{lem:subseqcvg}. In particular, Lemma~\ref{lem:subseqcvg}(v) applies to every subsequence of $(f_n)$ when $P^X(\{\tilde{m}_0\})=0$, in which case
\begin{equation}
\label{eq:liminfL}
\liminf_{n\to\infty}\,L_{m_n}^*(P_n)=\liminf_{n\to\infty}\,L(f_n,P_n)\geq L_{m_0}^*(P).
\end{equation}
(c) If in addition $P_n=P$ for all $n$, then Lemma~\ref{lem:subseqcvg}(vi) applies to every subsequence of $(f_n)$, regardless of whether or not $P^X(\{\tilde{m}_0\})=0$, so~\eqref{eq:liminfL} holds in all cases. Together with part (a) above, this establishes (c) and the fact that $L^*(P)$ and $\mathcal{I}^*(P)$ are well-defined.

(d) If $P^X(\{m\})=0$ for all $m\in [0,1]$, then (a) and (b) imply that for any sequence $(m_n')$ in $[0,1]$ converging to some $m'\in [0,1]$, we have $\lim_{n\to\infty}L_{m_n'}^*(P_n)=L_{m'}^*(P)$. In other words, the functions $m\mapsto L_m^*(P_n)$ converge continuously to $m\mapsto L_m^*(P)$ on $[0,1]$ in the sense of~\citet[Chapter~3.1.5]{Rem91}. Since continuous convergence is equivalent to uniform convergence on the compact space $[0,1]$~\citep[e.g.][pages~98--99]{Rem91}, the first part of (d) follows. This in turn implies the second assertion that $L^*(P_n)=\min_{m\in [0,1]}L_m^*(P_n)\to\min_{m\in [0,1]}L_m^*(P)=L^*(P)$ as $n\to\infty$.

(e) Let $(m_n')$ be any sequence in $[0,1]$ with $m_n'\in\mathcal{I}^{\delta_n}(P_n)$ for all $n$. For each subsequence of $(m_n')$, we may extract a further subsequence $(m_{n_k}')$ that converges to some $m'\in [0,1]$.
% NB: since $[0,1]$ is compact: wouldn't be guaranteed if we replaced $[0,1]$ with $\R$?
For each $k$, we have $L^*(P_{n_k})\leq L_{m_{n_k}'}^*(P_{n_k})\leq L^*(P_{n_k})+\delta_{n_k}$ by the definition of $\mathcal{I}^{\delta_{n_k}}(P_{n_k})$. Thus, if $P^X(\{m\})=0$ for all $m\in [0,1]$, then for any $m^*\in\mathcal{I}^*(P)$, we deduce from (a) and (b) that 
\[L^*(P)\leq L_{m'}^*(P)=\lim_{k\to\infty}L_{m_{n_k}'}^*(P_{n_k})=\lim_{k\to\infty}L^*(P_{n_k})\leq\lim_{k\to\infty}L_{m^*}^*(P_{n_k})=L_{m^*}^*(P)=L^*(P),\]
so $m'\in\mathcal{I}^*(P)$ and $L^*(P_{n_k})\to L^*(P)$ as $k\to\infty$. This shows that every subsequence of $(m_n')$ has a further subsequence that converges to an element of $\mathcal{I}^*(P)$. Since $(m_n')$ was arbitrary, this implies (e). Similarly, every subsequence of $\bigl(L^*(P_n)\bigr)$ has a further subsequence that converges to $L^*(P)$; this is another way to obtain the second part of (d).

For (f)--(i), fix any sequence $(f_n)$ with $f_n\in\psi_{m_n}^{\delta_n}(P_n)$ for all $n$. If $P^X(\{\tilde{m}_0\})=0$, then for any subsequence $(g_k)_{k=1}^\infty\equiv (f_{n_k})_{k=1}^\infty$ of $(f_n)$, we can find a further subsequence $(g_{k_\ell})$ and a function $g$ on $[0,1]$ satisfying conditions (i)--(v) in Lemma~\ref{lem:subseqcvg}. In particular, setting $m_\ell':=m_{n_{k_\ell}}$ and $Q_\ell:=P_{n_{k_\ell}}$, we deduce from (a) and Lemma~\ref{lem:subseqcvg}(v) that
\[L(g,P)\leq\liminf_{\ell\to\infty}\,L(g_{k_\ell},Q_\ell)\leq\limsup_{\ell\to\infty}\,L(g_{k_\ell},Q_\ell)=\limsup_{\ell\to\infty}\,L_{m_\ell'}^*(Q_\ell)\leq L_{m_0}^*(P),\]
where the equality above follows from the fact that $L(g_{k_\ell},Q_\ell)\leq L_{m_\ell'}^*(Q_\ell)+\delta_{m_{k_\ell}}$ for all $\ell$. We conclude from Lemma~\ref{lem:subseqcvg}(iv) and Corollary~\ref{cor:proj}(c) that $g\in\psi_{\tilde{m}_0}^*(P)=\psi_{m_0}^*(P)$. 

(f) For each closed set $A\subseteq (\supp P^X)\setminus\{\tilde{m}_0\}$, Lemma~\ref{lem:subseqcvg}(ii) asserts that $g_{k_{\ell}}\to g\in\psi_{m_0}^*(P)$ uniformly on $A$. Thus, every subsequence of $(f_n)$ has a further subsequence that converges uniformly on $A$ to an element of $\psi_{m_0}^*(P)$, and by Corollary~\ref{cor:proj}(a), all elements of $\psi_{m_0}^*(P)$ agree $P^X$-almost everywhere on $A$. Since $(f_n)$ was arbitrary, (f) follows.

(g) If $\psi_{\tilde{m}_0}^0(P)\neq\emptyset$, then by Corollary~\ref{cor:proj}(a), there exists $f^*\in\psi_{\tilde{m}_0}^0(P)$ such that $\psi_{\tilde{m}_0}^*(P)=[f^*]_{P^X}$, so $g\sim_{P^X}f^*$ in the argument before (f). Thus, in view of Lemma~\ref{lem:subseqcvg}(i), we may assume that $g\in\mathcal{F}^{\tilde{m}_0}$, so that $g\in\psi_{\tilde{m}_0}^0(P)$. By applying Lemma~\ref{lem:subseqcvg}(ii) as above, we deduce that for any closed set $A\subseteq (\supp P^X)\setminus\{\tilde{m}_0\}$, every subsequence of $(f_n)$ has a further subsequence that converges uniformly on $A$ to an element of $\psi_{\tilde{m}_0}^0(P)$. All functions in $\psi_{\tilde{m}_0}^0(P)$ agree on $A$ by Corollary~\ref{cor:proj}(e), and $(f_n)$ was arbitrary, so (g) holds.

Suppose in addition that $m_0\in\Int(\csupp P^X)$ and $P^X(\{m_0\})=0$. Then in the argument before (f), we can insist that $g\in\mathcal{F}^{m_0}$ in view of Lemma~\ref{lem:subseqcvg}(iv), so that $g\in\psi_{m_0}^0(P)$.

(h) For fixed $q\in [1,\infty)$, Lemma~\ref{lem:limsupunif} implies that there exists $\tilde{B}<\infty$ such that $\abs{g_{k_\ell}-g}^q\leq 2^{q-1}(\tilde{B}^q+\abs{g}^q)$ on $[0,1]$ for all sufficiently large $\ell$, so $\norm{g_{k_\ell}-g}_{L^q(P^X)}^q=\int_{[0,1]}\,\abs{g_{k_\ell}-g}^q\,dP^X\to 0$ by the dominated convergence theorem. In summary, every subsequence of $\bigl([f_n]_{P^X}\bigr)_{n=1}^\infty$ has a further subsequence that converges in $\mathcal{L}^q(P^X)$ to $\psi_{m_0}^*(P)$, so the entire sequence converges in $\mathcal{L}^q(P^X)$ to $\psi_{m_0}^*(P)$, as required.

(i) Under the hypotheses of (i), all elements of $\psi_{m_0}^0(P)$ are continuous by Corollary~\ref{cor:proj}(f), so we can apply Lemmas~\ref{lem:subseqcvg}(ii) and~\ref{lem:convunif} to obtain the stronger conclusion that every subsequence of $(f_n)$ has a further subsequence that converges uniformly on $\csupp P^X$ to some function in $\psi_{m_0}^0(P)$. Since elements of $\psi_{m_0}^0(P)$ agree on $\supp P^X$ by assumption and $(f_n)$ was arbitrary, (i) holds.
\end{proof}

\deparskip
\begin{proof}[Proof of Corollary~\ref{cor:cont}]
(a) By definition, we have $\psi^0(P)=\{f\in\mathcal{F}:L(f,P)=L^*(P)\}=\bigcup_{m\in\mathcal{I}^*(P)}\psi_m^0(P)$. If $\psi_m^0(P)\neq\emptyset$ for some $m\in\mathcal{I}^*(P)\setminus\csupp P^X$, then $\tilde{m}=\argmin_{x\in\csupp P^X}\abs{x-m}$ satisfies $\tilde{m}\in\mathcal{I}^*(P)\cap\csupp P^X$ and $\psi_{\tilde{m}}^0(P)\neq\emptyset$ by Corollary~\ref{cor:proj}(c). The result now follows from Corollary~\ref{cor:proj}(d).

For (b)--(d), fix a sequence $(f_n)$ with $f_n\in\psi^{\delta_n}(P_n)$ for all $n$, so that there exists a sequence $(m_n)$ in $[0,1]$ with $m_n\in\mathcal{I}^{\delta_n}(P_n)$ and $f_n\in\psi_{m_n}^{\delta_n}(P_n)$ for all $n$. By assumption, we have $P^X(\{m\})=0$ for all $m\in [0,1]$, so for each subsequence of $(f_n)$, Proposition~\ref{prop:cont}(e) ensures the existence of a further subsequence $(f_{n_k})$ such that $m_{n_k}\to m^*$ for some $m^*\in\mathcal{I}^*(P)$. Let $\tilde{m}^*:=\argmin_{x\in\csupp P^X}\,\abs{m^*-x}$. Then $L_{\tilde{m}^*}^*(P)=L_{m^*}^*(P)=L^*(P)$ by Corollary~\ref{cor:proj}(c), so $\tilde{m}^*\in\mathcal{I}^*(P)$. We are now in a position to apply Proposition~\ref{prop:cont}(f)--(i).

(b) Fix a closed set $A\subseteq (\supp P^X)\setminus\mathcal{I}^*(P)\subseteq (\supp P^X)\setminus\{\tilde{m}^*\}$. For any $f^*\in\psi_{m^*}^*(P)\subseteq\psi^*(P)$, Proposition~\ref{prop:cont}(f) implies that $\norm{f_{n_k}-f^*}_{L^\infty(A,P^X)}\to 0$. Thus, every subsequence of $(f_n)$ has a further subsequence that converges in $\norm{{\cdot}}_{L^\infty(A,P^X)}$ to an element of $\psi^*(P)$. Since $(f_n)$ was arbitrary, this yields (b).

(c) Fix a closed set $A\subseteq (\supp P^X)\setminus\tilde{\mathcal{I}}^*(P)$. 

\unparskip	
\begin{itemize}
\item If $\tilde{m}^*\notin\tilde{\mathcal{I}}^*(P)$, then $\tilde{m}^*=m^*\in\Int(\csupp P^X)$ and all elements of $\psi_{\tilde{m}^*}^0(P)=\psi_{m^*}^0(P)\neq\emptyset$ are continuous, so for any $f^*\in\psi_{m^*}^0(P)\subseteq\psi^0(P)$, Proposition~\ref{prop:cont}(i) implies that $f_{n_k}\to f^*$ uniformly on $\supp P^X\supseteq A$. 
\item If $\tilde{m}^*\in\tilde{\mathcal{I}}^*(P)$, then we still have $\tilde{m}^*\in\mathcal{I}^*(P)\cap\csupp P^X$, so $\psi_{\tilde{m}^*}^0(P)\neq\emptyset$ by assumption. Thus, for any $f^*\in\psi_{\tilde{m}^*}^0(P)\subseteq\psi^0(P)$, Proposition~\ref{prop:cont}(g) implies that $f_{n_k}\to f^*$ uniformly on $(\supp P^X)\setminus\{\tilde{m}^*\}\supseteq A$. 
\end{itemize}

\unparskip
Thus, every subsequence of $(f_n)$ has a further subsequence that converges uniformly on $A$ to an element of $\psi^0(P)$. Since $(f_n)$ was arbitrary, (c) follows.

(d) Here, $m^*\in\mathcal{I}^*(P)\subseteq\Int(\csupp P^X)$ by assumption, so for any $f^*\in\psi_{m^*}^0(P)\subseteq\psi^*(P)$, Proposition~\ref{prop:cont}(h) implies that $\norm{f_{n_k}-f^*}_{L^q(P^X)}\to 0$ for any $q\in [1,\infty)$. By the same reasoning as in (b,\,c), we obtain (d).
\end{proof}

\deparskip
\begin{proof}[Proof of Proposition~\ref{prop:consistency}]
In the definition of $P_0$, we have $\E(f_0(X)+\xi\,|\,X)=f_0(X)$ since $\xi$ is independent of $X$ and has mean 0 by (ii), so $f_0$ is a regression function for $P_0$ (in the sense of~\eqref{eq:orth} in Section~\ref{sec:proj}). It now follows from condition (iii) and Corollary~\ref{cor:proj}(d) that $\psi_m^0(P_0)\neq\emptyset$ for all $m\in\csupp P_0^X$, so in particular $\psi^0(P_0)\neq\emptyset$.

Writing $(D_n)$ for any one of the sequences of random variables in (a)--(d), we aim to prove that $D_n\cvp 0$, or equivalently that every subsequence $(D_{n_k})$ has a further subsequence that converges almost surely to 0. If it can be shown that $W_2(\Pr_n,P_0)\cvp 0$, then we can take $(D_{n_{k_\ell}})$ to be a subsequence of $(D_{n_k})$ such that $W_2(\Pr_{n_{k_\ell}},P_0)\to 0$ almost surely. Working on an event $\Omega_0$ of probability 1 on which this convergence takes place, we deduce directly from the relevant parts of Proposition~\ref{prop:cont} or Corollary~\ref{cor:cont} that $D_{n_{k_\ell}}\!\to 0$ on $\Omega_0$; note that assertions (a,\,b) follow from Proposition~\ref{prop:cont}(d,\,e) and assertions (c,\,d) follow from Corollary~\ref{cor:cont}(c,\,d).

To complete the proof, we must therefore verify that $W_2(\Pr_n,P_0)\cvp 0$ under conditions (i)--(iii). It suffices to show that 
\[\text{($\ast$)}\;\;\Pr_n\cvd P_0\text{ almost surely}\qquad
% i.e.\ $\mathrm{BL}(\Pr_n,P_0)\to 0$ almost surely
\text{and}\qquad\text{($\ast\ast$)}\;\;\int_{\R^2}\norm{w}^2\,d\Pr_n(w)\cvp\int_{\R^2}\norm{w}^2\,dP_0(w),\]
since then we can argue along subsequences of $(\Pr_n)$ as in the previous paragraph. 

($\ast$) Let $\tilde{\Pr}_n:=n^{-1}\sum_{i=1}^n\delta_{(x_{ni},\,\xi_{ni})}$ for each $n$ and define $\tilde{P}_0:=P_0^X\otimes P_\xi$. Defining the map $F_0\colon (x,z)\mapsto (x,f_0(x)+z)$ on $\R^2$, we therefore have $\Pr_n=\tilde{\Pr}_n\circ F_0^{-1}$ for each $n$ and $P_0=\tilde{P}_0\circ F_0^{-1}$. The desired convergence statement ($\ast$) follows from the following two claims.
\begin{claim}
$\tilde{\Pr}_n\cvd\tilde{P}_0$ almost surely.
\end{claim}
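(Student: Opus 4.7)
The plan is to fix a countable convergence-determining class of bounded continuous test functions and invoke Hoeffding together with Borel--Cantelli on each one, combining the exceptional null sets at the end. Since $[0,1]\times\R$ is a separable metric space, there exists a countable family $\mathcal{G}$ of bounded continuous functions $g\colon [0,1]\times\R\to\R$ such that $Q_n\cvd Q$ holds if and only if $\int g\,dQ_n\to\int g\,dQ$ for every $g\in\mathcal{G}$ (see e.g.\ Parthasarathy's book, or use products of bounded uniformly continuous functions from countable dense subsets of $C_b[0,1]$ and $C_b(\R)$). By a countable intersection of almost-sure events, it therefore suffices to prove, for each fixed bounded continuous $g$, that $\int g\,d\tilde{\Pr}_n\to\int g\,d\tilde{P}_0$ almost surely.

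Fix such a $g$ with $M:=\sup\abs{g}<\infty$ and define $h\colon [0,1]\to\R$ by $h(x):=\int_\R g(x,z)\,dP_\xi(z)$. Since $g$ is bounded and continuous and $P_\xi$ is a probability measure, the dominated convergence theorem implies that $h$ is bounded by $M$ and continuous on $[0,1]$. Decompose
\[
\int g\,d\tilde{\Pr}_n-\int g\,d\tilde{P}_0=\underbrace{\biggl(\inv{n}\sum_{i=1}^n g(x_{ni},\xi_{ni})-\inv{n}\sum_{i=1}^n h(x_{ni})\biggr)}_{=:\,A_n}+\underbrace{\biggl(\int h\,d\Pr_n^X-\int h\,dP_0^X\biggr)}_{=:\,B_n}.
\]
For $B_n$, the weak convergence $\Pr_n^X\cvd P_0^X$ from assumption (i) together with the fact that $h\in C_b[0,1]$ yields $B_n\to 0$ deterministically.

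For $A_n$, observe that $Z_{ni}:=g(x_{ni},\xi_{ni})-h(x_{ni})$ are independent (for fixed $n$) and satisfy $\E Z_{ni}=0$ and $\abs{Z_{ni}}\leq 2M$ since $\xi_{n1},\dotsc,\xi_{nn}$ are i.i.d.\ from $P_\xi$. Hoeffding's inequality then gives, for every $\varepsilon>0$,
\[
\Pr\bigl(\abs{A_n}>\varepsilon\bigr)=\Pr\biggl(\biggl|\inv{n}\sum_{i=1}^n Z_{ni}\biggr|>\varepsilon\biggr)\leq 2\exp\bigl(-n\varepsilon^2/(8M^2)\bigr),
\]
which is summable in $n$. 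By the Borel--Cantelli lemma, $A_n\to 0$ almost surely. Combined with $B_n\to 0$, we obtain $\int g\,d\tilde{\Pr}_n\to\int g\,d\tilde{P}_0$ almost surely for this fixed $g$.

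Intersecting the resulting almost-sure events over the countable class $\mathcal{G}$ yields an event of full probability on which $\int g\,d\tilde{\Pr}_n\to\int g\,d\tilde{P}_0$ for every $g\in\mathcal{G}$, and hence $\tilde{\Pr}_n\cvd\tilde{P}_0$ by the choice of $\mathcal{G}$. The only mildly delicate point is the reduction to a countable determining class, without which one would need to upgrade pointwise a.s.\ convergence for each $g$ to simultaneous convergence over an uncountable family; once this reduction is made, the remainder is a routine Hoeffding--Borel--Cantelli argument driven by the boundedness of $g$ and the weak convergence of the marginals $\Pr_n^X$.
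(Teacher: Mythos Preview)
Your proof is correct and follows a genuinely different route from the paper's. The paper chooses the countable convergence-determining class $\mathcal{R}$ of closed rectangles $[a_1,b_1]\times[a_2,b_2]$ with rational endpoints, writes $\tilde{\Pr}_n(I_1\times I_2)=\Pr_n^X(I_1)\,T_n$ where $T_n$ is a sample proportion with $\Bin\bigl(r_n,P_\xi(I_2)\bigr)$ distribution, and applies Hoeffding plus Borel--Cantelli directly to $T_n$. To make $\Pr_n^X(I_1)\to P_0^X(I_1)$ go through, the paper uses the atom-free part of condition~(i), namely $P_0^X(\{a_1,b_1\})=0$.

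Your approach instead tests against a countable family of bounded continuous $g$, splits $\int g\,d\tilde{\Pr}_n-\int g\,d\tilde{P}_0$ into a centred bounded triangular-array average $A_n$ (handled by Hoeffding and Borel--Cantelli) and a deterministic marginal term $B_n$ (handled by $\Pr_n^X\cvd P_0^X$). This is arguably cleaner: it avoids boundary issues at the endpoints of rectangles, so it does not rely on the atom-free assumption on $P_0^X$, and the conditioning step $h(x)=\int g(x,z)\,dP_\xi(z)$ makes the product structure of $\tilde{P}_0=P_0^X\otimes P_\xi$ transparent. The paper's argument, in exchange, is slightly more self-contained in that the binomial structure is explicit and no appeal to the existence of a countable determining family of continuous functions is needed.
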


\deparskip	
\begin{proof}[Proof of Claim]
By~\citet[Theorem~2.3]{Bil99}, the countable set $\mathcal{R}:=\{[a_1,b_1]\times [a_2,b_2]:a_j\leq b_j\text{ and }a_j,b_j\in\Q\text{ for }j=1,2\}$ is a convergence-determining class in the sense of~\citet[page~18]{Bil99}, so it suffices to show that $\tilde{\Pr}_n(R)\to\tilde{P}_0(R)$ almost surely for each $R\in\mathcal{R}$. To this end, fix any $I_1\times I_2\in\mathcal{R}$, where $I_j=[a_j,b_j]$ is an interval with rational endpoints $a_j\leq b_j$ for $j=1,2$. By condition (i), $\Pr_n^X\cvd P_0^X$ and $P_0^X(\{a_1,b_1\})=0$, so $\Pr_n^X(I_1)\to P_0^X(I_1)$ as $n\to\infty$. For each $n$, defining $r_n:=\sum_{i=1}^n\Ind_{\{x_{ni}\in I_1\}}=n\Pr_n^X(I_1)$, we can write $\tilde{\Pr}_n(I_1\times I_2)=\Pr_n^X(I_1)\,T_n$, where
\[T_n:=\frac{\sum_{i=1}^n\Ind_{\{x_{ni}\in I_1\}}\Ind_{\{\xi_{ni}\in I_2\}}}{r_n\vee 1}\sim\frac{1}{r_n\vee 1}\Bin\bigl(r_n,P_\xi(I_2)\bigr).\]
If $P_0^X(I_1)=0$, then certainly $\tilde{\Pr}_n(I_1\times I_2)=\Pr_n^X(I_1)\,T_n\to 0=P_0^X(I_1)P_\xi(I_2)=\tilde{P}_0(I_1\times I_2)$. On the other hand, if $P_0^X(I_1)>0$, then $r_n=n\bigl(1+o(1)\bigr)P_0^X(I_1)\to\infty$, so for all $t>0$, we have $\sum_{n=1}^\infty\Pr(\abs{T_n-P_\xi(I_2)}>t)<\infty$ by Hoeffding's inequality (or some other suitable exponential tail bound for binomial random variables; see~\citet[Appendix~A.6.1]{vdVW96} for example). Thus, by the first Borel--Cantelli lemma, $T_n\to P_\xi(I_2)$ almost surely, so $\tilde{\Pr}_n(I_1\times I_2)=\Pr_n^X(I_1)\,T_n\to P_0^X(I_1)P_\xi(I_2)=\tilde{P}_0(I_1\times I_2)$ almost surely as $n\to\infty$, as required.
\end{proof}

\unparskip
\begin{claim}
If $(Q_n)$ is any sequence of probability measures such that $Q_n\cvd\tilde{P}_0$, then $Q_n\circ F_0^{-1}\cvd\tilde{P}_0\circ F_0^{-1}=P_0$ under condition (iii).
\end{claim}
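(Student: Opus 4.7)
The plan is to deduce the claim directly from the continuous mapping theorem for weak convergence. The crucial point is that although $f_0$ need not be continuous on all of $[0,1]$, condition (iii) tells us that its set of discontinuities $D\subseteq [0,1]$ satisfies $P_0^X(D)=0$, so $F_0$ fails to be continuous only on the small set $D\times\R$.

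First I would verify measurability and identify the continuity set of $F_0$. Since $f_0$ is Borel measurable, so too is the map $F_0\colon (x,z)\mapsto(x,f_0(x)+z)$ on $[0,1]\times\R$. Moreover, whenever $f_0$ is continuous at some $x_0\in [0,1]$, any sequence $(x_n,z_n)\to (x_0,z_0)$ satisfies $f_0(x_n)+z_n\to f_0(x_0)+z_0$, so $F_0$ is continuous at $(x_0,z_0)$ for every $z_0\in\R$. Consequently, the continuity set of $F_0$ contains $([0,1]\setminus D)\times\R$, and the complement $D\times\R$ satisfies
\[\tilde{P}_0(D\times\R)=P_0^X(D)\cdot P_\xi(\R)=0\]
under condition (iii), since $\tilde{P}_0=P_0^X\otimes P_\xi$ by construction.

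Having established that $F_0$ is Borel measurable and continuous $\tilde{P}_0$-almost everywhere, the continuous mapping theorem (e.g.\ Theorem~2.7 of~\citet{Bil99}) immediately yields $Q_n\circ F_0^{-1}\cvd\tilde{P}_0\circ F_0^{-1}$ whenever $Q_n\cvd\tilde{P}_0$. Finally, the identity $\tilde{P}_0\circ F_0^{-1}=P_0$ is a direct check: if $(X,\xi)\sim\tilde{P}_0$, then $X\sim P_0^X$ and $\xi\sim P_\xi$ are independent, and so $F_0(X,\xi)=\bigl(X,f_0(X)+\xi\bigr)$ has distribution $P_0$ by the definition of $P_0$ in the statement of Proposition~\ref{prop:consistency}. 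There is no real obstacle in this argument — the only substantive content is the observation that boundedness of $f_0$ is not needed here, and that condition~(iii) is exactly the hypothesis needed to push $\tilde{P}_0$-almost-everywhere continuity of $F_0$ through the continuous mapping theorem.
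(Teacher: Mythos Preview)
Your argument is correct. The paper's proof proceeds via Skorokhod's representation theorem, constructing coupled random vectors $(\tilde{X}_n,\tilde{Z}_n)\to(\tilde{X},\tilde{Z})$ almost surely and then using $P_0^X$-almost-everywhere continuity of $f_0$ to conclude $F_0(\tilde{X}_n,\tilde{Z}_n)\to F_0(\tilde{X},\tilde{Z})$ almost surely; this is essentially a hands-on re-derivation of the continuous mapping theorem in the case at hand. Your route---identifying the discontinuity set of $F_0$ as a subset of $D\times\R$, checking $\tilde{P}_0(D\times\R)=P_0^X(D)=0$, and invoking the continuous mapping theorem directly---is the same idea packaged more cleanly, and avoids the Skorokhod machinery altogether.
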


\deparskip
\begin{proof}[Proof of Claim]
It follows from Skorokhod's representation theorem~\citep[e.g.][Theorem~2.19]{vdV98} that there exist random vectors $(\tilde{X},\tilde{Z}),(\tilde{X}_1,\tilde{Z}_1),(\tilde{X}_2,\tilde{Z}_2),\dotsc$ defined on a common probability space such that $(\tilde{X},\tilde{Z})\sim\tilde{P}_0$, $(\tilde{X}_n,\tilde{Z}_n)\sim Q_n$ for all $n$ and $(\tilde{X}_n,\tilde{Z}_n)\to (\tilde{X},\tilde{Z})$ almost surely. 

Since $\tilde{X}\sim P_0^X$ and $f_0$ is continuous $P_0^X$-almost everywhere under condition (iii), we have $f_0(\tilde{X}_n)\to f_0(\tilde{X})$ almost surely, so $F_0(\tilde{X}_n,\tilde{Z}_n)=(\tilde{X}_n,f_0(\tilde{X}_n)+\tilde{Z}_n)\to (\tilde{X},f_0(\tilde{X})+\tilde{Z})=F_0(\tilde{X},\tilde{Z})$ almost surely. Thus, the distribution $Q_n\circ F_0^{-1}$ of $F_0(\tilde{X}_n,\tilde{Z}_n)$ converges weakly to the distribution $\tilde{P}_0\circ F_0^{-1}=P_0$ of $F_0(\tilde{X},\tilde{Z})$, as required.
\end{proof}

\unparskip
($\ast\ast$) Let $X\sim P_0^X$ and $\xi\sim P_\xi$ be independent, so that $\bigl(X,f_0(X)+\xi\bigr)\sim P_0$. We have
\begin{equation}
\label{eq:W2consist} \int_{\R^2}\norm{w}^2\,d\Pr_n(w)=\frac{1}{n}\sum_{i=1}^n(x_{ni}^2+Y_{ni}^2)=\frac{1}{n}\sum_{i=1}^n\bigl(x_{ni}^2+f_0^2(x_{ni})+2f_0(x_{ni})\,\xi_{ni}+\xi_{ni}^2\bigr)
\end{equation}
and now consider each of the summands on the right-hand side. Since $(\Pr_n^X)$ is a sequence of probability measures on the compact set $[0,1]$ satisfying $\Pr_n^X\cvd P_0^X$, we have $W_2(\Pr_n^X,P_0^X)\to 0$, so
\begin{equation}
\label{eq:W2consist1}
\frac{1}{n}\sum_{i=1}^n x_{ni}^2=\int_{[0,1]}x^2\,d\Pr_n^X(x)\to\int_{[0,1]}x^2\,dP_0^X(x)=\E(X^2).
\end{equation}
In addition, we can apply condition (iii) and argue as in the proof of the first Claim to see that $\Pr_n^X\circ f_0^{-1}\cvd P_0^X\circ f_0^{-1}$. Since $f_0$ is bounded on $[0,1]$ by (iii), these probability measures are also supported on some compact set, so in fact $W_2\bigl(\Pr_n^X\circ f_0^{-1},P_0^X\circ f_0^{-1}\bigr)\to 0$, whence
\begin{equation}
\label{eq:W2consist2}
\frac{1}{n}\sum_{i=1}^n f_0^2(x_{ni})=\int_\R x^2\,d\bigl(\Pr_n^X\circ f_0^{-1}\bigr)(x)=\int_{\R}x^2\,d\bigl(\Pr_0^X\circ f_0^{-1}\bigr)(x)=\E\bigl(f_0^2(X)\bigr).
\end{equation}
% Since $\Pr_n^X\cvd P_0^X$ by condition (i), Skorokhod's representation theorem again yields a random variables $\tilde{X},\tilde{X}_1,\tilde{X}_2,\dotsc$ on a common probability space such that $\tilde{X}\sim P_0^X$, $\tilde{X}_n\sim\Pr_n^X$ for all $n$ and $\tilde{X}_n\to\tilde{X}$ almost surely. By condition (iii), $f_0$ is continuous $P_0^X$-almost everywhere and bounded on $[0,1]$, so $f_0^2(\tilde{X}_n)\to f_0^2(\tilde{X})$ almost surely and hence
% with respect to the probability measure on the common probability space
% \[n^{-1}\textstyle\sum_{i=1}^n f_0^2(x_{ni})=\E\bigl(f_0^2(\tilde{X}_n)\bigr)\to\E\bigl(f_0^2(\tilde{X})\bigr)=\int_{[0,1]}f_0^2\,dP_0^X\] 
% by the dominated convergence theorem.
For all $t>0$, it follows from~\eqref{eq:W2consist2}, condition (ii) and Chebsyhev's inequality that
\begin{equation}
\label{eq:W2consist3}
\Pr\biggl(\frac{1}{n}\,\biggl|\sum_{i=1}^n f_0(x_{ni})\,\xi_{ni}\biggr|>t\biggr)\leq \frac{1}{(nt)^2}\Var\biggl(\sum_{i=1}^n f_0(x_{ni})\,\xi_{ni}\biggr)=\frac{1}{(nt)^2}\sum_{i=1}^n f_0^2(x_{ni})\Var(\xi)\to 0,
\end{equation}
so $n^{-1}\sum_{i=1}^n f_0(x_{ni})\,\xi_{ni}\cvp 0=\E\bigl(f_0(X)\xi\bigr)$ by the independence of $X$ and $\xi$. Finally, by condition (ii), $\xi_{n1},\dotsc,\xi_{nn}\iid P_\xi$ for each $n$, so $n^{-1}\sum_{i=1}^n\xi_{ni}^2\cvp\E(\xi^2)$ by the weak law of large numbers. Together with~\eqref{eq:W2consist},~\eqref{eq:W2consist1},~\eqref{eq:W2consist2} and~\eqref{eq:W2consist3}, this implies that
\[\textstyle\int_{\R^2}\norm{w}^2\,d\Pr_n(w)\cvp\E(X^2)+\E\bigl(f_0^2(X)\bigr)+2\E\bigl(f_0(X)\xi\bigr)+\E(\xi^2)=\int_{\R^2}\norm{w}^2\,dP_0(w),\]
so ($\ast\ast$) holds.
\end{proof}

\deparskip
\begin{proof}[Proof of Proposition~\ref{cor:consistency}] 
The results for $(\tilde{g}_n)=(\tilde{f}_n)$ follow from Proposition~\ref{prop:consistency}. When $(\tilde{g}_n)=(\hat{f}_n^{m_0})$, we again have $W_2(\Pr_n,P_0)\cvp 0$ under conditions (i)--(iii) in Proposition~\ref{prop:consistency}, so assertions (a,\,b,\,c,\,d) follow from Proposition~\ref{prop:cont}(e,\,g,\,i,\,h) in this case.
\end{proof}

\umparskip
\section{Auxiliary results and examples for Section~\ref{sec:proj2}}
\label{sec:contproof}
The proofs in Section~\ref{sec:projproofs} make use of two straightforward results on the convergence of sequences of S-shaped functions.
\begin{lemma} 
\label{lem:convincr}
Suppose that $(f_n)_{n=1}^\infty$ is a sequence of increasing convex functions on $[0,1)$ such that $\lim_{n\to\infty}f_n(x)$ exists for all $x\in (0,1)$. Then $\lim_{n\to\infty}f_n(0)$ exists and the function $f\colon [0,1)\to\R$ defined by $f(x):=\lim_{n\to\infty}f_n(x)$ is convex and increasing. 
% Even if $f_n(1)$ were finite for all $n$, the sequence $(f_n(1))$ need not converge: e.g.\ let $f_{2n-1}=0$ and let $f_{2n}(x)=x^n$ for all $x$
Moreover, $f_n\to f$ uniformly on $[0,w]$ for every $w\in [0,1)$.
\end{lemma}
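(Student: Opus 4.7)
The plan is to handle the pointwise convergence on $(0,1)$ and at the left endpoint $0$ separately, and then to bootstrap up to uniform convergence on $[0,w]$ using continuity of the limit. On $(0,1)$, I would simply pass to the limit in the monotonicity and convexity inequalities satisfied by each $f_n$: for $0<x<y<1$, $f_n(x)\leq f_n(y)$ yields $f(x)\leq f(y)$; and for $\lambda\in [0,1]$ and $x,y\in (0,1)$, the three-point convexity inequality for $f_n$ passes to $f$ in the limit. Since $f$ is then a real-valued convex function on the open interval $(0,1)$, a standard fact from convex analysis (e.g.~\citet[Theorem~10.1]{Rock97}) gives that $f$ is continuous on $(0,1)$; combined with monotonicity, $\lim_{x\to 0^+}f(x)$ exists in $[-\infty,\infty)$.

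The main obstacle is the existence of $\lim_{n\to\infty}f_n(0)$, which requires using the convexity of each $f_n$ (monotonicity alone is not enough). For the upper bound, monotonicity does suffice: $f_n(0)\leq f_n(x)$ for every $x\in (0,1)$, so $\limsup_{n}f_n(0)\leq f(x)$ for every such $x$, and hence $\limsup_n f_n(0)\leq\lim_{x\to 0^+}f(x)$. For the matching lower bound, I would fix $0<x_1<x_2<1$ and rearrange the convexity inequality $f_n(x_1)\leq\tfrac{x_2-x_1}{x_2}\,f_n(0)+\tfrac{x_1}{x_2}\,f_n(x_2)$ into $f_n(0)\geq\tfrac{x_2}{x_2-x_1}\,f_n(x_1)-\tfrac{x_1}{x_2-x_1}\,f_n(x_2)$. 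Taking $\liminf$ in $n$ (the right-hand side converges since $f(x_1),f(x_2)\in\R$) and then sending $x_1\to 0^+$ with $x_2$ fixed, the coefficients tend to $1$ and $0$ respectively, yielding $\liminf_n f_n(0)\geq\lim_{x\to 0^+}f(x)$. Combining with the upper bound forces $\lim_n f_n(0)=\lim_{x\to 0^+}f(x)\in\R$, which makes $f$ right-continuous at $0$. Monotonicity and convexity of $f$ on the full interval $[0,1)$ then follow by passing to the limit in the corresponding inequalities for $f_n$ that involve $0$.

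For uniform convergence on $[0,w]$ with $w\in [0,1)$, I would exploit that $f$ is continuous on the compact set $[0,w]$, hence uniformly continuous there. Given $\varepsilon>0$, choose $\delta>0$ with $|f(x)-f(y)|<\varepsilon/3$ whenever $|x-y|<\delta$ and $x,y\in [0,w]$, and pick a partition $0=x_0<x_1<\cdots<x_k=w$ with $x_i-x_{i-1}<\delta$. Pointwise convergence at the finitely many $x_i$ supplies $N\in\N$ such that $|f_n(x_i)-f(x_i)|<\varepsilon/3$ for all $i$ and all $n\geq N$. For any $x\in [x_{i-1},x_i]$, monotonicity of $f_n$ and $f$ sandwiches $f_n(x)-f(x)$ between $f_n(x_{i-1})-f(x_i)$ and $f_n(x_i)-f(x_{i-1})$, each of absolute value at most $\varepsilon$ for $n\geq N$. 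No additional machinery is needed for this last step; the substantive work is concentrated in the endpoint argument of the second paragraph.
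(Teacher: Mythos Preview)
Your proof is correct and, for the endpoint argument, essentially identical to the paper's: both obtain $\limsup_n f_n(0)\le\lim_{x\searrow 0}f(x)$ from monotonicity and $\liminf_n f_n(0)\ge\lim_{x\searrow 0}f(x)$ from a three-point convexity inequality at $0$. The paper uses the midpoint form $f_n(0)\ge 2f_n(t)-f_n(2t)$ and sends $t\searrow 0$, whereas you use the general secant form with $x_1,x_2$ and send $x_1\searrow 0$; these are cosmetic variants of the same idea.

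The one genuine difference is in the uniform-convergence step. The paper extends each $f_n$ and $f$ to $(-\infty,1)$ by setting them constant on $(-\infty,0]$ (which preserves convexity and monotonicity), observes pointwise convergence on the open set $(-\infty,1)$, and then invokes \citet[Theorem~10.8]{Rock97} to get uniform convergence on compact subsets, in particular on $[0,w]$. Your partition argument is more elementary and self-contained: it uses only monotonicity of the $f_n$ together with continuity of the limit $f$ on $[0,w]$, and is in fact the same device the paper uses elsewhere (Lemma~\ref{lem:convunif}). The paper's route is shorter but relies on the external citation; yours avoids the extension trick at the cost of a few extra lines.
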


\deparskip
\begin{proof}
Since $\restr{f_n}{(0,1)}$ is convex and increasing for all $n$, the same is true of the pointwise limit $\restr{f}{(0,1)}$. Thus, $l:=\lim_{t\,\searrow\,0}f(t)$ exists and is finite, and we now show that $f_n(0)\to l$ as $n\to\infty$. Since each $f_n$ is convex on $[0,1)$, we have $f_n(0)\geq 2f_n(t)-f_n(2t)$ for all $t>0$ and $n\in\N$, so \[\liminf_{n\to\infty}f_n(0)\geq\lim_{t\,\searrow\,0}\,\lim_{n\to\infty}\,\bigl(2f_n(t)-f_n(2t)\bigr)=\lim_{t\,\searrow\,0}\,\bigl(2f(t)-f(2t)\bigr)=l.\]
On the other hand, $f_n$ is increasing on $[0,1)$ for all $n$, so \[\limsup_{n\to\infty}f_n(0)\leq\lim_{t\,\searrow\,0}\,\lim_{n\to\infty}f_n(t)=\lim_{t\,\searrow\,0}f(t)=l,\]
which means that $f(0)=l$, as required. Consequently, $f$ is convex and increasing on $[0,1)$. For the final assertion of the lemma, we extend $f,f_1,f_2,\dotsc$ to increasing convex functions on $(-\infty,1)$ by setting $f(x)=f(0)$ and $f_n(x)=f_n(0)$ for all $x<0$ and $n\in\N$. It follows from what we have already shown that $f_n\to f$ pointwise on $(-\infty,1)$, so in fact $f_n\to f$ uniformly on compact subsets of $(-\infty,1)$ by~\citet[Theorem~10.8]{Rock97}. This yields the desired conclusion. 
\end{proof}

\unparskip
\begin{lemma}
\label{lem:convunif}
If $m\in (0,1)$ and $(f_n)_{n=1}^\infty$ is a sequence of functions in $\mathcal{F}^m$ that converges pointwise on $[0,1]\setminus\{m\}$ to some continuous $f\in\mathcal{F}^m$, then $f_n\to f$ uniformly on $[0,1]$.
\end{lemma}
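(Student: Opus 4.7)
The plan is to decompose $[0,1]$ as $[0,m-\delta] \cup [m-\delta, m+\delta] \cup [m+\delta,1]$ for small $\delta > 0$, handle the two outer intervals using the convexity/concavity of the $f_n$, and handle the middle interval using monotonicity combined with the continuity of $f$ at $m$.

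On $[0,m-\delta]$, every $f_n$ is convex (as it is convex on $[0,m]$), and $f_n \to f$ pointwise on this closed interval since $m-\delta < m$. By the standard fact that pointwise convergence of convex functions on an open interval entails uniform convergence on compact subintervals (see Lemma~\ref{lem:convincr}, or directly \citet[Theorem 10.8]{Rock97}), we obtain $f_n \to f$ uniformly on $[0,m-\delta]$. An entirely symmetric argument, applied to the concave functions $f_n$ on $[m, 1]$, gives $f_n \to f$ uniformly on $[m+\delta, 1]$.

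For the middle interval, I would exploit monotonicity: for any $x \in [m-\delta, m+\delta]$ we have $f_n(m-\delta) \leq f_n(x) \leq f_n(m+\delta)$, and similarly for $f$. Hence
\[
\sup_{x \in [m-\delta, m+\delta]} |f_n(x) - f(x)| \leq \max\bigl(f_n(m+\delta), f(m+\delta)\bigr) - \min\bigl(f_n(m-\delta), f(m-\delta)\bigr).
\]
Letting $n \to \infty$ and using pointwise convergence at $m \pm \delta$, the right-hand side tends to $f(m+\delta) - f(m-\delta)$. By the continuity of $f$ at $m$, this last quantity can be made arbitrarily small by choosing $\delta$ small.

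Combining the three estimates, given $\varepsilon > 0$, I first fix $\delta > 0$ so that $f(m+\delta) - f(m-\delta) < \varepsilon/2$, and then choose $N$ large enough that $f_n$ is within $\varepsilon/2$ of $f$ uniformly on $[0,m-\delta] \cup [m+\delta, 1]$ and within $\varepsilon$ of $f$ uniformly on $[m-\delta, m+\delta]$ for $n \geq N$. This yields uniform convergence on $[0,1]$. There is no real obstacle here; the only subtlety is recognising that monotonicity alone controls the middle region once continuity of the limit pins down the oscillation.
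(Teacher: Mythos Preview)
Your proof is correct, but it takes a different route from the paper's. You invoke convexity/concavity (via Lemma~\ref{lem:convincr} or Rockafellar's Theorem~10.8) to handle the outer intervals $[0,m-\delta]$ and $[m+\delta,1]$, and then use monotonicity plus continuity of $f$ at $m$ only for the middle piece. The paper, by contrast, uses \emph{monotonicity alone throughout}: it first establishes pointwise convergence at $m$ by the squeeze $f(z)\leq\liminf_n f_n(m)\leq\limsup_n f_n(m)\leq f(w)$ for $z<m<w$ and continuity of $f$, and then applies the classical argument that pointwise convergence of monotone functions to a continuous monotone limit on a compact interval is automatically uniform (choose a partition $0=z_0<\cdots<z_k=1$ with $f(z_i)-f(z_{i-1})<\varepsilon$ and sandwich).

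What each buys: the paper's argument is more elementary and reveals that the lemma is really a statement about monotone functions---convexity and concavity play no role once pointwise convergence on $[0,1]$ is in hand. Your argument works too, but it leans on the convex-function convergence machinery where it is not strictly needed; in particular, to reach the endpoint $0$ (and symmetrically $1$) you do need the ``increasing'' part of Lemma~\ref{lem:convincr} rather than Rockafellar's theorem alone, since the latter gives uniform convergence only on compact subsets of the open interior.
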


\deparskip
\begin{proof}
First note that $f_n\to f$ pointwise on $[0,1]$. Indeed, \[f(z)=\lim_{n\to\infty}f_n(z)\leq \liminf_{n\to\infty}f_n(m)\leq\limsup_{n\to\infty}f_n(m)\leq\lim_{n\to\infty}f_n(w)=f(w)\]
whenever $z<m<w$, and since $\lim_{z\nearrow\,m}f(z)=f(m)=\lim_{w\searrow\,m}f(w)$ by continuity, it follows that $f_n(m)\to f(m)$. We now show that $f_n\to f$ uniformly on $[0,1]$ by a standard argument: 
% This works whenever each $f_n$ is increasing and the pointwise limit $f$ is continuous and increasing; we do not appeal to convexity or concavity anywhere
fix $\varepsilon>0$ and note that since $f$ is continuous and increasing on $[0,1]$, we can find $0=z_0<z_1<\dotsc<z_{k-1}<z_k=1$ such that $f(z_i)-f(z_{i-1})<\varepsilon$ for all $i\in [k]$. Since each $f_n$ is increasing, we see that if $x\in [z_{i-1},z_i]$, then
\[f_n(z_{i-1})-f(z_{i-1})-\varepsilon<f_n(z_{i-1})-f(z_i)\leq f_n(x)-f(x)\leq f_n(z_i)-f(z_{i-1})<f_n(z_i)-f(z_i)+\varepsilon.\]
Since $f_n\to f$ pointwise, $\limsup_{n\to\infty}\sup_{x\in [0,1]}\,\abs{f_n(x)-f(x)}<\lim_{n\to\infty}\max_{1\leq i\leq k}\,\abs{f_n(z_i)-f(z_i)}+\varepsilon=\varepsilon$. This holds for all $\varepsilon>0$, so the result follows.
\end{proof}

\unparskip
The weak convergence result below is stated as Lemma~4.5 in~\citet{DSS11} and proved here for completeness.
\begin{lemma}
\label{lem:wconv}
Let $P,P_1,P_2,\dotsc$ be probability measures on $\R^d$ such that $P_n\cvd P$. If $h$ is a non-negative, continuous function on $\R^d$ such that $\int_{\R^d}h\,dP_n\to\int_{\R^d}h\,dP<\infty$, then $\int_{\R^d}f\,dP_n\to\int_{\R^d}f\,dP$ for any continuous function $f\colon\R^d\to\R^k$ such that $\norm{f}/(1+h)$ is bounded.
\end{lemma}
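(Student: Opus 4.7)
The plan is a standard truncation argument, exploiting the assumed convergence $\int h\,dP_n \to \int h\,dP$ to upgrade weak convergence into convergence of the (possibly unbounded) integrals of $f$. Working componentwise, it suffices to treat the case $k=1$, and by replacing $f$ with its positive and negative parts, we may further assume $f\geq 0$. Write $M:=\sup_{x\in\R^d}\norm{f(x)}/(1+h(x))<\infty$, so that $0\leq f\leq M(1+h)$ on $\R^d$.

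For $R>0$ I would introduce a continuous cutoff $\phi_R\colon\R^d\to[0,1]$ with $\phi_R\equiv 1$ on $\{\norm{x}\leq R\}$ and $\phi_R\equiv 0$ on $\{\norm{x}\geq R+1\}$ (e.g.\ $\phi_R(x)=(R+1-\norm{x})^+\wedge 1$). Decompose
\begin{equation*}
\int_{\R^d} f\,dP_n=\int_{\R^d} f\phi_R\,dP_n+\int_{\R^d} f(1-\phi_R)\,dP_n,
\end{equation*}
and analogously for $P$. Since $f\phi_R$ is bounded and continuous with compact support, the portmanteau lemma yields $\int f\phi_R\,dP_n\to\int f\phi_R\,dP$ as $n\to\infty$ for each fixed $R$. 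The main work is therefore to control the tails $\int f(1-\phi_R)\,dP_n$ uniformly in $n$.

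Using $f\leq M(1+h)$ and $0\leq 1-\phi_R\leq 1$, bound the tail by
\begin{equation*}
\int_{\R^d} f(1-\phi_R)\,dP_n\leq M\int_{\R^d}(1+h)(1-\phi_R)\,dP_n=M\biggl\{\int_{\R^d}(1+h)\,dP_n-\int_{\R^d}(1+h)\phi_R\,dP_n\biggr\}.
\end{equation*}
By hypothesis, $\int(1+h)\,dP_n\to\int(1+h)\,dP$, while $(1+h)\phi_R$ is bounded and continuous, so $\int(1+h)\phi_R\,dP_n\to\int(1+h)\phi_R\,dP$ by weak convergence. Hence $\limsup_{n\to\infty}\int f(1-\phi_R)\,dP_n\leq M\int (1+h)(1-\phi_R)\,dP$, and the corresponding bound holds verbatim for $P$ itself.

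Finally, since $\int(1+h)\,dP<\infty$ and $(1+h)(1-\phi_R)\downarrow 0$ pointwise as $R\to\infty$ with integrable envelope $1+h$, dominated convergence gives $\int (1+h)(1-\phi_R)\,dP\to 0$ as $R\to\infty$. Combining the three displays and letting first $n\to\infty$ and then $R\to\infty$ yields $\int f\,dP_n\to\int f\,dP$. The only mild obstacle is the bookkeeping around the order of limits in $n$ and $R$; this is routinely handled by an $\varepsilon/3$ split, choosing $R$ large so that the tail under $P$ is below $\varepsilon/3$, and then $n$ large so that both the bounded-continuous piece and the tail under $P_n$ are within $\varepsilon/3$ of their respective limits.
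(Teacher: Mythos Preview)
Your truncation argument is correct. The paper, however, takes a shorter route that avoids cutoffs entirely: after reducing to $k=1$, it uses the portmanteau lemma in the form $\liminf_n P_n(g)\geq P(g)$ for non-negative continuous $g$, applied simultaneously to $g=f^+$ and to $g=C(1+h)-f^+$ (where $C:=M$ in your notation). The first gives $\liminf_n P_n(f^+)\geq P(f^+)$; the second, combined with the hypothesis $P_n\bigl(C(1+h)\bigr)\to P\bigl(C(1+h)\bigr)$, gives $\limsup_n P_n(f^+)\leq P(f^+)$. This ``sandwich'' dispatches the problem in two lines with no $\phi_R$, no dominated convergence, and no $\varepsilon/3$ bookkeeping. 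Your approach has the advantage of being the generic uniform-integrability template (truncate, control the tail via an integrable envelope, pass to the limit), which is perhaps more transparent about where the hypothesis $\int h\,dP_n\to\int h\,dP$ is doing work; the paper's trick is slicker but less reusable outside this exact setup.
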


\deparskip
\begin{proof}
We can restrict attention to the case $k=1$ since the component functions can be considered separately when $k>1$. Let $C>0$ be such that $\abs{f}\leq C(1+h)$ pointwise. For a Borel measure $Q$ on $\R^d$ and a $Q$-integrable function $h$, we write $Q(h)$ as shorthand for $\int_{\R^d}h\,dQ$. 

Since $P_n\cvd P$, we have $\liminf_{n\to\infty}P_n(g)\geq P(g)$ for all non-negative, continuous $g\colon\R^d\to\R$ by the portmanteau lemma~\citep[e.g.][Lemma~2.2(iv)]{vdV98}.
% Indeed, for all $M>0$, the function $h\wedge M$ is bounded and continuous on $\R^d$, and since $P_n$ converges weakly to $P$ by assumption, we have $\liminf_{n\to\infty}P_n(h)\geq\lim_{n\to\infty}P_n(h\wedge M)=P(h\wedge M)$ for all $M>0$. The assertion above then follows on taking the limit $M\to\infty$, whereupon the right-hand side converges to $P(h)$ by the monotone convergence theorem.
Thus, since $f^+:=f\vee 0$ and $C(1+h)-f^+$ are non-negative and continuous, we have $\liminf_{n\to\infty}P_n(f^+)\geq P(f^+)$ and $\liminf_{n\to\infty}P_n\bigl(C(1+h)-f^+\bigr)\geq P\bigl(C(1+h)-f^+\bigr)$. Moreover, $P_n\bigl(C(1+h)\bigr)\to P\bigl(C(1+h)\bigr)$ by assumption, so in fact $P_n(f^+)\to P(f^+)$. A similar argument shows that $P_n(f^-)\to P(f^-)$, where $f^-:=(-f)\vee 0$, so we indeed have $P_n(f)\to P(f)$.
\end{proof}

\unparskip
We conclude this subsection with a series of related examples which illustrate that some of the assertions of Proposition~\ref{prop:cont} and Corollary~\ref{cor:cont} do not hold in general if the associated technical conditions are not satisfied. 
\begin{example}
\label{ex:1}
We first consider situations where either $P^X(\{\tilde{m}_0\})>0$ or $m_0\notin\Int(\csupp P^X)$. In each of the following, we construct $P\in\mathcal{P}$ and a sequence $(P_n)$ in $\mathcal{P}$ with $W_2(P_n,P)\to 0$, where $P=(1-\eta)Q+\eta\tilde{Q}$ and $P_n=(1-\eta_n)Q_n+\eta_n\tilde{Q}_n$ for suitable $Q,Q_n,\tilde{Q}_n\in\mathcal{P}$ and $\eta,\eta_n\in [0,1]$. For $w\in\R^d$ with $d\in\N$, we write $\delta_w$ for a point mass at $w$.

\unparskip
\begin{enumerate}[label=(\alph*)]
\item Fix $m_0\in (0,1]$ and $m\in (0,m_0]$. Let $Q_n:=\frac{1}{2}\delta_{(0,0)}+\frac{1}{2}\delta_{(m(1-1/n),\,0)}$, $Q:=\frac{1}{2}\delta_{(0,0)}+\frac{1}{2}\delta_{(m,0)}$, $\tilde{Q}_n=\tilde{Q}:=\delta_{(m,1)}$ and $\eta_n=\eta:=1/3$ for all $n$. Then $P_n=\frac{1}{3}\delta_{(0,0)}+\frac{1}{3}\delta_{(m(1-1/n),\,0)}+\frac{1}{3}\delta_{(m,1)}\cvd\frac{1}{3}\delta_{(0,0)}+\frac{1}{3}\delta_{(m,0)}+\frac{1}{3}\delta_{(m,1)}=P$ and $P^X=\frac{1}{3}\delta_0+\frac{2}{3}\delta_m$, so $m_0\notin\Int(\csupp P^X)$, $\mathcal{I}^*(P)=[0,1]\not\subseteq\Int(\csupp P^X)$, $\tilde{m}_0=m$ and $P^X(\{\tilde{m}_0\})>0$. Since $P,P_1,P_2,\dotsc$ are supported on the compact set $[0,1]^2$, we automatically have $W_2(P_n,P)\to 0$. 

We claim that Propositions~\ref{prop:cont}(b,\,d,\,h) and Corollary~\ref{cor:cont}(d) do not apply here. Indeed, for each $n$, we have $f_n(m)=1$ for all $f_n\in\psi_{m_0}^0(P_n)=\psi^0(P_n)$ and $L^*(P_n)=L_{m_0}^*(P_n)=0$, whereas $f^*(m)=1/2$ for all $f^*\in\psi_{m_0}^*(P)=\psi^*(P)$ and $L^*(P)=L_{m_0}^*(P)=1/6>0=\lim_{n\to\infty}L_{m_0}^*(P_n)$. Thus, if $f_n\in\psi_{m_0}^0(P_n)$ for all $n$, then $\abs{f_n(m)-f^*(m)}=1/2$ for all $n$ and $f^*\in\psi^*(P)$, so $\inf_{f^*\in\psi^*(P)}\norm{f_n-f^*}_{L^q(P^X)}\nrightarrow 0$ for $q\in [1,\infty)$.
\item This is a variant of (a) with $m=m_0\in\Int(\csupp P^X)=(0,1)$ but $P^X(\{m_0\})>0$. Let $Q_n,Q$ and $\eta_n,\eta$ be as in (a) but instead define $\tilde{Q}_n=\tilde{Q}:=\frac{1}{2}\delta_{(m_0,1)}+\frac{1}{2}\delta_{(1,1)}$. Then $P^X=\frac{1}{3}\delta_0+\frac{1}{3}\delta_{m_0}+\frac{1}{3}\delta_1$ and all the deductions in the second paragraph of (a) remain valid here.
\item Fix $m_0\in (0,1]$ and $m\in (0,m_0]$. For $n\in\N$, let $Q,Q_n$ be the uniform distributions on $\{(x,0):0\leq x\leq m\}$ and $\{(x,0):0\leq x\leq m(1-1/n)\}$ respectively. Moreover, let $\tilde{Q}:=Q$, $\eta:=0$, $\tilde{Q}_n:=\delta_{(m,n^2)}$ and $\eta_n:=n^{-5}$ for all $n$. Then $P_n=(1-\eta_n)Q_n+\eta_n\tilde{Q}_n$ converges in $W_2$ to $P=Q$. Indeed, for any continuous $f\colon\R^2\to\R$ such that $\abs{f(x,y)}\leq 1+x^2+y^2$ for all $(x,y)\in\R^2$, we have $\eta_n f(m,n^2)=O(1/n)$, so $\int_{\R^2}f\,dP_n=(1-\eta_n)\int_0^{m(1-1/n)}f(x,0)\,dx+\eta_n f(m,n^2)\to\int_0^m f(x,0)\,dx=\int_{\R^2}f\,dP$ as $n\to\infty$.
% Here and in (a), $\mathcal{I}^*(P_n)=[m(1-1/n),1]$ for all $n$ and $\mathcal{I}^*(P)=[0,1]$, so $\mathcal{I}^*(P_n)$ does not converge to $\mathcal{I}^*(P)$ in Hausdorff distance. The results above are stated in terms of a `one-sided Hausdorff distance'.

Here, $\tilde{m}_0=m$ and $P^X(\{m'\})=0$ for all $m'\in [0,1]$, but $m_0\notin\Int(\csupp P^X)$, $\mathcal{I}^*(P)=[0,1]\not\subseteq\Int(\csupp P^X)$, and Proposition~\ref{prop:cont}(h) and Corollary~\ref{cor:cont}(d) do not hold. Indeed, for each $n$, let $f_n\in\psi_{m_0}^0(P_n)=\psi^0(P_n)$ be such that $f_n=0$ on $[0,m(1-1/n)]$, $f_n(m)=n^2$ and $f_n$ is linear on $[m(1-1/n),1]$. Since $f^*=0$ $P^X$-almost everywhere on $[0,m]$ for all $f^*\in\psi_{m_0}^*(P)=\psi^*(P)$, we have $\inf_{f^*\in\psi^*(P)}\norm{f_n-f^*}_{L^q(P^X)}=\inf_{f^*\in\psi^*(P)}\norm{f_n}_{L^q(P^X)}\geq\inf_{f^*\in\psi^*(P)}\norm{f_n}_{L^1(P^X)}=n\to\infty$ for all $q\in [1,\infty)$.
% Could replace $\tilde{Q}_n$ with a distribution supported on $[(m(1-1/n),\,0),(m,n^2)]$ to obtain the same conclusion whilst ensuring that $\psi^0(P_n)=\{f_n\}$
\end{enumerate}

\unparskip
\end{example}

\hfparskip
In Proposition~\ref{prop:cont}(i), the assumption that $\psi_{m_0}^0(P)\neq\emptyset$ and all elements of $\psi_{m_0}^0(P)$ agree on $\supp P^X$ is clearly necessary for the conclusion to hold (not least when $P_n=P$ for all $n$). When this condition is not satisfied, some elements of $\psi_{m_0}^0(P)$ are discontinuous by Corollary~\ref{cor:proj}(e). If in addition $m_0\in\Int(\csupp P^X)\cap\supp P^X$ and $P^X(\{m_0\})=0$, then Proposition~\ref{prop:cont}(i) fails. This is demonstrated by the next example, which is a modification of Example~\ref{ex:1}(b).
\begin{example}
\label{ex:2}
Fix $m\in (0,1)$ and for $n\in\N$, let $f_n\colon [0,1]\to [0,1]$ be such that $f_n=0$ on $[0,m(1-1/n)]$, $f_n=1$ on $[m,1]$, and $f_n$ is continuous and linear on $[m(1-1/n),m]$. For each $n$, let $P_n$ be the distribution supported on $\{(x,f_n(x)):x\in [0,1]\}$ for which the corresponding marginal distribution $P_n^X$ is the uniform distribution on $D:=[0,m]\cup [m',1]$ for some $m'\in [m,1)$. Then $(P_n)$ converges in $W_2$ to the uniform distribution $P$ on $\{(x,\Ind_{[m,1]}(x)):x\in D\}$.

Not all elements of $\psi^0(P)=\psi_m^0(P)$ agree at $m\in\supp P^X$; for example when $m'=m$ and $D=[0,1]$, the functions in $\psi_m^0(P)$ agree with $\Ind_{[m,1]}$ on $[0,1]\setminus\{m\}$ but can take any value in $[0,1]$ at $m$. We have $P^X(\{m\})=0$ and $m\in\Int(\csupp P^X)$, so the other conditions of Proposition~\ref{prop:cont}(i) are satisfied, but $\psi^0(P_n)=\psi_m^0(P_n)=\{f_n\}$ for all $n$ and $(f_n)$ does not have a uniform limit on $[0,m]\subseteq\supp P^X$.
\end{example}

\unparskip
In Example~\ref{ex:2}, note in addition that while all the conditions of Corollary~\ref{cor:cont}(c) are met, the associated convergence statement cannot be extended to $A=\supp P^X$. In general, Proposition~\ref{prop:cont}(g) also does not hold with $A=\supp P^X$ under the stated conditions $P^X(\{\tilde{m}_0\})=0$ and $\psi_{\tilde{m}_0}^0(P)\neq\emptyset$.
% which is why we state Proposition~\ref{prop:cont}(i) separately under stronger assumptions 
To see this, we can instead take $D=[0,m]$ in Example~\ref{ex:2} and fix $m_0\in (m,1]$, so that $\tilde{m}_0=m$, $P^X(\{\tilde{m}_0\})=0$ and $0\in\psi_{\tilde{m}_0}^0(P)$. Then there is a sequence $(g_n)$ with $g_n\in\psi_{m_0}^0(P_n)$ for all $n$ but no uniform limit on $\supp P^X=[0,m]$; for example, let $g_n$ be such that $g_n=0$ on $[0,m(1-1/n)]$, $g_n(m)=1$ and $g_n$ is linear on $[m(1-1/n),1]$.

\end{document}